\newcommand{\E}{\mathbb{E}}
\newcommand{\R}{\mathbb{R}}
\theoremstyle{plain}
\newtheorem{theorem}{Theorem}[section]
\newtheorem{lemma}[theorem]{Lemma}
\newtheorem{assumption}{Assumption}
\theoremstyle{remark}
\newtheorem{definition}[theorem]{Definition}
\newtheorem{corollary}[theorem]{Corollary}
\newtheorem{example}{Example}
\newtheorem{remark}{Remark}
\numberwithin{theorem}{section}
\begin{document}

\begin{frontmatter}
%%%%%%%%%%%%%%%%%%%%%%%%%%%%%%%%%%%%%%%%%%%%%%
%%                                          %%
%% Enter the title of your article here     %%
%%                                          %%
%%%%%%%%%%%%%%%%%%%%%%%%%%%%%%%%%%%%%%%%%%%%%%
\title{Kernel-weighted specification testing under general distributions}
%\title{A sample article title with some additional note\thanksref{t1}}
\runtitle{Specification testing under general distributions}
%\thankstext{T1}{A sample additional note to the title.}
%\thankstext{T1}{A sample of additional note to the title.}

\begin{aug}
%%%%%%%%%%%%%%%%%%%%%%%%%%%%%%%%%%%%%%%%%%%%%%%
%% ORCID can be inserted by command:         %%
%% \orcid{0000-0000-0000-0000}               %%
%%%%%%%%%%%%%%%%%%%%%%%%%%%%%%%%%%%%%%%%%%%%%%%
\author[A]{\fnms{Sid} \snm{Kankanala}\ead[label=e1]{sid.kankanala@yale.edu}}
\author[B]{\fnms{Victoria} \snm{Zinde-Walsh}\ead[label=e2]{victoria.zinde-walsh@mcgill.ca}} 
%%%%%%%%%%%%%%%%%%%%%%%%%%%%%%%%%%%%%%%%%%%%%%
%% Addresses                                %%
%%%%%%%%%%%%%%%%%%%%%%%%%%%%%%%%%%%%%%%%%%%%%%
\address[A]{Department of Economics, Yale University, \printead{e1}}

\address[B]{Department of Economics, McGill University, 
\printead{e2}} 
\end{aug}
\begin{abstract}
Kernel-weighted test statistics have been widely used in a
variety of settings including non-stationary regression, survival analysis, propensity score and panel data models.  We develop the limit theory for a kernel-weighted specification test of a parametric conditional mean when the law of the regressors may not be absolutely continuous to the Lebesgue measure and admits non-trivial singular components. In the special case of absolutely continuous measures, our approach weakens the usual regularity conditions. This result is of independent interest 
and may be useful in other applications that utilize kernel smoothed statistics. Simulations illustrate the non-trivial impact of the distribution of the conditioning variables on the power properties of the test statistic.
\end{abstract}

\begin{keyword}
\kwd{goodness-of-fit}
\kwd{kernel smoothing}
\kwd{singular distribution}
\kwd{small ball probability}
\kwd{fractal}
\end{keyword}

\end{frontmatter}

%%%%%%%%%%%%%%%%%%%%%%%%%%%%%%%%%%%%%%%%%%%%%%
%%%% Main text entry area:

\section{Introduction}
Kernel-weighted statistics are widely used for inference on the
functional form of a density, conditional distribution and conditional mean. In testing for a parametric specification of a conditional mean, kernel-based tests have been used in
the traditional regression context \citep{Zhang,mammen1,horowitz2001}. Those types of statistics are also employed in
various extensions such as regression quantiles \citep{zheng1998consistent,mammen2}, semi-parametric  models \citep{chen2009goodness}, propensity score \citep{ShaikhVytlacil}, panel data \citep{lin}, non-stationary regression
\citep{gao,phillips} and survival analysis \citep{muller2019goodness}. If $F_{X}$ represents the law of the regressors, it can always be   expressed in its Lebesgue decomposition: \begin{align}
&  F_X     =\rho_{d} F_{X}^d  +\rho_{a.c.}%
F_X^{a.c.}  + \rho_{s}  F_X^{s}
\;,\label{decomp}\\ &
 \rho_{d}, \: \rho_{s}, \: \rho_{a.c.}  \in [0,1] \; \; , \; \; \rho_{d} + \rho_{s} + \rho_{a.c.}=1 \; \; , \nonumber
\end{align}
where $F_{X}^d$ is a discrete measure, $F_{X}^{a.c.}$ is absolutely
continuous to the Lebesgue measure and $F_{X}^s$ is a
  singular continuous measure. While kernel-weighted statistics have been investigated in several distinct applications, not much is known about the statistical properties of procedures based on these statistics when the distribution is ``contaminated'' with non-trivial singular components. To the best of our knowledge, all the available analyses in the literature have assumed that the Lebesgue decomposition of $F_{X}$ does not admit any singular components, usually with additional smoothness regularity conditions
imposed on the density function.
  
  In this paper, we study a class of kernel-weighted U-statistics that frequently arise in the analysis of goodness-of-fit testing. A general limit theory is provided that allows for the Lebesgue  decomposition of $F_{X}$   to contain singular  components. Our generalization of the standard limit theory is partially motivated by a desire to understand the finite sample properties of these statistics when the distribution is absolutely continuous with a density  that is either non-smooth or possesses large derivatives.  Indeed, in finite samples, singular distributions tend to exhibit characteristics similar to such measures. For example, the claw and its variants in \cite{Marron} are Gaussian mixtures with a  density that exhibits sharp continuous  ``spikes'' at several points in the support. While these distributions are absolutely continuous with a
smooth density, the derivatives may be so large as to make them resemble distributions with
singular components in finite samples.

\begin{figure}[H]
  \begin{subfigure}{0.49\textwidth}
    \centering
    \includegraphics[height=4cm,width=\linewidth]{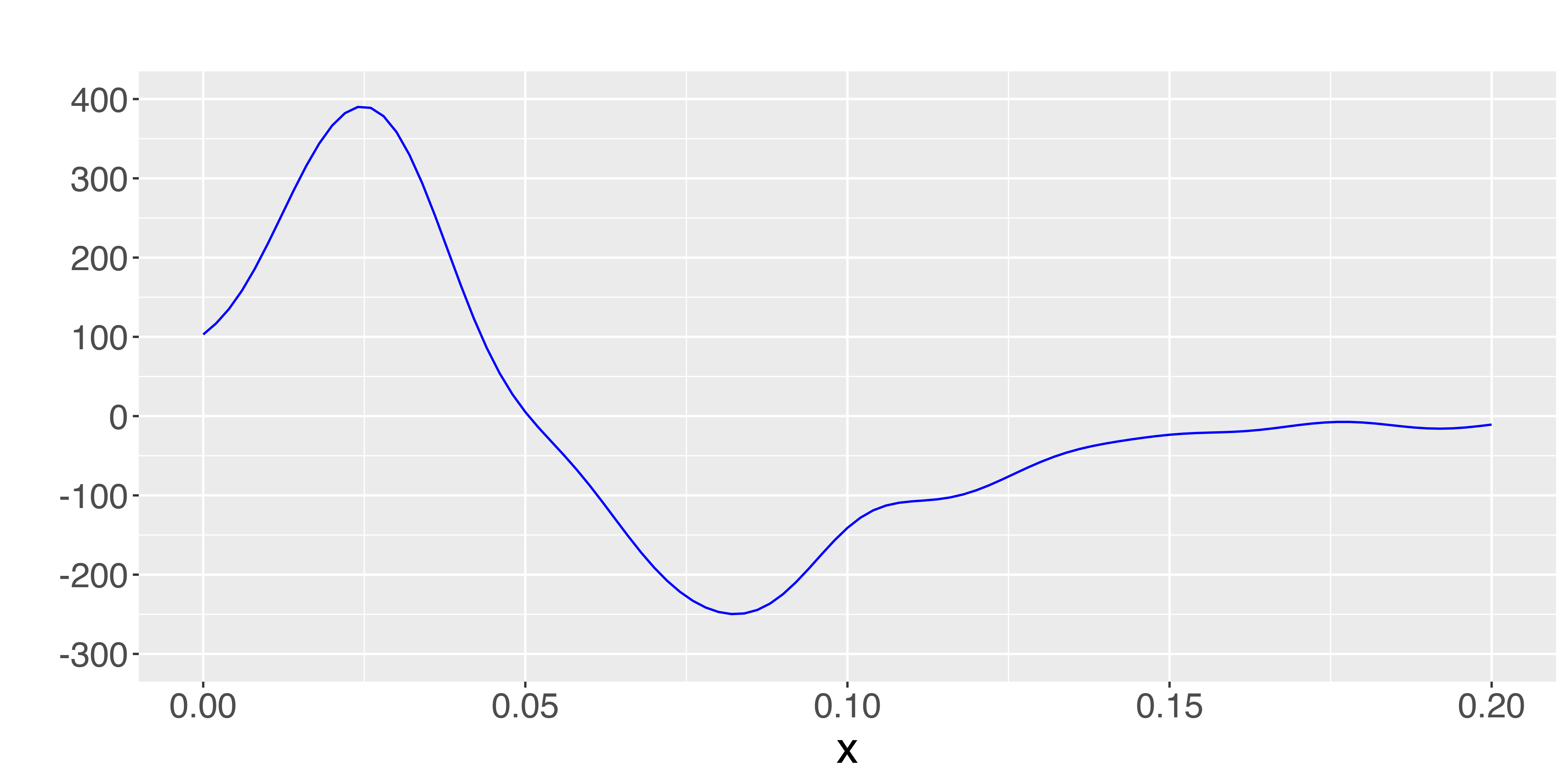}
    \caption{fuel}
    \label{0a}
  \end{subfigure}
  \begin{subfigure}{0.49\textwidth}
    \centering
    \includegraphics[height=4cm,width=\linewidth]{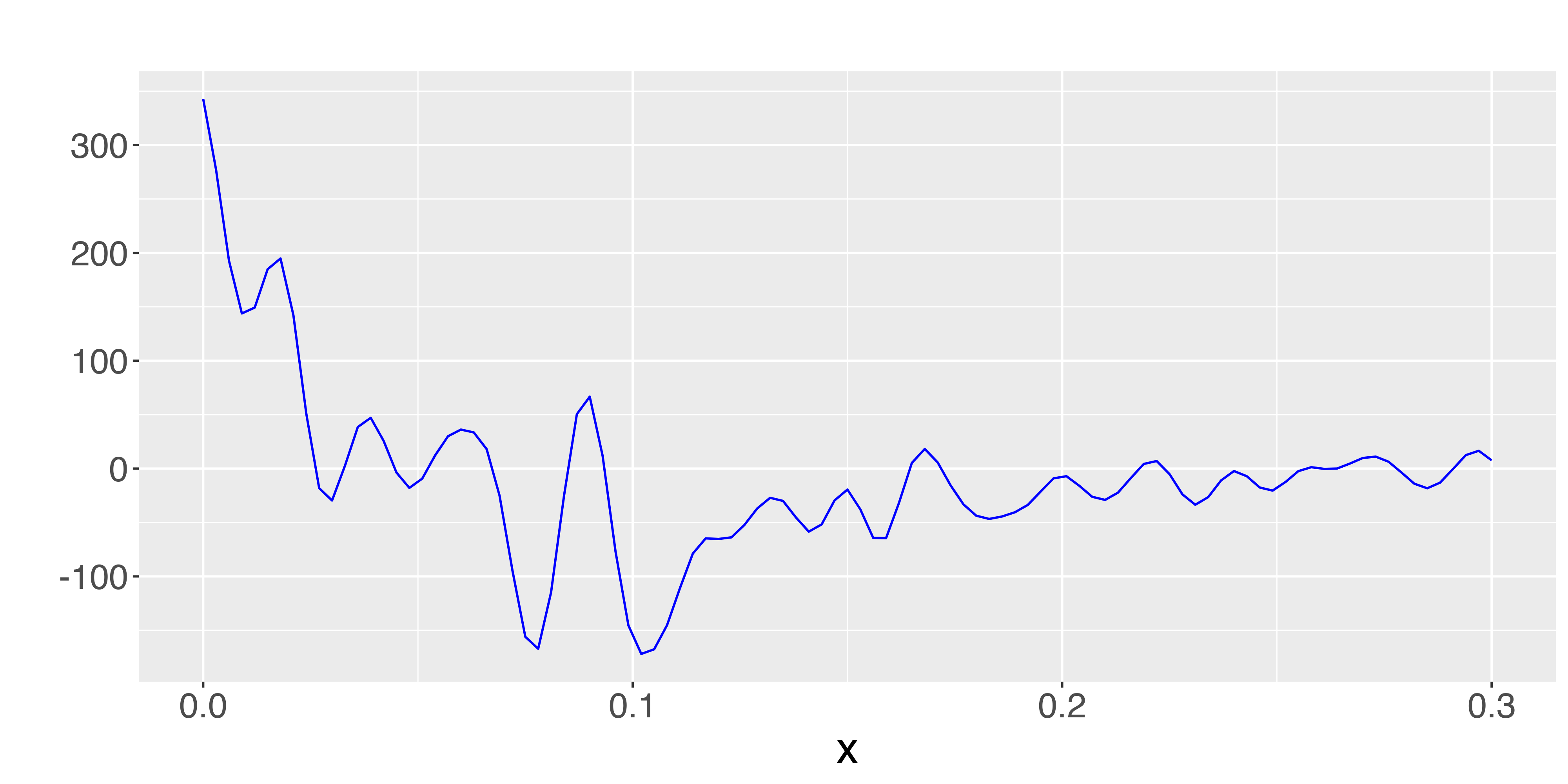}
    \caption{food out}
    \label{0b}
  \end{subfigure}
  \caption{Estimated Gaussian kernel  density derivative of the  household budget share allocated to fuel and food outside from the 1995 U.K. Family Expenditure Survey (see e.g. \cite{blund}), with smoothing bandwidth chosen using  least-squares cross-validation  \cite{wolfsmooth}.}
  \label{fig0}
\end{figure}

As illustrated in Figure \ref{fig0}, the estimated density derivative of covariates may be quite large: within the same data set, the budget shares for alcohol, travel and leisure were significantly more volatile  and exhibited estimated derivatives in the thousands. While the standard limit theory can account for this phenomenon by imposing an arbitrarily large upper bound on the density and/or its derivative, we find it more appropriate to model these situations with the possibility of a non-trivial singular component.

\begin{figure}[H]\centering
\subfloat{\label{a}\includegraphics[height=3cm,width=.45\linewidth]{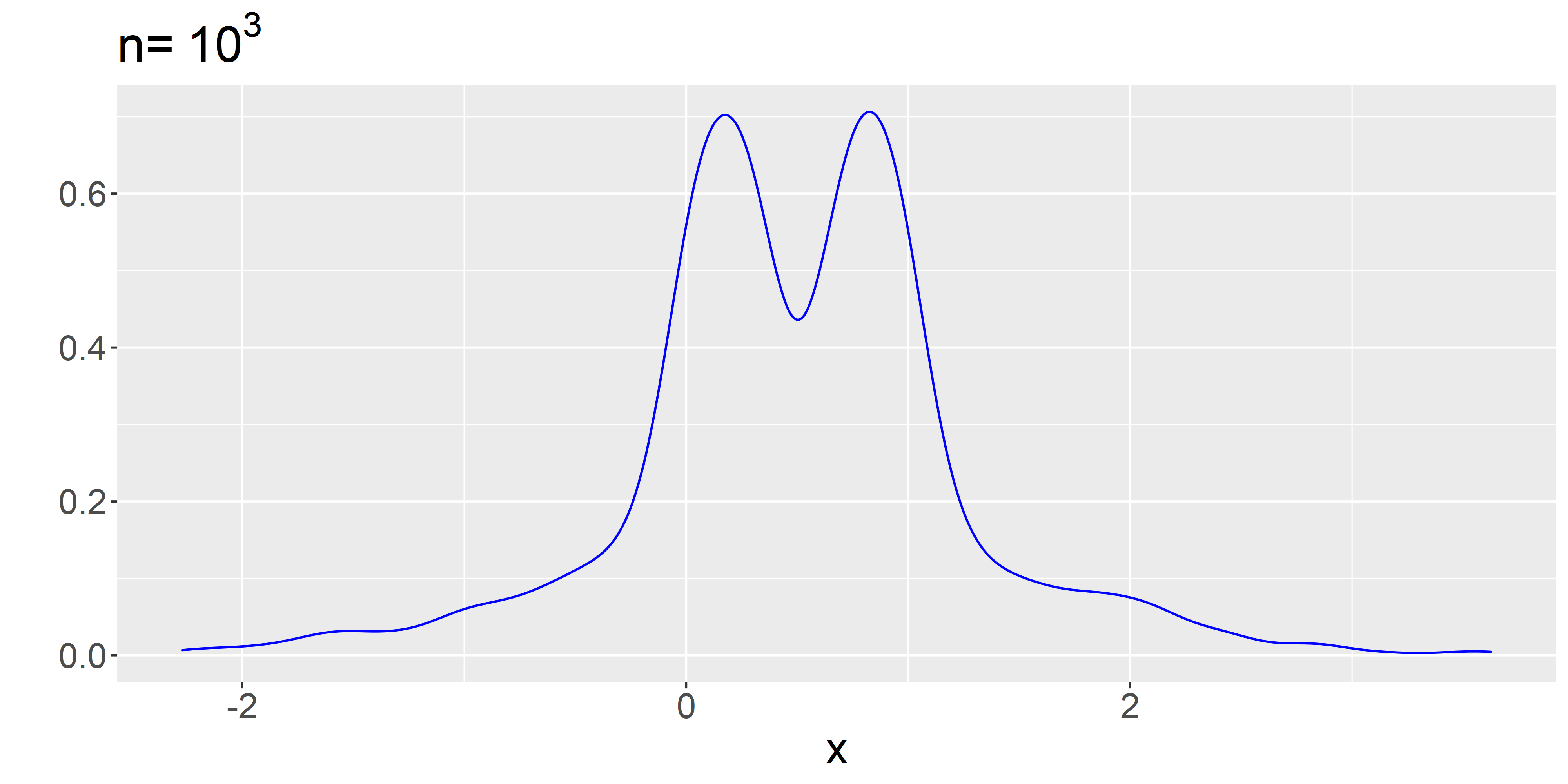}}\hfill
\subfloat{\label{b}\includegraphics[height=3cm,width=.45\linewidth]{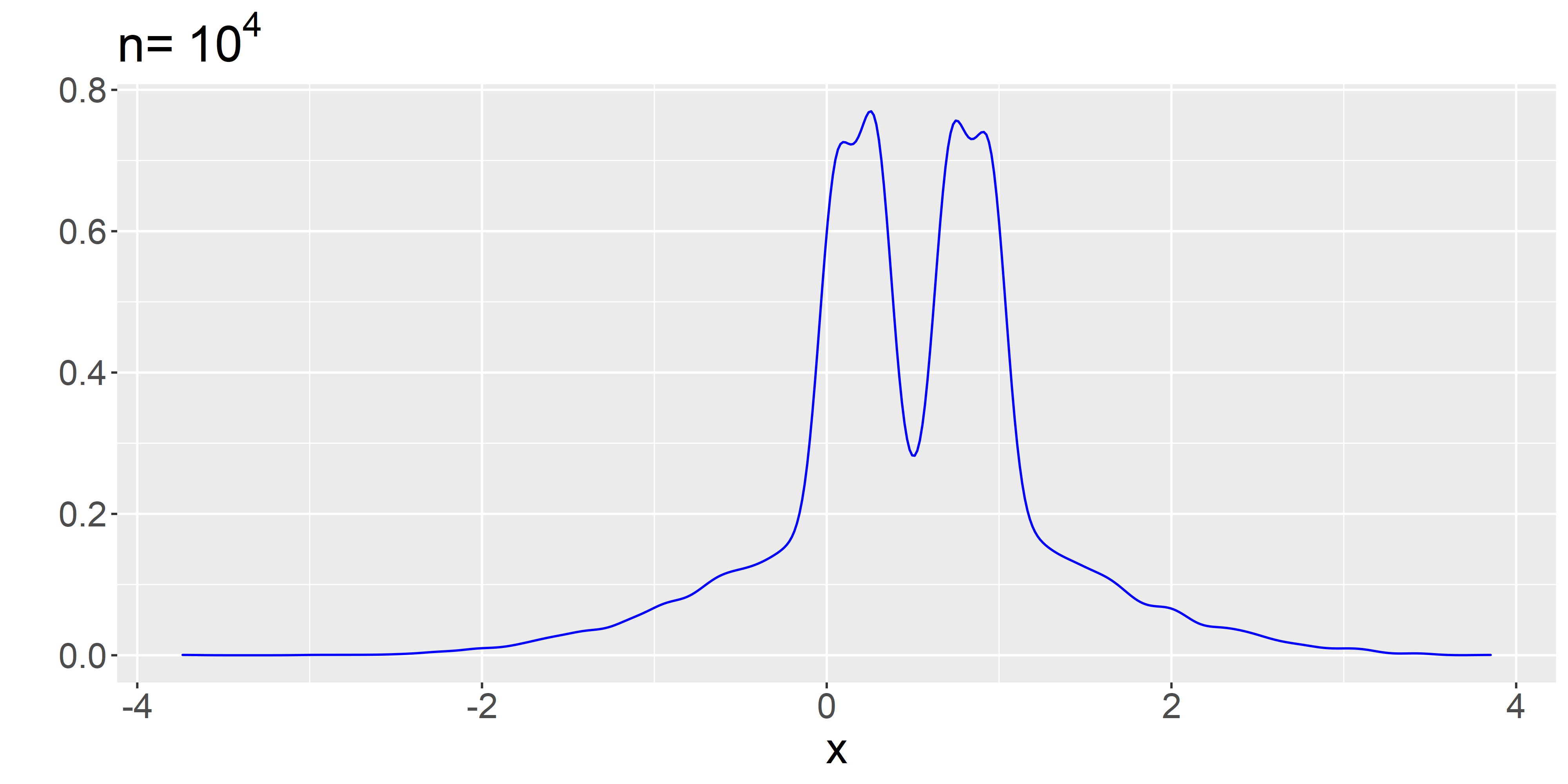}}\par 
\subfloat{\label{c}\includegraphics[height=3cm,width=.45\linewidth]{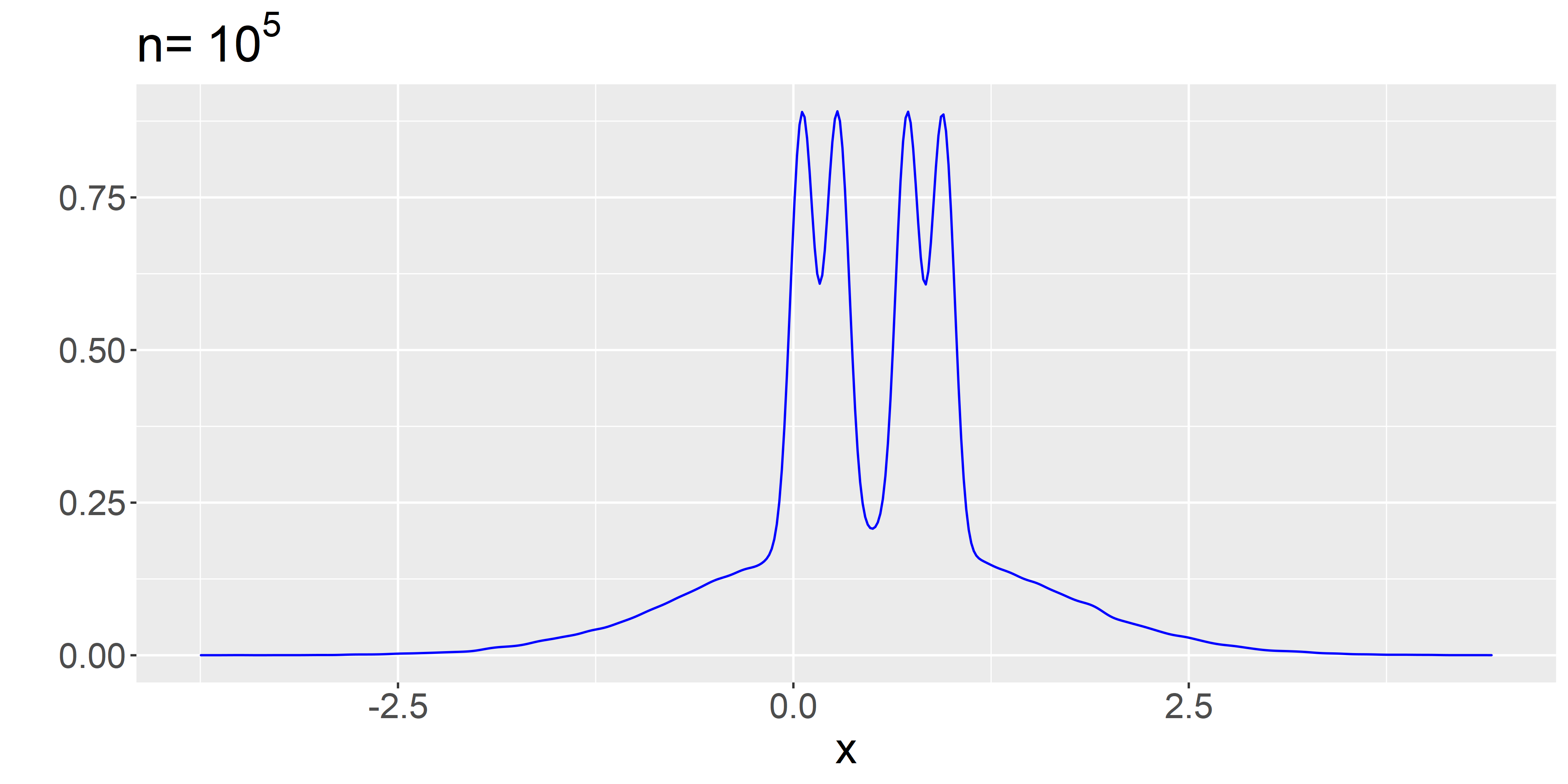}}
\caption{Realizations of a kernel density estimate with random sample of size $n$ drawn from an equal weight mixture  of $N(0.5,1)$ and the standard Cantor distribution (with smoothing bandwidth selected by the default rule in R).}
\label{fig}
\end{figure}

The main results of this paper develop the limit theory for kernel-based inference statistics when the distribution of the conditioning variables admits a Lebesgue decomposition  with singular components. The class of distributions that our theory covers include absolutely continuous measures with a bounded density, measures supported on lower dimensional subspaces,  continuous measures that contain discrete marginals (e.g. mixed variables and discrete regressors),   normalized Hausdorff measures on self-similar fractals and any Ahlfors-David regular measure. All our main results also apply to countable mixtures generated by such measures.  Even in the special case where $F_{X}$ is absolutely continuous, our analysis only requires a bounded density and avoids imposing any further  regularity conditions.

Denote by $\iota$ a vector of ones and the $F_{X}$ probability of a cube  of radius $h > 0 $ and centred at $x$  by $F_X(x-h \iota , x +h \iota)  = \mathbb{P}(\| X- x\| _{\infty} \leq h )  $. Our approach provides interpretable conditions that are based on local features of the underlying distribution through the expected small ball probability:  \begin{align}
\label{cube}   
r(h) = \E \big[   F_{X} (X -h \iota , X +h \iota)      \big].
\end{align}
If $F_{X}$ is a probability measure on $\R^q$ with a bounded density, $r(h) \asymp h^{q}$ as $h \downarrow 0 $ but  the rate is always slower when the Lebesgue decomposition admits singular components.
The idea
behind our results is to utilize a version of integration by parts to extract local features of $F_{X}$ that depend on $r(h)$ when the density does not exist (at the cost of using sufficiently differentiable
kernels). This approach is of independent interest and may  be useful in other applications that utilize kernel-based statistics. 

Theorem \ref{asymptotics} establishes the general result for asymptotic normality of a degenerate kernel-weighted U-statistic. For some mixture subclasses that include singular distributions of reduced Hausdorff
dimension $sq$ for some $s \in (0,1)$, exact rates of convergence for the U-statistic are obtained. These results form the basis for the study of self-normalizing statistics that  arise in goodness-of-fit tests of a parametric regression function. In Theorem \ref{asymptotics2}, we establish the limit distribution  for such statistics under the null hypothesis. In Theorem \ref{pow1}, we develop the local power analysis of the test statistic under a Pitman sequence of local alternatives \begin{align*}
H_{1}:Y_{}=g_{}\left(  X_{},\beta_{0}\right)  +\gamma_{n}\delta\left(
X_{}\right)  +u_{} \; \; \;\; ,
\end{align*} 
where $\gamma_n \downarrow 0$ and $\delta(.)$ is a fixed drift function that determines the direction of approach to the null model.
We characterize the fastest possible rate $\gamma_n$ at which alternatives can approach and yet remain distinguishable from the null.  If the Lebesgue decomposition of $F_{X}$ contains singular components, we show that it is possible for the alternatives to approach the null at a rate faster than in the fully absolutely continuous case. The novel feature of
the result is the interplay between the rate of approach $\gamma_n$,  the direction of  approach $\delta(.)$ to the null model and the singular components of the
distribution.  In particular, power does not exist at the fastest possible rate if the support of $\delta(.)$ does not sufficiently ``touch'' areas where the local singularity of the measure (the rate at which $ h \rightarrow   \mathbb{P}(\| X- x\| _{\infty} \leq h )   $ decays) coincides with $r(h)$. In Theorem \ref{pow2}, we  provide further details on the mechanism through which this interplay can influence the local power of the test statistic.

The paper is organized as follows. Section \ref{sec2} provides the model framework and assumptions. Section \ref{sec3} develops the main limit results. Section \ref{sec4} provides simulation evidence on the sensitivity of the kernel test statistic to the distribution of the conditioning variables. Section \ref{sec5} concludes. The supplemental file \cite{sidvsup} contains  additional proofs and technical results that were omitted in the main text.

\section{Framework and assumptions}  
\label{sec2}

Consider the nonlinear regression model \begin{equation}
Y=  \bar{g}(X) +u \;, \; \; \;  \; \; \;  \text{ } \E\left(  u|X\right)  =0,\label{model}%
\end{equation}
where $X\in\mathbb{R}^{q}$ is a vector of regressors, $Y$ is a scalar dependent variable and $u$ is an unobserved error. Given a family of parametric regression functions  $\{ g(x,\beta) : \beta \in \Theta \}$ indexed by a finite dimensional parameter $\beta \in \Theta \subseteq \R^p$, we address the common problem of testing the null hypothesis  $$ H_0 : \mathbb{P}( \bar{g}(X) = g(X,\beta_0)) = 1 \; \; \;  \text{for some} \;\; \beta_0 \in \Theta .$$
Denote by $\hat{u}$ the vector of estimated residuals $\hat{u}%
_{i}=Y_{i}- g(X_i,\hat{\beta})$ (e.g.  using non-linear least squares or maximum likelihood). To test $H_0$, we make use of the kernel smoothed statistic  \begin{equation}
\hat{I}_{n}=\frac{1}{n(n-1)}\sum_{i=1}^{n}\sum_{j\neq i}\hat{u}_{i}\hat{u}%
_{j}K\left(  \frac{X_{i}-X_{j}}{h_{n}}\right)  \;,  \label{I(n)}
\end{equation}
where $K(.)$ is a non-negative symmetric kernel function and $h_n \downarrow 0$ is a deterministic bandwidth sequence.  The usual goodness-of-fit statistic (see e.g. \cite{gao,phillips,Zhang}) uses a self-normalized form \begin{equation}
\label{selfn} \hat{\tau}_n =   \frac{n \hat{I}_n}{ \sqrt{ \hat{\sigma}_n^2}} \; ,
\end{equation}
where $ n^{-2}  \hat{\sigma}_n^2$ is an estimator of the variance of $ \hat{I}_n$.  In the literature, this test statistic falls under the class of smoothing-based tests   (see \cite{gonz2013} for a comprehensive review). Let $U_n$ denote the version of $\hat{I}_n$ obtained by replacing $\hat{u}_i$ with the unobserved error:  \begin{equation}
U_{n}=\frac{1}{n(n-1)}\sum_{i=1}^{n}\sum_{j\neq i}u_{i}u_{j}K\left(
\frac{X_{i}-X_{j}}{h_{n}}\right) .\label{U(u)}
\end{equation}

\subsection{Notation}  Denote by $\iota$ a vector of ones and $c \iota$ a vector of $c \in \R$. For any vector $x \in \R^q$, denote the coordinate components by $x=(x_1,\dots,x_q)$. Let $\| .\|_{2}, \|. \|_{\infty} $ and $\| . \|_{\text{op}}$ denote the Euclidean, infinity and operator
norm, respectively. For positive sequences $(a_n,b_n)$, we use $a_n \lessapprox b_n$ to denote $\limsup_{n \rightarrow \infty} a_n / b_n < \infty$ and $a_n \asymp b_n$ to denote $a_n \lessapprox b_n \lessapprox a_n$. Let $L^p(X)$ denote the usual equivalence class of $p$ integrable (with respect to $F_X$) functions  that are measurable with respect to the $\sigma$ algebra generated by $X$. Denote the absolute continuity of a measure $\nu$ with respect to a measure $\lambda$ by $\nu \ll \lambda$.  Let $\E$ and $\mathbb{P}$  denote the usual expectation and probability operators.  In the special case where $F_{X}$ can be expressed as a mixture that includes component $F_{t}$, the notation $\underset{X  \stackrel{}{\sim} F_{t} }{\mathbb{P}} $ and $\underset{X  \stackrel{}{\sim} F_{t} }{\mathbb{E}}$ will be used to indicate that the operators are defined with respect to $X\sim F_{t}$.  We use 
$\underset{d}{\rightarrow}$ to  denote convergence in distribution.

\subsection{Assumptions} 

\begin{assumption}
\label{data}  $  Z_i =(u_i, X_i) \in \R \times \R^q $ is a sequence of independent and identically distributed (i.i.d) random vectors.
\end{assumption}

\begin{assumption}
\label{error} (i) $\mathbb{E}\left(  u_{}|X_{}\right)  =0$. (ii) The functions $\mu_{2}(X) = \E[u^2|X]$ and $\mu_{4}(X) = \E[u^4|X]$ are bounded away from infinity: $\mathbb{P}(\mu_{l}(X) \leq B) =1$ for some $B <\infty $ and $l=2,4$. (iii) The function $\mu_{2}(X)$ is bounded away from zero: $\mathbb{P}(\mu_{2}(X) \geq b) = 1$ for some $b > 0 $.
 
\end{assumption}

\begin{assumption}
\label{kernel} (i) The function $K(.)$ is a product kernel: $K (x) =
\prod_{i=1}^{q} k (x_{i})$ for every $x=(x_{1},\dots,x_{q}) \in\mathbb{R}^{q}%
$. (ii) The kernel function $k: \R \rightarrow \R$ is non-negative, continuous, symmetric around zero, strictly decreasing on $[0,1]$ and has support $ \Delta =  [-1,1]$. (iii) $k(\,\cdot\,)$ is twice continuously differentiable  on the interior $ \Delta^\mathrm{o} = (-1,1)$ and the  derivatives admit a continuous extension to $\Delta$.
\end{assumption}

Assumption \ref{data} could be generalized but is made here to facilitate the
focus on the distribution of the conditioning variables. Assumptions
\ref{data} and \ref{error}(i) imply that the statistic $U_{n}$ in $\left(
\ref{U(u)}\right)  $ is degenerate. Assumptions \ref{error}(ii) and \ref{error}(iii) are made for convenience and could be weakened further. Of course, Assumption \ref{error}(ii-iii)  applies under homoscedasticity.

Assumptions \ref{kernel}(i-iii) are standard and satisfied by e.g. the
Epanechnikov kernel $k_{E}(t) = \frac{3}{4}  (1-t^2) \mathbbm{1} \{ |t| \leq 1  \} $
and Quartic kernel $  k_{Q}(t) = \frac{15}{16} (1-t^2)^2 \mathbbm{1} \{ |t| \leq 1 \} $. The support assumption on $k(\,\cdot\,)$ could be modified to allow for any compact interval without changing any of our main results.  This could be generalized even further to admit a wider class of kernel
functions (e.g. Gaussian kernels) where the assumption of compact support is replaced with a rate of decay. However, such an analysis will typically involve some interplay between the decay rate of the kernel and the tails of the distribution. Here, we simplify to highlight the impact of the distribution. Assumption \ref{kernel}(iii) is satisfied if $k'(\,\cdot\,), \: k''(\,\cdot\,)$ exist and are uniformly continuous on $ \Delta^\mathrm{o} = (-1,1)$. It ensures that in the event a density does not exist,  one can use (as we explain below) integration by parts to find the
local behavior of  moments of the U-statistic.

\section{Main results} 
\label{sec3}
In Section \ref{sec3.1}, we provide results about the moments of kernel smoothed statistics. In Section \ref{sec3.2} and \ref{sec3.3}, a class of distributions is defined over which asymptotic normality for the U-statistics in (\ref{I(n)}, \ref{U(u)}) is subsequently established. Section \ref{sec3.4} develops the limit theory 
and local power analysis within the context of specification testing.
\subsection{Derivations and bounds for moments}
\label{sec3.1}
From the seminal work of Hall \cite{Hall84, hallhyde}, it is known that limit theory for $U_n$ (\ref{U(u)})  can be established by appealing to a version of the martingale central limit theorem. Indeed, by defining \begin{align}
&  H_{n}(Z_{1},Z_{2})=u_{1}u_{2}K\bigg(\frac{X_{1}-X_{2}}{h_{n}}%
\bigg)\;,\label{h_n}\\
&  G_{n}(Z_{1},Z_{2})=\mathbb{E}\big[H_{n}(Z_{1},Z_{3})H_{n}(Z_{2}%
,Z_{3})|Z_{1},Z_{2}\big]\;,\label{G}%
\end{align}
it is shown in  \cite[Theorem 1]{Hall84} that $ n U_n / \sqrt{2 \E(H_n^2)}\underset{d}{\rightarrow} N(0,1) $, provided that the moments satisfy \begin{equation}
\frac{\E(G_n^2) + n^{-1} \E(H_n^4) }{ \{  \E(H_n^2)  \}^2   } \xrightarrow[n \rightarrow \infty]{} 0.   \label{hallcond}
\end{equation}
As noted in the literature (e.g. \cite[pp. 154-155]{koltch}), Condition (\ref{hallcond}) (or its variants in other applications) is typically difficult to interpret as it depends non-trivially on the underlying distribution of the regressors. It is shown in \cite{Zhang} (see also \cite{mammen1,mammen2,ShaikhVytlacil} for related applications) that when the distribution is absolutely continuous and certain smoothness regularity conditions hold on the density,  Condition (\ref{hallcond}) reduces to the usual restriction on the bandwidth: $h_n \downarrow 0$ and $n h_n^q \uparrow \infty$.  

Our starting point in generalizing beyond absolutely continuous measures is to derive the distributional restrictions  that are implicitly imposed through Condition (\ref{hallcond}). This requires us to express and bound the moments that appear in (\ref{h_n}, \ref{G}) in terms of interpretable functionals of $F_{X}$. This will be the subject of several subsequent Lemmas that appear below.

\begin{definition}
Let  $\mathcal{O} = (l_1,u_1) \times \dots \times (l_q,u_q) \subset \R^q  $ where  $\{(l_{i},u_{i}) \}_{i=1}^{q}$ denote open intervals of finite length. We say a function $g : \mathcal{O} \rightarrow \R$ is sufficiently differentiable on $\mathcal{O}$ if  the mixed partials $$  \frac{\partial^q g(x)}{\partial x_{1} \dots \partial x_{q}} ,  \frac{\partial^{q-1} g(x)}{\partial x_{ 2} \dots \partial x_{q} } , \dots , \frac{\partial g(x)}{\partial x_{ q}} $$ exist and admit continuous extensions to the closure of $\mathcal{O}$.  In this case, we denote $$ \partial_{x} g(x) = \frac{\partial^q g(x)}{ \partial x_{1} \dots \partial x_{q}} .$$
\end{definition}
Given $f \in L^1(X)$, a straightforward application of Fubini's theorem shows that integration  (with respect to $f  dF_{X} $) of a compactly supported sufficiently differentiable function  admits a representation as a standard integral with respect to the Lebesgue measure. The integrand in this case is the $\partial_{t} g(.)$-weighted $f dF_{X} $ measure of a  Euclidean cube.
\begin{lemma}
\label{aux} Let $f \in L^1(X)$ and $\{(l_{i},u_{i}) \}_{i=1}^{q}$ denote open intervals of finite length.  Suppose $g: \R^q \rightarrow \R$ is bounded, continuous, sufficiently differentiable on $\mathcal{O} = (l_1,u_1) \times \dots \times (l_q,u_q)$ and has support contained in the closure of $\mathcal{O}$ . Additionally, for $q > 1 $ and every $2 \leq k \leq q $, suppose that the mixed partial $$  \frac{\partial^{q-k+1} g(x)}{ \partial x_{k} \dots \partial x_{q-1}  \partial x_{q}}  $$
vanishes at $x_j = u_j  \, $ for  $ j < k $. Then \begin{equation}
 \int_{\R^q} f(x)g(x)d F_{X} (x) =(-1)^{q}\int_{  \R^q  } \prod_{i=1}^q \mathbbm{1}  \big \{ l_i \leq t_i \leq u_i   \big \}    \Omega
_{f}(l,t)\partial_{t}g(t)dt\;, \label{M1}%
\end{equation}
where $\Omega_{f}(l,t)$ denotes
\begin{equation}
 \Omega_{f}(l,t)=\int_{  \R^q }    \prod_{i=1}^q  \mathbbm{1} \{ l_i \leq x_i \leq t_i  \} f(x) d F_{X} (x) . \label{M2}%
\end{equation}
\end{lemma}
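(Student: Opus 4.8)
The plan is to reduce the left-hand side of $(\ref{M1})$ to a Lebesgue integral in two moves: first represent $g$ itself through an iterated fundamental theorem of calculus, then interchange the order of integration. The key point is that a plain antiderivative representation of $g$ produces the $f\,dF_X$-measure of a truncated orthant $\{x \le t \le u\}$ (inequalities understood coordinatewise), but since $\operatorname{supp}(g) \subseteq \overline{\mathcal{O}}$ one may freely insert the factor $\prod_{i} \mathbbm{1}\{l_i \le x_i\}$, which turns the orthant into the cube $[l,t] = \prod_i [l_i, t_i]$ and thereby produces $\Omega_f(l,t)$ after the swap. So the first goal is the pointwise identity
\[
g(x) \;=\; (-1)^q \int_{\R^q} \prod_{i=1}^q \mathbbm{1}\{l_i \le x_i \le t_i \le u_i\}\, \partial_t g(t)\, dt, \qquad x \in \R^q.
\]

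To establish this, take $x \in \overline{\mathcal{O}}$ and integrate $\partial_t g = \partial_{x_1}\cdots\partial_{x_q} g$ successively in $t_1, t_2, \dots, t_q$, each over $[x_j, u_j]$. At the $j$-th step one writes $\partial^{\,q-j+1} g / \partial x_j\cdots\partial x_q = \partial_{t_j}\big(\partial^{\,q-j} g/\partial x_{j+1}\cdots\partial x_q\big)$ (legitimate by Schwarz's theorem, as the mixed partials in the list are continuous) and applies the one-dimensional fundamental theorem of calculus; the boundary term at $t_j = u_j$ vanishes, which for $1 \le j \le q-1$ is exactly the hypothesis of the lemma (with its index $k$ set equal to $j+1$), while for $j = q$ it follows from continuity of $g$ together with $\operatorname{supp}(g) \subseteq \overline{\mathcal{O}}$, which forces $g \equiv 0$ on the hyperplane $\{x_q = u_q\}$. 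Each step contributes a sign $-1$, so after $q$ steps one obtains $\int_{\R^q}\prod_i \mathbbm{1}\{x_i \le t_i \le u_i\}\,\partial_t g(t)\,dt = (-1)^q g(x)$, the one-dimensional theorem being legitimate because every partial in the chain is continuous, hence absolutely continuous in the variable being integrated out. Since $l \le x \le u$ for $x \in \overline{\mathcal{O}}$, the inserted factor $\prod_i \mathbbm{1}\{l_i \le x_i\}$ equals one, so the displayed identity holds for such $x$; and for $x \notin \overline{\mathcal{O}}$ both sides vanish (the left by the support condition, the right because the indicator is then identically zero in $t$), so the identity holds for all $x$.

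With this representation at hand, substitute it into $\int f g\,dF_X$ and apply Fubini's theorem. This is justified because $|\partial_t g| \le M := \sup_{\overline{\mathcal{O}}} |\partial_t g| < \infty$, the indicator confines $(x,t)$ to $\{l \le x \le t \le u\}$, and for fixed $x$ the $t$-section has Lebesgue measure at most $\operatorname{vol}(\overline{\mathcal{O}})$, so the double integral of the absolute value is bounded by $M\,\operatorname{vol}(\overline{\mathcal{O}})\,\|f\|_{L^1(X)} < \infty$ since $f \in L^1(X)$. Interchanging the integrals, factoring out the $x$-free term $\prod_i \mathbbm{1}\{t_i \le u_i\}$, and identifying the inner integral $\int f(x)\prod_i \mathbbm{1}\{l_i \le x_i \le t_i\}\,dF_X(x)$ with $\Omega_f(l,t)$ yields $\int f g\,dF_X = (-1)^q \int \Omega_f(l,t)\prod_i\mathbbm{1}\{t_i \le u_i\}\,\partial_t g(t)\,dt$; since $\Omega_f(l,t) = 0$ unless $t_i \ge l_i$ for every $i$, this is the right-hand side of $(\ref{M1})$. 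The main obstacle is the first step: tracking signs and boundary evaluations through the multivariate fundamental theorem of calculus, and combining the stated boundary hypotheses with the consequences of compact support so that every boundary term is seen to vanish and the continuous extensions of the mixed partials are the correct objects for the one-dimensional theorem. Once the representation of $g$ is secured, the remainder is the routine Fubini argument announced in the text preceding the lemma.
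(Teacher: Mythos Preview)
Your proposal is correct and follows essentially the same route as the paper's proof: represent $g$ via an iterated fundamental theorem of calculus (using the vanishing-boundary hypotheses to kill the upper evaluations), substitute, and swap integrals by Fubini to produce $\Omega_f(l,t)$. You supply more explicit bookkeeping on which hypothesis handles which boundary term and a more careful Fubini justification than the paper does, but the underlying argument is identical.
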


The next Lemma aims to interpret the moments appearing in Condition (\ref{hallcond}) through repeated applications of Lemma \ref{aux}. We begin by introducing some convenient notation. Let $\mu_{2}\left(
t\right)  ,\mu_{4}\left(  t\right)  $ be as in Assumption \ref{error}. Given $x,s,t \in \R_{}^q$, we define the cube centered at $x$ with directions $(s,t)$ to be
\[
B\left(  x-s,x+t\right)  =\left\{  y \in \R^q  :x_{i}-s_{i}\leq y_{i}\leq x_{i}+t_{i} \; \: \forall  \; i=1,...,q\right\}  .
\]
Define
\begin{align}
&  \Omega_{l}(x-s,x+t)=\int\limits_{B\left(  x-s,x+t\right)  }\mu_{l}%
(y)d F_{X}(y)\;\;\;\;\;\;\;\;\;l=2,4  \label{omega}\\
&  F_{X}(x-s,x+t)=\int\limits_{B\left(  x-s,x+t\right)  } d F_{X}\;.     \label{Fx}
\end{align}
The following Lemma expresses the moments in terms of functionals of $\Omega_{l}$.
\begin{lemma}
\label{moments} Suppose $\mu_{2}(X) \in L^1(X)$ and Assumptions (\ref{data}, \ref{kernel}) hold. Then 
\begin{enumerate}
\item[(i)]
\begin{align*}
  \mathbb{E}[H_{n}^{2}(Z_{1},Z_{2})]=\mathbb{E}\bigg(\mu_{2}(X)\int_{\left[  0,1\right]  ^{q}}\Omega
_{2}(X-h_{n}v,X+h_{n}v)\partial_{v}K^{2}(-v)dv\bigg)  \; ,
\end{align*}
\item[(ii)]
\begin{align*}
  \mathbb{E}[H_{n}^{4}(Z_{1},Z_{2})]=\mathbb{E}\bigg(\mu_{4}(X)\int_{\left[  0,1\right]  ^{q}}\Omega
_{4}(X-h_{n}v,X+h_{n}v)\partial_{v}K^{4}(-v)dv\bigg) \; ,
\end{align*}
\item[(iii)]
\begin{align*}
\mathbb{E}[G_{n}^{2}(Z_{1},Z_{2})]  &
 \lessapprox \mathbb{E}\bigg(  \mu_2(X)  \{ \Omega_2(X - h_n \iota , X +  h_n \iota)    \}^2 \int_{[-2,2]^q} \Omega_{2} ( X - 2 h_n \iota , X + h_n u     )     \\
&  \;\;\;\;\; \;   \times  \left| \partial_{u} \bigg[ \int_{  [-1,1]^q
}   \big(    \partial_{v} 
 \big[ K(v)K\big(v-u\big) \big]   \big)^2 \:  \mathbbm{1} \big \{  v-u \in [-1,1]^q      \big \}     dv \bigg]  \right|   du\bigg).
\end{align*}
\end{enumerate}
\end{lemma}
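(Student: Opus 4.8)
The three parts follow the same strategy: in each I would condition on the pair of observations that makes $H_n$ (respectively $G_n$) deterministic, reducing the moment to a fixed kernel integral against $dF_X$, and then convert that integral into a Lebesgue integral using Lemma~\ref{aux} (or, for (i)--(ii), its one-dimensional specialisation, the fundamental theorem of calculus). Parts (i) and (ii) are short; (iii) carries essentially all of the work.

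For (i): conditioning on $(X_1,X_2)$ and using $\mathbb{E}[u^2\mid X]=\mu_2(X)$ with the independence in Assumption~\ref{data} gives $\mathbb{E}[H_n^2]=\mathbb{E}\big[\mu_2(X_1)\mu_2(X_2)K^2((X_1-X_2)/h_n)\big]$. Since $k$ is $C^1$ and supported on $[-1,1]$ (so $k(\pm1)=0$) with $k'\le0$ on $[0,1]$, one has $k^2(t)=\int_0^1\mathbbm{1}\{|t|\le w\}(-(k^2)'(w))\,dw$, and the $q$-fold product over coordinates yields $K^2(z)=\int_{[0,1]^q}\mathbbm{1}\{|z_i|\le v_i\ \forall i\}\prod_{i=1}^q(-(k^2)'(v_i))\,dv$, where $\prod_i(-(k^2)'(v_i))$ is the weight written $\partial_vK^2(-v)$ in the statement. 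Putting $z=(X_1-X_2)/h_n$, noting $\{|z_i|\le v_i\ \forall i\}=\{X_2\in B(X_1-h_nv,X_1+h_nv)\}$, and applying Tonelli (all integrands are nonnegative) replaces the inner conditional expectation of $\mu_2(X_2)\mathbbm{1}\{\cdot\}$ by $\Omega_2(X_1-h_nv,X_1+h_nv)$, which is the claim. Part (ii) is the identical computation with $k^4,\mu_4$ in place of $k^2,\mu_2$, using $k^4(\pm1)=0$ and $(k^4)'\le0$ on $[0,1]$.

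For (iii): write $G_n(Z_1,Z_2)=u_1u_2\,\theta_n(X_1,X_2)$ with $\theta_n(x_1,x_2)=\int\mu_2(x_3)K(\tfrac{x_1-x_3}{h_n})K(\tfrac{x_2-x_3}{h_n})\,dF_X(x_3)$, so $\mathbb{E}[G_n^2]=\mathbb{E}\big[\mu_2(X_1)\mu_2(X_2)\theta_n(X_1,X_2)^2\big]$. For fixed $(x_1,x_2)$ with $\|x_1-x_2\|_\infty\le2h_n$ (otherwise $\theta_n=0$), apply Lemma~\ref{aux} to $\theta_n$ in the variable $x_3$, with $f=\mu_2$ and $g(x_3)=K(\tfrac{x_1-x_3}{h_n})K(\tfrac{x_2-x_3}{h_n})$: $g$ is supported on the box $B(x_1-h_n\iota,x_1+h_n\iota)\cap B(x_2-h_n\iota,x_2+h_n\iota)$, and because $g$ factors over coordinates with each single-variable factor vanishing at the upper endpoint of that box, the differentiability and boundary hypotheses of Lemma~\ref{aux} are met. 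Substituting $v=(x_1-x_3)/h_n$ and writing $u=(x_1-x_2)/h_n$, one obtains $\theta_n(x_1,x_2)=\int_{[-1,1]^q}\mathbbm{1}\{v-u\in[-1,1]^q\}\,\Omega_2(l,x_1-h_nv)\,\partial_v[K(v)K(v-u)]\,dv$, where $l$ is the lower corner of the box, so that $\Omega_2(l,x_1-h_nv)\le\Omega_2(x_1-h_n\iota,x_1+h_n\iota)$. Squaring, bounding $|\partial_v[K(v)K(v-u)]\,\partial_{v'}[K(v')K(v'-u)]|\le\tfrac12\big((\partial_v[K(v)K(v-u)])^2+(\partial_{v'}[K(v')K(v'-u)])^2\big)$ and using $v\leftrightarrow v'$ symmetry, then bounding the $\Omega_2$ factors by $\Omega_2(x_1-h_n\iota,x_1+h_n\iota)$ and the residual $v'$-volume by $2^q$, yields $\theta_n(x_1,x_2)^2\lessapprox\{\Omega_2(x_1-h_n\iota,x_1+h_n\iota)\}^2\,\Psi((x_1-x_2)/h_n)$, where $\Psi(u)=\int_{[-1,1]^q}(\partial_v[K(v)K(v-u)])^2\,\mathbbm{1}\{v-u\in[-1,1]^q\}\,dv$.

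It then remains to integrate over $X_2$ given $X_1$: $\mathbb{E}[G_n^2]\lessapprox\mathbb{E}\big(\mu_2(X_1)\{\Omega_2(X_1-h_n\iota,X_1+h_n\iota)\}^2\int\mu_2(x_2)\Psi((X_1-x_2)/h_n)\,dF_X(x_2)\big)$. Here the structural observation is that $\Psi$ factors over coordinates, $\Psi(u)=\prod_i\psi(u_i)$, is even, is supported on $[-2,2]^q$, and is $C^1$ on $(-2,2)^q$ with $\psi,\psi'$ extending continuously to $[-2,2]$ and vanishing at $\pm2$; I would check this last point by the Leibniz rule applied to the variable-limit one-dimensional integrals defining $\psi$, using $k\in C^2$ and $k(\pm1)=0$ to see the boundary contributions cancel (in particular at the transition $s=0$ and at $s=\pm2$). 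Granting it, Lemma~\ref{aux} applies once more to $\int\mu_2(x_2)\Psi((X_1-x_2)/h_n)\,dF_X(x_2)$ with $f=\mu_2$ and $g(x_2)=\Psi((X_1-x_2)/h_n)$ on the box $B(X_1-2h_n\iota,X_1+2h_n\iota)$; after the substitution $u=(x_2-X_1)/h_n$ and replacing the signed Lebesgue weight by its absolute value, this is at most $\int_{[-2,2]^q}\Omega_2(X_1-2h_n\iota,X_1+h_nu)\,|\partial_u\Psi(u)|\,du$, which is exactly the stated bound. The main obstacles are both inside part (iii): arranging the Cauchy--Schwarz step so the squared kernel-derivative factor is extracted while keeping the two $\Omega_2$ factors that must sit outside the $v$-integral, and verifying the regularity and boundary hypotheses of Lemma~\ref{aux} for the composite weight $\Psi$, i.e.\ the Leibniz computation with moving limits of integration.
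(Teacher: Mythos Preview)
Your proposal is correct and, for part (iii), matches the paper's argument almost exactly: apply Lemma~\ref{aux} to the inner $x_3$-integral, square (the paper uses Cauchy--Schwarz in the form $[\int_{[-1,1]^q}\cdots\,dv]^2\le 2^q\int(\cdots)^2\,dv$ rather than your AM--GM plus symmetry, but the two are equivalent here), pull out $\{\Omega_2(x_1-h_n\iota,x_1+h_n\iota)\}^2$, and apply Lemma~\ref{aux} once more to the outer $x_2$-integral after verifying $C^1$ regularity of $\Psi$ via the Leibniz rule; the paper isolates this last check as a separate auxiliary lemma in the supplement, and it is indeed the only nontrivial point.

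For parts (i)--(ii) your route is slightly different from the paper's. The paper first applies Lemma~\ref{aux} to the inner integral in the variable $x$, which produces a signed Lebesgue integral over $[-1,1]^q$, and then uses the symmetry of $k$ coordinate by coordinate to fold $[-1,1]^q$ down to $[0,1]^q$ and recover the nonnegative weight $\partial_vK^2(-v)$. Your layer-cake identity $k^2(t)=\int_0^1\mathbbm{1}\{|t|\le w\}(-(k^2)'(w))\,dw$ collapses these two steps into one and bypasses Lemma~\ref{aux} altogether for (i)--(ii); this is a bit more elementary and lands directly on the claimed expression. The paper's route has the advantage of uniformity (the same lemma drives all three parts), while yours makes the nonnegativity of the integrand transparent from the outset.
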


\begin{proof}
[Proof of Lemma \ref{moments}] 
\begin{align*}
\mathbb{E}[H_{n}^{2}(Z_{1},Z_{2})] &  =\mathbb{E}\bigg[\mu_{2}(X_{1})\mu
_{2}(X_{2})K^{2}\bigg(\frac{X_{1}-X_{2}}{h_{n}}\bigg)\bigg]\\
&  =\mathbb{E}\bigg[\mu_{2}(X_{2})\int_{\mathbb{R}^{q}}\mu_{2}(x)K^{2}%
\bigg(\frac{x-X_{2}}{h_{n}}\bigg)d F_{X} (x)\bigg].
\end{align*}
Define
\[
f(x)=\mu_{2}(x),\;g(x)=K^{2}\bigg(\frac{x-X_{2}}{h_{n}}\bigg).
\]
Let $X_2^i$ denote the $i^{th}$ coordinate of $X_2$. Conditional on $X_{2}$, $(f,g)$ satisfy the hypothesis of Lemma \ref{aux} with  $ \mathcal{O}  =  (X_{2}^{1}%
-h_{n},X_{2}^{1}+h_{n}) \times \dots \times (X_{2}^{q}%
-h_{n},X_{2}^{q}+h_{n}) $.  Applying Lemma \ref{aux} yields
\begin{align*}
\int_{\mathbb{R}^{q}}\mu_{2}(x)K^{2}\bigg(\frac{x-X_{2}}{h_{n}}\bigg)dF_{X}(x)
&  = (-1)^q \int_{\mathbb{R}^{q}} \mathbbm{1} \{ t \in \mathcal{O}  \}   \Omega_{2}(X_{2}-h_{n}\iota,t)\partial_{t}%
K^{2}\bigg(\frac{t-X_{2}}{h_{n}}\bigg)dt\\
&  = (-1)^q \int_{\left[  -1,1\right]  ^{q}}\Omega_{2}(X_{2}-h_{n}\iota,X_{2}%
+h_{n}v)\partial_{v}K^{2}(v)dv\;,
\end{align*}
where the last equality follows from the change of variables $t\rightarrow
X_{2}+h_{n}v$. It follows that
\[
\mathbb{E}[H_{n}^{2}(Z_{1},Z_{2})]=\mathbb{E}\bigg(\mu_{2}(X)(-1)^{q}%
\int_{\left[  -1,1\right]  ^{q}}\Omega_{2}(X-h_{n}\iota,X+h_{n}v)\partial
_{v}K^{2}(v)dv\bigg).
\]
Define $(v_{1},v_{-1})$ to be the partitioned
vector $(v_{1},\dots,v_{q})$ with $v_{-1}=\left(  v_{2},...,v_{q}\right)  $.
For any fixed choice of $v_{-1}\in\mathbb{R}^{q-1}$, we have
\begin{align*}
&  \int_{[-1,1]}\Omega_{2}(X-h_{n}\iota,X+h_{n}(v_{1},v_{-1}))\partial_{v_{1}}k^{2}(v_{1})dv_{1}\\
& \qquad\qquad =     \int_{[0,1]} \Omega_{2}(X-h_{n}\iota,X+h_{n}(v_{1},v_{-1}))\partial_{v_{1}}k^{2}(v_{1})dv_{1}  \\ 
  & \qquad\qquad\qquad\qquad
  +  \int_{[-1,0]}  \Omega_{2}(X-h_{n}\iota,X+h_{n}(v_{1},v_{-1}))\partial_{v_{1}}k^{2}(v_{1})dv_{1}  \\
& \qquad\qquad = \int_{[0,1]} \Omega_{2}(X-h_{n}\iota,X+h_{n}(v_{1},v_{-1}))\partial_{v_{1}}k^{2}(v_{1})dv_{1} \\
  & \qquad\qquad\qquad\qquad
  -\int_{[0,1]}\Omega_{2}(X-h_{n}\iota,X+h_{n}(-v_{1},v_{-1}))\partial_{v_{1}}k^{2}(v_{1})dv_{1} \\
& \qquad\qquad = \int_{[0,1]} \Omega_{2}(X-h_{n}(v_{1},\iota),X+h_{n}(v_{1},v_{-1}))\partial_{v_{1}}k^{2}(v_{1})dv_{1}\;,
\end{align*}
where the second  equality follows from the change of variables $v_1 \rightarrow -v_1$ and $  \partial_{v_1}  k^{2}%
(-v_{1})=- \partial_{v_1} k^{2}(v_{1})$ ($k$ is a symmetric function).
Iterating this procedure from $v_{1}$ to $v_{q}$ yields
\[
\int_{\lbrack-1,1]^{q}}\Omega_{2}(X-h_{n}\iota,X+h_{n}v)\partial_{v}%
K^{2}(v)dv=\int_{[0,1]^{q}}\Omega_{2}\big(X-h_{n}v,X+h_{n}v\big)\partial
_{v}K^{2}(v)dv.
\]
The expression for $\E(H_n^2)$ follows from substituting  $(-1)^q \partial_{v} K^2(v) = \partial_{v} K^2(-v)$. The derivation for $\E(H_n^4)$ is similar. The derivation for $\E(G_n^2)$  follows from repeated applications of Lemma \ref{aux} (further details provided in the supplementary file  \cite{sidvsup}).
\end{proof}
Given the form of the integrand that defines $\E(H_n^2)$ in Lemma \ref{moments}, it is expected (by Lebesgue's differentiation theorem)  that $\E(H_n^2) \asymp h_n^q$ whenever $F_{X} \ll $ Lebesgue measure. If the Lebesgue decomposition of $F_{X}$ admits singular components, the following Corollary shows  that $h_n^q$ may at least be used as a conservative lower bound on the rate.

\begin{corollary}
\label{momlimit}
Suppose $\mu_{2}(X) \in L^2(X)$,  Assumptions (\ref{data},\ref{kernel}) hold and $h_n \downarrow 0$. \begin{enumerate}
\item[(i)] If $F_X$ is absolutely continuous with respect to the Lebesgue measure and admits a density function $f_X \in L^{\infty}(X)$, then
\begin{equation*}
h_{n}^{-q}\mathbb{E}\left[ H_n^2(Z_1,Z_2) \right]  \xrightarrow[n \rightarrow \infty ]{}\mathbb{E}\bigg(\mu_{2}%
^{2}(X)f_{X}(X)\int_{[-1,1]^q}K^{2}(v)dv\bigg) . %
\end{equation*}

\item[(ii)] If additionally Assumption \ref{error}(iii) holds, then for any distribution $F_X$ we have that
\begin{equation*}
 \liminf_{n \rightarrow \infty}  h_{n}^{-q}  \mathbb{E}\left[ H_n^2(Z_1,Z_2) \right]   \geq \begin{cases}      \mathbb{E}\bigg[\mu_{2}%
^{2}(X)f_{X}(X)  \int \limits_{[-1,1]^q}K^{2}(v)dv\bigg]   & F_{X} \ll \; \text{Lebesgue} \; ,  \\ \infty & \text{else}   .         \end{cases}
\end{equation*}
\end{enumerate}
\end{corollary}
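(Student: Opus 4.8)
The plan is to work from the representation of $\E[H_n^2]$ established in Lemma \ref{moments}(i), namely
\[
\E[H_n^2(Z_1,Z_2)] = \E\bigg( \mu_2(X) \int_{[0,1]^q} \Omega_2(X - h_n v, X + h_n v) \, \partial_v K^2(-v) \, dv \bigg),
\]
and to analyze the inner integral after scaling by $h_n^{-q}$. First I would rewrite $h_n^{-q} \Omega_2(X - h_n v, X + h_n v)$ as an average of $\mu_2$ against $F_X$ over a shrinking cube of half-widths $h_n v$. For part (i), when $F_X \ll$ Lebesgue with density $f_X \in L^\infty$, Lebesgue's differentiation theorem gives, for $F_X$-a.e.\ $x$,
\[
h_n^{-q} \Omega_2(x - h_n v, x + h_n v) = h_n^{-q}\!\!\int_{B(x - h_n v, x + h_n v)}\!\! \mu_2(y) f_X(y)\, dy \;\longrightarrow\; \Big(\textstyle\prod_{i=1}^q 2 v_i\Big) \, \mu_2(x) f_X(x),
\]
since the cube has Lebesgue volume $\prod_i 2 h_n v_i$. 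I would then pass this limit through the $dv$ integral and the outer expectation by dominated convergence: the integrand is bounded by $\|f_X\|_\infty \|\mu_2\|_\infty \prod_i 2 v_i \, |\partial_v K^2(-v)|$ uniformly in $n$ (using $\mu_2 \in L^\infty$ from Assumption \ref{error}(ii) and boundedness of the kernel derivatives from Assumption \ref{kernel}(iii)), which is integrable on $[0,1]^q$. This yields
\[
h_n^{-q} \E[H_n^2] \to \E\Big( \mu_2^2(X) f_X(X) \int_{[0,1]^q} \big(\textstyle\prod_i 2 v_i\big) \partial_v K^2(-v)\, dv \Big),
\]
and a change of variables (undoing the folding done in the proof of Lemma \ref{moments}, i.e.\ $\int_{[0,1]^q}\prod_i(2v_i)\,\partial_v K^2(-v)\,dv = \int_{[-1,1]^q} K^2(v)\,dv$, which follows by integration by parts in each coordinate) gives the stated constant.

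For part (ii), the key point is that the integrand in $h_n^{-q}\E[H_n^2]$ is \emph{nonnegative}: $\mu_2 \ge b > 0$ by Assumption \ref{error}(iii), $\Omega_2(X - h_n v, X + h_n v) \ge 0$, and $\partial_v K^2(-v) \ge 0$ on $[0,1]^q$ since each one-dimensional factor $k^2$ is decreasing on $[0,1]$ so that $\partial_{v_i} k^2(-v_i) = -\partial_{v_i} k^2(v_i) \ge 0$ there (and the mixed partial is the product of these). Hence Fatou's lemma applies to the outer expectation and to the $dv$ integral. When $F_X \ll$ Lebesgue the pointwise liminf of $h_n^{-q}\Omega_2$ is again $\prod_i(2v_i)\mu_2(x)f_X(x)$ by differentiation theory, delivering the first branch. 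When $F_X$ has a nontrivial discrete or singular continuous component, I would argue that on a set of positive $F_X$-measure the small-ball behavior is supercritical: either $F_X(\{x\}) > 0$ for some atom (then $h_n^{-q}\Omega_2(x-h_nv,x+h_nv) \ge h_n^{-q} b\, F_X(\{x\}) \to \infty$), or, on the support of the singular continuous part, $h_n^{-q} F_X(x - h_n v, x + h_n v) \to \infty$ for $F_X^s$-a.e.\ $x$ by the standard density/covering argument (a finite lim sup of $h^{-q}F_X$-mass of cubes at $F_X$-a.e.\ point would force $F_X \ll$ Lebesgue, contradiction). Combining with $\mu_2 \ge b$ and Fatou on the $v$-integral (the factor $\int_{[0,1]^q}\prod_i(2v_i)\partial_v K^2(-v)\,dv$ is a strictly positive constant) forces $\liminf_n h_n^{-q}\E[H_n^2] = \infty$.

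The main obstacle is the divergence claim in the singular case: making precise that $\limsup_{h\downarrow 0} h^{-q} F_X(x-h\iota,x+h\iota) = \infty$ on a set of positive $F_X^s$-measure. I expect to handle this via a Vitali-type covering argument — if this lim sup were finite (say bounded by $M$) on a set $A$ with $F_X^s(A) > 0$, one covers $A$ efficiently by such small cubes and bounds $F_X(A) \le M \cdot (\text{Lebesgue measure of a neighborhood of } A)$, and since $F_X^s$ lives on a Lebesgue-null set this yields $F_X^s(A) = 0$, a contradiction. A clean reference point is the differentiation theorem for Radon measures (the symmetric derivative $D(F_X, \mathrm{Leb})(x) = +\infty$ for $F_X^s$-a.e.\ $x$); I would either cite this or include the short covering argument in the supplement \cite{sidvsup}. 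The atomic case is immediate and needs no covering lemma. Everything else — the dominated/Fatou convergence bookkeeping and the one-dimensional integration-by-parts identity for the kernel constant — is routine given Assumptions \ref{error} and \ref{kernel}.
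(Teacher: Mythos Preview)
Your argument for part (ii) is essentially the paper's: nonnegativity of the integrand, Fatou, and the differentiation theorem for Radon measures (the paper simply cites Rudin's Theorem 7.15 for the fact that $h_n^{-q}F_X^{d+s}(x-h_nv,x+h_nv)\to\infty$ for $F_X^{d+s}$-a.e.\ $x$, which is exactly the covering statement you sketch). That part is fine.

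Part (i), however, has a genuine gap. You justify dominated convergence by bounding the integrand with $\|f_X\|_\infty \|\mu_2\|_\infty \prod_i 2v_i\,|\partial_v K^2(-v)|$, invoking Assumption~\ref{error}(ii) for $\mu_2\in L^\infty$. But Corollary~\ref{momlimit} does \emph{not} assume Assumption~\ref{error}; its hypotheses are only Assumptions~\ref{data}, \ref{kernel} and $\mu_2\in L^2(X)$. Under $\mu_2\in L^2$ alone, the quantity $h_n^{-q}\Omega_2(x-h_nv,x+h_nv)$ need not be bounded uniformly in $n$ by any $L^1(F_X)$ function via your pointwise estimate, because $\mu_2$ can be large on small sets.

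The paper closes this gap with the Hardy--Littlewood maximal function. It writes
\[
h_n^{-q}\Omega_2(X-h_nv,X+h_nv)
\;\lesssim\;
\big[h_n^{-q}F_X(X-h_n\iota,X+h_n\iota)\big]\cdot
\frac{\Omega_2(X-h_n\iota,X+h_n\iota)}{F_X(X-h_n\iota,X+h_n\iota)}
\;\le\;
2^q\|f_X\|_{L^\infty}\,(M\mu_2)(X),
\]
so the full integrand is dominated by $T(X)=2^q\|f_X\|_{L^\infty}\,\mu_2(X)\,(M\mu_2)(X)$. Since the maximal operator is strong type $(2,2)$, $\mu_2\in L^2$ implies $(M\mu_2)\in L^2$, and Cauchy--Schwarz gives $T\in L^1(X)$, which is what dominated convergence needs. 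Your remaining steps (Lebesgue differentiation for the pointwise limit, and the coordinate-wise integration by parts turning $\int_{[0,1]^q}\prod_i(2v_i)\partial_vK^2(-v)\,dv$ into $\int_{[-1,1]^q}K^2(v)\,dv$) match the paper.
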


In particular, Corollary \ref{momlimit} extends the standard result (see e.g. \cite{Zhang}) for the limiting behavior of $\E(H_n^2)$ when the
Lebesgue density exists (although here we do not
assume that it is continuous) and demonstrates that $h_n^{-q} \E(H_n^2)$ diverges when there
are singular components. The next Lemma provides bounds on the moments that will be
instrumental in verifying Condition (\ref{hallcond}) and as a consequence the limit behavior of the U-statistic in $\left(
\ref{U(u)}\right)$. \begin{lemma}
\label{moments copy(2)} Let Assumptions \ref{data}-\ref{kernel} hold. Then,
given any $\varepsilon\in(0,1)$, we have
\begin{align*}
&  (i)\;\;\;b_{1}\mathbb{E}\big[  F_{X}(X-h_{n}\varepsilon\iota
,X+h_{n}\varepsilon\iota)  \big]\leq \mathbb{E}\left[ H_n^2(Z_1,Z_2) \right] \leq B_{2}\mathbb{E}\big[
F_{X}(X-h_{n}\iota,X+h_{n}\iota)  \big] \; ,\\
&  (ii)\;\;\;  \mathbb{E}\left[ H_n^4(Z_1,Z_2) \right]  \leq B_{3}\mathbb{E}\big[   F_{X}(X-h_{n}\iota,X+h_{n}%
\iota)  \big] \; ,\\
&  (iii)\;\;  \mathbb{E}\left[ G_n^2(Z_1,Z_2) \right]  \leq B_{4} \E \big[   \big\{ F_X(X - 2 h_n \iota ,X + 2 h_n \iota)   \big\}^3      \big]
\end{align*}
where $b_{1},B_{2},B_{3},B_{4}>0$ are finite universal constants that depend on $\mu_{2}\left(  x\right)  ,\mu_{4}\left(  x\right)  $
(Assumption \ref{error}) and  functionals of the kernel function  $k(\,\cdot\,)$.
\end{lemma}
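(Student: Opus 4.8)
The plan is to derive all three bounds from the exact moment representations in Lemma \ref{moments}, replacing the weighted cube integrals $\Omega_l$ by plain $F_X$-cube measures using the boundedness of $\mu_2,\mu_4$ (Assumption \ref{error}(ii)) and the monotonicity of the cube functional in its side lengths. The key observation is that $\Omega_l(x - h_n v, x + h_n v) \leq B\, F_X(x - h_n v, x + h_n v) \leq B\, F_X(x - h_n\iota, x + h_n\iota)$ for $v \in [0,1]^q$ since $\mu_l \leq B$ and the cube $B(x - h_n v, x + h_n v)$ is contained in $B(x - h_n\iota, x + h_n\iota)$; the remaining integral $\int_{[0,1]^q}|\partial_v K^2(-v)|\,dv$ is a finite kernel constant under Assumption \ref{kernel}(iii). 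This immediately gives the upper bound in (i) with $B_2 = B \int_{[0,1]^q}|\partial_v K^2(-v)|\,dv$, and an identical argument with $K^4$ in place of $K^2$ gives (ii).

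For the lower bound in (i), I would exploit that $k$ is strictly decreasing on $[0,1]$ (Assumption \ref{kernel}(ii)), so $K^2(v) \geq K^2(\varepsilon\iota) =: c_\varepsilon > 0$ for all $v \in [-\varepsilon,\varepsilon]^q$. Starting from $\E[H_n^2] = \E[\mu_2(X_1)\mu_2(X_2) K^2((X_1-X_2)/h_n)]$ and restricting the inner expectation to the event $\|X_1 - X_2\|_\infty \leq h_n\varepsilon$, one gets
\begin{align*}
\E[H_n^2(Z_1,Z_2)] \;\geq\; c_\varepsilon\, \E\big[\mu_2(X_1)\mu_2(X_2)\,\mathbbm{1}\{\|X_1-X_2\|_\infty \leq h_n\varepsilon\}\big] \;\geq\; c_\varepsilon\, b^2\, \E\big[F_X(X - h_n\varepsilon\iota, X + h_n\varepsilon\iota)\big],
\end{align*}
using $\mu_2 \geq b > 0$ (Assumption \ref{error}(iii)); this is (i) with $b_1 = b^2 K^2(\varepsilon\iota)$. (Alternatively one could argue directly from the $\Omega_2$ representation, but the direct route avoids controlling the sign of $\partial_v K^2$.)

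For (iii), I would start from the bound on $\E[G_n^2]$ in Lemma \ref{moments}(iii). As before, $\mu_2 \leq B$ and each factor $\Omega_2(X - h_n\iota, X + h_n\iota) \leq B\, F_X(X - h_n\iota, X + h_n\iota) \leq B\, F_X(X - 2h_n\iota, X + 2h_n\iota)$, while the third factor $\Omega_2(X - 2h_n\iota, X + h_n u)$ with $u \in [-2,2]^q$ sits inside the cube $B(X - 2h_n\iota, X + 2h_n\iota)$, hence is $\leq B\, F_X(X - 2h_n\iota, X + 2h_n\iota)$. The outer $du$-integral of $|\partial_u[\cdots]|$ over $[-2,2]^q$ is a finite kernel constant by Assumption \ref{kernel}(iii) (the integrand is a fixed function of $u$ built from $k, k', k''$, with compact support). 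Collecting the three $F_X$-cube factors produces the cube measure cubed, giving (iii) with $B_4$ equal to $B^3$ times the kernel constant. The main obstacle is the last point: verifying that the kernel-dependent $du$-integral in (iii) is genuinely finite, which requires Assumption \ref{kernel}(iii) to ensure $\partial_u$ of the inner squared-derivative integral is integrable — in particular handling the boundary terms that arise when differentiating $\int_{[-1,1]^q} (\partial_v[K(v)K(v-u)])^2\,\mathbbm{1}\{v - u \in [-1,1]^q\}\,dv$ in $u$, where the indicator's boundary contributes terms involving $(\partial_v[K(v)K(v-u)])^2$ evaluated on the faces $v_i - u_i = \pm 1$; the continuous extension of $k', k''$ to $\Delta$ is exactly what makes these boundary contributions bounded and the whole expression integrable in $u$.
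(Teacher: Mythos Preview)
Your argument is correct. For the upper bounds in (i), (ii), and (iii) you follow essentially the paper's route: bound $\mu_l \leq B$, use cube monotonicity to replace $\Omega_l$ by $B\,F_X$ at the largest admissible cube, and absorb the remaining $v$- (or $u$-) integral into a finite kernel constant. (Minor slip: your $B_2$ should carry $B^2$ rather than $B$, since both the outer factor $\mu_2(X)$ in the representation of Lemma~\ref{moments}(i) and the integrand $\Omega_2$ each contribute a bound of $B$.) For (iii) the paper proceeds exactly as you describe, bounding all three $\Omega_2$ factors by $\Omega_2(X-2h_n\iota,X+2h_n\iota)$; the finiteness of the kernel $du$-integral that you flag is handled by an auxiliary lemma in the supplement showing the inner $v$-integral is $C^1$ in $u$ on $[-2,2]^q$, which is precisely the regularity issue you identify in your last paragraph.

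The one genuine methodological difference is the \emph{lower} bound in (i). The paper works from the $\Omega_2$ representation of Lemma~\ref{moments}(i), uses that $\partial_v K^2(-v)\geq 0$ on $[0,1]^q$ (a consequence of $k\geq 0$ and $k$ decreasing on $[0,1]$), restricts the integral to $[\varepsilon,1]^q$, and bounds $\Omega_2(X-h_nv,X+h_nv)\geq b\,F_X(X-h_n\varepsilon\iota,X+h_n\varepsilon\iota)$ there, yielding $b_1=b^2\int_{[\varepsilon,1]^q}\partial_vK^2(-v)\,dv$. Your route---restricting directly to the event $\|X_1-X_2\|_\infty\leq h_n\varepsilon$ in the raw expectation and using $K^2\geq K^2(\varepsilon\iota)>0$ there---is more elementary: it bypasses the integration-by-parts representation entirely and does not require checking the sign of the mixed partial. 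The paper's approach has the advantage of staying within the $\Omega_2$ framework used throughout, but for this particular inequality your direct argument is cleaner.
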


\begin{proof}
[Proof of Lemma \ref{moments copy(2)}] 
Note that $\partial_{v} K^2(-v)\geq 0 $ for every $v \in [0,1]^q$. Fix any $\varepsilon
\in (0,1)$. From the expression defining $\E[H_n^2]$ in Lemma \ref{moments}, we obtain that
\begin{align*}
\mathbb{E}\left[ H_n^2(Z_1,Z_2) \right]  &
=\mathbb{E}\bigg(\mu_{2}(X)\int_{[0,1]^{q}}\Omega_{2}(X-h_{n}v,X+h_{n}%
v)  \partial_{v}K^2(-v) dv\bigg)\\
&  \geq\mathbb{E}\bigg(\mu_{2}(X)\int_{[\varepsilon,1]^{q}}\Omega_{2}%
(X-h_{n}v,X+h_{n}v)  \partial_{v} K^2(-v) dv\bigg)\\
&  \geq M_{1}\mathbb{E}\big[ \mu_{2}(X)  \Omega_{2} \big(X-\left(
h_{n}\varepsilon\right)  \iota,X+\left(  h_{n}\varepsilon\right)
\iota  \big)   \big] \\
&  \geq b^{2}M_{1} \E \big[  F_X \big(X -   (h_n \varepsilon) \iota , X +(h_n \varepsilon) \iota     \big)       \big]
\end{align*}
where $b$ is as in Assumption \ref{error} and $M_{1}=\int_{[\varepsilon,1]^{q}}  \partial_{v}K^2(-v) dv > 0$. The derivations for the other bounds are provided in the supplementary file  \cite{sidvsup}.

\end{proof}
The bounds on the moments derived in this section are in terms of the expected small ball probability $\E \big[   F_{X} (X -h \iota , X +h \iota)      \big]$. We next turn to defining classes of distributions where these bounds can be used
to provide limit properties of the statistic.

\subsection{Classes of distributions}
\label{sec3.2}
We begin by making an assumption that delineates a class of distributions for which the asymptotic normality of the kernel statistic will be established. As we show below, this
class encompasses some well-known distributions. We will refer to $F_{X}$ as a continuous measure if  $ F_X(x - h \iota , x + h \iota) = \mathbb{P} \big(  \| X - x \|_{\infty} \leq h \big) \downarrow 0  $ as $h \downarrow 0$ for every $x $ in the support of $F_{X}$.  \begin{assumption}
\label{d-class}
$F_{X}$ is a continuous measure that satisfies
\begin{align*}
& (i) \; \; \;\;\text{for some} \; \varepsilon \in (0,1) \; , \;  \limsup_{h \downarrow 0} \frac{\mathbb{E}\left[  F_{X}(X-h\iota,X+h\iota
)\right]  }{\mathbb{E}\left[  F_{X}(X-\left(  h\varepsilon\right)
\iota,X+\left(  h\varepsilon\right)  \iota)\right]  } < \infty \; ,
\\  & (ii) \; \; \;\;\
\lim_{h  \downarrow 0} \frac{\mathbb{E}\left[  \big\{ F_{X}(X-h\iota,X+h\iota)
 \big\}^3  \right]  }{\left(  \mathbb{E}\left[  F_{X}(X-h\iota,X+h
\iota)\right]  \right)  ^{2}} = 0 \: . %
\end{align*}

\end{assumption}
A stronger pointwise version of Assumption \ref{d-class}(i)  is commonly known as the ``doubling'' condition in the literature (see e.g. \cite{dbl1,ussr}). Specifically, doubling measures are exactly those that satisfy \begin{equation}  \mathbb{P} \bigg( F_{X}\big(X-h_{{}}\iota,X+h_{{}}\iota \big) \leq C F_{X} \big( X-0.5 h
\iota,X+0.5 h\iota \big) \bigg) = 1    \label{bou1}
\end{equation}
for some  universal constant $  C <\infty $ and all sufficiently small $h > 0$.

A stronger condition which implies both parts of Assumption \ref{d-class} is that $F_{X}$ be Ahlfors-David regular (see e.g. \cite{ahl1}) on the support of $F_X$. These are precisely the measures where there exists a $s \in (0,1] $ for which \begin{equation} \mathbb{P} \big( C^{-1} h^{sq} \leq  F_{X}(X-h_{{}}\iota,X+h_{{}}\iota) \leq D h^{sq} \big) = 1 \label{bou2}  \end{equation}
holds for some  universal constants $C,D < \infty $ and all sufficiently small $h > 0 $. 

We note that while conditions (\ref{bou1}, \ref{bou2}) are sufficient for Assumption \ref{d-class} to hold, they are not necessary.  In particular, the existence of a universal $C > 0 $ that satisfies (\ref{bou1}, \ref{bou2}) may be excessively restrictive when $X$ is not compactly supported.  Nonetheless, Assumption \ref{d-class} may still hold in this case as the assumption only depends on expectations of the distribution. To expand on this point, we introduce a rich class of distributions that extends beyond the absolutely continuous and Ahlfors-David regular subclass. Define $$ \underline{F_{X}} (x,s)  =  \liminf_{h \downarrow 0} \frac{F_X(x - h \iota , x + h \iota)}{(2h)^{sq}} . $$
\begin{definition} \label{ds} For every $s \in (0,1]$, let $\mathcal{D}\left(  s\right)  $ denote the class of probability measures that satisfy
\begin{equation}
(i)  \; \; \; \; \; \;   \mathbb{P} \bigg(      \frac{ F_{X}(X-h\iota,X+h\iota)}{(2h)^{sq}} \leq M_{F_{X}} \bigg)  = 1 \; \; , \; \; (ii) \; \; \; \;  \;  \E \big[ \, \underline{F_{X}} (X,s)    \big]  > 0    \label{fractal}%
\end{equation}
for some constant $ M_{F_{X}} < \infty $ and all sufficiently small $h > 0 $. By varying  the singularity exponent $s$, we denote the class of all such distributions by $\mathcal{D} = \bigcup_{s \in (0,1]} \mathcal{D}(s)$.
\end{definition}
The following four examples demonstrate that the class $\mathcal{D}$ includes a wide range of distributions that
may underlie various cases of interest in economics, finance and natural sciences.

\begin{example}[Absolutely continuous measures] \label{ex1}
Suppose $F_{X}$ is absolutely continuous with respect to the Lebesgue measure on $\R^q$ and admits a density function $f_{X} \in L^{\infty}(X)$.  By Lebesgue's differentiation theorem, we obtain that$$   \lim_{h \downarrow 0} \frac{F_X(x - h \iota , x + h \iota)}{(2h)^{q}}  = f_X(x)$$
almost everywhere with respect to the Lebesgue measure. As a consequence, $F_X \in \mathcal{D}(1)$ where (\ref{fractal}) holds with $ \underline{F_{X}}(x,1) = f_X(x), \; M_{F_{X}} = \| f_X \|_{L^\infty}$. 
 \label{ex1}
\end{example}
In particular, Example \ref{ex1} allows for absolutely continuous measures that admit a discontinuous density function.

\begin{example}[Self-Similar Fractals]
Consider a  contraction mapping $%
\phi :\mathbb{R}^{q}\rightarrow \mathbb{R}^{q}$ such that $\Vert \phi
(x)-\phi (y)\Vert _{2}=r\Vert x-y\Vert _{2}$ for all $x,y\in \mathbb{R}^{q}$
and some fixed $r\in (0,1)$. Let $\mathcal{S}=\{\phi _{1},\dots ,\phi _{N}\}$
denote a family of contraction maps with contraction ratios $\{r_{1},\dots
,r_{N}\}$. There is a unique compact set $\mathcal{K}$ (see e.g. \cite[Proposition 2.30]{ergodic}) that is invariant with respect to $\mathcal{S}$, in the sense that $\mathcal{K} = \bigcup_{i=1}^N \phi_{i}(\mathcal{K})$. Denote the similarity dimension of $\mathcal{S}$ by the unique $D$ for which $%
\sum_{i=1}^{N}r_{i}^{D}=1$. $\mathcal{S}$ is said to satisfy the open set condition
(OSC) if there exists a nonempty open set $O$ such that $\bigcup_{i=1}^{N}%
\phi _{i}(O)\subseteq O$ and $\phi _{i}(O)\cap \phi _{j}(O)=\emptyset$ for $%
i\neq j$.  For any $\mathcal{S}$ that satisfies the OSC, it is known (see e.g. \cite[Section 5]{hutch}) that $(i)$ the Hausdorff dimension of $\mathcal{K}$ is the similarity dimension $D$ and $(ii)$ the probability measure induced from the restriction of the $D$-dimensional Hausdorff measure to $\mathcal{K}$ is Ahlfors-David regular (\ref{bou2}) with $sq = D$. Frequently referenced
examples include the Cantor set (the Cantor measure is the restriction of the $D=\log _{3}(2) \approx 0.631$ Hausdorff measure),
Sierpi\'{n}ski's Triangle ($D\approx 1.585$) and the Koch Snowflake ($%
D = \log _{3}(4) \approx 1.262$). \label{ex2}
\end{example}
Fractal measures  are useful for modelling data that describe
processes that are similar at different scales, used frequently in the natural sciences and the analysis of spatial data (see e.g. \cite{burrough1981fractal,davies1999fractal}).

\begin{example}[Measures supported on a low dimensional subspace] Suppose $ X \sim N(0, \Sigma)$, where $\Sigma \in \R^{q \times q}$ is singular. Let $r > 0 $ denote the rank of $\Sigma$. We can write $\Sigma =  Q D Q'$ where $Q \in \R^{q \times q}$ is an orthogonal matrix and   $D  = \text{Diag}(\sigma_{1}^2 , \dots , \sigma_{r}^2 , 0 , \dots 0) $ for some positive constants $(\sigma_{1}^2 , \dots, \sigma_{r}^2)$. Hence, $Q'X \overset{d}{=}  Z $ where $Z  \sim N(0,D)  $ and the support of  $F_{X}$ is the $r$ dimensional subspace $S = \{ x \in \R^q : Q'x  \in \R^{r} \times \{0 \}^{q-r}   \} $. For any fixed $t \in S$, we have that $ \mathbb{P} ( \|  X - t  \|_{2} \leq h ) = \mathbb{P} \big(   \sum_{i=1}^r  (   Z_i - [Q't]_{i}    )^2   \leq h^2         \big) $. From Example \ref{ex1} and the equivalence of the $( \|. \|_{2}, \|. \|_{\infty} )$ norms, we obtain (\ref{fractal})  with  $ s = r/q  $.
\label{ex3}
\end{example}

Example \ref{ex3}  can be generalized further to allow for a general $r$-rectifiable (see e.g. \cite{preiss,de2006lecture,moore1950density}) measure.\footnote{A measure $\nu(.)$ on $\R^q$ is  $r$-rectifiable if there exists a  Borel measurable function $f(.)$ and a countable collection of $r$-dimensional $C^1$ submanifolds $\{M_i \}_{i=1}^{\infty}$  such that $ \nu(A) = \sum_{i=1}^{\infty}  \int _{M_i \cap A} f(x) d \text{Vol}^r(x)$ for every Borel set $A$, where $\text{Vol}^r$ is the natural $r$-dimensional volume measure that a $C^1$ submanifold inherits as a subset of $\mathbb{R}^q$.} These are low dimensional measures in the sense that there exists a countable collection of $r$-dimensional $C^1$ submanifolds $\{ M_i \}_{i=1}^{\infty}$  such that  $ \mathbb{P}\big( X \in    \bigcup_{i=1}^{\infty} M_i     \big) = 1$.
\begin{example}[Discrete Regressors and mass points]
\label{ex4}
Suppose $ X = (W,Z)  $ where the law of $W$  is a continuous measure $F_{W}$ on $\R^{r}$ for some $1 \leq r  \leq q $ and the law of $Z$ assigns positive mass to some $z^{\star} \in \R^{q-r}$. Suppose $F_{W}$ admits a  density $f_{W} \in L^{\infty}(W)$ and  the conditional distribution measure $ \nu(.) =  \mathbb{P}(W \in .  \,|Z = z^{\star})$ admits a  density $f_{W|Z=z^{\star}}$. For any  fixed $x=(w,z)$,  we have  that

 \begin{align*} & F_X(x-h \iota , x + h \iota) = \mathbb{P}(\| X - x \|_{\infty} \leq h) \leq  \mathbb{P} \big( \| W-w    \|_{\infty} \leq h  \big) .
\end{align*}
From Example \ref{ex1}, it follows that the first requirement of (\ref{fractal}) holds with $s=r/q$. For the second requirement, we note that if $x=(w,z^{\star})$ is in the support of $F_{X}$, then \begin{align*} F_X(x-h \iota , x+h \iota)  & \geq  \mathbb{P}(\| W-w \|_{\infty} \leq h , Z= z^{\star}  ) \\ &  = \mathbb{P}(\| W-w \|_{\infty} \leq h |Z=z^{\star}) \mathbb{P}(Z=z^{\star}) .
\end{align*}
Denote the conditional (at $Z= z^{\star}$) measure of a cube with radius $h$ and centered at $w$ by $F_{W|Z = z^{\star}} (w - h \iota , w + h \iota) =   \mathbb{P}(\| W-w \|_{\infty} \leq h |Z=z^{\star}) $.   Since the conditional measure admits a  density, the second requirement of (\ref{fractal}) follows from

 \begin{align*}
   \E \bigg[ \liminf_{h \downarrow 0}   \frac{ F_{X}(X-h\iota,X+h\iota)}{(2h)^{r}}      \bigg]  &   \geq \{ \mathbb{P}(Z=z^{\star}) \}^2 \underset{ W|Z=z^{\star} }{\E} \bigg[ \liminf_{h \downarrow 0}   \frac{ F_{W|Z=z^{\star}}(W-h\iota,W+h\iota)}{(2h)^{r}}      \bigg]     \\ & = \{ \mathbb{P}(Z=z^{\star}) \}^2 \underset{ W|Z=z^{\star} }{\E} \big[  f_{W| Z = z^{\star}}   \big]   \\ & > 0.
\end{align*}
\end{example}
In Example \ref{ex4}, $Z$ may be fully discrete (the support of $Z$ is a countably infinite set $\mathcal{Z} \subset \R^{q-r}$) or $Z = (Z_1,Z_2)$ where $Z_1$ is fully discrete and $Z_2$ is a mixture of a continuous and discrete variable. Example \ref{ex4} can be generalized in a straightforward way to allow for the continuous measure $F_{W}$ to contain singular components (the argument is identical if we insist that $ F_{W} $ and $W|Z=z^{\star}$ are elements of $\mathcal{D}(t)$ for some $t \in (0,1]$). To the best of our knowledge, Example \ref{ex4} extends the known results in the literature (e.g. \cite{hsiao2007}) to allow for  discrete regressors with countably infinite support, mixed regressors and continuous regressors whose joint law may have singular components .  In this case, our approach also  highlights that there are possible advantages to viewing the joint distribution of mixed data containing discrete and continuous variables as a continuous singular measure (which can be analyzed directly).

 The next result shows that every $ F_X \in  \mathcal{D} $ satisfies Assumption \ref{d-class}. Moreover, $\mathcal{D}$ is closed under mixtures, so that any mixture combination of its elements (such as the distributions in Example \ref{ex1}-\ref{ex4}) satisfies Assumption \ref{d-class} as well.

\begin{lemma}
\label{d-lemma} 
$(i)$ Every $F_X \in \mathcal{D} = \bigcup_{s \in (0,1]} \mathcal{D}(s)$  satisfies Assumption \ref{d-class}. $(ii)$  Suppose $F_{t} \in \mathcal{D}(s_t)$ (with constant $M_t$ as in Definition \ref{ds}) for every $t$  in some countable  index set $T$, $ s = \inf_{t \in T} s_t > 0 $ and the set $\{ t \in T : s_t = s \}$ is non-empty. If $(\alpha_t)_{t \in T}$  is a weight vector satisfying $\alpha_t > 0 $, $\sum_{t \in T} \alpha_t = 1$ and $\sum_{t \in T} \alpha_t M_t < \infty$, then  $ F_{X} = \sum_{t \in T} \alpha_t \: F_{t} \in \mathcal{D}( s )$.

\end{lemma}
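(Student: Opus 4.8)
The plan is to prove part (i) first --- that membership in $\mathcal{D}(s)$ forces Assumption \ref{d-class} --- and then obtain part (ii) by checking that the mixture again lies in $\mathcal{D}(s)$.

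For part (i), fix $s$ with $F_X\in\mathcal{D}(s)$. The preliminary step is to upgrade the almost-sure small-ball bound in (\ref{fractal})(i) to a bound valid pointwise with an $x$-free threshold on $h$. Since $h\mapsto F_X(x-h\iota,x+h\iota)$ is monotone and continuous from above, intersecting the exceptional null sets over rational $h$ and passing to the limit produces a single null set $N$ and an $h_0>0$ with $F_X(x-h\iota,x+h\iota)\le M_{F_X}(2h)^{sq}$ for every $x\notin N$ and $h\in(0,h_0)$; because $N^{c}$ is dense in $\mathrm{supp}(F_X)$, approximating an arbitrary $x$ by a support point in $N^c$ and enlarging the cube by the (sup-norm) approximation error extends this to $F_X(x-h\iota,x+h\iota)\le 2^{q}M_{F_X}(2h)^{sq}$ for all $x\in\R^q$ and all small $h$. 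In particular $F_X(x-h\iota,x+h\iota)\downarrow0$ since $sq>0$, so $F_X$ is a continuous measure. For Assumption \ref{d-class}(i) the numerator satisfies $\E[F_X(X-h\iota,X+h\iota)]\le M_{F_X}(2h)^{sq}$ a.s., while Fatou's lemma gives $\liminf_{h\downarrow0}(2h\varepsilon)^{-sq}\E[F_X(X-h\varepsilon\iota,X+h\varepsilon\iota)]\ge\E[\underline{F_X}(X,s)]>0$, so the ratio is eventually at most $2M_{F_X}\varepsilon^{-sq}/\E[\underline{F_X}(X,s)]$ (any $\varepsilon\in(0,1)$ works). For Assumption \ref{d-class}(ii) I use $\{F_X(X-h\iota,X+h\iota)\}^{3}\le\big(M_{F_X}(2h)^{sq}\big)^{2}F_X(X-h\iota,X+h\iota)$ a.s., take expectations, and divide by $(\E[F_X(X-h\iota,X+h\iota)])^{2}$; with the Fatou lower bound $\E[F_X(X-h\iota,X+h\iota)]\ge\tfrac{1}{2}\E[\underline{F_X}(X,s)](2h)^{sq}$ for small $h$, this leaves a ratio of order $(2h)^{sq}\to0$.

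For part (ii) it suffices to show $F_X=\sum_{t\in T}\alpha_t F_t\in\mathcal{D}(s)$, since part (i) then delivers Assumption \ref{d-class}. Condition (\ref{fractal})(i): applying the everywhere form of the per-component bound from part (i) to each $F_t$ and using $s_t\ge s$ (so $(2h)^{s_tq}\le(2h)^{sq}$ once $2h\le1$), $F_X(x-h\iota,x+h\iota)=\sum_t\alpha_t F_t(x-h\iota,x+h\iota)\le 2^{q}\big(\sum_t\alpha_t M_t\big)(2h)^{sq}$ for all $x$ and all small $h$, so the hypothesis $\sum_t\alpha_t M_t<\infty$ gives the claim with $M_{F_X}=2^{q}\sum_t\alpha_t M_t$. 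Condition (\ref{fractal})(ii): fix $t^{\star}$ in the non-empty set $\{t\in T:s_t=s\}$; discarding all but the $t^{\star}$ term gives $\underline{F_X}(x,s)\ge\alpha_{t^{\star}}\underline{F_{t^{\star}}}(x,s)$, and the mixture identity $\E_{X\sim F_X}[\psi(X)]=\sum_t\alpha_t\E_{X\sim F_t}[\psi(X)]$ for non-negative $\psi$ yields
\[
\E_{X\sim F_X}\big[\underline{F_X}(X,s)\big]\ \ge\ \alpha_{t^{\star}}\sum_{t\in T}\alpha_t\,\E_{X\sim F_t}\big[\underline{F_{t^{\star}}}(X,s)\big]\ \ge\ \alpha_{t^{\star}}^{2}\,\E_{X\sim F_{t^{\star}}}\big[\underline{F_{t^{\star}}}(X,s_{t^{\star}})\big]\ >\ 0,
\]
the last inequality being (\ref{fractal})(ii) for $F_{t^{\star}}$. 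Hence $F_X\in\mathcal{D}(s)$.

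I expect the difficulty here to be bookkeeping rather than a single hard idea. The delicate point is passing from the $F_X$-almost-sure, ``all sufficiently small $h$'' bound (\ref{fractal})(i) to a genuinely pointwise bound on all of $\R^q$ with a threshold $h_0$ that is uniform across the countable family, so that the per-component bounds can legitimately be summed against the weights $\alpha_t$; this is where continuity from above of $h\mapsto F_X(x-h\iota,x+h\iota)$, density of $N^c$ in the support, and the crude estimate $F_t(\cdot)\le1$ (to absorb the medium range of $h$, after which $M_t$ is understood relative to a fixed threshold) all get used. A minor technical point is the measurability of $\underline{F_X}(\cdot,s)$ and the legitimacy of Fatou's lemma, both of which follow once the $\liminf$ over $h\downarrow0$ is reduced to a countable one using right-continuity of $h\mapsto(2h)^{-sq}F_X(x-h\iota,x+h\iota)$.
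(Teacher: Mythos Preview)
Your proposal is correct and follows essentially the same route as the paper. The one organizational difference is that you front-load the ``neighbor/density'' trick (upgrading the $F_X$-a.s.\ small-ball bound to an everywhere bound with constant $2^q M_{F_X}$) as a preliminary step in part (i), whereas the paper's part (i) works directly with expectations (the a.s.\ upper bound plus Fatou suffice without any pointwise upgrade) and only invokes the neighbor trick inline in part (ii), obtaining the mixture constant $4^q\sum_t\alpha_t M_t$; both arguments pivot on the same idea of covering a cube centered at an arbitrary point by a doubled cube centered at a point in the full-measure set, and the liminf verification is identical up to your use of a single $t^\star$ versus the paper's sum over $\{t:s_t=s\}$.
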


In particular,  by mixing absolutely continuous measures with elements of  $ \bigcup_{s \in (0,1)} \mathcal{D}(s)$, we obtain a large class of distributions that admit  a non-trivial Lebesgue decomposition  and satisfy Assumption \ref{d-class}. Finally, we remark that it is not necessary to argue for Assumption \ref{d-class} through membership in $\mathcal{D}$. In more complicated setups, a measure may not charge cubes in a way that satisfies Definition \ref{ds}. In such cases, Assumption \ref{d-class} should be verified directly.

\subsection{Limit theory}
\label{sec3.3}

In the next result, we establish asymptotic normality for the U-statistic $U_{n}$ in
(\ref{U(u)}) for every distribution $F_X $ that satisfies Assumption \ref{d-class}.  For the remainder of Section \ref{sec3}, we assume that the bandwidth sequence $(h_n)_{n=1}^{\infty}$ satisfies $h_n \downarrow 0$ and $n h_n^q \uparrow \infty$.

\begin{theorem}
\label{asymptotics} If Assumptions \ref{data}-\ref{d-class} hold, then

\[
\frac{nU_{n}}{\sqrt{2\mathbb{E}\left(  H_{n}^{2}\right)  }}%
\underset{d}{\rightarrow}N\left(  0,1\right)  .
\]
Furthermore, if $ F_{X} \in\mathcal{D}\left(  s\right)  $ for some $s \in (0,1]$
and $\sigma^{2}=\underset{n\rightarrow\infty}{\lim}2h_{n}^{-sq}\mathbb{E}%
\left(  H_{n}^{2}\right)  $ exists, then
\[
nh_{n}^{-sq/2}U_{n}\underset{d}{\rightarrow}N (  0,\sigma^{2} )  .
\]

\end{theorem}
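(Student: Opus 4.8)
The plan is to verify Hall's condition \eqref{hallcond} and then invoke \cite[Theorem 1]{Hall84}. Under Assumptions \ref{data} and \ref{error}(i) the kernel $H_n$ is degenerate ($\E[H_n(Z_1,Z_2)\mid Z_2]=u_2\,\E[\,\E(u_1\mid X_1)K((X_1-X_2)/h_n)\mid Z_2]=0$), and Assumption \ref{error}(ii) together with boundedness of $K$ makes all the moments in \eqref{hallcond} finite; moreover $\E(H_n^2)>0$ for each $n$ by the lower bound in Lemma \ref{moments copy(2)}(i), so the ratios are well defined. It then remains to show $\E(G_n^2)/\{\E(H_n^2)\}^2\to 0$ and $n^{-1}\E(H_n^4)/\{\E(H_n^2)\}^2\to 0$.

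The key reduction is to route everything through the expected small ball probability $r(h)=\E[F_X(X-h\iota,X+h\iota)]$. First I would record that Assumption \ref{d-class}(i), though stated for a single $\varepsilon$, self-improves: writing $r_c(h)=\E[F_X(X-ch\iota,X+ch\iota)]$, monotonicity of $c\mapsto F_X(x-c\iota,x+c\iota)$ handles one direction for $c>1$ (resp.\ $c<1$), and for the other direction one picks $k$ with $\varepsilon^k$ small enough and composes the doubling ratios $r_{\varepsilon^{j}}(h)/r_{\varepsilon^{j+1}}(h)=r(\varepsilon^{j}h)/r_\varepsilon(\varepsilon^{j}h)$, each of which has finite $\limsup$ as $h\downarrow 0$. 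Hence $r_c(h)\asymp r(h)$ for every fixed dilation factor $c\in(0,\infty)$; in particular $r(2h_n)\asymp r(h_n)$ and, combining with the two-sided bound in Lemma \ref{moments copy(2)}(i), $\E(H_n^2)\asymp r(h_n)\asymp \E[F_X(X-2h_n\iota,X+2h_n\iota)]$.

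With this in hand, Lemma \ref{moments copy(2)}(iii) gives
\[
\frac{\E(G_n^2)}{\{\E(H_n^2)\}^2}\ \lessapprox\ \frac{\E\big[\{F_X(X-2h_n\iota,X+2h_n\iota)\}^3\big]}{\{\E[F_X(X-2h_n\iota,X+2h_n\iota)]\}^2}\ \longrightarrow\ 0
\]
by Assumption \ref{d-class}(ii) evaluated along $h=2h_n$. For the second term, Lemma \ref{moments copy(2)}(ii) and the reduction above yield $n^{-1}\E(H_n^4)/\{\E(H_n^2)\}^2\lessapprox 1/(n\,r(h_n))$; since Assumption \ref{error}(iii) holds, Corollary \ref{momlimit}(ii) gives $\liminf_n h_n^{-q}\E(H_n^2)>0$, so $h_n^q\lessapprox \E(H_n^2)\lessapprox r(h_n)$, whence $n h_n^q\lessapprox n\,r(h_n)$ and $n\,r(h_n)\to\infty$ because $n h_n^q\uparrow\infty$. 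This verifies \eqref{hallcond} and proves $nU_n/\sqrt{2\E(H_n^2)}\underset{d}{\rightarrow}N(0,1)$.

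For the refinement, assume $F_X\in\mathcal{D}(s)$ and that $\sigma^2=\lim_n 2h_n^{-sq}\E(H_n^2)$ exists. Definition \ref{ds} gives $F_X(X-h\iota,X+h\iota)\lessapprox M_{F_X}(2h)^{sq}$ a.s., while Fatou's lemma applied along any sequence $h\downarrow 0$ gives $\liminf_{h\downarrow 0}(2h)^{-sq}r(h)\geq \E[\underline{F_X}(X,s)]>0$; together with $\E(H_n^2)\asymp r(h_n)$ this forces $\sigma^2\in(0,\infty)$. Then
\[
n h_n^{-sq/2}U_n=\frac{nU_n}{\sqrt{2\E(H_n^2)}}\cdot\sqrt{\,2h_n^{-sq}\E(H_n^2)\,}\ \underset{d}{\rightarrow}\ \sqrt{\sigma^2}\,N(0,1)=N(0,\sigma^2)
\]
by the first part and Slutsky's theorem. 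I expect the only genuine subtlety to be the self-improvement of Assumption \ref{d-class}(i) to arbitrary dilation factors and the resulting chain of $\asymp$ equivalences tying $\E(H_n^2)$, $r(h_n)$ and $\E[F_X(X-2h_n\iota,X+2h_n\iota)]$ together; the remainder is bookkeeping on the moment estimates of Lemma \ref{moments copy(2)} and Corollary \ref{momlimit}.
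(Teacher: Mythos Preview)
Your proposal is correct and follows essentially the same approach as the paper: both verify Hall's condition via the moment bounds of Lemma~\ref{moments copy(2)}, both handle the mismatch between $h_n\varepsilon$ and $2h_n$ by iterating Assumption~\ref{d-class}(i) (you phrase this as a one-shot ``self-improvement to arbitrary dilations,'' the paper does the same iteration inline for the specific exponent $\gamma$ with $\varepsilon^{-\gamma}>2$), and both obtain $r(h_n)\gtrapprox h_n^q$ from Fatou and Lebesgue differentiation (you route this through Corollary~\ref{momlimit}(ii), the paper invokes \cite[Theorem~7.15]{rudin} directly). Your explicit verification that $\sigma^2\in(0,\infty)$ under $\mathcal{D}(s)$ and the Slutsky step for the refinement are left implicit in the paper but are exactly what is needed.
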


\begin{proof}
[Proof of Theorem \ref{asymptotics}] 
The result follows from Theorem $1$ of \cite{Hall84} if Condition
 (\ref{hallcond}) holds. We aim to verify that $$  \text{(a)} \; \; \;  \;  \frac{\E(G_n^2)}{[ \E(H_n^2)  ]^2} \downarrow 0 \; \; ,  \; \; \; \text{(b)}  \;\; \; \; \frac{\E(H_n^4)}{n [ \E(H_n^2)]^2} \downarrow 0. $$
 From Assumption
\ref{d-class}(i), there exists $\varepsilon \in (0,1)$ and  $c^{\ast},\bar{h}>0$
such that
\begin{align}
   c^{\ast}\mathbb{E}\left[  F_{X}(X-h\iota,X+h\iota)\right]     \leq\mathbb{E}\left[  F_{X}(X-h\varepsilon\iota,X+h\varepsilon
\iota)\right]  \label{a5eq}
\end{align}
holds for all $h \in (0, \bar{h})$.
 \begin{enumerate}
 \item[(a)] From Lemma \ref{moments copy(2)}, we obtain  that $$  \frac{\E(G_n^2)}{[ \E(H_n^2)  ]^2} \lessapprox \frac{\E \big[   \big\{ F_X(X - 2 h_n \iota ,X + 2 h_n \iota)   \big\}^3      \big]  }{\big\{ \mathbb{E}\big[  F_{X}(X-h_{n}\varepsilon\iota
,X+h_{n}\varepsilon\iota)  \big] \big\}^2} .  $$
Let $\gamma\in\mathbb{N}$ be such that $\varepsilon^{-\gamma}>2$. Since
$h_{n} \downarrow 0$, we have $h_{n}\varepsilon^{-\gamma}<\bar{h_{{}}}$ for all
sufficiently large $n$. Repeated applications of equation (\ref{a5eq}) yield
\begin{align}
 \mathbb{E}\left[  F\left(  X-h_{n}\varepsilon\iota,X+h_{n}\varepsilon
\iota\right)  \right]  )   &  \geq c^{\ast}\mathbb{E}\left[  F_{X}(X-h_{n}%
\iota,X+h_{n}\iota)\right]  \nonumber \\
&  \geq(c^{\ast})^{2}\mathbb{E}\left[  F_{X}(X-h_{n}\varepsilon^{-1}%
\iota,X+h_{n}\varepsilon^{-1}\iota)\right]  \nonumber \\
&  \dots \nonumber \\
&  \geq(c^{\ast})^{\gamma+1}\mathbb{E}\left[  F_{X}(X-h_{n}\varepsilon
^{-\gamma}\iota,X+h_{n}\varepsilon^{-\gamma}\iota)\right] \nonumber  \\
&  \geq(c^{\ast})^{\gamma+1}\mathbb{E}\left[  F_{X}(X-2h_{n}\iota
,X+2h_{n}\iota)\right]   . \nonumber 
\end{align}
From the preceding inequality and Assumption \ref{d-class}(ii), we obtain \begin{align*}
 \frac{\E(G_n^2)}{[ \E(H_n^2)  ]^2} \lessapprox \frac{\E \big[   \big\{ F_X(X - 2 h_n \iota ,X + 2 h_n \iota)   \big\}^3      \big]  }{\big\{ \mathbb{E}\big[  F_{X}(X-2h_{n}\varepsilon\iota
,X+2h_{n}\varepsilon\iota)  \big] \big\}^2}  \downarrow 0.
 \end{align*}
 
 \item[(b)]  
 
From Lemma \ref{moments copy(2)} and (\ref{a5eq}) we obtain   \begin{align*}
\frac{\E(H_n^4)}{ [ \E(H_n^2)]^2} \lessapprox  \frac{\E\big[  F_X(X-h_n \iota, X+ h_n \iota)   \big]}{ \{ \E\big[  F_X(X-h_n  \varepsilon \iota, X+ h_n \varepsilon \iota)   \big]     \}^2} \lessapprox \frac{1}{ \E\big[  F_X(X-h_n \iota, X+ h_n \iota)   \big]}.
\end{align*}
Let $f_{X}$ denote the density  if $F_{X}  \ll $ Lebesgue measure.  To lower bound the denominator, Fatou's Lemma and a straightforward application of  \cite[Theorem 7.15]{rudin} yields \begin{align*}   \liminf_{n\rightarrow \infty} (2h_n)^{-q}  \E\big[  F_X(X-h_n \iota, X+ h_n \iota)   \big] &  \geq  \E \bigg[  \liminf_{n \rightarrow \infty}   \frac{F_X(X - h_n \iota , X + h_n \iota)}{(2h_n)^q}      \bigg]  \\ &       = \ \begin{cases}    \E[f_X(X)] & F_{X} \ll \; \text{Lebesgue} \; ,  \\ \infty & \text{else} . \end{cases}    
\end{align*}
In particular $   \E\big[  F_X(X-h_n \iota, X+ h_n \iota)   \big] \gtrapprox  h_n^{q}$ always holds. The claim follows from the assumption that   $n h_n^{q} \uparrow \infty$.
 
 \end{enumerate}

\end{proof}

\begin{remark}[On the existence of $\sigma^2$] Let $\lambda$ denote the Lebesgue measure on  $[0,1]^q$. Suppose Assumption \ref{error} holds and $F_X \in \mathcal{D}(s)$ for some $s \in (0,1]$. By Lemma \ref{moments}, we have that $$ \mathbb{E} \big( H_{n}^{2} \big)=\mathbb{E}\bigg(\mu_{2}(X)\int_{\left[  0,1\right]  ^{q}}\Omega
_{2}(X-h_{n}v,X+h_{n}v)\partial_{v}K^{2}(-v)dv\bigg). $$ 
This suggests that $\sigma^{2}=\underset{n\rightarrow\infty}{\lim}2h_{n}^{-sq}\mathbb{E}%
\left(  H_{n}^{2}\right)  $  may be well defined whenever   \begin{equation} f(x,v) = \lim_{h \downarrow 0} \frac{
\Omega_{2} \big( x - hv  , x + hv            \big)}{ (2h)^{sq} } \label{lim}
\end{equation}
exists almost everywhere (with respect to $F_X \otimes \lambda    ) $. For absolutely continuous measures and low dimensional measures such as in Example \ref{ex3}, this follows immediately from an application of Lebesgue's differentiation theorem.  In the general case, it is known  from \cite{preiss} that if $ \bar{f}(X) = f(X,\mathbf{1})$ exists $F_X$ almost everywhere with $ \mathbb{P}( \bar{f}(X) \in (0, \infty)) =1$, then $s=r/q$ for some integer $r \in \mathbb{N}$ and there exists a countable collection of $r$-dimensional $C^1$ submanifolds $\{ M_i \}_{i=1}^{\infty}$  such that  $ \mathbb{P}\big( X \in    \bigcup_{i=1}^{\infty} M_i     \big) = 1$. In particular, when $sq$ is not an integer, the limit in (\ref{lim}) does not always exist and in this case $ (2h_n)^{-sq} \Omega_{2} \big( x - (h_nv) \iota , x + (h_nv) \iota           \big) $ typically oscillates  between its limit inferior and superior.
\end{remark}

\begin{remark}[On bandwidth constraints]
\label{bandrem}
Our limit theory   depends on the assumption that the bandwidth sequence $(h_n)_{n=1}^{\infty}$ satisfies $n h_n^{q} \uparrow \infty $. While this is standard for the absolutely continuous case, a closer inspection of our proofs reveal that all our main results go through under the assumption that $ n  \E\big[  F_X(X-h_n \iota, X+ h_n \iota)   \big]  \uparrow \infty $. From the general bound $  \E\big[  F_X(X-h_n \iota, X+ h_n \iota)   \big] \gtrapprox h_n^{q} $ (with strict dominance in the presence of singular components), we see that this is a weaker requirement on the bandwidth. In particular, when singular components exist, the bandwidth can approach zero at a faster rate than in the fully absolutely continuous case. In most cases, this more relevant bandwidth restriction cannot be used directly as the exact rate depends on knowledge of the singular contamination. One situation where the weaker constraint can be interpreted directly is the case of mixed data with absolutely continuous and discrete regressors (Example \ref{ex4}). In this case, it reduces to $n h_n^{r} \uparrow \infty$ (where $r$ denotes the dimension of the absolutely continuous regressors).
\end{remark}

\begin{remark}[On discrete measures and Assumption \ref{d-class}]
\label{discrete-m}
 As illustrated in Example \ref{ex4},  Assumption \ref{d-class} allows for  discrete distributions and mass points in a subset of the marginals, provided that at least one of the marginals is a continuous distribution. If there is a mass point in the entire distribution (equivalently the Lebesgue decomposition of $F_X$ contains a discrete measure), Assumption \ref{d-class}(ii) will be violated. In this case, the limit distribution of the U-statistic $U_{n}$ in
(\ref{U(u)})  is  typically Non-Gaussian. Indeed, suppose $F_X      =\rho_{d} F_{X}^d  +\rho_{a.c.}%
F_X^{a.c.}  + \rho_{s}  F_X^{s} $ where $\rho_d > 0$ and $F_X^d$ is a discrete measure with finite support $ \mathcal{S} =  \{ x_1 , \dots , x_D   \}$. First, we observe that scaling by $1/\sqrt{\E(H_n^2)}$ does not provide any additional rate self-normalization as it converges to a positive limit:

$$ \lim \limits_{n \rightarrow \infty} \E(H_n^2)  = K^2(0) \rho_d^2 \underset{X_1,X_2  \stackrel{i.i.d}{\sim} F_{X}^d }{\mathbb{E}} \big[  \mu_2(X_1) \mu_2(X_2) \mathbbm{1} \{ X_1 = X_2  \}  \big]  > 0 .  $$ It is straightforward to verify that the U-statistic can be expressed as
\begin{align*}
     U_n & =  \frac{ K(0) }{n(n-1)} \sum_{i=1}^n \sum_{j \neq i}  u_i u_j \mathbbm{1} \{ X_i = X_j\} +  \frac{ 1 }{n(n-1)} \sum_{i=1}^n \sum_{j \neq i} u_i u_j  K \bigg(  \frac{X_i - X_j}{h_n}   \bigg)  \mathbbm{1} \{ X_i \neq  X_j\} \\ & = \frac{ K(0) }{n(n-1)} \sum_{i=1}^n \sum_{j \neq i} u_i u_j \mathbbm{1} \{ X_i = X_j\} + o_{ \mathbb{P}}(n^{-1}).
\end{align*}
The first term on the right is non-trivial when the Lebesgue decomposition of $F_X$ contains a discrete measure. Moreover, it can be viewed as a degenerate U-statistic with a fixed  symmetric kernel  function $h[  (x_1,u_1) , (x_2,u_2)  ] = u_1 u_2 \mathbbm{1} \{ x_1 = x_2  \} $. From \cite[Theorem 12.10]{vdv}, it follows that $ n U_n \underset{d}{\rightarrow} K(0)   \sum_{k=1}^{\infty} \lambda_k (Z_k^2 -1) $ where    $Z_k \stackrel{i.i.d}{\sim} N(0,1) $
and $(\lambda_k)_{k=1}^{\infty}$ are eigenvalues corresponding to the integral operator defined by $h(\,\cdot\,)$.
\end{remark}

\subsection{Specification Testing}
\label{sec3.4}

Consider the feasible statistic $\hat{I}_{n}$ in (\ref{I(n)}) that differs
from $U_{n}$ in that the  residuals $\hat{u}_{i} = Y_i - g(X_i, \hat{\beta})$ replace the true errors $u_{i}$. To construct the goodness-of-fit test statistic $\hat{\tau}_{n}$ in (\ref{selfn}), we will also require a feasible analog of $2 \E(H_n^2)$.  To that end, define \begin{equation}
\hat{\sigma}_{n}^{2}=\frac{2}{n(n-1)}\sum_{i=1}^{n}\sum_{j\neq i}\hat{u}%
_{i}^{2}\hat{u}_{j}^{2}K^{2}\left(  \frac{X_{i}-X_{j}}{h_{n}}\right)
\;. \label{sigma-hat}%
\end{equation}
Let $\nabla_{\beta} g(x,  \beta)$ and $\nabla_{\beta}^2 g(x, \beta)$ denote the Gradient and Hessian of $g(x,\beta)$ (with respect to $\beta$), respectively. To derive the limit theory under $H_0$, we impose the following regularity conditions on the model and parameter estimator.
\begin{assumption}
\label{greg}(i) $ \| \hat{\beta} - \beta_0  \|_{2} = O_{\mathbb{P}}(n^{-1/2}) $. (ii) In a neighborhood $\mathcal{N}$ of $\beta_0$,  the map $ \beta \rightarrow g(x,\beta)$ is twice continuously differentiable for every $x$ in the support of $F_X$. (iii) In a neighborhood $\mathcal{N}$ of $\beta_0$, $\| \nabla_{\beta} g(x,\: .)  \|_{2}$ and $\| \nabla_{\beta}^2 g(x,\: .) \|_{op}  $  are dominated by functions $M(X) \in L^2(X)$ and $G(X) \in L^{4/3}(X)$, respectively.
\end{assumption}
Assumption \ref{greg} is commonly imposed in the literature (e.g. \cite{Zhang, phillips}). For distributions in $\mathcal{D}$, all our results in this section also hold under the weaker requirement that $G(X) \in L^1(X)$. The theorem below establishes the limiting distribution for the self-normalized goodness-of-fit statistic $\hat{\tau}_{n}$ in (\ref{selfn}) under $H_0$.
\begin{theorem}
\label{asymptotics2}
Suppose the null hypothesis $H_0$ and  Assumptions \ref{data}-\ref{greg} hold. 	Then $$\frac{n \hat{I}_{n}}{\sqrt{\hat{\sigma}_{n}^{2}}}\underset{d}{\rightarrow}N\left(
0,1\right)  .
$$
Furthermore, if $ F_{X} \in\mathcal{D}\left(  s\right)  $ for some $s \in (0,1]$
and $\sigma^{2}=\underset{n\rightarrow\infty}{\lim}2h_{n}^{-sq}\mathbb{E}%
\left(  H_{n}^{2}\right)  $ exists, then
\[
nh_{n}^{-sq/2} \hat{I}_n   \underset{d}{\rightarrow}N (  0,\sigma^{2} )  .
\]

\end{theorem}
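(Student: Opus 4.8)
The plan is to reduce the feasible statistic $\hat I_n$ to the infeasible U-statistic $U_n$ of Theorem \ref{asymptotics} by controlling the estimation error, and simultaneously to show that $\hat\sigma_n^2$ is a consistent estimator (at the appropriate rate) of $2\E(H_n^2)$. Write $\hat u_i = u_i - \big(g(X_i,\hat\beta) - g(X_i,\beta_0)\big) =: u_i - d_i$ under $H_0$, where by Assumption \ref{greg}(ii)--(iii) and a Taylor expansion $d_i = \nabla_\beta g(X_i,\beta_0)'(\hat\beta-\beta_0) + \tfrac12 (\hat\beta-\beta_0)'\nabla_\beta^2 g(X_i,\tilde\beta)(\hat\beta-\beta_0)$ for some intermediate $\tilde\beta$. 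Substituting into $\hat I_n$ produces, besides $U_n$, cross terms of the schematic form $\frac{1}{n(n-1)}\sum_{i\ne j} u_i d_j K(\tfrac{X_i-X_j}{h_n})$ and $\frac{1}{n(n-1)}\sum_{i\ne j} d_i d_j K(\tfrac{X_i-X_j}{h_n})$. The goal is to show every such term is $o_{\mathbb P}\!\big(\sqrt{\E(H_n^2)}/n\big)$, equivalently $o_{\mathbb P}(h_n^{sq/2}/n)$ in the rate-normalized case, so that $n\hat I_n/\sqrt{2\E(H_n^2)}$ has the same limit as $nU_n/\sqrt{2\E(H_n^2)}$, which is $N(0,1)$ by Theorem \ref{asymptotics}.

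The key steps, in order: (1) Pull the $O_{\mathbb P}(n^{-1/2})$ factors $(\hat\beta-\beta_0)$ out of each cross term, reducing matters to bounding the expectations (or second moments) of the "kernel-weighted sample averages" that remain, e.g. quantities like $\frac{1}{n(n-1)}\sum_{i\ne j} u_i \nabla_\beta g(X_j,\beta_0) K(\tfrac{X_i-X_j}{h_n})$; here the moment bounds from Lemma \ref{moments copy(2)} and the $L^2$/$L^{4/3}$ domination of $M(X),G(X)$ are exactly what is needed, together with $\E\big[F_X(X-h_n\iota,X+h_n\iota)\big]\gtrapprox h_n^q$ and $\E\big[\{F_X(X-2h_n\iota,X+2h_n\iota)\}^3\big]/\big(\E[F_X(\cdots)]\big)^2\to 0$ from Assumption \ref{d-class}. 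Crucially, in each cross term either the $u$-factor delivers conditional mean zero given $X$ (Assumption \ref{error}(i)), killing the dominant diagonal-type contribution in expectation, or the double sum carries an extra $1/n$ that, combined with $n\E[F_X(\cdots)]\uparrow\infty$, makes it negligible against $h_n^{sq/2}/n$. (2) Handle the $\hat\sigma_n^2$ analysis in parallel: expand $\hat u_i^2\hat u_j^2 = u_i^2 u_j^2 + (\text{terms involving } d_i)$, show $\frac{2}{n(n-1)}\sum_{i\ne j}u_i^2u_j^2 K^2(\tfrac{X_i-X_j}{h_n}) = 2\E(H_n^2)(1+o_{\mathbb P}(1))$ by computing its mean (which is $2\E(H_n^2)$ by Lemma \ref{moments}(i)) and bounding its variance via the fourth-moment bound Lemma \ref{moments copy(2)}(ii) plus Assumption \ref{d-class}, and show the remainder terms are of smaller order after again extracting $(\hat\beta-\beta_0)$ and using $L^2$ domination of $M(X)$ together with Assumption \ref{error}(ii). (3) Combine: $n\hat I_n/\sqrt{\hat\sigma_n^2} = \big(nU_n/\sqrt{2\E(H_n^2)} + o_{\mathbb P}(1)\big)\cdot\big(2\E(H_n^2)/\hat\sigma_n^2\big)^{1/2}$, and both factors converge (the first to $N(0,1)$ by Theorem \ref{asymptotics}, the second to $1$), so Slutsky gives the claim; the rate-normalized statement then follows from $\hat\sigma_n^2 = 2\E(H_n^2)(1+o_{\mathbb P}(1))$ and $2h_n^{-sq}\E(H_n^2)\to\sigma^2$.

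The main obstacle I anticipate is step (1) for the "most diagonal" cross term, namely $\frac{1}{n(n-1)}\sum_{i\ne j} d_i d_j K(\tfrac{X_i-X_j}{h_n})$, which carries $\|\hat\beta-\beta_0\|_2^2 = O_{\mathbb P}(n^{-1})$ but whose kernel-weighted average is of order $\E(H_n^2)^{1/2}$-ish only under a careful second-moment computation — one must show $\E\big[M(X_1)M(X_2)K(\tfrac{X_1-X_2}{h_n})\big]\lessapprox \E[F_X(X-h_n\iota,X+h_n\iota)]$ and then argue that $n^{-1}$ times this, divided by $\sqrt{\E(H_n^2)}/n = \sqrt{\E(H_n^2)}/n$, tends to zero, i.e. that $\sqrt{n}\,\E(H_n^2)^{1/2}\to\infty$ faster than a constant — which is immediate since $n\E(H_n^2)\gtrapprox nh_n^q\uparrow\infty$, but the bookkeeping with the $L^{4/3}$ (rather than $L^2$) domination of the Hessian term $G(X)$, via Hölder against the kernel weights, is the delicate part and is exactly why Assumption \ref{greg}(iii) is stated with those specific exponents. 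Everything else is a routine, if lengthy, application of the moment lemmas already established; the detailed calculations would be relegated to the supplement \cite{sidvsup}.
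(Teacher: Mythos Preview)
Your overall strategy matches the paper's: decompose $\hat I_n - U_n$ into cross terms, show each is $o_{\mathbb P}\big(n^{-1}\sqrt{\E(H_n^2)}\big)$, and separately establish $\hat\sigma_n^2/2\E(H_n^2)\to_{\mathbb P}1$. However, there is a genuine gap in your handling of the $d_id_j$ term. The bound you claim,
\[
\E\!\left[M(X_1)M(X_2)\,K\!\left(\tfrac{X_1-X_2}{h_n}\right)\right]\;\lessapprox\;\E\big[F_X(X-h_n\iota,X+h_n\iota)\big],
\]
holds only when $M$ is bounded. Under merely $M\in L^2(X)$, the best one gets from Cauchy--Schwarz (applied once inside $\Omega_M$ and once outside) is $O\big(\{\E[F_X(X-h_n\iota,X+h_n\iota)]\}^{1/2}\big)$, and after division by $\sqrt{\E(H_n^2)}\asymp\{\E[F_X(\cdots)]\}^{1/2}$ this leaves $O(1)$, not $o(1)$. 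Your subsequent appeal to $\sqrt{n}\,\E(H_n^2)^{1/2}\to\infty$ is a non sequitur here: the $n$-powers have already cancelled, and the required statement is purely about small-ball moments. The paper closes this gap with a separate truncation lemma (in the supplement): writing $M=M\mathbbm{1}\{M\le\alpha_n\}+M\mathbbm{1}\{M>\alpha_n\}$ and invoking Assumption~\ref{d-class}(ii) on the bounded piece gives the sharper estimate
\[
\E\big[M(X)\,\Omega_M(X-h_n\iota,X+h_n\iota)\big]\;=\;o\big(\{\E[F_X(X-h_n\iota,X+h_n\iota)]\}^{1/2}\big),
\]
which is exactly what is needed. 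The same truncation device delivers $\E[\Omega_M^2(X-h_n\iota,X+h_n\iota)]=o\big(\E[F_X(\cdots)]\big)$, required for the second-moment bound on the $u_j\,\nabla_\beta g(X_i,\beta_0)$ cross term, and is used repeatedly in the $\hat\sigma_n^2$ analysis.

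A minor correction on bookkeeping: the $L^{4/3}$ exponent on $G(X)$ does not enter the $d_id_j$ term at all. It appears in the $u_j\cdot(\text{Hessian remainder of }d_i)$ cross term, where the paper applies H\"older with conjugate exponents $(4/3,4)$ against $F_X(X-h_n\iota,X+h_n\iota)$ and then uses Assumption~\ref{d-class}(ii) to control $\{\E[F_X(\cdots)^3]\}^{1/4}$.
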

Thus, the  test statistic converges weakly to a standard Gaussian and does not depend on any nuisance parameters. When singular components exist, the rate at which the distribution of the statistic
approaches the limit Gaussian could be quite slow, in particular for
$\E(H_n^2)$ converging to zero slowly (e.g. for small $s$ in $\mathcal{D}\left(
s\right)  )$.  If knowledge of the singular components is assumed, one can in principle choose a bandwidth sequence that approaches zero at a faster rate than usual (see Remark \ref{bandrem}) to improve on the rates.

To investigate the  asymptotic power of the test, we  consider the sequence of local alternative models \begin{equation}
H_{1}:Y_{}=g_{}\left(  X_{},\beta_{0}\right)  +\gamma_{n}\delta\left(
X_{}\right)  +u_{} \; ,  \label{reg}
\end{equation}
where $\gamma_n \downarrow 0$ is a deterministic sequence of constants and $\delta(.) $ is a real-valued drift function that determines the direction of approach to the null model.

We will continue to assume (as is standard) that Assumption \ref{greg}(i) holds under $H_1$. Indeed, when the residuals are computed using non-linear least squares (NLS) and the usual regularity conditions to ensure consistency under $H_0$ hold, the estimator continues to admit (under $H_1$) the asymptotic linear expansion \begin{align*}
\sqrt{n}(\hat{\beta} - \beta_0) & =   -\{ \E (   [\nabla_{\beta} g(X,\beta_0) ] [\nabla_{\beta} g(X,\beta_0) ]'    ) \}^{-1} \bigg(  \frac{1}{\sqrt{n}} \sum_{i=1}^n \nabla_{\beta} g(X_i,\beta_0) u_i       \bigg) + o_{\mathbb{P}}(1) \\ & = O_{\mathbb{P}}(1).
\end{align*}
To facilitate the derivation we make an assumption on the moments additional to Assumption \ref{d-class}.
\begin{assumption}
\label{d-class2}
$F_{X}$ is a continuous measure that satisfies
\begin{align*}
\lim_{h  \downarrow 0} \frac{\mathbb{E}\left[  \big \{ F_{X}(X-h\iota,X+h\iota)
 \big \}^2  \right]  }{\left(  \mathbb{E}\left[  F_{X}(X-h\iota,X+h
\iota)\right]  \right)  ^{3/2}} = 0.
\end{align*}
\end{assumption}

This assumption is similar to Assumption \ref{d-class}(ii). By arguing as in Lemma \ref{d-lemma}, it is straightforward to deduce that Assumption \ref{d-class2} holds for every distribution in the class $\mathcal{D}$. 

The next theorem provides the local power analysis of the specification test.  In general, it depends on the interplay between $(i)$ the rate $\gamma_n$ at which the local alternatives approach the null, $(ii)$ the drift function $\delta(.)$ that determines the direction of approach to the null model and $(iii)$ the distribution $F_{X}$.

\begin{theorem} \label{pow1}   Suppose the alternative hypothesis $H_1$ and Assumptions \ref{data}-\ref{d-class2} hold. Suppose  $\delta(.) \in L^{\infty}(X)$  is uniformly continuous on the support of $F_X$. 
 \begin{enumerate}
\item[(i)] If  $\gamma_n = o( n^{-1/2} \{ \E[ F_{X}(X - h_n \iota , X + h_n \iota) ]   \}^{-1/4} )  $, then $$ \frac{n \hat{I}_{n}}{\sqrt{\hat{\sigma}_{n}^{2}}} \overset{d}{=} N(0,1) + o_{\mathbb{P}}(1).  $$
\item[(ii)] If $\gamma_n \asymp  n^{-1/2} \{ \E[ F_{X}(X - h_n \iota , X + h_n \iota) ]   \}^{-1/4}$ and
 \begin{align}
 \liminf_{h \downarrow 0} \frac{ \E \big[  \delta^2(X) F_X(X - h \iota , X +h \iota)               \big]}{\E[ F_X(X -h \iota , X + h \iota) ]} > 0 \; ,  \label{liminf}
\end{align}
then there exists a determinstic sequence of constants $(L_n)_{n=1}^{\infty}$ such that  \begin{align*}
 & \frac{n \hat{I}_{n}}{\sqrt{\hat{\sigma}_{n}^{2}}} \overset{d}{=} N(0,1) + L_n + o_{\mathbb{P}}(1)  \; \;, \; \; \liminf_{n \rightarrow \infty} L_n > 0.
\end{align*}
However, if  \begin{align} \limsup_{h \downarrow 0} \frac{ \E \big[  \delta^2(X) F_X(X - h \iota , X +h \iota)               \big]}{\E[ F_X(X -h \iota , X + h \iota) ]} = 0 \; , \label{limsup} \end{align}
then, $$ \frac{n \hat{I}_{n}}{\sqrt{\hat{\sigma}_{n}^{2}}} \overset{d}{=} N(0,1) + o_{\mathbb{P}}(1).  $$
\end{enumerate}
\end{theorem}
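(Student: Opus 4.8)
The plan is to expand $n\hat I_n$ around the infeasible $nU_n$ under $H_1$ and isolate one deterministic term. Write $\hat u_i = u_i + \gamma_n\delta(X_i) + \eta_i$ with $\eta_i = g(X_i,\beta_0)-g(X_i,\hat\beta)$; by Assumption \ref{greg} and a Taylor expansion, $\eta_i = -\nabla_\beta g(X_i,\beta_0)'(\hat\beta-\beta_0) + O(G(X_i)\|\hat\beta-\beta_0\|_2^2)$ with $\|\hat\beta-\beta_0\|_2 = O_{\mathbb{P}}(n^{-1/2})$. Abbreviating $K_{ij} = K((X_i-X_j)/h_n)$ and multiplying out $\hat u_i\hat u_j$ gives
\[ n\hat I_n = nU_n + S_n + C_n + R_n, \]
where $S_n = \tfrac{\gamma_n^2}{n-1}\sum_{i\neq j}\delta(X_i)\delta(X_j)K_{ij}$, $C_n = \tfrac{2\gamma_n}{n-1}\sum_{i\neq j}u_i\delta(X_j)K_{ij}$ has mean zero since $\E[u_i\mid X_i]=0$, and $R_n$ collects the terms carrying at least one factor $\eta$. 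The leading part $\tfrac{2}{n(n-1)}\sum_{i\neq j}u_i^2u_j^2K_{ij}^2$ of $\hat\sigma_n^2$ has mean $2\E(H_n^2)$, and its variance is $o(\{\E(H_n^2)\}^2)$ by the usual variance formula for $U$-statistics together with $\E[K_{12}^2]\le K(0)^2 r(h_n)$, Assumption \ref{error}(ii), and $n\,r(h_n)\uparrow\infty$ (which follows from $r(h_n)\gtrapprox h_n^q$ and $nh_n^q\uparrow\infty$); the $\delta$- and $\eta$-pieces of $\hat\sigma_n^2$ are lower order by the bounds below, so $\hat\sigma_n^2 = 2\E(H_n^2)(1+o_{\mathbb{P}}(1))$. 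Throughout I invoke Lemma \ref{moments copy(2)}(i) and the expectation-doubling in Assumption \ref{d-class}(i), which give $\E(H_n^2)\asymp r(h_n)$ and $r(h_n/2)\asymp r(h_n)$.

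Next I would pin down the deterministic term. Put $L_n = n\gamma_n^2\,\E[\delta(X_1)\delta(X_2)K_{12}]/\sqrt{2\E(H_n^2)}$; since $|\E[\delta(X_1)\delta(X_2)K_{12}]|\le K(0)\|\delta\|_\infty^2\, r(h_n)$ and $n\gamma_n^2\asymp r(h_n)^{-1/2}$ in case (ii) while $n\gamma_n^2 = o(r(h_n)^{-1/2})$ in case (i), $L_n$ is bounded in case (ii) and $o(1)$ in case (i). The key computation rewrites $\E[\delta(X_1)\delta(X_2)K_{12}] = \E\big[\delta(X)\!\int\delta(x)K((x-X)/h_n)\,dF_X(x)\big]$; uniform continuity and boundedness of $\delta$ permit replacing $\delta(x)$ by $\delta(X)$ inside the integral at a cost $o(r(h_n))$, after which $\int K((x-X)/h_n)\,dF_X(x)$ lies between $k(1/2)^q F_X(X-\tfrac{h_n}{2}\iota,X+\tfrac{h_n}{2}\iota)$ and $K(0)F_X(X-h_n\iota,X+h_n\iota)$. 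Hence $\E[\delta(X_1)\delta(X_2)K_{12}]$ equals, up to universal constants and an $o(r(h_n))$ remainder, a quantity comparable to $\E[\delta^2(X)F_X(X-h_n\iota,X+h_n\iota)]$ (using Assumption \ref{d-class}(i) at radius $h_n/2$), and with $n\gamma_n^2\asymp r(h_n)^{-1/2}$ and $\sqrt{2\E(H_n^2)}\asymp r(h_n)^{1/2}$ one obtains $L_n\asymp \E[\delta^2(X)F_X(X-h_n\iota,X+h_n\iota)]/r(h_n)+o(1)$. Thus (\ref{liminf}) forces $\liminf_n L_n>0$, (\ref{limsup}) forces $L_n\to0$, and in case (i) the crude upper bound already gives $L_n=o(1)$.

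Then I would show $S_n-\E S_n$, $C_n$, and $R_n$ are all $o_{\mathbb{P}}(\sqrt{\E(H_n^2)})$, so that after dividing by $\sqrt{\hat\sigma_n^2}$ they vanish in probability. Viewing $S_n$ as $n$ times a $U$-statistic and bounding the projection and kernel variances via $\delta\in L^\infty(X)$ and $\E[K_{12}^2]\le K(0)^2 r(h_n)$ gives $\mathrm{Var}(S_n)\lessapprox\gamma_n^4\big(n\,\E[\{F_X(X-h_n\iota,X+h_n\iota)\}^2]+r(h_n)\big)$; dividing by $\E(H_n^2)$ and using $\E[\{F_X(X-h_n\iota,X+h_n\iota)\}^2]\le r(h_n)$ and $\gamma_n^2\lessapprox n^{-1}r(h_n)^{-1/2}$ makes this $o(1)$. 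For $C_n$, expanding $\mathrm{Var}\big(\sum_{i\neq j}u_i\delta(X_j)K_{ij}\big)$ and using $\E[u_i\mid X_i]=0$ annihilates every covariance except the $\E[\mu_2(X_1)\delta(X_2)\delta(X_3)K_{12}K_{13}]$-type, which is $\lessapprox\E[\{F_X(X-h_n\iota,X+h_n\iota)\}^2]$; hence $\mathrm{Var}(C_n)/\E(H_n^2)\lessapprox \E[\{F_X(X-h_n\iota,X+h_n\iota)\}^2]/r(h_n)^{3/2}+n^{-1}r(h_n)^{-1/2}\to0$, the first term vanishing precisely by Assumption \ref{d-class2}. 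The pure estimation-error part of $R_n$ is handled as in the proof of Theorem \ref{asymptotics2}, using $\|\hat\beta-\beta_0\|_2=O_{\mathbb{P}}(n^{-1/2})$, the domination by $M\in L^2(X)$ and $G\in L^{4/3}(X)$, Lemma \ref{moments copy(2)}, and $\E[M(X_1)M(X_2)K_{12}]\le K(0)\,\E[M^2(X)F_X(X-h_n\iota,X+h_n\iota)]=o(r(h_n)^{1/2})$ (the last by truncating $M$ at a slowly growing level); the mixed signal--error pieces $\tfrac{\gamma_n}{n-1}\sum_{i\neq j}\delta(X_i)\eta_jK_{ij}$ are an $O_{\mathbb{P}}(n^{-1/2})$ factor times a $\gamma_n$-scaled kernel sum whose contribution, after division by $\sqrt{\E(H_n^2)}$, is $\lessapprox\{\E[\{F_X(X-h_n\iota,X+h_n\iota)\}^2]/r(h_n)^{3/2}\}^{1/2}\to0$, again by Assumption \ref{d-class2}.

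Combining the pieces yields $n\hat I_n/\sqrt{\hat\sigma_n^2}=nU_n/\sqrt{2\E(H_n^2)}+L_n+o_{\mathbb{P}}(1)$; since $nU_n/\sqrt{2\E(H_n^2)}\underset{d}{\rightarrow}N(0,1)$ by Theorem \ref{asymptotics} (its hypotheses, Assumptions \ref{data}--\ref{d-class}, hold here) and $L_n$ is bounded deterministic in case (ii), all three stated conclusions follow, with $L_n\to0$ absorbed into $o_{\mathbb{P}}(1)$ in case (i) and in case (ii) under (\ref{limsup}). The main obstacle is the simultaneous tightness of three competing rates — the drift scaling $\gamma_n$, the expected small-ball rate $r(h_n)$, and the bandwidth constraint $n\,r(h_n)\uparrow\infty$ — in the bounds for $C_n$ and for the mixed signal--error term, which is exactly where the extra moment condition of Assumption \ref{d-class2} is used; a secondary difficulty is the truncation argument that upgrades the $L^2$-domination of $\nabla_\beta g$ to the $o(r(h_n)^{1/2})$ bound needed to make $R_n$ negligible.
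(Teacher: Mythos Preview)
Your decomposition and overall strategy match the paper's exactly: write $\hat I_n$ as $U_n$ plus the pure drift piece $S_n$, the cross term $C_n$, and estimation-error remainders, show $\hat\sigma_n^2 = 2\E(H_n^2)(1+o_{\mathbb P}(1))$, and identify $L_n$ with the (normalized) mean of the drift piece. The variance bounds you give for $S_n$ and $C_n$, and your use of Assumption~\ref{d-class2} for the latter, are the same as the paper's.

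The one place you genuinely diverge is the analysis of $\E[\delta(X_1)\delta(X_2)K_{12}]$. The paper applies the integration-by-parts Lemma~\ref{aux} to obtain the exact representation
\[
\E[\delta(X_1)\delta(X_2)K_{12}]=\E\Big[\delta(X)\int_{[0,1]^q}\Omega_\delta(X-h_nv,X+h_nv)\,\partial_vK(-v)\,dv\Big],
\]
and then replaces $\Omega_\delta$ by $\delta(X)F_X$ via uniform continuity. Your sandwich argument --- bounding $\int K((x-X)/h_n)\,dF_X(x)$ between $k(1/2)^q F_X(X-\tfrac{h_n}{2}\iota,X+\tfrac{h_n}{2}\iota)$ and $K(0)F_X(X-h_n\iota,X+h_n\iota)$ after the uniform-continuity replacement --- is more elementary and avoids the IBP machinery entirely. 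It yields only two-sided bounds rather than a formula, but that is all the theorem needs: the upper bound handles case~(i) and case~(ii) under~(\ref{limsup}), and the lower bound combined with the doubling estimate $r(h_n/2)\gtrapprox r(h_n)$ (iterated from Assumption~\ref{d-class}(i)) handles~(\ref{liminf}). This is a clean shortcut.

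One technical slip: your claim that $\E[M^2(X)F_X(X-h_n\iota,X+h_n\iota)]=o(r(h_n)^{1/2})$ ``by truncating $M$'' does not follow from $M\in L^2(X)$ alone --- the truncation leaves a tail term $\E[M^2\mathbbm 1\{M>\alpha_n\}]$ with no rate control relative to $r(h_n)^{1/2}$. The paper instead bounds $\E[M(X_1)M(X_2)K_{12}]\lessapprox\E[M(X)\Omega_M(X-h_n\iota,X+h_n\iota)]$ and then uses Cauchy--Schwarz plus $\E[\Omega_M^2]=o(r(h_n))$ (the latter obtained by truncating \emph{inside} $\Omega_M$, where the argument does go through). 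Since you already defer this piece to ``the proof of Theorem~\ref{asymptotics2}'', the overall argument stands; just be aware that the particular inequality you wrote is not the one that works.
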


Thus, under a sequence of local alternatives, the fastest rate at which asymptotic power may exist is $ n^{-1/2} \{ \E[ F_{X}(X - h_n \iota , X + h_n \iota) ] \}^{-1/4} $. Clearly (\ref{liminf}) is satisfied whenever $\delta^2(X)$ is bounded away from zero (up to a $F_{X}$ null set). If $F_{X}$ is absolutely continuous with density $f_X \in L^{\infty}(X)$, we have $\E[ F_X(X -h \iota , X + h \iota) ] \asymp h^q$  and (\ref{liminf}) reduces (by Lebesgue differentiation) to the usual condition  $$   \lim_{h \downarrow 0} \frac{ \E \big[  \delta^2(X) F_X(X - h \iota , X +h \iota)               \big]}{(2h)^q}   =       \E[\delta^2(X) f_X(X)] > 0 . $$ 

 The implications of Theorem \ref{pow1} are more complex for distributions with  singular components when $\delta^2(X)$ is not bounded away from zero. Loosely speaking, (\ref{liminf}) says that the support of $\delta(.)$ must intersect nontrivially with a subset of the support of  the distribution where the local singularity is maximized \big(support points $x$ where the decay rate of $ F_{X}(x - h_n \iota , x + h_n \iota)$ coincides with the rate for $\E[ F_{X}(X - h_n \iota , X + h_n \iota)]$ \big). The following example illustrates the essential idea of how the direction of approach may influence the local power properties of the test statistic.

\begin{example}[On the direction of approach] \label{supportr}
Suppose  $ F_X =  0.5 F_{N(0,1)} +  0.5 F_{C}   $ where $F_{N(0,1)}$ is standard Gaussian and $F_{C}$ denotes the usual Cantor measure on $[0,1]$. Let $ s = \log _{3}(2)\approx 0.631 $ and denote the support of $\delta(.) \in L^{\infty}(X)$ by $\mathcal{S}_{\delta}$. In this case, $F_{X} \in \mathcal{D}(s)$ and $ \E[ F_X(X -h_n \iota , X + h_n \iota) ] \asymp h_n^{s}  $. If there exists $\epsilon > 0 $ such that $ \mathcal{S}_{\delta} \subseteq  (- \infty , - \epsilon] \cup [1 + \epsilon , \infty)  $, then for any sequence $h_n \downarrow 0$ we have that \begin{align*}
\frac{ \E \big[  \delta^2(X) F_X(X - h_n \iota , X +h_n \iota)               \big]}{ \E[ F_X(X -h_n \iota , X + h_n \iota) ]} \lessapprox   \frac{ \E \big[  \delta^2(X) F_{N(0,1)}(X - h_n \iota , X +h_n \iota)               \big]}{ \E[ F_X(X -h_n \iota , X + h_n \iota) ]}     \lessapprox  \frac{h_n^{}}{h_n^{s}}  = o(1).
\end{align*}
In particular, condition (\ref{limsup}) holds and there is no power for alternatives that approach at rate $\gamma_n \asymp n^{-1/2} \{ \E[ F_{X}(X - h_n \iota , X + h_n \iota) ]   \}^{-1/4}  $  in the $\delta(.)$ direction. 
\end{example}
\begin{remark}[On the power with singular components] \label{singcom}
If $F_{X}$ is absolutely continuous, it is well known (see e.g.  \cite[Theorem 3]{Zhang}) that the fastest possible rate of approach to the null model is given by  $ n^{-1/2} h_n^{-q/4} $.   If $F_{X}$ contains singular components, we have $h_n^{-q}  \E\big[  F_X(X-h_n \iota, X+ h_n \iota)   \big] \uparrow \infty $. By Theorem \ref{pow1}, it then follows that for measures with singular components, the alternative sequence can approach the null at a rate faster than in the fully absolutely continuous case (provided that the support of $\delta(.)$  sufficiently ``touches'' the singular measures support for (\ref{liminf}) to hold).
\end{remark} 
We now examine the situation where  (\ref{limsup}) holds. In this case, there is no power for alternatives that approach the null model in the $\delta(.)$ direction with rate \begin{equation} \label{ratefast} \gamma_n^* = n^{-1/2} \{ \E[ F_{X}(X - h_n \iota , X + h_n \iota) ]   \}^{-1/4} .  \end{equation} A natural question then is what is the minimum rate $\gamma_n > \gamma_n^*$ for which power exists? In general $\gamma_n$ depends non-trivially on the  interaction between the support of $\delta(.)$ and the local singularity of the measure. To illustrate the essential idea, we consider the case where $F_{X} $ can be represented as  a finite mixture of distributions from $\mathcal{D}$. That is, there exists a finite index set $T$ such that \begin{align}
\label{mud} F_{X} = \sum_{t \in T} \alpha_{t} F_{t} \; \; , \;  \alpha_{t} >0 \; \; \text{and} \; \; \sum_{t \in T} \alpha_{t} = 1 \; , \; \; F_{t} \in \mathcal{D}(s_t) \; \; \text{for some} \; \;s_t \in (0,1] .
\end{align}
In this case, $\E\big[  F_X(X-h_n \iota, X+ h_n \iota)   \big]  \asymp h_n^{ sq} $ where $s = \min_{t \in T} s_t$. We consider the case where  $\delta(.)$ concentrates away from the support of the components $\{ F_t : s_t =s \}$ or equivalently, the support of $\delta(.)$ does  not interact with areas where the local singularity of the measure is at its maximum.

 Given a  distribution $F_{t}$, its support can always be expressed as \begin{align} \label{support} \mathcal{S}_{t} =  \{x \in \R^q :  F_t(x-r \iota , x +r \iota) > 0 \; \; \text{for every} \; r > 0     \} .  \end{align}

Let $\mathcal{S}_{\delta} = \{ x \in \R^q : \delta(x) \neq  0 \}$ and denote the common support of $(F_t,\delta)$  by $ \mathcal{S}_{t,\delta} = \mathcal{S}_{t} \cap \mathcal{S}_{\delta} $. For the mixture distribution in (\ref{mud}), denote by $R = R(\delta)$  a subset of the index set $T$ for which

 \begin{align}  & \underset{X  \stackrel{}{\sim}  F_{t} }{\E} \big[  \delta^2(X)  \big] = 0 \; \; \; \; \forall \; t \in T \setminus R  \; , \label{zero}     \\ & z \in T \setminus R \; , s_z < \min_{v \in R} s_v \implies  F_{z}(x-h \iota , x +h \iota) = 0  \; \; \; \; \; \; \text{for every} \; \; x \in \bigcup_{t \in R} \mathcal{S}_{t,\delta} \label{interzero} \end{align}  
holds for all sufficiently small $h > 0 $. Condition (\ref{zero}) restricts the support of $\delta(.)$ to lie outside of $ \bigcup_{t \in T \setminus R} \mathcal{S}_t $. Condition (\ref{interzero}) essentially states that $ \bigcup_{t \in R} \mathcal{S}_{t, \delta}    $ is well-separated from $\mathcal{S}_z$ for every $z \in T \setminus R$ that has a sharper singularity than that of $R$. In particular, if there exists $\epsilon > 0 $ such that $  \inf_{x \in \mathcal{S}_{t , \delta} \,  ,  \, y \in \mathcal{S}_z } \|  x-y \|_{\infty} \geq \epsilon  $ for every $t \in R$ and $z \in T \setminus R$, then (\ref{interzero}) holds for every $h <  \epsilon$ (for a one dimensional illustration, see Example \ref{supportr}).

Denote the restricted $\delta$-singularity coefficient of $F_{X}$ by $s_{\delta} = \min_{t \in R} s_{t}$   and the  probability measure induced from the restriction by $F_{R} = (\sum_{t \in R  } \alpha_t)^{-1} \sum_{t \in  R } \alpha_t F_t    $.   The following theorem illustrates that for directions $\delta(.)$ that satisfy (\ref{zero}-\ref{interzero}) with no power at rate $\gamma_n^*$, there may be power when the alternatives approach at a slower rate  $\gamma_n > \gamma_n^*$, provided that the support of $\delta(.)$ intersects nontrivially with some subset of the mixture components support.

\begin{theorem} \label{pow2} Suppose the alternative hypothesis $H_1$ and Assumptions (\ref{data}, \ref{error}, \ref{kernel}, \ref{greg})  hold.  Let  $F_{X} $ be as in (\ref{mud}) and $R(\delta) \subset T$ be such that (\ref{zero}-\ref{interzero}) holds.   Suppose $\delta(.)$ is uniformly continuous on the support of $F_{R}$ and either $(i)$ $\delta \in L^{\infty}(X)$ or $(ii)$ $\delta \in L^2(X)$ and Assumption \ref{greg}(iii) holds with $M(X) \in L^{4 + \epsilon}$ for some $\epsilon > 0 $.

\begin{enumerate}

\item[(i)]  If  $\gamma_n = o(   n^{-1/2} \{ \E[ F_X(X - h_n \iota , X + h_n \iota) ]   \}^{1/4} h_n^{-s_{\delta}q/2})$, then $$ \frac{n \hat{I}_{n}}{\sqrt{\hat{\sigma}_{n}^{2}}} \overset{d}{=} N(0,1) + o_{\mathbb{P}}(1).  $$

\item[(ii)] If $\gamma_n \asymp  n^{-1/2} \{ \E[ F_X(X - h_n \iota , X + h_n \iota) ]   \}^{1/4} h_n^{-s_{\delta}q/2} $  and \begin{align}
  \liminf_{h \downarrow 0}  \frac{\underset{X  \stackrel{}{\sim} F_{R} }{\E} \big[  \delta^2(X) F_R(X-h \iota , X+ h \iota)      \big]   }{\underset{X  \stackrel{}{\sim} F_{R} }{\E}  \big[  F_R(X - h \iota , X + h \iota)      \big]}                > 0 \; ,  \label{liminf2}
\end{align}

then there exists a deterministic sequence of constants $(L_n)_{n=1}^{\infty}$ such that
 \begin{align*}
 & \frac{n \hat{I}_{n}}{\sqrt{\hat{\sigma}_{n}^{2}}} \overset{d}{=} N(0,1) + L_n + o_{\mathbb{P}}(1)  \; \;, \; \; \liminf_{n \rightarrow \infty} L_n > 0.
\end{align*}
\end{enumerate}
\end{theorem}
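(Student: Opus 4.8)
The plan is to reduce the feasible statistic to its leading (infeasible) counterpart and then analyze the latter through the same martingale decomposition used in Theorem \ref{asymptotics}, this time applied to the model under $H_1$. First I would follow the standard first-stage approximation: writing $\hat{u}_i = u_i + \gamma_n \delta(X_i) - [g(X_i,\hat\beta) - g(X_i,\beta_0)]$ and expanding $g(X_i,\hat\beta)-g(X_i,\beta_0)$ to first order in $\hat\beta-\beta_0 = O_{\mathbb{P}}(n^{-1/2})$ (using Assumption \ref{greg}), I would show that the contribution of the estimation term to $n\hat I_n/\sqrt{\hat\sigma_n^2}$ is $o_{\mathbb{P}}(1)$ under the stated moment conditions on $M(X),G(X)$, so that $n\hat I_n/\sqrt{\hat\sigma_n^2} = n\tilde I_n/\sqrt{2\E(H_n^2)} + o_{\mathbb{P}}(1)$, where $\tilde I_n$ is the $U$-statistic built from the ``reduced'' residuals $\tilde u_i = u_i + \gamma_n \delta(X_i)$. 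I would also verify, as in Theorem \ref{asymptotics2}, that $\hat\sigma_n^2/(2\E(H_n^2)) \to 1$ in probability (here Assumption \ref{d-class2}, which by the remark holds on $\mathcal{D}$, controls the relevant variance ratio).

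Next I would decompose $\tilde I_n$ into three pieces: the pure-noise $U$-statistic $U_n$ in (\ref{U(u)}), a cross term $2\gamma_n \cdot \frac{1}{n(n-1)}\sum_{i\ne j} u_i \delta(X_j) K((X_i-X_j)/h_n)$, and the deterministic-drift term $\gamma_n^2 \cdot \frac{1}{n(n-1)}\sum_{i\ne j}\delta(X_i)\delta(X_j)K((X_i-X_j)/h_n)$. For the first, Theorem \ref{asymptotics} gives $nU_n/\sqrt{2\E(H_n^2)}\to N(0,1)$; the cross term is a degenerate $U$-statistic (since $\E[u_i|X_i]=0$) whose variance is $O(n^{-2}\gamma_n^2 \E(H_n^2) \cdot n) $-type — I would bound its second moment using Lemma \ref{moments copy(2)}-style estimates and the boundedness/$L^2$ hypothesis on $\delta$, showing it is $o_{\mathbb{P}}(n^{-1}\sqrt{\E(H_n^2)})$ given the scaling $\gamma_n \asymp n^{-1/2}\{\E[F_X(X-h_n\iota,X+h_n\iota)]\}^{1/4}h_n^{-s_\delta q/2}$; in fact its fluctuations are asymptotically negligible relative to $N(0,1)$ after normalization, or at worst contribute to the $o_{\mathbb{P}}(1)$. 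The key object is therefore the drift term, whose expectation is $\gamma_n^2 \E[\delta(X_1)\delta(X_2)K((X_1-X_2)/h_n)]$. Using Lemma \ref{aux} exactly as in the proof of Lemma \ref{moments}, but with the weight $f = \delta$ rather than $\mu_2$, I would show this expectation is $\asymp \gamma_n^2\, \E\big[\delta^2(X)\, F_X(X-h_n\iota,X+h_n\iota)\big]$ up to positive kernel constants (the cross-support pieces vanish because of (\ref{interzero}), which forces the sharper-singularity components $F_z$ to put zero mass near $\bigcup_{t\in R}\mathcal{S}_{t,\delta}$, so only the $F_R$-part survives and it has singularity exponent $s_\delta q$). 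Then $\E[\delta^2(X)F_X(X-h_n\iota,X+h_n\iota)] \asymp (\sum_{t\in R}\alpha_t)\,\underset{X\sim F_R}{\E}[\delta^2(X)F_R(X-h_n\iota,X+h_n\iota)] \asymp h_n^{s_\delta q}$ by (\ref{liminf2}), so the normalized drift mean is $\asymp \gamma_n^2 h_n^{s_\delta q}/ (n^{-1}\sqrt{\E(H_n^2)}) \asymp \gamma_n^2 h_n^{s_\delta q} n \{\E[F_X(X-h_n\iota,\cdot)]\}^{-1/2}$, which under part (i)'s rate tends to $0$ and under part (ii)'s rate stays bounded away from $0$; setting $L_n$ equal to this normalized expectation (plus the vanishing correction from replacing the $U$-statistic mean by its integral limit) gives $\liminf L_n > 0$. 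Finally, I would check that the drift term concentrates around its mean — its variance, computed again via the Lemma \ref{aux} representation, is of smaller order because it involves an extra factor of either $\gamma_n^2$ or $\E(H_n^2)/(\text{mass})$ that is $o(1)$ under the bandwidth and rate constraints, using Assumption \ref{d-class2} and $n\E[F_X(X-h_n\iota,\cdot)]\uparrow\infty$.

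The main obstacle I expect is the bookkeeping around the cross-support cancellations: making rigorous the claim that, under (\ref{zero})–(\ref{interzero}), the only surviving contribution to $\E[\delta(X_1)\delta(X_2)K((X_1-X_2)/h_n)]$ at the exact rate $h_n^{s_\delta q}$ comes from the restricted measure $F_R$, and that components $F_z$ with $s_z < s_\delta$ (which are individually ``larger'' per cube) contribute nothing near the support of $\delta$ for $h_n$ small, while components with $s_z \ge s_\delta$ contribute only lower-order terms. This requires carefully splitting the double integral $\int\int \delta(x)\delta(y)K((x-y)/h_n)\,dF_X(x)dF_X(y)$ along the mixture, invoking uniform continuity of $\delta$ on $\mathcal{S}_{F_R}$ to pass a limit inside, and using (\ref{interzero}) to kill the off-diagonal mixed terms $\int\int\delta(x)\delta(y)K((x-y)/h_n)\,dF_t(x)dF_z(y)$ for $t\in R$, $z\notin R$ once $h_n$ is below the separation scale. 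The remaining steps — the first-stage expansion, the degenerate $U$-statistic variance bounds, and the variance bound on the drift term — are routine given Lemmas \ref{aux}, \ref{moments}, \ref{moments copy(2)} and the argument template of Theorems \ref{asymptotics}–\ref{asymptotics2}.
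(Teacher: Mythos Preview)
Your overall strategy matches the paper's: expand $\hat u_i$, absorb the estimation error, decompose into the pure-noise $U$-statistic, the $u\times\delta$ cross term, and the $\delta\times\delta$ drift, then pin down $L_n$ from the normalized drift mean. You also correctly locate the crux in the cross-support accounting implied by (\ref{zero})--(\ref{interzero}).

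The paper organizes this accounting differently from what you sketch, and the difference is worth noting. Rather than split the double integral $\int\!\!\int \delta(x)\delta(y)K((x-y)/h_n)\,dF_t(x)\,dF_z(y)$ over mixture indices and kill the $(t\in R,\,z\notin R)$ blocks term by term, the paper first proves a single \emph{pointwise} bound: for all small $h$,
\[
\mathbb{P}\Big(\,\mathbbm{1}\{X\in\mathcal{S}_\delta\}\,F_X(X-h\iota,X+h\iota)\le M\,(2h)^{s_\delta q}\,\Big)=1,
\]
which follows because components $F_z$ with $s_z<s_\delta$ put zero mass near $\mathcal{S}_\delta$ by (\ref{interzero}), while components with $s_t\ge s_\delta$ automatically satisfy $F_t(\cdot)\lesssim h^{s_t q}\le h^{s_\delta q}$. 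This bound is then reused uniformly across \emph{every} piece that carries a factor of $\delta$: the drift bias and variance, the $u\times\delta$ cross term, the $\nabla_\beta g\times\delta$ cross term, and crucially the consistency argument for $\hat\sigma_n^2$ (which here is a separate lemma, since the larger $\gamma_n$ allowed in Theorem \ref{pow2} makes the variance-estimator analysis more delicate than in Theorem \ref{asymptotics2}). Your integral-level splitting would reach the same conclusions, but you would have to redo the mixture bookkeeping inside each moment bound separately; the pointwise device collapses all of that into one inequality applied repeatedly. The restricted measure $F_R$ enters only at the very end, as a lower bound on the drift mean needed to invoke (\ref{liminf2}); the upper bounds and negligibility arguments are all run against $F_X$ itself via the pointwise estimate.
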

In the special case where $  R = T  $, we have $s_{\delta} 	  = \min_{t \in T} s_t = s $. By Lemma \ref{d-lemma}, $F_{X} \in \mathcal{D}(s)$ and $   \E[ F_X(X - h_n \iota , X + h_n \iota) ]   \asymp h_n^{sq}$. Therefore, the hypothesis and conclusion of Theorem \ref{pow2} is identical to  Theorem  \ref{pow1} in this setting. Moreover,  when  $F_{X} \in \mathcal{D}(s)$, boundedness of $\delta(.)$ can be replaced with $L^2(X)$ integrability and the existence of sufficient moments for the envelope function $M(.)$ appearing in Assumption \ref{greg}. Other variations on these assumptions are possible as well.

When $ R \subsetneqq T $ and $s_{\delta} > s $, Condition (\ref{liminf}) of Theorem \ref{pow1} will fail as the ratio has rate $  h^{q(s_{\delta}-s)} $. Condition (\ref{liminf2}) corrects for this by using  only the mixture components where $\delta(.)$ has support and the appropriate rate $ \E_{X  \sim F_R}  \big[  F_R(X - h \iota , X + h \iota)      \big] \asymp h_n^{s_{\delta}q}$ for  the denominator.  If (\ref{liminf2}) holds, there will be power for alternatives that approach in the $\delta(.)$ direction at   rate  $$ \gamma_n = n^{-1/2} \{ \E[ F_X(X - h_n \iota , X + h_n \iota) ]   \}^{1/4} h_n^{-s_{\delta}q/2} \asymp  \gamma_n^* h_n^{-(s_{\delta}-s)q/2} > \gamma_n^* .$$
As an application, let $\delta(.)$ be as in Example \ref{supportr}. We take $R$  to  be the subcomponent that contains only the $N(0,1)$ distribution. Then (\ref{liminf2}) holds whenever $  \E _{X \sim N(0,1)}[   \delta^2(X)      ]   > 0 $. In this case, Theorem \ref{pow2} implies that power exists  at the rate $$ \gamma_n = n^{-1/2} h_n^{- \log_{3}(2) /4}   h_n^{-( 1  - \log_{3}(2) )/2}  \approx n^{-1/2} h_n^{-0.3423} .$$
Thus, under the null, the  goodness-of-fit statistic $\hat{\tau}_n$ in (\ref{selfn}) converges weakly to a standard Gaussian for a large class of continuous measures. However, the usual local power analysis is complicated by possible singularities and their interplay with the direction of approach to the null model.

\section{Simulations}
\label{sec4}
In this section, we provide simulation evidence on the asymptotic normality of the goodness-of-fit test statistic  under the null hypothesis. Additionally, we illustrate the sensitivity of the test to the distribution of the conditioning variables. We use the  Epanechnikov kernel and the number of replications  in all cases is $5000$ (population quantities defined through $\mathbb{P}$ use the empirical analog from the simulation draws).

In the interest of working with a distribution that admits a non-trivial singular component, we make use of the fact that there exists a well developed theory (see e.g. \cite{ergodic})  for approximating a random sample from the normalized Hausdorff measure of any self-similar fractal. We focus on a two dimensional regressor  with mixture distribution \begin{equation} \label{mixing}
F_{X} = \alpha_{1} F_{U} + \alpha_{2} F_{S} \; \; \; \; \;, \; \; \; \alpha_1+\alpha_2 = 1 \; \; \; \; , \; \; \; \alpha_1,\alpha_2 \geq 0.
\end{equation}
where $F_{U}$ is the uniform distribution on $[-2,2]^2$ and $F_{S}$ is the $\log_{2}(3) \approx 1.585 $ normalized Hausdorff measure on $T=$ Sierpi\'{n}ski's Triangle (with vertices at $\{  (-2,0),(2,0),(0,2)    \}$).

A random sample from $F_{S}$ can be approximated using  the Markov chain generated by the chaos game (see e.g. \cite[Chapter 2.4]{ergodic})  on the iterated function system (the family of contraction maps in Example \ref{ex2}) associated with $T$.  In all cases, the initial $100$ draws are discarded as a Markov chain  burn-in  and the remaining draws are used as the observed sample of size $n$.

\begin{figure}[H]\centering
\subfloat{\label{2a}\includegraphics[height=3cm,width=.45\linewidth]{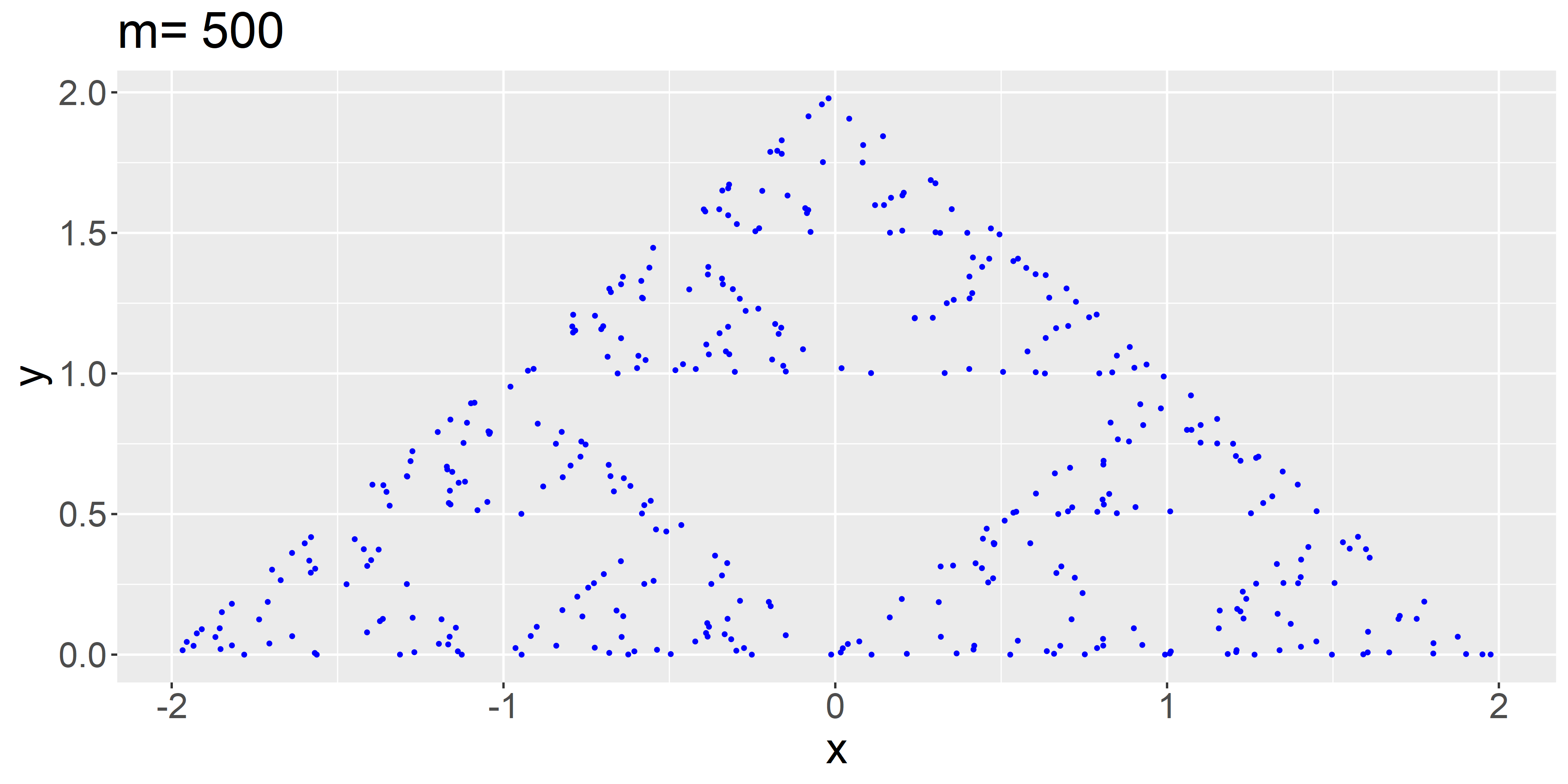}}\hfill
\subfloat{\label{2b}\includegraphics[height=3cm,width=.45\linewidth]{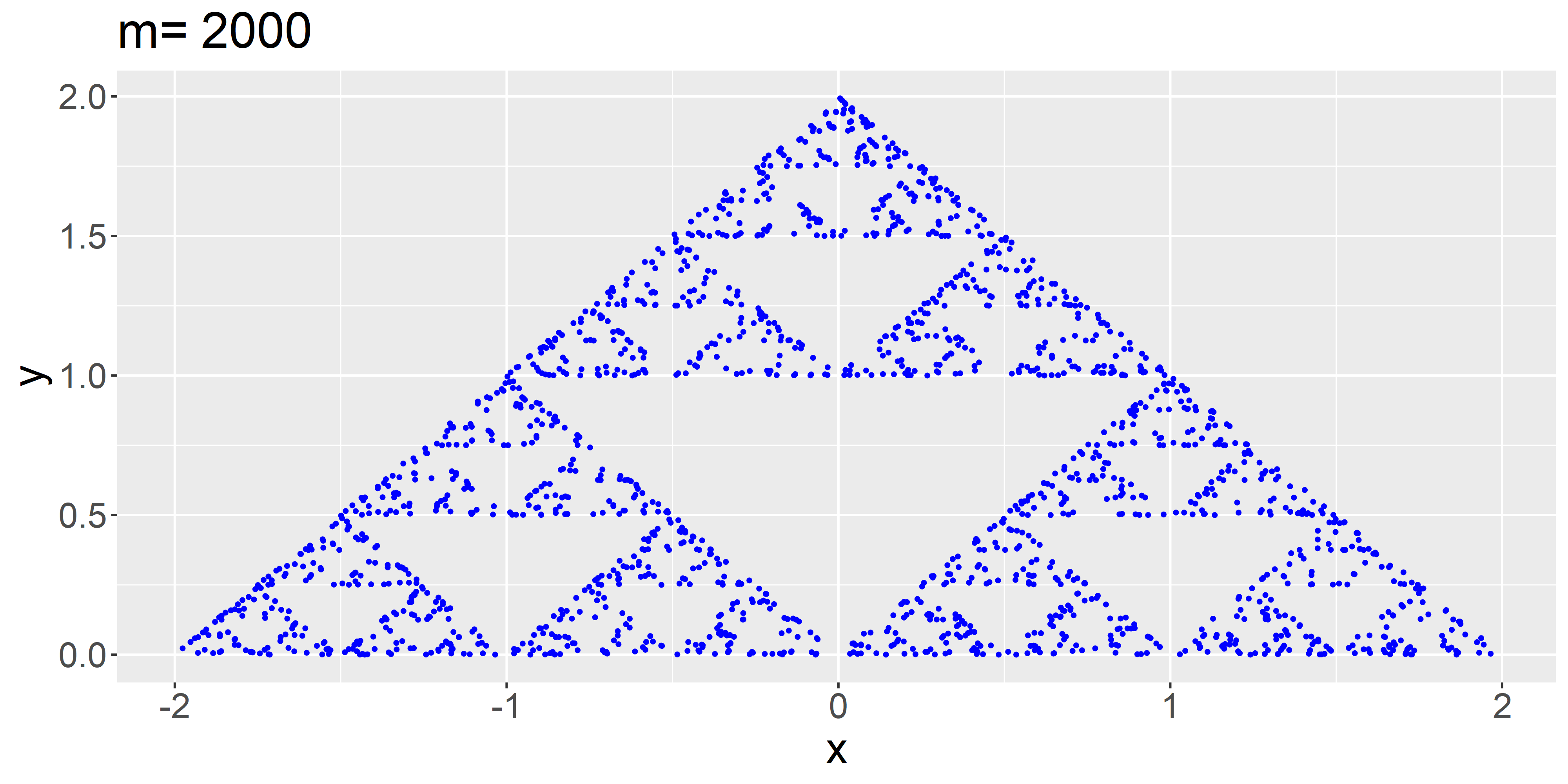}}\par 
\subfloat{\label{2c}\includegraphics[height=3cm,width=.45\linewidth]{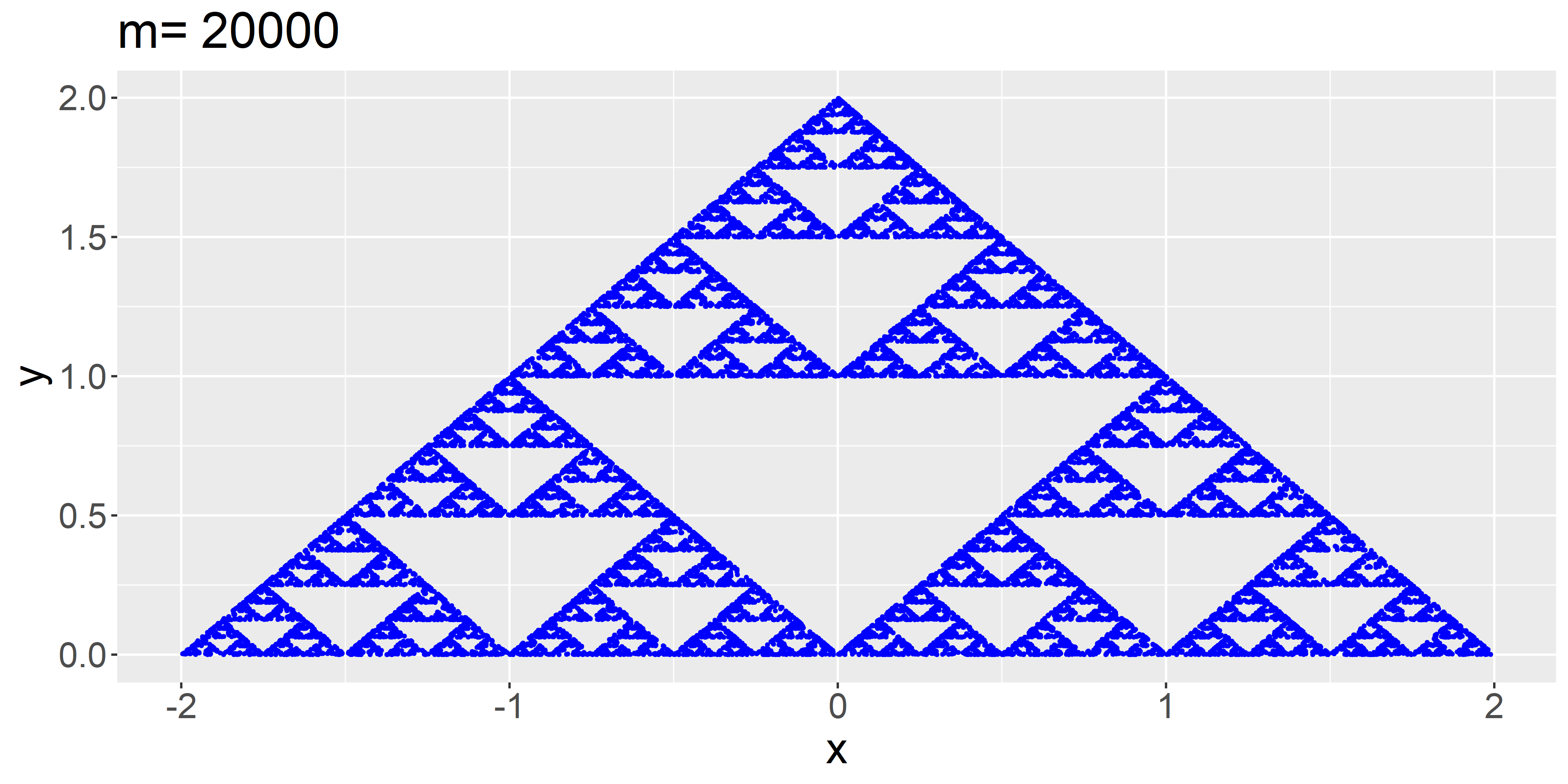}}
\caption{Stages of the chaos game for Sierpi\'{n}ski's Triangle. $m$ denotes the Markov chain length.  }
\label{fig2}
\end{figure}

Denote the coordinates of $X$ by $X=(V_1,V_2)$. We consider the null model \begin{equation}
H_0 \; \; : \; \; Y = \beta_0 +  \beta_1 V_1 +  \beta_2 V_2 + u \; \; \; \; \; \; , \; \; \; u \sim N(0,1).   \label{nullmod}
\end{equation}

The data is generated with $(\beta_0,\beta_1,\beta_2) = (1,1,1)$. We use the feasible goodness-of-fit statistic $ \hat{\tau}_n = n \hat{I}_n / \sqrt{ \hat{\sigma}_n^2}  $ where $  \hat{I}_n  $ and $ \hat{\sigma}_n^2$ are as in (\ref{I(n)}) and (\ref{sigma-hat}), respectively. The estimated residuals $\hat{u}_i$ are computed using ordinary least squares on the regression in (\ref{nullmod}).

\begin{figure}[H]
  \begin{subfigure}{0.32\textwidth}
    \centering
    \includegraphics[height=3cm,width=\linewidth]{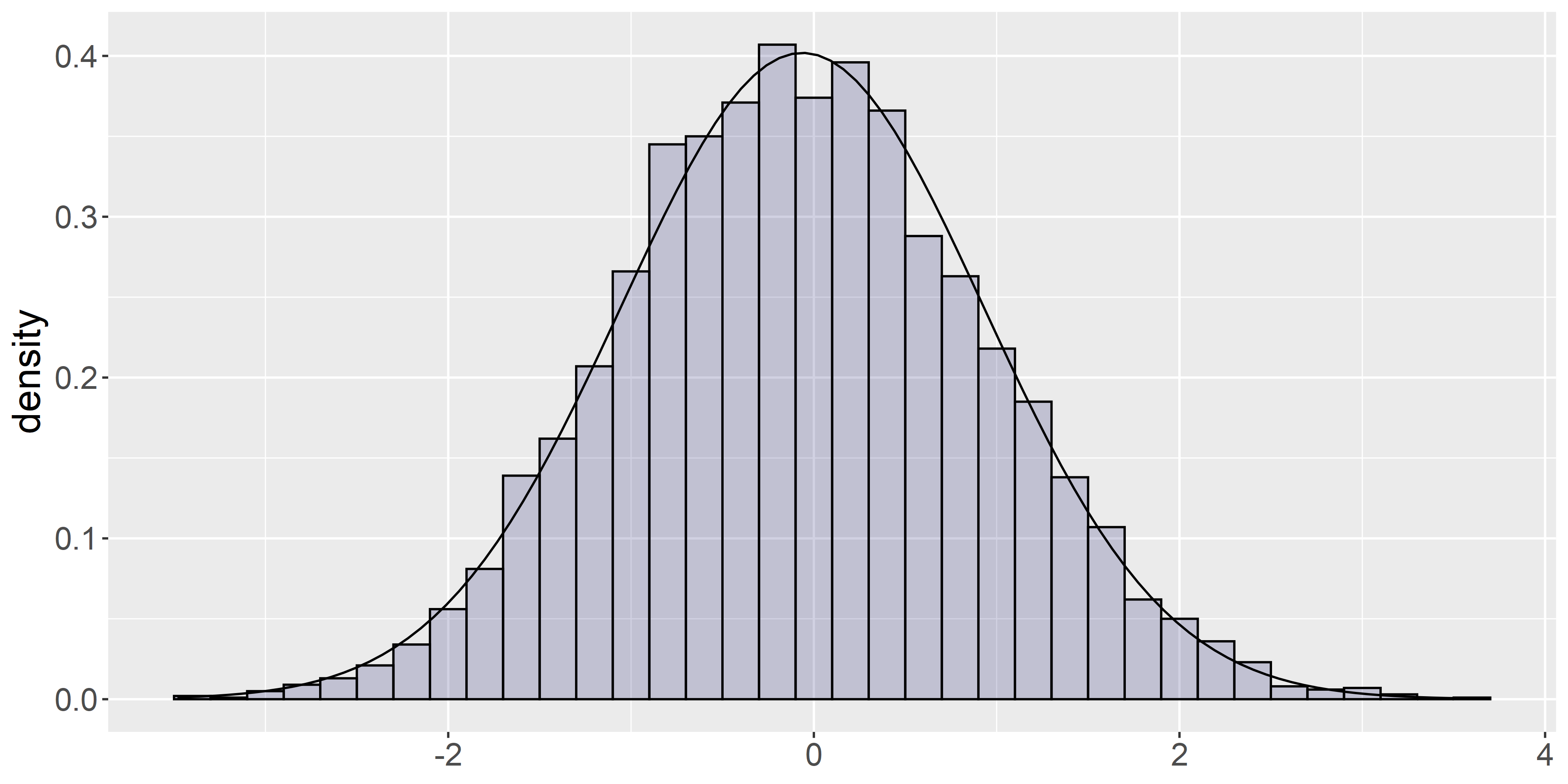}
  \end{subfigure}%
  \begin{subfigure}{0.32\textwidth}
    \centering
    \includegraphics[height=3cm,width=\linewidth]{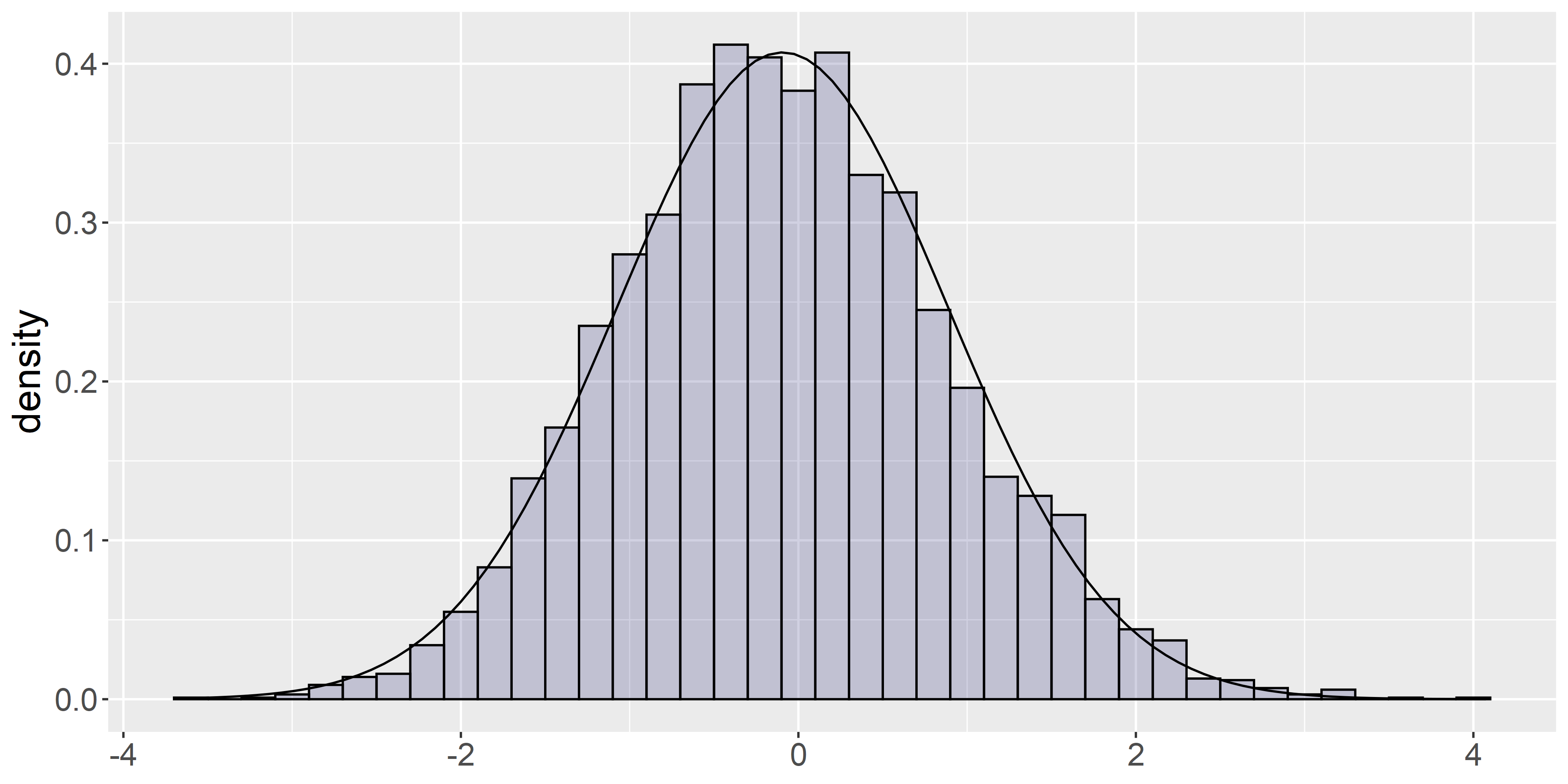}
  \end{subfigure}
  \begin{subfigure}{0.32\textwidth}\quad
    \centering
    \includegraphics[height=3cm,width=\linewidth]{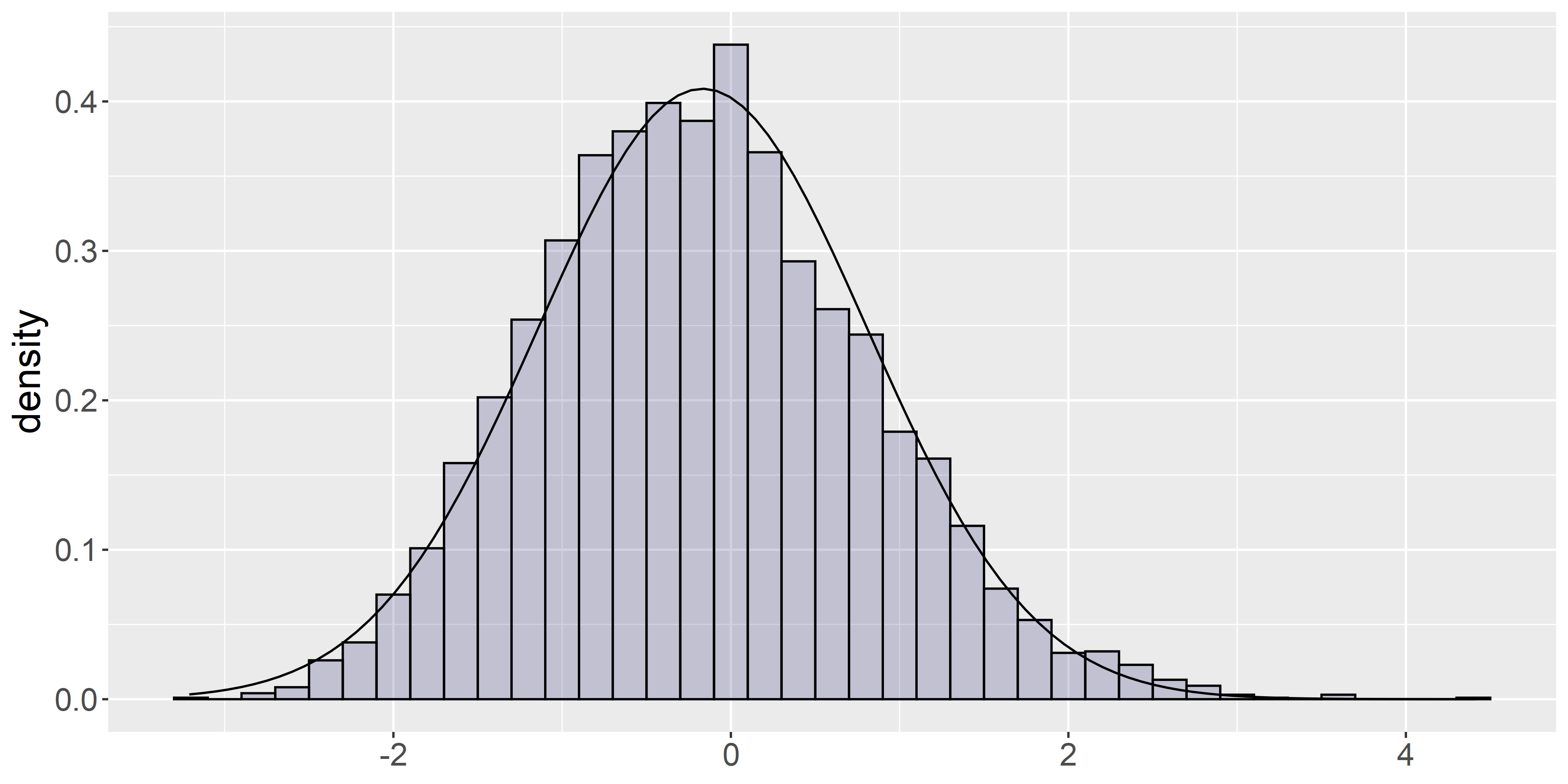}
  \end{subfigure}
  \medskip

  \begin{subfigure}{0.32\textwidth}
    \centering
    \includegraphics[height=3cm,width=\linewidth]{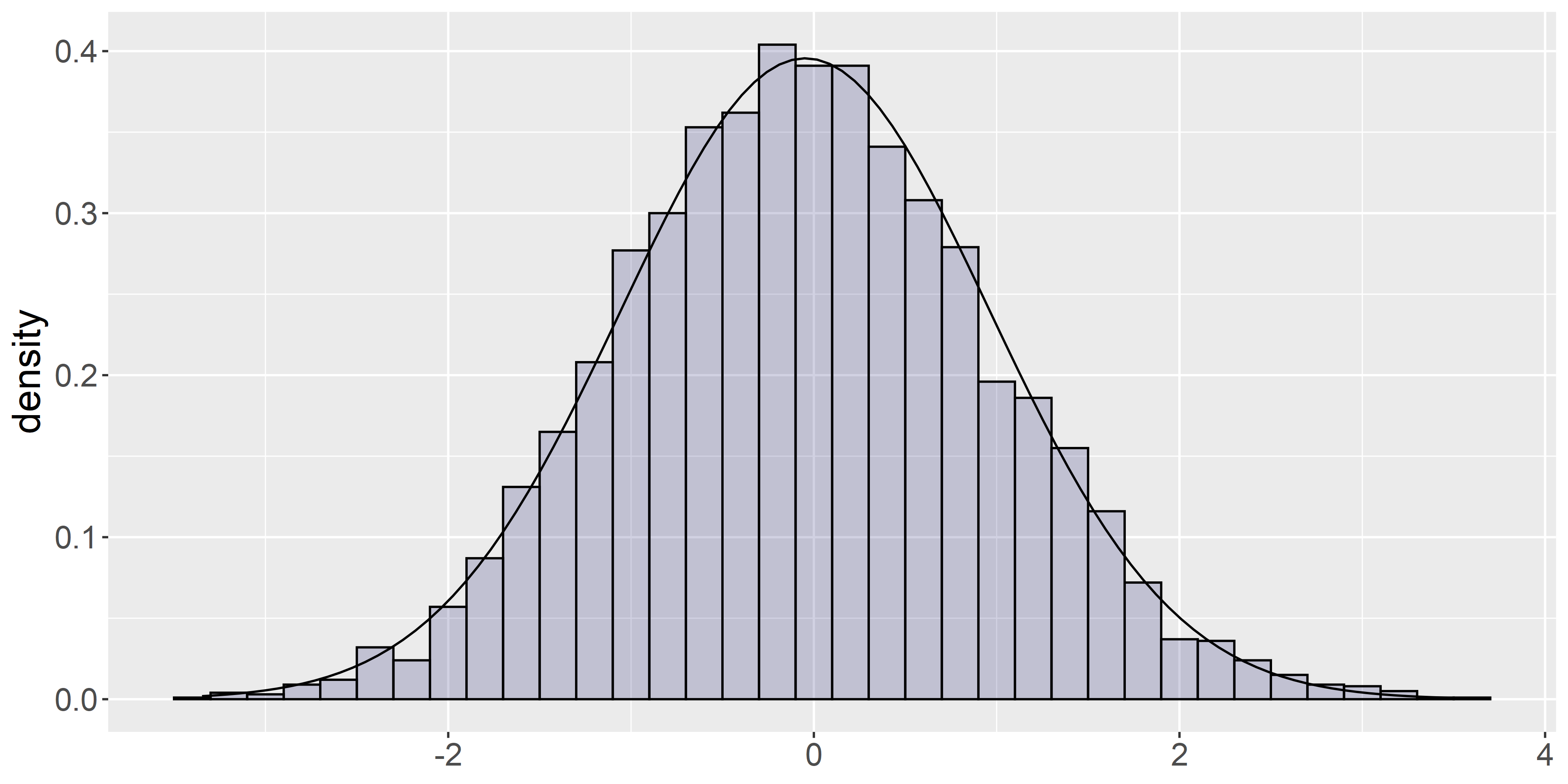}
    \caption{$h_n= n^{-1/2.5}$}
    \label{3a}
  \end{subfigure}
  \begin{subfigure}{0.32\textwidth}
    \centering
    \includegraphics[height=3cm,width=\linewidth]{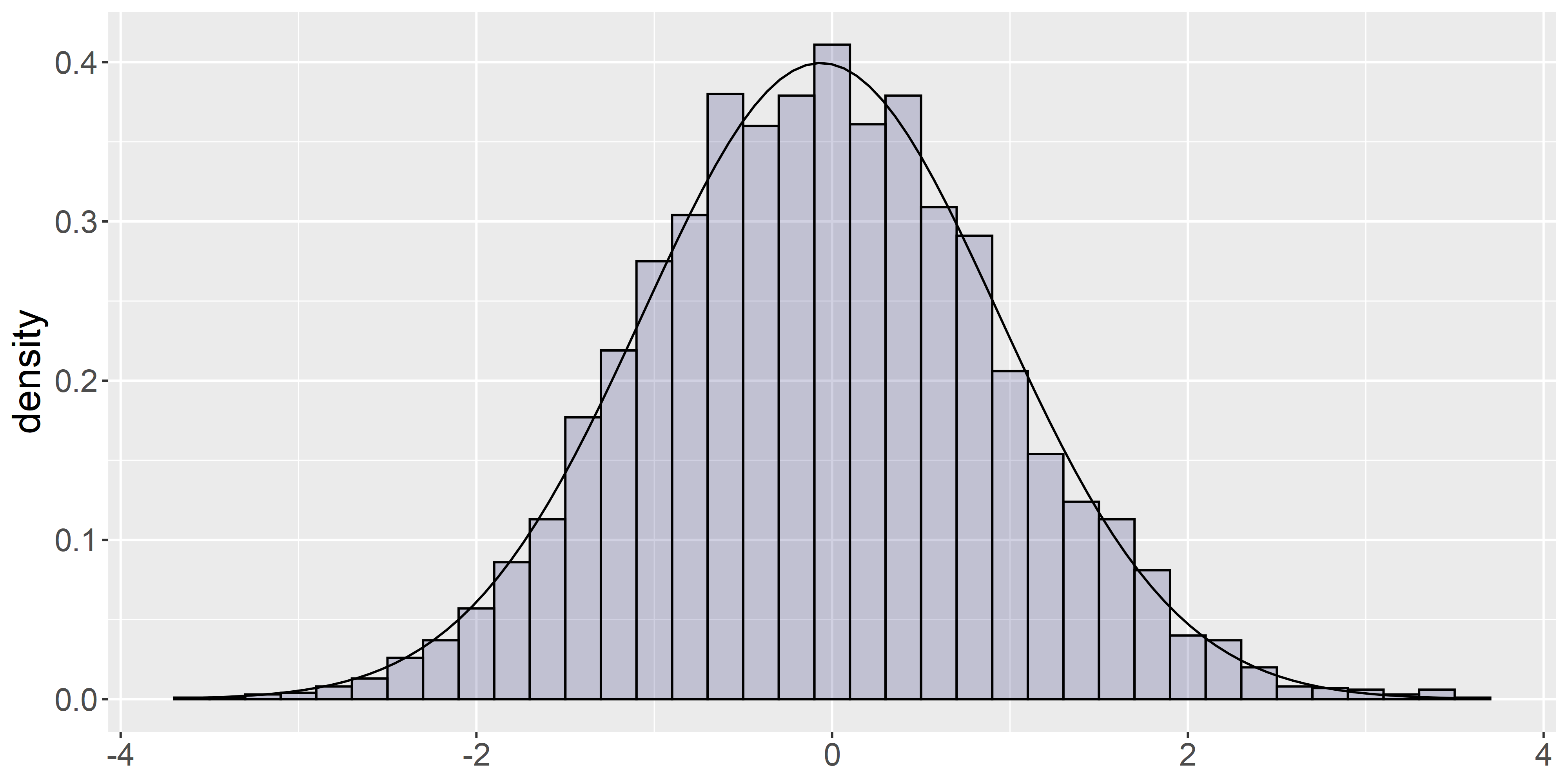}
    \caption{$h_n= n^{-1/3}$}
    \label{3b}
  \end{subfigure}
  \begin{subfigure}{0.32\textwidth}
    \centering
    \includegraphics[height=3cm,width=\linewidth]{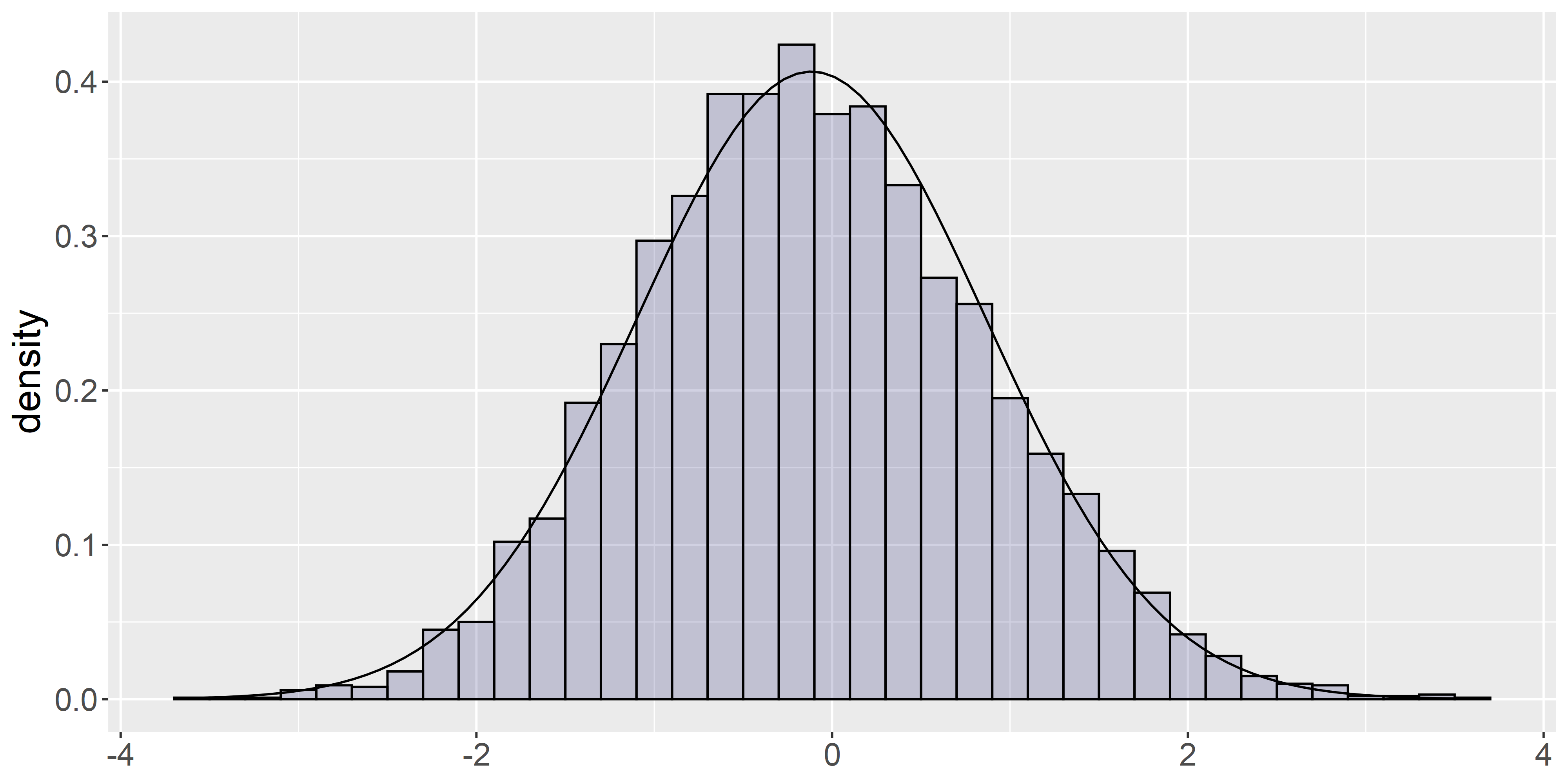}
   \caption{$h_n= n^{-1/4}$}
    \label{3c}
  \end{subfigure}
  \caption{Histograms of $\hat{\tau}_n$  when $F_X = 0.5 F_U + 0.5 F_S$ and $N( \E(\hat{\tau}_n), \text{Var}(\hat{\tau}_n)  )$ density  superimposed. Top to bottom: $n=1500, n=10000$.}
  \label{fig3}
\end{figure}
Figure \ref{fig3} illustrates the approximate normality of the test statistic under the null. As noted in the literature (e.g. \cite{mammen1}), even with absolutely continuous regressors, the finite sample distribution of the test statistic typically places more mass on the negative axis. In particular, a test computed with the usual one-tailed Gaussian critical value may be slightly undersized in finite samples. 
\begin{figure}[H]\centering
\subfloat{\label{4a}\includegraphics[width=0.9\linewidth,height=5cm]{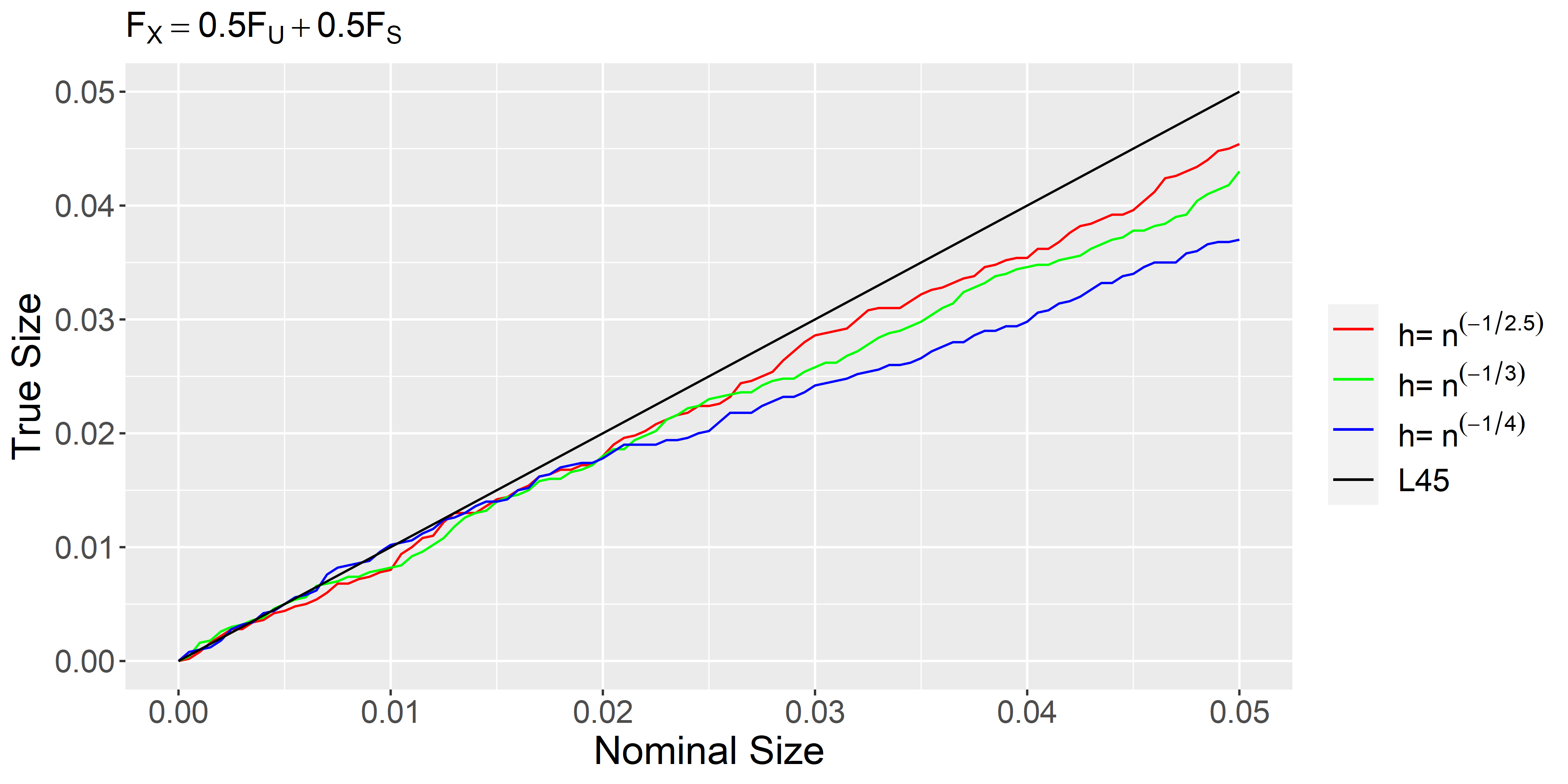}}
\caption{True/Nominal size at $n=1500$. The L45 line represents the points where equality occurs.}
\label{fig4}
\end{figure}

Next, we examine the sensitivity of the test statistic’s power as the level of singular contamination
increases. This is incorporated into the data generating process through the mixture coefficient $\alpha_2$ that
appears in (\ref{mixing}). We consider alternative models \begin{align*} 
& (i)  \; \; \; \; H_1 \; \; : \; \;   Y = 1 +  V_1 + V_2 + 0.4 \sin(2 \pi V_1) \sin(2 \pi V_2) + u \; \; ,   \\ & (ii)   \; \; \; H_1 \; \; : \; \; Y = 1 +   V_1 +  V_2 + 0.4 \cos(2 \pi V_1) \cos(2 \pi V_2) + u  \; \; .  
\end{align*}

\begin{table}[htp]

\centering
\caption{Power of the kernel test under alternative DGP $(i)$, $n=1500$}\label{tab1}

\begin{tabular}{@{}lcccccc@{}}
\hline 
& \multicolumn{3}{c}{\textbf{Nominal Size 1\%}} &\multicolumn{3}{c}{\textbf{Nominal Size 5\%}} \\
    \cmidrule(lr){2-4}\cmidrule(lr){5-7}
$\alpha_2$ & $  h_n = n^{-1/4}  $ & $h_n = n^{-1/3} $  &
$h_n = n^{-1/2.5} $ & $  h_n = n^{-1/4}  $ & $h_n = n^{-1/3} $  &
$h_n = n^{-1/2.5} $ \\
\hline
{$0.2$} & 0.504 & 0.230  & 0.103      & 0.714  & 0.448 & 0.259 \\
   {$0.5$}       & 0.728  & 0.431   & 0.208     & 0.876    & 0.657 & 0.422  \\
  {$0.8$}        & 0.905  & 0.673    & 0.388    & 0.963    & 0.847 & 0.619   \\[1pt]  
\hline
\end{tabular}

\bigskip

\caption{Power of the kernel test under alternative DGP $(ii)$, $n=1500$}\label{tab2}

\begin{tabular}{@{}lcccccc@{}}
\hline 
& \multicolumn{3}{c}{\textbf{Nominal Size 1\%}} &\multicolumn{3}{c}{\textbf{Nominal Size 5\%}} \\
    \cmidrule(lr){2-4}\cmidrule(lr){5-7}
$\alpha_2$ & $  h_n = n^{-1/4}  $ & $h_n = n^{-1/3} $  &
$h_n = n^{-1/2.5} $ & $  h_n = n^{-1/4}  $ & $h_n = n^{-1/3} $  &
$h_n = n^{-1/2.5} $ \\
\hline
{$0.2$} & 0.474& 0.222  & 0.104      & 0.687  & 0.441 & 0.265 \\
   {$0.5$}       & 0.678  & 0.411   & 0.204   & 0.839    & 0.643 & 0.423 \\
  {$0.8$}        & 0.859  & 0.658    & 0.390	    & 0.945    & 0.841 & 0.623   \\[1pt]  
\hline
\end{tabular}

\end{table}

\newpage As Tables \ref{tab1} and \ref{tab2} illustrate, the power exhibits a relevant dependence on the bandwidth. This is true even in the fully absolutely continuous case (see e.g. Table 2-4 in \cite{Zhang}).  Denote by $\delta(.)$, the trigonometric functions appearing in the alternative models $(i)$ and $(ii)$ above. From expanding $\hat{I}_n$ (as in the proof of Theorem \ref{pow1}), the term that provides the positive bias under the alternative hypothesis is given by \begin{align*}
    B_n =  \frac{1}{n(n-1)} \sum_{i=1}^n \sum_{j \neq i} K \bigg(  \frac{X_i - X_j}{h_n}    \bigg) \delta(X_i) \delta(X_j).
\end{align*}
Intuitively, as $h_n \rightarrow 0$, the terms in this sum are nonzero only when $\| X_i - X_j  \|_{\infty} $ is small and by uniform continuity this implies $\delta(X_i) \delta(X_j) \approx \delta^2(X_i)$.

At all bandwidth levels, the test exhibits higher power as the weight on the singular component increases. The expected small ball probability increases as the weight on the singular component increases. The interpretation of this in finite samples is that, given any observation $X_i$, there is a higher frequency of $X_j$ for which $\|  X_i - X_j  \|_{\infty} < h_n $ (and hence $K(h_n^{-1} [ X_i - X_j ]   ) \neq  0 $) as the weight on the singular component increases. A larger bandwidth has a similar effect on the small ball probability, although it also influences the statistic through other factors, such as its interaction with the uniform continuity of $\delta(.)$.

Finally, we note that the situation in small to moderate samples also depends on the choice of $\delta(.)$ used to construct the alternative. The trigonometric functions have comparable magnitude everywhere, thereby allowing us to focus more closely on the effects of singular contamination. By contrast, using a drift function $\delta(.)$ that is large in magnitude away from the support of the singular component could result in a power loss as contamination increases (although, by Remark \ref{singcom}, this effect is expected to vanish in large samples, provided that the support of $\delta(.)$ intersects the support of the singular component).
\section{Discussion}
\label{sec5}
This paper develops the limit theory for a class of kernel-weighted statistics when the underlying distribution of the conditioning variables admits a non-trivial Lebesgue decomposition.  The limit theory for these statistics centers around the behavior of the expected small ball probability. Under the null, the usual kernel smoothed goodness-of-fit statistic converges weakly to a standard Gaussian for a large class of continuous measures. However, in contrast to the absolutely continuous case, the usual local power analysis of these statistics depends non-trivially on the direction of approach to the null model. We expect that our analysis has similar implications for the more complicated setups that make use of kernel smoothed statistics.

The results could be extended in future work in a number
of directions. In this paper, we focus primarily on the goodness-of-fit statistic proposed in \cite{Zhang}. The results could be extended to other applications (e.g. \cite{phillips,gao,ShaikhVytlacil,lin,zheng1998consistent}) that make use of an identical form of the statistic. As discussed in \cite{gonz2013}, this statistic is motivated by  the moment condition $ \E \big( u \E(u|X) f_X(X)            \big) = 0 $. By contrast, the kernel-based goodness-of-fit statistics in \cite{mammen1} and \cite{dette1999consistent} are motivated by the moment conditions $\E \big(   \E^2[ u |X]     \big) = 0$ and $ \E \big[   u^2  - \big(  u - \E[u|X]  \big)^2          \big] = 0$, respectively. In all cases, the test statistic is asymptotically equivalent to a quadratic form, and so an appropriately debiased version of the statistic has similar asymptotics to that of a degenerate U-statistic. We expect that our analysis could be extended to these cases. A more ambitious avenue would be to expand the investigation to accommodate the more complicated setups beyond goodness-of-fit testing that make use of kernel smoothed statistics.

\section*{Acknowledgments}
The authors are grateful to Donald Andrews, Xiaohong Chen, Yuichi Kitamura, Renaud Raqu\'epas, Michael R Sullivan and Edward Vytlacil for their suggestions and constructive comments.

\begin{supplement}
\stitle{Supplement to ``Kernel-weighted specification testing under general distributions''.}
\sdescription{This supplemental
file contains  additional proofs and technical results  omitted in the main text.}
\end{supplement}

%%%%%%%%%%%%%%%%%%%%%%%%%%%%%%%%%%%%%%%%%%%%%%%%%%%%%%%%%%%%%
%%                  The Bibliography                       %%
%%                                                         %%
%%  imsart-number.bst  will be used to                     %%
%%  create a .BBL file for submission.                     %%
%%                                                         %%
%%  Note that the displayed Bibliography will not          %%
%%  necessarily be rendered by Latex exactly as specified  %%
%%  in the online Instructions for Authors.                %%
%%                                                         %%
%%  MR numbers will be added by VTeX.                      %%
%%                                                         %%
%%  Use \cite{...} to cite references in text.             %%
%%                                                         %%
%%%%%%%%%%%%%%%%%%%%%%%%%%%%%%%%%%%%%%%%%%%%%%%%%%%%%%%%%%%%%

%% if your bibliography is in bibtex format, uncomment commands:

\nocite{fanLi}
\nocite{gao2008}
\nocite{meilan2020goodness}
\nocite{santana}
\nocite{sen2014}
\nocite{shah2018}
\nocite{verzelen2010}
\nocite{dette1999consistent}
\nocite{Bierens2}

\bibliographystyle{imsart_number} % Style BST file
\bibliography{main.bib}       % Bibliography file (usually '*.bib')

\begin{thebibliography}{41}
% BibTex style file: imsart-number.bst, 2017-11-03
% Default style options (sort=1,type=number).
% Used options (sort=1,type=number).

\bibitem{Bierens2}
\begin{barticle}[author]
\bauthor{\bsnm{Bierens},~\bfnm{Herman~J}\binits{H.~J.}} \AND
  \bauthor{\bsnm{Ploberger},~\bfnm{Werner}\binits{W.}}
(\byear{1997}).
\btitle{Asymptotic theory of integrated conditional moment tests}.
\bjournal{Econometrica}
\bvolume{65}
\bpages{1129--1151}.
\end{barticle}
\endbibitem

\bibitem{blund}
\begin{barticle}[author]
\bauthor{\bsnm{Blundell},~\bfnm{Richard}\binits{R.}},
  \bauthor{\bsnm{Chen},~\bfnm{Xiaohong}\binits{X.}} \AND
  \bauthor{\bsnm{Kristensen},~\bfnm{Dennis}\binits{D.}}
(\byear{2007}).
\btitle{Semi-nonparametric IV estimation of shape-invariant Engel curves}.
\bjournal{Econometrica}
\bvolume{75}
\bpages{1613--1669}.
\end{barticle}
\endbibitem

\bibitem{burrough1981fractal}
\begin{barticle}[author]
\bauthor{\bsnm{Burrough},~\bfnm{Peter~A}\binits{P.~A.}}
(\byear{1981}).
\btitle{Fractal dimensions of landscapes and other environmental data}.
\bjournal{Nature}
\bvolume{294}
\bpages{240--242}.
\end{barticle}
\endbibitem

\bibitem{chen2009goodness}
\begin{barticle}[author]
\bauthor{\bsnm{Chen},~\bfnm{Song~Xi}\binits{S.~X.}} \AND
  \bauthor{\bsnm{Van~Keilegom},~\bfnm{Ingrid}\binits{I.}}
(\byear{2009}).
\btitle{A goodness-of-fit test for parametric and semi-parametric models in
  multiresponse regression}.
\bjournal{Bernoulli}
\bvolume{15}
\bpages{955--976}.
\end{barticle}
\endbibitem

\bibitem{davies1999fractal}
\begin{barticle}[author]
\bauthor{\bsnm{Davies},~\bfnm{Steve}\binits{S.}} \AND
  \bauthor{\bsnm{Hall},~\bfnm{Peter}\binits{P.}}
(\byear{1999}).
\btitle{Fractal analysis of surface roughness by using spatial data}.
\bjournal{Journal of the Royal Statistical Society: Series B (Statistical
  Methodology)}
\bvolume{61}
\bpages{3--37}.
\end{barticle}
\endbibitem

\bibitem{de2006lecture}
\begin{barticle}[author]
\bauthor{\bsnm{De~Lellis},~\bfnm{Camillo}\binits{C.}}
(\byear{2006}).
\btitle{Lecture notes on rectifiable sets, densities, and tangent measures}.
\bjournal{Preprint}
\bvolume{23}.
\end{barticle}
\endbibitem

\bibitem{dette1999consistent}
\begin{barticle}[author]
\bauthor{\bsnm{Dette},~\bfnm{Holger}\binits{H.}}
(\byear{1999}).
\btitle{A consistent test for the functional form of a regression based on a
  difference of variance estimators}.
\bjournal{The Annals of Statistics}
\bvolume{27}
\bpages{1012--1040}.
\end{barticle}
\endbibitem

\bibitem{fanLi}
\begin{barticle}[author]
\bauthor{\bsnm{Fan},~\bfnm{Yanqin}\binits{Y.}} \AND
  \bauthor{\bsnm{Li},~\bfnm{Qi}\binits{Q.}}
(\byear{2000}).
\btitle{Consistent model specification tests: Kernel-based tests versus
  Bierens' ICM tests}.
\bjournal{Econometric Theory}
\bvolume{16}
\bpages{1016--1041}.
\end{barticle}
\endbibitem

\bibitem{gao2008}
\begin{barticle}[author]
\bauthor{\bsnm{Gao},~\bfnm{Jiti}\binits{J.}} \AND
  \bauthor{\bsnm{Gijbels},~\bfnm{Irene}\binits{I.}}
(\byear{2008}).
\btitle{Bandwidth selection in nonparametric kernel testing}.
\bjournal{Journal of the American Statistical Association}
\bvolume{103}
\bpages{1584--1594}.
\end{barticle}
\endbibitem

\bibitem{gao}
\begin{barticle}[author]
\bauthor{\bsnm{Gao},~\bfnm{Jiti}\binits{J.}},
  \bauthor{\bsnm{King},~\bfnm{Maxwell}\binits{M.}},
  \bauthor{\bsnm{Lu},~\bfnm{Zudi}\binits{Z.}} \AND
  \bauthor{\bsnm{Tj{\o}stheim},~\bfnm{Dag}\binits{D.}}
(\byear{2009}).
\btitle{Specification testing in nonlinear and nonstationary time series
  autoregression}.
\bjournal{The Annals of Statistics}
\bvolume{37}
\bpages{3893--3928}.
\end{barticle}
\endbibitem

\bibitem{gonz2013}
\begin{barticle}[author]
\bauthor{\bsnm{Gonz{\'a}lez-Manteiga},~\bfnm{Wenceslao}\binits{W.}} \AND
  \bauthor{\bsnm{Crujeiras},~\bfnm{Rosa~M}\binits{R.~M.}}
(\byear{2013}).
\btitle{An updated review of goodness-of-fit tests for regression models}.
\bjournal{Test}
\bvolume{22}
\bpages{361--411}.
\end{barticle}
\endbibitem

\bibitem{Hall84}
\begin{barticle}[author]
\bauthor{\bsnm{Hall},~\bfnm{Peter}\binits{P.}}
(\byear{1984}).
\btitle{Central limit theorem for integrated square error of multivariate
  nonparametric density estimators}.
\bjournal{Journal of multivariate analysis}
\bvolume{14}
\bpages{1--16}.
\end{barticle}
\endbibitem

\bibitem{hallhyde}
\begin{bbook}[author]
\bauthor{\bsnm{Hall},~\bfnm{Peter}\binits{P.}} \AND
  \bauthor{\bsnm{Heyde},~\bfnm{Christopher~C}\binits{C.~C.}}
(\byear{2014}).
\btitle{Martingale limit theory and its application}.
\bpublisher{Academic press}.
\end{bbook}
\endbibitem

\bibitem{mammen1}
\begin{barticle}[author]
\bauthor{\bsnm{Hardle},~\bfnm{Wolfgang}\binits{W.}} \AND
  \bauthor{\bsnm{Mammen},~\bfnm{Enno}\binits{E.}}
(\byear{1993}).
\btitle{Comparing nonparametric versus parametric regression fits}.
\bjournal{The Annals of Statistics}
\bpages{1926--1947}.
\end{barticle}
\endbibitem

\bibitem{wolfsmooth}
\begin{barticle}[author]
\bauthor{\bsnm{H{\"a}rdle},~\bfnm{Wolfgang}\binits{W.}},
  \bauthor{\bsnm{Marron},~\bfnm{James~S}\binits{J.~S.}} \AND
  \bauthor{\bsnm{Wand},~\bfnm{Matten~P}\binits{M.~P.}}
(\byear{1990}).
\btitle{Bandwidth choice for density derivatives}.
\bjournal{Journal of the Royal Statistical Society: Series B (Methodological)}
\bvolume{52}
\bpages{223--232}.
\end{barticle}
\endbibitem

\bibitem{horowitz2001}
\begin{barticle}[author]
\bauthor{\bsnm{Horowitz},~\bfnm{Joel~L}\binits{J.~L.}} \AND
  \bauthor{\bsnm{Spokoiny},~\bfnm{Vladimir~G}\binits{V.~G.}}
(\byear{2001}).
\btitle{An adaptive, rate-optimal test of a parametric mean-regression model
  against a nonparametric alternative}.
\bjournal{Econometrica}
\bvolume{69}
\bpages{599--631}.
\end{barticle}
\endbibitem

\bibitem{hsiao2007}
\begin{barticle}[author]
\bauthor{\bsnm{Hsiao},~\bfnm{Cheng}\binits{C.}},
  \bauthor{\bsnm{Li},~\bfnm{Qi}\binits{Q.}} \AND
  \bauthor{\bsnm{Racine},~\bfnm{Jeffrey~S}\binits{J.~S.}}
(\byear{2007}).
\btitle{A consistent model specification test with mixed discrete and
  continuous data}.
\bjournal{Journal of Econometrics}
\bvolume{140}
\bpages{802--826}.
\end{barticle}
\endbibitem

\bibitem{hutch}
\begin{barticle}[author]
\bauthor{\bsnm{Hutchinson},~\bfnm{John~E}\binits{J.~E.}}
(\byear{1981}).
\btitle{Fractals and self similarity}.
\bjournal{Indiana University Mathematics Journal}
\bvolume{30}
\bpages{713--747}.
\end{barticle}
\endbibitem

\bibitem{sidvsup}
\begin{barticle}[author]
\bauthor{\bsnm{Kankanala},~\bfnm{Sid}\binits{S.}} \AND
  \bauthor{\bsnm{Zinde-Walsh},~\bfnm{Victoria}\binits{V.}}
\btitle{Supplement to ``Kernel-weighted specification testing under general
  distributions''}.
\end{barticle}
\endbibitem

\bibitem{koltch}
\begin{barticle}[author]
\bauthor{\bsnm{Koltchinskii},~\bfnm{Vladimir}\binits{V.}} \AND
  \bauthor{\bsnm{Lounici},~\bfnm{Karim}\binits{K.}}
(\byear{2017}).
\btitle{Normal approximation and concentration of spectral projectors of sample
  covariance}.
\bjournal{The Annals of Statistics}
\bvolume{45}
\bpages{121--157}.
\end{barticle}
\endbibitem

\bibitem{ergodic}
\begin{bbook}[author]
\bauthor{\bsnm{Kunze},~\bfnm{Herb}\binits{H.}},
  \bauthor{\bsnm{La~Torre},~\bfnm{Davide}\binits{D.}},
  \bauthor{\bsnm{Mendivil},~\bfnm{Franklin}\binits{F.}} \AND
  \bauthor{\bsnm{Vrscay},~\bfnm{Edward~R}\binits{E.~R.}}
(\byear{2011}).
\btitle{Fractal-based methods in analysis}.
\bpublisher{Springer Science \& Business Media}.
\end{bbook}
\endbibitem

\bibitem{lin}
\begin{barticle}[author]
\bauthor{\bsnm{Lin},~\bfnm{Zhongjian}\binits{Z.}},
  \bauthor{\bsnm{Li},~\bfnm{Qi}\binits{Q.}} \AND
  \bauthor{\bsnm{Sun},~\bfnm{Yiguo}\binits{Y.}}
(\byear{2014}).
\btitle{A consistent nonparametric test of parametric regression functional
  form in fixed effects panel data models}.
\bjournal{Journal of Econometrics}
\bvolume{178}
\bpages{167--179}.
\end{barticle}
\endbibitem

\bibitem{dbl1}
\begin{barticle}[author]
\bauthor{\bsnm{Luukkainen},~\bfnm{Jouni}\binits{J.}} \AND
  \bauthor{\bsnm{Saksman},~\bfnm{Eero}\binits{E.}}
(\byear{1998}).
\btitle{Every complete doubling metric space carries a doubling measure}.
\bjournal{Proceedings of the American Mathematical Society}
\bvolume{126}
\bpages{531--534}.
\end{barticle}
\endbibitem

\bibitem{ahl1}
\begin{bbook}[author]
\bauthor{\bsnm{Mackay},~\bfnm{John~M}\binits{J.~M.}} \AND
  \bauthor{\bsnm{Tyson},~\bfnm{Jeremy~T}\binits{J.~T.}}
(\byear{2010}).
\btitle{Conformal dimension: theory and application}
\bvolume{54}.
\bpublisher{American Mathematical Soc.}
\end{bbook}
\endbibitem

\bibitem{mammen2}
\begin{barticle}[author]
\bauthor{\bsnm{Mammen},~\bfnm{Enno}\binits{E.}},
  \bauthor{\bsnm{Van~Keilegom},~\bfnm{Ingrid}\binits{I.}} \AND
  \bauthor{\bsnm{Yu},~\bfnm{Kyusang}\binits{K.}}
(\byear{2019}).
\btitle{Expansion for moments of regression quantiles with applications to
  nonparametric testing}.
\bjournal{Bernoulli}
\bvolume{25}
\bpages{793--827}.
\end{barticle}
\endbibitem

\bibitem{Marron}
\begin{barticle}[author]
\bauthor{\bsnm{Marron},~\bfnm{J~Steve}\binits{J.~S.}} \AND
  \bauthor{\bsnm{Wand},~\bfnm{Matt~P}\binits{M.~P.}}
(\byear{1992}).
\btitle{Exact mean integrated squared error}.
\bjournal{The Annals of Statistics}
\bvolume{20}
\bpages{712--736}.
\end{barticle}
\endbibitem

\bibitem{meilan2020goodness}
\begin{barticle}[author]
\bauthor{\bsnm{Meil{\'a}n-Vila},~\bfnm{Andrea}\binits{A.}},
  \bauthor{\bsnm{Opsomer},~\bfnm{Jean~D}\binits{J.~D.}},
  \bauthor{\bsnm{Francisco-Fern{\'a}ndez},~\bfnm{Mario}\binits{M.}} \AND
  \bauthor{\bsnm{Crujeiras},~\bfnm{Rosa~M}\binits{R.~M.}}
(\byear{2020}).
\btitle{A goodness-of-fit test for regression models with spatially correlated
  errors}.
\bjournal{TEST}
\bvolume{29}
\bpages{728--749}.
\end{barticle}
\endbibitem

\bibitem{moore1950density}
\begin{barticle}[author]
\bauthor{\bsnm{Moore},~\bfnm{Edward~F}\binits{E.~F.}}
(\byear{1950}).
\btitle{Density Ratios and (/phi, 1) Rectifiability in n-Space}.
\bjournal{Transactions of the American Mathematical Society}
\bvolume{69}
\bpages{324--334}.
\end{barticle}
\endbibitem

\bibitem{muller2019goodness}
\begin{barticle}[author]
\bauthor{\bsnm{M{\"u}ller},~\bfnm{Ursula~U}\binits{U.~U.}} \AND
  \bauthor{\bsnm{Van~Keilegom},~\bfnm{Ingrid}\binits{I.}}
(\byear{2019}).
\btitle{Goodness-of-fit tests for the cure rate in a mixture cure model}.
\bjournal{Biometrika}
\bvolume{106}
\bpages{211--227}.
\end{barticle}
\endbibitem

\bibitem{preiss}
\begin{barticle}[author]
\bauthor{\bsnm{Preiss},~\bfnm{David}\binits{D.}}
(\byear{1987}).
\btitle{Geometry of measures in Rn: distribution, rectifiability, and
  densities}.
\bjournal{Annals of Mathematics}
\bpages{537--643}.
\end{barticle}
\endbibitem

\bibitem{rudin}
\begin{bbook}[author]
\bauthor{\bsnm{Rudin},~\bfnm{Walter}\binits{W.}}
(\byear{1986}).
\btitle{Real and Complex Analysis}.
\bpublisher{McGraw-Hill}.
\end{bbook}
\endbibitem

\bibitem{santana}
\begin{barticle}[author]
\bauthor{\bsnm{Sant’Anna},~\bfnm{Pedro~HC}\binits{P.~H.}} \AND
  \bauthor{\bsnm{Song},~\bfnm{Xiaojun}\binits{X.}}
(\byear{2019}).
\btitle{Specification tests for the propensity score}.
\bjournal{Journal of Econometrics}
\bvolume{210}
\bpages{379--404}.
\end{barticle}
\endbibitem

\bibitem{sen2014}
\begin{barticle}[author]
\bauthor{\bsnm{Sen},~\bfnm{Arnab}\binits{A.}} \AND
  \bauthor{\bsnm{Sen},~\bfnm{Bodhisattva}\binits{B.}}
(\byear{2014}).
\btitle{Testing independence and goodness-of-fit in linear models}.
\bjournal{Biometrika}
\bvolume{101}
\bpages{927--942}.
\end{barticle}
\endbibitem

\bibitem{shah2018}
\begin{barticle}[author]
\bauthor{\bsnm{Shah},~\bfnm{Rajen~D}\binits{R.~D.}} \AND
  \bauthor{\bsnm{B{\"u}hlmann},~\bfnm{Peter}\binits{P.}}
(\byear{2018}).
\btitle{Goodness-of-fit tests for high dimensional linear models}.
\bjournal{Journal of the Royal Statistical Society: Series B (Statistical
  Methodology)}
\bvolume{80}
\bpages{113--135}.
\end{barticle}
\endbibitem

\bibitem{ShaikhVytlacil}
\begin{barticle}[author]
\bauthor{\bsnm{Shaikh},~\bfnm{Azeem~M}\binits{A.~M.}},
  \bauthor{\bsnm{Simonsen},~\bfnm{Marianne}\binits{M.}},
  \bauthor{\bsnm{Vytlacil},~\bfnm{Edward~J}\binits{E.~J.}} \AND
  \bauthor{\bsnm{Yildiz},~\bfnm{Nese}\binits{N.}}
(\byear{2009}).
\btitle{A specification test for the propensity score using its distribution
  conditional on participation}.
\bjournal{Journal of Econometrics}
\bvolume{151}
\bpages{33--46}.
\end{barticle}
\endbibitem

\bibitem{vdv}
\begin{bbook}[author]
\bauthor{\bparticle{Van~der} \bsnm{Vaart},~\bfnm{Aad~W}\binits{A.~W.}}
(\byear{2000}).
\btitle{Asymptotic statistics}.
\bpublisher{Cambridge university press}.
\end{bbook}
\endbibitem

\bibitem{verzelen2010}
\begin{barticle}[author]
\bauthor{\bsnm{Verzelen},~\bfnm{Nicolas}\binits{N.}} \AND
  \bauthor{\bsnm{Villers},~\bfnm{Fanny}\binits{F.}}
(\byear{2010}).
\btitle{Goodness-of-fit tests for high-dimensional Gaussian linear models}.
\bjournal{The Annals of Statistics}
\bvolume{38}
\bpages{704--752}.
\end{barticle}
\endbibitem

\bibitem{ussr}
\begin{barticle}[author]
\bauthor{\bsnm{Volberg},~\bfnm{Alexander~L'vovich}\binits{A.~L.}} \AND
  \bauthor{\bsnm{Konyagin},~\bfnm{Sergei~Vladimirovich}\binits{S.~V.}}
(\byear{1987}).
\btitle{On measures with the doubling condition}.
\bjournal{Mathematics of the USSR}
\bvolume{51}
\bpages{666--675}.
\end{barticle}
\endbibitem

\bibitem{phillips}
\begin{barticle}[author]
\bauthor{\bsnm{Wang},~\bfnm{Qiying}\binits{Q.}} \AND
  \bauthor{\bsnm{Phillips},~\bfnm{Peter~CB}\binits{P.~C.}}
(\byear{2012}).
\btitle{A specification test for nonlinear nonstationary models}.
\bjournal{The Annals of Statistics}
\bvolume{40}
\bpages{727--758}.
\end{barticle}
\endbibitem

\bibitem{Zhang}
\begin{barticle}[author]
\bauthor{\bsnm{Zheng},~\bfnm{John~Xu}\binits{J.~X.}}
(\byear{1996}).
\btitle{A consistent test of functional form via nonparametric estimation
  techniques}.
\bjournal{Journal of Econometrics}
\bvolume{75}
\bpages{263--289}.
\end{barticle}
\endbibitem

\bibitem{zheng1998consistent}
\begin{barticle}[author]
\bauthor{\bsnm{Zheng},~\bfnm{John~Xu}\binits{J.~X.}}
(\byear{1998}).
\btitle{A consistent nonparametric test of parametric regression models under
  conditional quantile restrictions}.
\bjournal{Econometric theory}
\bvolume{14}
\bpages{123--138}.
\end{barticle}
\endbibitem

\end{thebibliography}


\begin{thebibliography}{4}
%%






\bibitem{rudin2}
\textsc{Walter, Rudin.}
Real and Complex Analysis.
 \textit{McGraw Hill}, 1987.
 
 
 
 \bibitem{calderon2}
\textsc{Journ{\'e}, J-L.}
The Hardy-Littlewood Maximal Operator. In Calder{\'o}n-Zygmund Operators, Pseudo-Differential Operators and the Cauchy integral of Calder{\'o}n. pp. 7-15. \textit{Springer}, 1983.

 





\end{thebibliography}

%% or include bibliography directly:
% \begin{thebibliography}{}
% \bibitem{b1}
% \end{thebibliography}

\pagebreak

\begin{center}
\textbf{ \large Supplement to ``Kernel-weighted specification testing under general distributions''}
\end{center}

\setcounter{equation}{0}
\setcounter{figure}{0}
\setcounter{table}{0}
\setcounter{page}{1}
\makeatletter
\renewcommand{\theequation}{S\arabic{equation}}
\renewcommand{\thefigure}{S\arabic{figure}}
\renewcommand{\bibnumfmt}[1]{[S#1]}
\renewcommand{\citenumfont}[1]{S#1}

This supplemental file contains  additional proofs and technical details omitted in
the main text. For  convenience, we first list some of the notation that
was introduced in the main text and is frequently encountered in the proofs. Section \ref{sec1sup} details a number of technical auxiliary lemmas that are used
for the proofs of the main results. Section \ref{sec2sup} contains the proofs of the
statements appearing in the main text.

Given $x,s,t \in \R_{}^q$, we define the cube centered at $x$ with directions $(s,t)$ to be
\[
B\left(  x-s,x+t\right)  =\left\{  y \in \R^q  :x_{i}-s_{i}\leq y_{i}\leq x_{i}+t_{i} \; \: \forall  \; i=1,...,q\right\}  .
\]
Given any $f(X) \in L^1(X)$, define
\begin{align}
  \Omega_{f}(x-s,x+t)=\int\limits_{B\left(  x-s,x+t\right)  } f
(y)d F_{X}(y) . \label{omega} \end{align}
If $\mu_{2}(X) $ and $ \mu_{4}(X)$ are defined as in Assumption $2(ii)$, we denote the special case of     $f(X) = \mu_{2}(X), \mu_{4}(X)$   by $\Omega_{2}(.)$ and $\Omega_{4}(.)$, respectively. Additionally, we denote the case $f(X)=1$ by

\begin{align}
 F_{X}(x-s,x+t)=\int\limits_{B\left(  x-s,x+t\right)  } d F_{X}\;.     \label{Fx}
\end{align}

Given a sufficiently differentiable function $g$ (in the sense of Definition $3.1$), we define $$ \partial_{x} g(x) = \frac{\partial^q g(x)}{ \partial x_{1} \dots \partial x_{q}} .$$

The sequence of local alternatives used in the local power analysis is denoted by \begin{equation}
H_{1}:Y_{}=g_{}\left(  X_{},\beta_{0}\right)  +\gamma_{n}\delta\left(
X_{}\right)  +u_{}.  \label{reg}
\end{equation}
The support of $\delta(.)$ is denoted by $\mathcal{S}_{\delta} = \{ x \in \R^q : \delta(x) \neq 0  \}$. When $F_{X}$ can be expressed as a mixture that includes component $F_{t}$, the notation $\underset{X  \stackrel{}{\sim} F_{t} }{\mathbb{P}} $ and $\underset{X  \stackrel{}{\sim} F_{t} }{\mathbb{E}}$ is used to indicate that the operators are defined with respect to $X\sim F_{t}$.

\section{Auxiliary Lemmas}
\label{sec1sup}

\begin{lemma}
\label{app1}

Suppose $\phi \in C^1(\R)$ with $\phi(0) = 0$. Suppose $g  \in C(\R)$  has support contained in $ \Delta = [-1,1]$ and is twice continuously differentiable on $\Delta$.  Then the function \begin{align}
\label{appeq1} T(u) =  \int \limits_{[-1,1]}  \phi \big(  \partial_{v} [ g(v) g(u+v)         ]    \big)  \mathbbm{1} \big \{   -1 \leq u+v \leq 1     \big \}   dv 
\end{align}
is continuous on $\R$, has support contained in $ \Sigma = [-2,2]  $ and is continuously differentiable on $\Sigma$.

\end{lemma}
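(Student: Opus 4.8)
The plan is to establish the three claimed properties of $T(u)$ — continuity on $\R$, compact support in $\Sigma = [-2,2]$, and $C^1$ smoothness on $\Sigma$ — one at a time, treating the integral in \eqref{appeq1} as a parameter-dependent integral over the fixed compact domain $v \in [-1,1]$. The key structural observation is that the integrand, call it $\psi(u,v) = \phi\big(\partial_v[g(v)g(u+v)]\big)\,\mathbbm{1}\{-1 \le u+v \le 1\}$, is built from $g$, which is twice continuously differentiable on $[-1,1]$ with derivatives admitting continuous extensions to the closed interval (and vanishing outside), so that $\partial_v[g(v)g(u+v)] = g'(v)g(u+v) + g(v)g'(u+v)$ is continuous in $(u,v)$ on all of $\R^2$; composing with $\phi \in C^1$ keeps it continuous. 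The indicator is the only source of discontinuity, and it is controlled because $g$ and $g'$ vanish at $v = \pm 1$, which will make the boundary contributions disappear when we differentiate.

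\textbf{Support.} First I would dispose of the support claim: if $|u| > 2$, then for every $v \in [-1,1]$ at least one of $v$ or $u+v$ lies outside $[-1,1]$. If $v \notin [-1,1]$ the integrand is zero since $v \in [-1,1]$ is the domain; more to the point, for $v \in [-1,1]$ and $|u| > 2$ we have $|u+v| \ge |u| - 1 > 1$, so $g(u+v) = 0$ and $g'(u+v) = 0$ in a neighborhood of that point, hence $\partial_v[g(v)g(u+v)] = 0$ there, and $\phi(0) = 0$ gives $\psi(u,v) = 0$. Thus $T(u) = 0$ for $|u| > 2$, and by continuity (established next) $T$ vanishes on $\{|u| \ge 2\}$, i.e. $\operatorname{supp} T \subseteq \Sigma$.

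\textbf{Continuity.} For continuity on $\R$, fix $u_0$ and a sequence $u_n \to u_0$. The integrand $\psi(u_n, v) \to \psi(u_0,v)$ for every $v$ with $u_0 + v \ne \pm 1$ (a Lebesgue-null set of $v$), since at all other points the indicator is eventually locally constant and $\phi(\partial_v[\cdots])$ is continuous in $u$; the integrand is uniformly bounded because $g, g'$ are bounded on $[-1,1]$ and $\phi$ is continuous hence bounded on the resulting compact range. Dominated convergence then yields $T(u_n) \to T(u_0)$.

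\textbf{$C^1$ on $\Sigma$.} This is the main obstacle, because naive differentiation under the integral sign is obstructed by the moving indicator $\mathbbm{1}\{-1 \le u+v \le 1\}$. The clean way around it is to change variables and rewrite the domain: for $u \in [-2,2]$ the effective region of integration in $v$ is $[-1,1] \cap [-1-u, 1-u]$, i.e. $v \in [\max(-1,-1-u),\ \min(1,1-u)]$. Write $a(u) = \max(-1,-1-u)$ and $b(u) = \min(1,1-u)$, which are Lipschitz (piecewise linear) functions of $u$, and
\[
T(u) = \int_{a(u)}^{b(u)} \phi\big(g'(v)g(u+v) + g(v)g'(u+v)\big)\, dv.
\]
By the Leibniz rule, $T'(u) = \phi\big(\Phi(u, b(u))\big)\,b'(u) - \phi\big(\Phi(u,a(u))\big)\,a'(u) + \int_{a(u)}^{b(u)} \partial_u\,\phi(\Phi(u,v))\,dv$, where $\Phi(u,v) = g'(v)g(u+v)+g(v)g'(u+v)$. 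The two boundary terms vanish identically: at $v = b(u)$ either $b(u) = 1$, where $g(1) = g'(1) = 0$ forces $\Phi(u,1) = 0$ and $\phi(0) = 0$, or $b(u) = 1-u$, where $u + v = 1$, again forcing $g(u+v) = g'(u+v) = 0$ and $\Phi = 0$; symmetrically at $v = a(u)$. Hence
\[
T'(u) = \int_{a(u)}^{b(u)} \phi'\big(\Phi(u,v)\big)\,\partial_u \Phi(u,v)\, dv, \qquad \partial_u\Phi(u,v) = g'(v)g'(u+v) + g(v)g''(u+v),
\]
and $\partial_u\Phi$ is continuous in $(u,v)$ because $g''$ admits a continuous extension to $[-1,1]$ and vanishes outside. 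To conclude $T \in C^1(\Sigma)$ it remains to check continuity of the right-hand side in $u$: the integrand $\phi'(\Phi(u,v))\partial_u\Phi(u,v)$ is bounded and jointly continuous off the null set $\{u+v = \pm1\}$, and $a(u), b(u)$ are continuous, so dominated convergence (extending the integrand by $0$ outside $[a(u),b(u)]$, using that it vanishes near the moving endpoints by the same boundary argument) gives continuity of $T'$. The interior differentiability justification — differentiating under the integral for $u$ in the interior where $a,b$ are locally affine — is the standard dominated-derivative argument using the uniform bound on $\partial_u\Phi$ and $\phi'$ over the relevant compact sets; I would carry this out carefully since it is where the hypotheses on $\phi$ and the second derivative of $g$ are actually consumed.
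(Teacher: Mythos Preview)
Your approach is essentially the paper's --- rewrite the indicator as variable limits $[a(u),b(u)]$ and apply the Leibniz rule --- but you make one concrete error in the $C^1$ step. You claim the boundary terms vanish because ``$g(1)=g'(1)=0$'' (and symmetrically at $-1$). The hypotheses give only $g \in C(\R)$ with support in $[-1,1]$ and $g$ twice continuously differentiable \emph{on} $[-1,1]$; this forces $g(\pm 1)=0$ by continuity, but says nothing about $g'(\pm 1)$. Indeed, for the intended application $g=k$ is a kernel such as the Epanechnikov, where $k'(1^-)=-3/2\neq 0$. Concretely, for $u\in(0,2)$ the upper boundary term is $-\phi\big(\Phi(u,1-u)\big)$ with $\Phi(u,1-u)=g'(1-u)g(1)+g(1-u)g'(1)=g(1-u)\,g'(1)$, which is generically nonzero. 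So your displayed formula for $T'(u)$ is incorrect.

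The fix is exactly what the paper does: keep the boundary terms. For $u\in(0,2)$ you get $T'(u)=\int_{-1}^{1-u}\partial_u f(u,v)\,dv - f(u,1-u)$, for $u\in(-2,0)$ you get $T'(u)=\int_{-1-u}^{1}\partial_u f(u,v)\,dv + f(u,-1-u)$, where $f(u,v)=\phi(\Phi(u,v))$. Each piece is continuous in $u$ because $f$ and $\partial_u f$ are uniformly continuous on the compact set $\{(u,v):|v|\le 1,\ |u+v|\le 1\}$. At $u=0$ the two one-sided formulas agree with $\int_{-1}^1\partial_u f(0,v)\,dv$ precisely because $f(0,1)=\phi(2g(1)g'(1))=\phi(0)=0$ and $f(0,-1)=0$; this is where $\phi(0)=0$ and $g(\pm1)=0$ are actually used, not to kill the boundary terms globally but to glue the pieces at the switch point. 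Continuity at $u=\pm 2$ then follows since both the integral and boundary contributions tend to $\phi(0)=0$.
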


\begin{proof}[Proof of Lemma \ref{app1}]

Since $g$ has support contained in $\Delta$ and $\phi(0) = 0$, it is straightforward to deduce that $ T(u) =   0   $ for $ u \notin \Sigma$ and  that $\lim_{u \uparrow 2} T(u) = \lim _{u \downarrow -2} T(u) = 0 $.  Let $f(u,v) =  \phi \big(  \partial_{v} [ g(v) g(u+v)         ]  \big) $ and denote by $f'(u,v)$, the derivative of $f(u,v)$ with respect to $u$. From the hypothesis on $g$, the functions $f(u,v)$ and $f'(u,v)$ are uniformly continuous on the compact set $ \mathcal{E} = \{ (u,v) \in \Sigma \times \Delta : -1 \leq u+v \leq 1   \}  $. In the interior of $\Sigma$, a straightforward application of the Leibniz rule yields

 \begin{align}
\label{Tderiv} T'(u) =  \begin{cases}  \int \limits_{[-1,1-u]} f'(u,v) dv - f(u,1-u) & u \in (0,2)  \; , \\  \int_{[-1,1]} f'(u,v) dv & u=0 \; ,       \\  \int \limits_{[-1+|u|,1]} f'(u,v) dv + f(u,-1+|u|)  &   u \in (-2,0).     \end{cases}
\end{align}
Since $f(0,1) = f(0,-1) = \phi(0) = 0 $, it follows that $T'(u)$ is continuous on $(-2,2)$ and it admits a continuous extension to $\Sigma=[-2,2]$ with $ T'(2) = T'(-2) = \phi(0) = 0.   $

\end{proof}

\begin{lemma}
\label{aux-new}
Fix any $M(X),G(X) \in L^1 (X)$ and suppose Assumptions (1,3) hold. Then for every $l \in \mathbb{N}$, there exists a universal constant $C > 0 $ such that \begin{align}
& \E \bigg[ K^{l} \bigg(   \frac{X_1 - X_2}{h}     \bigg)  M(X_2) G(X_1)          \bigg] \nonumber \\ & \leq C \min  \big \{ \E \big[  \left| G(X)   \right| \Omega_{\left|M\right|}(X- h \iota , X + h \iota)\big] \, , \, \E \big[  \left|M(X) \right| \Omega_{|G|}(X- h \iota , X + h \iota)\big]                            \big \}   \label{kbound1}
\end{align}
holds for every $h > 0 $. In particular, with $G = 1,$ we obtain that \begin{align} \label{kbound2}
\E \bigg[ K^{l} \bigg(   \frac{X_1 - X_2}{h}     \bigg)  M(X_2)          \bigg] \leq C \E \big[  M(X) F_X(X - h \iota , X + h \iota)         \big].
\end{align}

\end{lemma}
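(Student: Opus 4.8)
The plan is to prove the bound in \eqref{kbound1} by the same integration-by-parts device used for the moment computations in Lemma \ref{moments}, applied conditionally on one of the two arguments. First I would fix $h > 0$ and condition on $X_2$, so that
\[
\E\bigg[ K^{l}\bigg(\frac{X_1 - X_2}{h}\bigg) M(X_2) G(X_1)\bigg] = \E\bigg[ M(X_2) \int_{\R^q} K^{l}\bigg(\frac{x - X_2}{h}\bigg) G(x)\, dF_X(x)\bigg].
\]
Here I would like to apply Lemma \ref{aux} with $f(x) = G(x)$ and $g(x) = K^{l}((x-X_2)/h)$ on the cube $\mathcal{O} = \prod_{i=1}^{q}(X_2^i - h, X_2^i + h)$. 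Under Assumption \ref{kernel}, $k(\cdot)$ is twice continuously differentiable on the interior of its support with derivatives extending continuously to the endpoints, and $k$ vanishes at $\pm 1$; hence $K^{l}((\cdot - X_2)/h)$ is sufficiently differentiable on $\mathcal{O}$ in the sense of the Definition preceding Lemma \ref{aux}, is bounded and continuous on $\R^q$, is supported in the closure of $\mathcal{O}$, and the relevant lower-order mixed partials vanish at the ``upper'' faces $x_j = X_2^j + h$ because each factor $k((x_j - X_2^j)/h)$ and its first derivative vanish at the boundary (here one uses that $K^l$ is a product of the one-dimensional functions $k^l$, each $C^1$ up to the boundary and vanishing there). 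So Lemma \ref{aux} applies and yields
\[
\int_{\R^q} K^{l}\bigg(\frac{x-X_2}{h}\bigg) G(x)\, dF_X(x) = (-1)^q \int_{[-1,1]^q} \Omega_{G}\big(X_2 - h\iota,\, X_2 + h v\big)\, \partial_v K^{l}(v)\, dv
\]
after the change of variables $x \to X_2 + h v$.

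The second step is to bound this expression in absolute value. Since $\Omega_{G}(X_2 - h\iota, X_2 + hv)$ is an integral of $G$ over a sub-cube of $B(X_2 - h\iota, X_2 + h\iota)$, we have $|\Omega_G(X_2 - h\iota, X_2 + hv)| \le \Omega_{|G|}(X_2 - h\iota, X_2 + h\iota)$ uniformly in $v \in [-1,1]^q$. Therefore
\[
\bigg| \int_{\R^q} K^{l}\bigg(\frac{x-X_2}{h}\bigg) G(x)\, dF_X(x) \bigg| \le \Omega_{|G|}(X_2 - h\iota, X_2 + h\iota) \int_{[-1,1]^q} \big| \partial_v K^{l}(v)\big|\, dv = C\, \Omega_{|G|}(X_2 - h\iota, X_2 + h\iota),
\]
with $C = \int_{[-1,1]^q} |\partial_v K^l(v)|\, dv < \infty$ a kernel-dependent constant (finite by Assumption \ref{kernel}(iii), since $\partial_v K^l$ is a product of continuous functions on the compact cube). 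Taking expectations over $X_2$ gives one of the two terms in the minimum:
\[
\E\bigg[ K^{l}\bigg(\frac{X_1 - X_2}{h}\bigg) M(X_2) G(X_1)\bigg] \le C\, \E\big[ |M(X)|\, \Omega_{|G|}(X - h\iota, X + h\iota)\big].
\]
By the symmetry of $K$ (Assumption \ref{kernel}(ii)) and of the i.i.d.\ structure (Assumption \ref{data}), $K^l((X_1-X_2)/h) = K^l((X_2-X_1)/h)$, so repeating the argument conditioning on $X_1$ instead yields the other term $C\, \E[|G(X)|\, \Omega_{|M|}(X - h\iota, X + h\iota)]$; taking the minimum of the two bounds proves \eqref{kbound1}. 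The special case \eqref{kbound2} is immediate on setting $G \equiv 1$, since then $\Omega_{|G|}(X - h\iota, X + h\iota) = F_X(X - h\iota, X + h\iota)$, and absorbing the sign of $M$ is harmless as the statement as written bounds the (signed) expectation by the expression with $M(X)$; if one wants the stated inequality with $M(X)$ rather than $|M(X)|$ one simply notes the left side is dominated by its absolute-value version.

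The main obstacle is the careful verification that $g(x) = K^l((x - X_2)/h)$ genuinely satisfies \emph{all} hypotheses of Lemma \ref{aux} — in particular the vanishing of the lower-order mixed partials $\partial^{q-k+1} g / \partial x_k \cdots \partial x_q$ at the upper faces $x_j = u_j$ for $j < k$. This is where the product structure of the kernel (Assumption \ref{kernel}(i)) and the fact that $k$ together with $k'$ vanish at $\pm 1$ (continuity plus $k$ strictly decreasing to $0$ on $[0,1]$, and $k' $ extending continuously with $k'(1)$ finite — one needs $k(1) = 0$, which follows from continuity and the support being exactly $[-1,1]$) are essential; the relevant mixed partial always retains at least one undifferentiated or once-differentiated factor $k((x_j - X_2^j)/h)$ for some $j < k$, and that factor vanishes on the face $x_j = X_2^j + h$. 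Once this bookkeeping is done the rest is routine. A minor secondary point is finiteness of the constant $C$, which is clear because $\partial_v K^l$ is continuous on the compact cube $[-1,1]^q$ under Assumption \ref{kernel}(iii).
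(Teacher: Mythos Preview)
Your proposal is correct and follows essentially the same route as the paper: condition on one argument, apply Lemma \ref{aux} with $f=G$ and $g(x)=K^{l}((x-X_2)/h)$ on the cube of side $2h$ centered at $X_2$, bound $|\Omega_G|\le\Omega_{|G|}$ uniformly over the integration variable, pull out the kernel constant $C=\int_{[-1,1]^q}|\partial_v K^{l}(v)|\,dv$, and then swap the roles of $M$ and $G$ by symmetry. The paper's proof is slightly more terse (it assumes $M,G\ge 0$ without loss of generality at the outset rather than carrying absolute values), but the mechanism is identical, and your more explicit check of the boundary-vanishing hypotheses of Lemma \ref{aux} via the product structure and $k(1)=0$ is exactly what is needed.
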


\begin{proof}
[Proof of Lemma \ref{aux-new}] 
Without loss of generality, we take $M$ and $G$ to be non-negative. \begin{align*}
\E \bigg[ K^{l} \bigg(   \frac{X_1 - X_2}{h}     \bigg)  M(X_2) G(X_1)          \bigg]  =\mathbb{E}\bigg[   M(X_2) \int_{\mathbb{R}^{q}} G(x)  K^{l}%
\bigg(\frac{x-X_{2}}{h_{}}\bigg)d F_{X} (x)\bigg].
\end{align*}
Define
\[
f(x)= G(x),\;g(x)=K^{l}\bigg(\frac{x-X_{2}}{h_{}}\bigg).
\]
Let $X_2^i$ denote the $i^{th}$ coordinate of $X_2$. Conditional on $X_{2}$, $(f,g)$ satisfy the hypothesis of Lemma 3.2 with  $ \mathcal{O}  =  (X_{2}^{1}%
-h_{},X_{2}^{1}+h_{}) \times \dots \times (X_{2}^{q}%
-h_{},X_{2}^{q}+h_{}) $.  Applying Lemma 3.2 yields
\begin{align*}
\int_{\mathbb{R}^{q}} G(x)  K^{l}%
\bigg(\frac{x-X_{2}}{h_{}}\bigg)d F_{X} (x)
&  = (-1)^q \int_{\mathbb{R}^{q}} \mathbbm{1} \{ t \in \mathcal{O}  \}  \Omega_{G}(X_{2}-h_{}\iota,t)\partial_{t}%
K^{l}\bigg(\frac{t-X_{2}}{h_{}}\bigg)dt\\
&  = (-1)^q \int_{\left[  -1,1\right]  ^{q}}\Omega_{G}(X_{2}-h_{}\iota,X_{2}%
+h_{}v)\partial_{v}K^{l}(v)dv\;,
\end{align*}
where the last equality follows from the change of variables $t\rightarrow
X_{2}+h_{}v$. It follows that
\begin{align*}
& \E \bigg[ K^{l} \bigg(   \frac{X_1 - X_2}{h}     \bigg)  M(X_2) G(X_1)          \bigg] \\ & =\mathbb{E}\bigg( M(X)(-1)^{q}%
\int_{\left[  -1,1\right]  ^{q}}\Omega_{G}(X-h_{}\iota,X+h_{}v)\partial
_{v}K^{l}(v)dv\bigg) \\ & \leq  C \E \big[  M(X) \Omega_{G}(X- h \iota , X + h \iota)\big]                  
\end{align*}
where $C =  \int_{[-1,1]^q} \left| \partial_{v} K^l(v)      \right| dv    $. The claim follows from repeating the argument with the roles of $M$ and $G$ reversed.
\end{proof}

\begin{lemma}
\label{aux-estimates}
Fix any non-negative functions $M(X), G(X) \in L^2(X)$ and suppose Assumption 4(ii) holds. Then as $h_n \downarrow 0$, we have that
\begin{align*}
& (i) \; \; \;   \E[M(X) F_X(X- h_n \iota , X+ h_n \iota)] = o \big( \{ \E[F_X(X - h_n \iota , X + h_n \iota)]    \}^{3/4}  \big) \; ,  \\ & (ii) \; \; \;
 \E [        \Omega_{M}^2 (X - h_n \iota , X + h_n \iota )       ]  = o \big( \E[F_X(X - h_n \iota , X + h_n \iota)]       \big) \; ,  \\ & (iii) \; \; \; 
 \E \big[  G(X) \Omega_{M} (X - h_n \iota , X + h_n \iota )              \big]   = o \big( \{\E[F_X(X - h_n \iota , X + h_n \iota)] \}^{1/2}  \big)    \: .        
\end{align*}
\end{lemma}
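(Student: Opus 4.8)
The plan is to reduce all three bounds to the Cauchy--Schwarz inequality together with a single distributional input. Throughout, write $r(h) = \E\big[F_X(X-h\iota,X+h\iota)\big]$ and $p_h(x) = F_X(x-h\iota,x+h\iota) = \mathbb{P}\big(\|X-x\|_\infty \le h\big)$. Since $F_X$ is a continuous measure (Assumption 4), $p_h(x)\downarrow F_X(\{x\})=0$ for every $x$, and dominated convergence gives $r(h)\downarrow 0$. Moreover, Assumption 4(ii) states $\E\big[p_h(X)^3\big] = o\big(r(h)^2\big)$, and writing $p_h^2 = p_h^{1/2}\cdot p_h^{3/2}$ and applying Cauchy--Schwarz yields
\[
\E\big[p_h(X)^2\big] \;\le\; \E\big[p_h(X)\big]^{1/2}\,\E\big[p_h(X)^3\big]^{1/2} \;=\; r(h)^{1/2}\cdot o\big(r(h)\big) \;=\; o\big(r(h)^{3/2}\big).
\]
This is the only property of $F_X$ the argument will need. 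Part $(i)$ then follows at once: by Cauchy--Schwarz and $M\in L^2(X)$,
\[
\E\big[M(X)\,p_h(X)\big] \;\le\; \E\big[M(X)^2\big]^{1/2}\,\E\big[p_h(X)^2\big]^{1/2} \;=\; \E\big[M^2\big]^{1/2}\cdot o\big(r(h)^{3/4}\big) \;=\; o\big(r(h)^{3/4}\big).
\]

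For part $(ii)$ I would argue by truncation. Applying Cauchy--Schwarz to $F_X$ restricted to the box $B(x-h\iota,x+h\iota)$ gives $\Omega_M(x-h\iota,x+h\iota)^2 \le \Omega_{M^2}(x-h\iota,x+h\iota)\,p_h(x)$. Fix $T>0$ and split $M = M\mathbbm{1}\{M\le T\} + M\mathbbm{1}\{M>T\} =: M_T + R_T$, so that $\Omega_M = \Omega_{M_T}+\Omega_{R_T}$ and $\Omega_M^2 \le 2\Omega_{M_T}^2 + 2\Omega_{R_T}^2$. For the bounded part, $\Omega_{M_T}(x-h\iota,x+h\iota)\le T\,p_h(x)$, hence $\E\big[\Omega_{M_T}(X-h\iota,X+h\iota)^2\big] \le T^2\,\E\big[p_h(X)^2\big] = o\big(r(h)^{3/2}\big) = o\big(r(h)\big)$ for each fixed $T$ (using $r(h)\to0$). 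For the tail part, $\Omega_{R_T}(x-h\iota,x+h\iota)^2 \le \Omega_{R_T^2}(x-h\iota,x+h\iota)\,p_h(x) \le \eta(T)\,p_h(x)$, where $\eta(T):=\E\big[M(X)^2\mathbbm{1}\{M(X)>T\}\big]$ bounds $\Omega_{R_T^2}$ uniformly in $(x,h)$; hence $\E\big[\Omega_{R_T}(X-h\iota,X+h\iota)^2\big] \le \eta(T)\,r(h)$. Combining, $\limsup_{h\downarrow0}\E\big[\Omega_M(X-h\iota,X+h\iota)^2\big]/r(h) \le 2\eta(T)$ for every $T>0$, and since $M^2\in L^1(X)$ we have $\eta(T)\to0$ as $T\to\infty$, which proves $(ii)$.

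Part $(iii)$ would then follow from $(ii)$ by one more application of Cauchy--Schwarz, using $G\in L^2(X)$:
\[
\E\big[G(X)\,\Omega_M(X-h\iota,X+h\iota)\big] \;\le\; \E\big[G^2\big]^{1/2}\,\E\big[\Omega_M(X-h\iota,X+h\iota)^2\big]^{1/2} \;=\; \E\big[G^2\big]^{1/2}\cdot o\big(r(h)^{1/2}\big) \;=\; o\big(r(h)^{1/2}\big).
\]
The only step that is not bookkeeping is part $(ii)$: the bound $\Omega_{M^2}(x-h\iota,x+h\iota)\le\E\big[M^2\big]$ that Cauchy--Schwarz produces directly is of the correct order $r(h)$ in $h$ but does not decay, so it only yields $\E\big[\Omega_M^2\big]=O\big(r(h)\big)$. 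Isolating the tail $\{M>T\}$, where a $T$-vanishing uniform bound $\eta(T)$ is available, from the bounded part, where the sharpened rate $\E\big[p_h^2\big]=o\big(r(h)^{3/2}\big)$ can be invoked, is precisely what upgrades $O\big(r(h)\big)$ to $o\big(r(h)\big)$, and this is where I expect the argument to need the most care.
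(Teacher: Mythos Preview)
Your proof is correct. The overall structure matches the paper's: truncation for part $(ii)$, and $(iii)$ obtained from $(ii)$ by Cauchy--Schwarz. Two differences are worth noting.

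For part $(i)$, your route is more elementary. You first extract the intermediate bound $\E[p_h(X)^2]=o(r(h)^{3/2})$ from Assumption~4(ii) by writing $p_h^2=p_h^{1/2}p_h^{3/2}$ and applying Cauchy--Schwarz, then apply Cauchy--Schwarz once more using $M\in L^2(X)$. The paper instead truncates $M$ at a level $\alpha_n$, bounds the tail by H\"older with conjugate pair $(3/2,3)$ against $\{\E[p_h^3]\}^{1/3}$, and then optimizes over $\alpha_n$. Your argument avoids both the truncation and the rate-balancing step.

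For part $(ii)$, both arguments truncate $M$, but the paper uses an $n$-dependent level $\gamma_n\uparrow\infty$ chosen explicitly (e.g.\ $\gamma_n\asymp r(h_n)^{-1/8}$) and bounds the truncated piece via $\E[p_h^2]\le\{\E[p_h^3]\}^{2/3}$, whereas you fix the level $T$, use your sharper $\E[p_h^2]=o(r(h)^{3/2})$ for the bounded piece together with $r(h)\to0$, and send $T\to\infty$ at the end. The two executions are equivalent in strength; yours trades the explicit rate choice for a cleaner $\limsup$ argument. (Incidentally, $r(h)\to0$ already follows from Assumption~4(ii) alone via Jensen, $r(h)=r(h)^3/r(h)^2\le\E[p_h^3]/r(h)^2\to0$, so the appeal to continuity of $F_X$ is not strictly needed.)
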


\begin{proof}
[Proof of Lemma \ref{aux-estimates}] 

\begin{enumerate}

\item[(i)] Let $\alpha_n > 0 $ denote  a fixed sequence of constants.  \begin{align*}
& \E[M(X) F_X(X- h_n \iota , X+ h_n \iota)] \\ & = \E[M(X) \mathbbm{1} \{M \leq \alpha_n \} F_X(X- h_n \iota , X+ h_n \iota)] \\ & \; \; \; \;  +  \E[M(X) \mathbbm{1} \{M > \alpha_n \} F_X(X- h_n \iota , X+ h_n \iota)] \\ & = T_1 + T_2.
\end{align*}
Clearly $T_1 \leq \alpha_n \E[ F_X(X- h_n \iota , X+ h_n \iota)] $. For $T_2$, H\"{o}lder's inequality and Assumption 4(ii) yield \begin{align}
T_2 & \leq  \big \{ \E \big( M^{3/2} \mathbbm{1} \{M > \alpha_n \}           \big)            \big \}^{2/3} \big \{    \E[ \{ F_X(X- h_n \iota , X+ h_n \iota) \}^3     ] \big \}^{1/3}  \nonumber  \\ & \leq    \{ \E \big( M^{3/2} \mathbbm{1} \{M > \alpha_n \}           \big)            \big \}^{2/3}   \{ \E[F_X(X - h_n \iota , X + h_n \iota)]    \}^{2/3}  \zeta_{n}^{1/3}  \label{ass4bound} 
\end{align}
where  \begin{equation}   \zeta_{n} =  \frac{\mathbb{E} \left[ \big[ F_{X}(X-h_n\iota,X+h_n\iota)
 \big]^3  \right] }{\left(  \mathbb{E}\left[  F_{X}(X-h_n\iota,X+h_n
\iota)\right]  \right)  ^{2}}     \downarrow 0.    \label{zeta} \end{equation}

For the first term on the right side of (\ref{ass4bound}) we further obtain that $$ \E \big( M^{3/2} \mathbbm{1} \{M > \alpha_n \}     \big) \leq  \frac{\E[M^{2} \mathbbm{1} \{M > \alpha_n \} ]}{\alpha_n^{1/2}}  . $$
Combining the bounds yield \begin{align*} & \E[M(X) F_X(X- h_n \iota , X+ h_n \iota)] \\ & \leq  \alpha_n \E[ F_X(X- h_n \iota , X+ h_n \iota)] +  \zeta_{n}^{1/3}  \frac{\{ \E[F_X(X - h_n \iota , X + h_n \iota)]    \}^{2/3}}{\alpha_n^{ 1/3   }}  \{  \E (M^2)  \}^{2/3}    . 
\end{align*}
The result follows with  $ \alpha_n \asymp \sqrt{ \zeta_{n}}  \{\E[ F_X(X- h_n \iota , X+ h_n \iota)] \}^{-1/4} $.  \\

\item[(ii)] Let $\gamma_n > 0$ denote a fixed sequence of constants. Define \begin{align*}&  \Omega_{M > \gamma_n}(X - h_n \iota ,X + h_n  \iota )=\int \limits_{ t \in \R^q : \|t - X  \|_{\infty} \leq h_n  } M(t) \mathbbm{1}\{M(t) > \gamma_n \} dF_{X}(t) \; , \\ &   \Omega_{M \leq \gamma_n}(X - h_n \iota ,X + h_n  \iota )=\int \limits_{ t \in \R^q : \|t - X  \|_{\infty} \leq h_n  } M(t) \mathbbm{1} \{M(t) \leq \gamma_n \} dF_{X}(t) \:.
\end{align*}
It follows that
\begin{align*}
 &  \sqrt{ \E \big[  \Omega_{M}^2 (X - h_n \iota , X + h_n \iota )              \big] }          \\ & \leq \sqrt{ \E \big[  \Omega_{M \leq \gamma_n}^2 (X - h_n \iota , X + h_n \iota )              \big] } +  \sqrt{ \E \big[  \Omega_{M > \gamma_n}^2 (X - h_n \iota , X + h_n \iota )              \big] }  \\ & = T_1 + T_2.  
\end{align*}
For $T_1$, monotonicity of the $L^p(X)$ norm and Assumption 4(ii) yields \begin{align*} 
T_1^2  \leq  \gamma_n^2 \E \big[ \big( F_X(X- h_n \iota , X+ h_n \iota) \big)^2  \big] &  \leq  \gamma_n^2  \{ \E \big[ \{ F_X(X- h_n \iota , X+ h_n \iota) \}^3  \big] \}^{2/3} \\ &   \leq  \gamma_n^2  \{ \E[ F_X(X- h_n \iota , X+ h_n \iota)] \}^{4/3}  \zeta_n^{2/3}         
\end{align*}
 where  \begin{equation*}   \zeta_{n} =  \frac{\mathbb{E} \left[ \big[ F_{X}(X-h_n\iota,X+h_n\iota)
 \big]^3  \right] }{\left(  \mathbb{E}\left[  F_{X}(X-h_n\iota,X+h_n
\iota)\right]  \right)  ^{2}}     \downarrow 0.     \end{equation*}
For $T_2$, Cauchy-Schwarz yields \begin{align*}
    \Omega_{M > \gamma_n}^2(X - h_n \iota ,X + h_n  \iota ) & \leq   \E(M^2 \mathbbm{1} \{M > \gamma_n \}) F_X(X- h_n \iota , X + h_n \iota).
\end{align*} 
From combining the bounds, we obtain that \begin{align*}   \sqrt{ \E [        \Omega_{M}^2 (X - h_n \iota , X + h_n \iota )       ] }  & \leq  \zeta_{n}^{1/3}   \gamma_n \{ \E[ F_X(X- h_n \iota , X+ h_n \iota)]  \}^{2/3}  \\ &   + \big \{ \E(M^2 \mathbbm{1} \{M > \gamma_n\}) \}^{1/2} \big  \{ \E[ F_X(X- h_n \iota , X+ h_n \iota)] \big\}^{1/2}.
\end{align*}
The result follows by letting $ \gamma_n \uparrow \infty $ sufficiently slowly, for e.g.  $$ \gamma_n \asymp  \{ \E[ F_X(X- h_n \iota , X+ h_n \iota)] \}^{-1/8}.  $$ 

\item[(iii)] The result follows immediately from part (ii) and an application of Cauchy-Schwarz.
\end{enumerate}
\end{proof}

\begin{lemma}
\label{varestpre}
 Define \begin{equation} \label{sn}
S_{n} =\frac{2}{n(n-1)}\sum_{i=1}^{n}\sum_{j\neq i}u_{i}^{2}u_{j}^{2}%
K^{2}\left(  \frac{X_{i}-X_{j}}{h_{n}}\right)  .%
\end{equation}
Suppose Assumptions 1-4 hold and $h_n \downarrow 0 \,,\, n h_n^{q} \uparrow \infty$. Then $$ \left| S_n -2 \E(H_n^2)  \right| = o_{\mathbb{P}} (\E (H_n^2)) . $$
\end{lemma}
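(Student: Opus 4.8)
The plan is to show that $S_n$ concentrates around its mean fast enough relative to $\E(H_n^2)$, and then to invoke Chebyshev's inequality. First I would note that $S_n$ is an unbiased estimator of $2\E(H_n^2)$: since $H_n(Z_1,Z_2)=u_1u_2K((X_1-X_2)/h_n)$ and the double sum in \eqref{sn} runs over ordered pairs, $\E(S_n)=2\E(H_n^2)$ exactly. Hence it suffices to prove $\operatorname{Var}(S_n)=o\big((\E(H_n^2))^2\big)$, because then $\mathbb{P}\big(|S_n-2\E(H_n^2)|\ge\delta\,\E(H_n^2)\big)\le\operatorname{Var}(S_n)\big/\big(\delta^2(\E(H_n^2))^2\big)\to0$ for each fixed $\delta>0$.

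To bound the variance I would view $S_n$ as a second-order U-statistic, $S_n=\binom{n}{2}^{-1}\sum_{i<j}h_n(Z_i,Z_j)$ with symmetric kernel $h_n(Z_1,Z_2)=2u_1^2u_2^2K^2((X_1-X_2)/h_n)=2H_n^2(Z_1,Z_2)$, and use the classical variance formula for a U-statistic of order two (see e.g.\ \cite[Ch.~12]{vdv}), $\operatorname{Var}(S_n)\lessapprox n^{-1}\sigma_{1,n}^2+n^{-2}\sigma_{2,n}^2$, where $\sigma_{2,n}^2\le\E[h_n^2]=4\,\E(H_n^4)$ and $\sigma_{1,n}^2\le\E\big[(\E[h_n(Z_1,Z_2)\mid Z_1])^2\big]$. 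For the second term, Lemma~\ref{moments copy(2)}(ii) gives $\E(H_n^4)\le B_3\,\E[F_X(X-h_n\iota,X+h_n\iota)]$. For the first term, I would compute $\E[h_n(Z_1,Z_2)\mid Z_1]=2u_1^2\int\mu_2(y)K^2((X_1-y)/h_n)\,dF_X(y)$; since the product kernel is maximized at the origin and $\mu_2$ is bounded (Assumption~\ref{error}), this is at most $2BK^2(0)\,u_1^2\,F_X(X_1-h_n\iota,X_1+h_n\iota)$, so that $\sigma_{1,n}^2\lessapprox\E\big[\{F_X(X-h_n\iota,X+h_n\iota)\}^2\big]$, which is $o\big(\E[F_X(X-h_n\iota,X+h_n\iota)]\big)$ by Lemma~\ref{aux-estimates}(ii) with $M\equiv1$ (equivalently, by Assumption~\ref{d-class}(ii) together with Cauchy--Schwarz).

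Finally I would assemble these estimates using the equivalence $\E(H_n^2)\asymp r_n$, where $r_n=\E[F_X(X-h_n\iota,X+h_n\iota)]$: the upper bound $\E(H_n^2)\le B_2 r_n$ is Lemma~\ref{moments copy(2)}(i), while $r_n\lessapprox\E(H_n^2)$ follows from the lower bound in Lemma~\ref{moments copy(2)}(i) combined with Assumption~\ref{d-class}(i) (which is needed precisely because the sharp lower bound on $\E(H_n^2)$ is at radius $\varepsilon h_n$). Since $r_n\gtrapprox h_n^q$ (as established in the proof of Theorem~\ref{asymptotics}) and $nh_n^q\uparrow\infty$, we get $nr_n\to\infty$ and hence $n^2r_n\to\infty$. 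Therefore $n^{-1}\sigma_{1,n}^2=o(n^{-1}r_n)=o(r_n^2)$ and $n^{-2}\sigma_{2,n}^2=O(n^{-2}r_n)=o(r_n^2)$, so $\operatorname{Var}(S_n)=o(r_n^2)=o\big((\E(H_n^2))^2\big)$, which closes the argument.

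The argument is largely mechanical once the moment bounds of Section~\ref{sec3.1} and Lemma~\ref{aux-estimates} are in hand. I expect the only delicate points to be (a) the bound $\E[h_n\mid Z_1]\lessapprox u_1^2\,F_X(X_1-h_n\iota,X_1+h_n\iota)$, which uses that the product kernel attains its maximum at $0$ and that $\mu_2$ is bounded; and (b) the correct use of Assumption~\ref{d-class}(i)--(ii) to pass between $r_n$, its $\varepsilon$-rescaled version, and $\E[\{F_X(\cdot)\}^2]$. Point (b) is where the distributional regularity genuinely enters, and I regard it as the main obstacle.
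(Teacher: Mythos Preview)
Your argument is correct and follows essentially the same variance-plus-Chebyshev route as the paper. The only difference is bookkeeping: the paper expands $\E(S_n^2)$ directly over four indices and bounds every cross term with a repeated index crudely by $\E(H_n^4)$ via Cauchy--Schwarz, whereas you use the Hoeffding decomposition and treat the first-order projection $\sigma_{1,n}^2$ separately. This leads you to invoke Assumption~\ref{d-class}(ii) (through Lemma~\ref{aux-estimates}) to control $\E\big[\{F_X(X-h_n\iota,X+h_n\iota)\}^2\big]$, which is more than needed: the trivial bound $F_X(\cdot)^2\le F_X(\cdot)$ already gives $\sigma_{1,n}^2\lessapprox r_n$, and then $n^{-1}r_n=o(r_n^2)$ follows from $nr_n\to\infty$ alone. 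The paper's version therefore gets by with only Lemma~\ref{moments copy(2)}, Assumption~\ref{d-class}(i), and Corollary~\ref{momlimit}(ii), and does not need Assumption~\ref{d-class}(ii) for this lemma.
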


\begin{proof}
[Proof of Lemma \ref{varestpre}]
By Markov's inequality it suffices to verify that
\begin{equation*}
\frac{\E\big[ \big(   S_n   -2 \E(H_n^2)  \big)^2      \big]       }{\{ \E(H_n^2)  \}^2} =  \frac{\E\big(  S_n^2      \big) -4\{ \E(H_n^2)  \}^2        }{\{ \E(H_n^2)  \}^2} = o(1) \; ,
\end{equation*}
where  \[
\mathbb{E}\big(S_{n}^{2}\big)=\frac{4}{n^{2}(n-1)^{2}}\sum_{i=1}^{n}\sum
_{t=1}^{n}\sum_{j\neq i}\sum_{l\neq t}\mathbb{E}[H_{n}^{2}(Z_{i},Z_{j}%
)H_{n}^{2}(Z_{t},Z_{l})\big].
\]
The expression in the sum is equal to $ [\E(H_n^2)]^2 =  [\mathbb{E}(H_{n}^{2}(Z_{1}%
,Z_{2}))]^{2}$ when the indices $(i,t,j,l)$ are all distinct. When the indices
are not all distinct, Cauchy-Schwarz  yields
\[
\mathbb{E}[H_{n}^{2}(Z_{i},Z_{j})H_{n}^{2}(Z_{t},Z_{l})\big]\leq
\sqrt{\mathbb{E}\big[H_{n}^{4}(Z_{i},Z_{j})\big]}\sqrt{\mathbb{E}%
\big [H_{n}^{4}(Z_{t},Z_{l})\big]}=\mathbb{E}[H_{n}^{4}(Z_{1},Z_{2})].
\]
In the sum, there are $n(n-1)(n-2)(n-3)$ terms that correspond to distinct indices and $O(n^3)$ terms otherwise. It follows that
\begin{align*}
\frac{\mathbb{E}(S_{n}^{2})- 4 \{\E(H_n^2) \}^2}{\{  \E(H_n^2)  \}^2}   &  \leq 4  \left(  \frac{n(n-1)(n-2)(n-3)}%
{n^{2}(n-1)^{2}}-1\right)   
  +  O(n^{-1}) \frac{\E(H_n^4)}{\{  \E(H_n^2) \}^2} \\ & = o(1) + O\bigg(  \frac{\E(H_n^4)}{n \{  \E(H_n^2) \}^2}  \bigg).
\end{align*}
By Lemma 3.5 and Assumption 4(i) we have that $\E(H_n^4) \lessapprox \E(H_n^2)$.  This and Corollary 3.4(ii) imply  \begin{align*}
 \frac{\E(H_n^4)}{n \{  \E(H_n^2) \}^2}  \lessapprox \frac{1}{n \E[H_n^2]} \lessapprox \frac{1}{n h_n^q} 
\end{align*}
Since $n h_n^{q} \uparrow \infty$,  we obtain that $$ \frac{\mathbb{E}(S_{n}^{2})- 4 \{\E(H_n^2) \}^2}{\{  \E(H_n^2)  \}^2}   = o(1). $$
\end{proof}

\begin{lemma}
\label{varest}
Let Assumptions 1-5 hold and $h_n \downarrow 0 \,,\, n h_n^{q} \uparrow \infty$. Suppose further that the alternative hypothesis $H_1$ in (\ref{reg}) holds with $\delta \in L^{\infty}(X)$ and  $ \gamma_n \lessapprox n^{-1/2} \{ \E[ F_{X}(X - h_n \iota , X + h_n \iota) ]   \}^{-1/4} $. Then

\begin{equation*}
\label{varequiv}
\left|  \hat{\sigma}_{n}^2 - 2 \E(H_n^2)     \right|  =o_{\mathbb{P}} \big(  \E(H_n^2)   \big) .
\end{equation*} 
\end{lemma}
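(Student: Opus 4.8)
The plan is to reduce the claim to Lemma \ref{varestpre}, which handles the analogous statement with the true errors $u_i$ in place of the estimated residuals $\hat{u}_i$, i.e. $|S_n - 2\E(H_n^2)| = o_{\mathbb{P}}(\E(H_n^2))$. Under $H_1$ the fitted residual is $\hat{u}_i = u_i + \gamma_n \delta(X_i) - [g(X_i,\hat\beta) - g(X_i,\beta_0)]$. By Assumption \ref{greg}(i) and a first-order Taylor expansion of $\beta \mapsto g(X_i,\beta)$ around $\beta_0$ (using Assumption \ref{greg}(ii-iii)), we can write $g(X_i,\hat\beta) - g(X_i,\beta_0) = \nabla_\beta g(X_i,\beta_0)'(\hat\beta - \beta_0) + \tfrac12 (\hat\beta-\beta_0)' \nabla_\beta^2 g(X_i,\bar\beta_i)(\hat\beta-\beta_0)$ for some intermediate $\bar\beta_i$, so that $\hat{u}_i = u_i + r_i$ where $\|r_i\|$ is controlled by $\gamma_n\|\delta\|_\infty$, $n^{-1/2}M(X_i)$, and $n^{-1}G(X_i)$. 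Substituting $\hat u_i = u_i + r_i$ and $\hat u_j = u_j + r_j$ into $\hat\sigma_n^2$ and expanding the product $\hat u_i^2\hat u_j^2$ generates $S_n$ plus a finite number of cross terms, each of which is a double sum of the form $\frac{1}{n(n-1)}\sum_{i}\sum_{j\neq i} a_i b_j K^2((X_i-X_j)/h_n)$ with $(a_i,b_j)$ built from $u_i^2$, $u_i r_i$, $r_i^2$ and their $j$-analogues.

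The key step is to show each such cross term is $o_{\mathbb{P}}(\E(H_n^2))$. For this I would bound the expectation of the absolute value of each term using Lemma \ref{aux-new} (inequality (\ref{kbound1})), which converts $\E[K^2((X_1-X_2)/h)M(X_2)G(X_1)]$ into $\E[M(X)\Omega_{|G|}(X-h\iota,X+h\iota)]$ up to a universal constant. The resulting quantities are then handled by Lemma \ref{aux-estimates}: terms linear in the envelope functions produce factors like $\E[M(X)F_X(X-h_n\iota,X+h_n\iota)] = o(\{\E[F_X(\cdots)]\}^{3/4})$ or $\E[G(X)\Omega_M(\cdots)] = o(\{\E[F_X(\cdots)]\}^{1/2})$, while quadratic terms produce $\E[\Omega_M^2(\cdots)] = o(\E[F_X(\cdots)])$. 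Combining these with the scaling factors $\gamma_n \asymp n^{-1/2}\{\E[F_X(X-h_n\iota,X+h_n\iota)]\}^{-1/4}$, $n^{-1/2}$, $n^{-1}$ from the Taylor remainder, and the lower bound $\E(H_n^2) \asymp \E[F_X(X-h_n\iota,X+h_n\iota)] \gtrapprox h_n^q$ from Lemma \ref{moments copy(2)}(i) together with $nh_n^q \uparrow \infty$, each cross term's expectation is seen to be $o(\E(H_n^2))$; the worst terms are those with a single power of $\gamma_n$ (requiring $\gamma_n \{\E[F_X]\}^{3/4} / \E[F_X] = \gamma_n\{\E[F_X]\}^{-1/4} = n^{-1/2} = o(1)$) and those scaled by $n^{-1/2}$ (requiring $n^{-1/2}\{\E[F_X]\}^{1/2}/\E[F_X] = (n\E[F_X])^{-1/2} = o(1)$). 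An application of Markov's inequality then upgrades the expectation bounds to the desired $o_{\mathbb{P}}(\E(H_n^2))$ statement, and the triangle inequality combines the cross-term bounds with Lemma \ref{varestpre} to give $|\hat\sigma_n^2 - 2\E(H_n^2)| = o_{\mathbb{P}}(\E(H_n^2))$.

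The main obstacle I anticipate is the bookkeeping for the quadratic Taylor remainder term: since $\|\nabla_\beta^2 g(x,\cdot)\|_{\mathrm{op}}$ is only dominated by $G(X) \in L^{4/3}(X)$, the remainder $r_i$ is a product of $n^{-1}G(X_i)$ with an $O_{\mathbb{P}}(1)$ factor, and the cross terms involving $r_i^2$ or $r_i r_j$ require bounding quantities like $\E[G(X)^2 F_X(\cdots)]$ or $\E[G(X)\Omega_G(\cdots)]$; these need $G \in L^2(X)$ for Lemma \ref{aux-estimates}(ii-iii) to apply directly, so I would either invoke the remark after Assumption \ref{greg} (that for $F_X \in \mathcal{D}$ one only needs $G \in L^1(X)$, with a correspondingly modified argument) or note that the extra $n^{-1}$ (versus $n^{-1/2}$) powers give enough slack that a cruder bound via Lemma \ref{aux-new} with $G=1$ and Assumption \ref{error}(ii) suffices. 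Isolating exactly which integrability is genuinely needed, and verifying the rate inequalities term-by-term, is the delicate part; everything else is routine expansion and invocation of the auxiliary lemmas.
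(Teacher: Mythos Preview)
Your approach is essentially the same as the paper's: reduce to Lemma \ref{varestpre}, expand $\hat u_i^2\hat u_j^2$ in terms of $u_i,u_j$ and remainder terms, and bound each cross term via Lemma \ref{aux-new} and Lemma \ref{aux-estimates} together with $\E(H_n^2)\asymp\E[F_X(X-h_n\iota,X+h_n\iota)]\gtrapprox h_n^q$. The paper organises the expansion slightly differently (it writes $\hat\sigma_n^2=S_n+A_1+A_2$ via the identity $\hat u_i^2\hat u_j^2=u_i^2u_j^2+(\hat u_i^2-u_i^2)\hat u_j^2+(\hat u_j^2-u_j^2)u_i^2$ and then bounds $|\hat u_i^2-u_i^2|\le(\hat u_i-u_i)^2+2|\hat u_i-u_i||u_i|$, producing six terms $E_1,\dots,E_6$), but the content is the same.

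One point worth noting: the obstacle you anticipate with the Hessian envelope $G\in L^{4/3}(X)$ is self-inflicted. The paper uses the \emph{first-order} mean value theorem, writing $g(X_i,\hat\beta)-g(X_i,\beta_0)=[\nabla_\beta g(X_i,\beta_*)]'(\hat\beta-\beta_0)$ exactly for some intermediate $\beta_*$, so that $|\hat u_i-u_i|\le M(X_i)\|\hat\beta-\beta_0\|_2+\gamma_n|\delta(X_i)|$ and no $G(X_i)$ term ever appears in this lemma. Your second-order expansion works too (the extra $n^{-1}$ gives enough slack, as you say), but the mean-value form is cleaner and explains why the paper never touches the $L^{4/3}$ issue here. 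Also, your rate check for the single-$\gamma_n$ term contains a small slip: substituting $\gamma_n\asymp n^{-1/2}\{\E[F_X]\}^{-1/4}$ into $\gamma_n\{\E[F_X]\}^{-1/4}$ gives $n^{-1/2}\{\E[F_X]\}^{-1/2}\lessapprox(nh_n^q)^{-1/2}$, not $n^{-1/2}$; the conclusion $o(1)$ still holds by $nh_n^q\uparrow\infty$.
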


\begin{proof}
[Proof of Lemma \ref{varest}]
Observe that
$$  \left|  \hat{\sigma}_{n}^2 - 2 \E(H_n^2)     \right|  \leq \left|   \hat{\sigma}_n^2 - S_n    \right| + \left| S_n - 2 \E(H_n^2)   \right|  = T_1 + T_2 $$
where $S_n$ is as in (\ref{sn}). By Lemma \ref{varestpre}, $T_2 = o_{\mathbb{P}} (\E(H_n^2))  $. Therefore, it suffices to verify that $T_1 = o_{\mathbb{P}} (\E(H_n^2))$. The mean value theorem implies that there exists a $\beta_*$ on the line segment connecting $\hat{\beta}$ and $\beta_0$ such that \begin{align}
 \hat{u}_i - u_i =  g(X_{i},\beta_0)- g(X_{i},\hat{\beta}) + \gamma_n \delta(X_i) = [\nabla_{\beta} g_{}(X_{i},\beta_*)]'( \beta_0 - \hat{\beta} ) + \gamma_n  \delta(X_i).  \label{mv2}
\end{align}
Since $\| \hat{\beta} - \beta_0 \|_{2} = O_\mathbb{P}(n^{-1/2})$, it suffices to work under the setting where $\hat{\beta}$ and $\beta_*$ lie in the neighborhood $\mathcal{N}$ of Assumption 5. Let $M(X)$ be as in Assumption 5.  Write the estimator as \begin{align*}
\hat{\sigma}_n^2 & =   \frac{2}{n(n-1)}  \sum_{i=1}^n \sum_{j \neq i} K^2 \bigg( \frac{X_i -  X_j}{h_n} \bigg)\hat{u}_i^2 \hat{u}_j^2         \\ &   = \frac{2}{n(n-1)}  \sum_{i=1}^n \sum_{j \neq i} K^2 \bigg( \frac{X_i -  X_j}{h_n} \bigg) \bigg \{  u_i^2 u_j^2 + (\hat{u}_{i}^2 - u_{i}^2 ) \hat{u}_{j}^2 + ( \hat{u}_{j}^2 - u_j^2  ) u_{i}^2  \bigg \} \\ & = S_n +  \frac{2}{n(n-1)}  \sum_{i=1}^n \sum_{j \neq i} K^2 \bigg( \frac{X_i -  X_j}{h_n} \bigg)  (\hat{u}_{i}^2 - u_{i}^2 ) \hat{u}_{j}^2 \\ & \; \; \; \; \; \;  \; \; \;   + \frac{2}{n(n-1)}  \sum_{i=1}^n \sum_{j \neq i} K^2 \bigg( \frac{X_i -  X_j}{h_n} \bigg)  ( \hat{u}_{j}^2 - u_j^2  ) u_{i}^2 \\ & = S_n + A_1 + A_2 .
\end{align*}
We will verify that $A_{1} = o_{\mathbb{P}}( \E(H_n^2) )$. The argument for $A_2$ is completely analogous and omitted. Observe that \begin{align}
\left| \hat{u}_i^2- u_i^2 \right| = \left| (\hat{u}_i - u_i)(\hat{u}_i  + u_i )   \right|  & = \left| ( \hat{u}_i - u_i) (\hat{u}_i - u_i + 2 u_i)        \right| \nonumber  \\ &      \leq (\hat{u}_i - u_i)^2 + 2 \left| \hat{u}_i - u_i   \right| \left| u_i \right| . \label{errorb}
\end{align}
From this bound, we obtain  \begin{align*}
\left|  \hat{u}_i^2 - u_i^2         \right| \hat{u}_j^2 & \leq \left|  \hat{u}_i^2 - u_i^2         \right| \left| \hat{u}_j^2  - u_j^2 \right| + \left|  \hat{u}_i^2 - u_i^2         \right| u_j^2 \\  & =  \left|  \hat{u}_i^2 - u_i^2         \right| \big( \left| \hat{u}_j^2  - u_j^2 \right| + u_j^2 \big)       \\ & \leq  \big[ (\hat{u}_i - u_i)^2 + 2 \left| \hat{u}_i - u_i   \right| \left| u_i \right|    \big] \big[   (\hat{u}_j - u_j)^2 + 2 \left| \hat{u}_j - u_j   \right| \left| u_j \right|   + u_j^2      \big] .
\end{align*}
From (\ref{mv2}) and Cauchy-Schwarz we have that \begin{align}
& \left|  \hat{u}_i - u_i  \right| \leq M(X_i) \| \hat{\beta} - \beta_0   \|_{2} + \gamma_n \left| \delta(X_i) \right|  \; , \\ & \left|  \hat{u}_i - u_i  \right| ^2 \leq 2 M^2(X_i) \| \hat{\beta} - \beta_0   \|_{2}^2 + 2 \gamma_n^2 \delta^2(X_i) .
\end{align}
From substituting this into the previous bound, we obtain that \begin{align*}  A_1 \leq  E_1 + E_2 + E_3 + E_4 + E_5 + E_6 
\end{align*}
where \begin{align*}
E_1 = \frac{2}{n(n-1)} \sum_{i=1}^n \sum_{j \neq i} K^2 \bigg(   \frac{X_i - X_j}{h_n}      \bigg)  &  \big[ 2 M^2(X_i) \| \hat{\beta} - \beta_0   \|_{2}^2 + 2 \gamma_n^2 \delta^2(X_i) \big ] \\ & \times   \big[ 2 M^2(X_j) \| \hat{\beta} - \beta_0   \|_{2}^2 + 2 \gamma_n^2 \delta^2(X_j) \big].
\end{align*}

\begin{align*}
E_2 = \frac{2}{n-1}\sum_{i=1}^n \sum_{j \neq i} K^2 \bigg(   \frac{X_i - X_j}{h_n}      \bigg) & \big[ 2 M^2(X_i) \| \hat{\beta} - \beta_0   \|_{2}^2 + 2 \gamma_n^2 \delta^2(X_i) \big ] \\ & \times 2 \big[ M(X_j) \| \hat{\beta} - \beta_0   \|_{2} + \gamma_n \left| \delta(X_j) \right| \big] |u_j| .
\end{align*}

\begin{align*}
E_3 = \frac{2}{n-1}\sum_{i=1}^n \sum_{j \neq i} K^2 \bigg(   \frac{X_i - X_j}{h_n}      \bigg) & \big[ 2 M^2(X_i) \| \hat{\beta} - \beta_0   \|_{2}^2 + 2 \gamma_n^2 \delta^2(X_i) \big ] u_j^2.
\end{align*}

\begin{align*}
E_4 =  \frac{2}{n-1}\sum_{i=1}^n \sum_{j \neq i} K^2 \bigg(   \frac{X_i - X_j}{h_n}      \bigg) & 2 |u_i| \big[ M(X_i) \| \hat{\beta} - \beta_0   \|_{2} + \gamma_n \left| \delta(X_i) \right|  \big] \\ & \times  \big[  2 M^2(X_j) \| \hat{\beta} - \beta_0   \|_{2}^2 + 2 \gamma_n^2 \delta^2(X_j) \big] .
\end{align*}

\begin{align*}
E_5 =  \frac{2}{n-1}\sum_{i=1}^n \sum_{j \neq i} K^2 \bigg(   \frac{X_i - X_j}{h_n}      \bigg) & 2 |u_i| \big[ M(X_i) \| \hat{\beta} - \beta_0   \|_{2} + \gamma_n \left| \delta(X_i) \right|  \big] \\ & \times  2 |u_j| \big[ M(X_j) \| \hat{\beta} - \beta_0   \|_{2} + \gamma_n \left| \delta(X_j) \right|  \big] .
\end{align*}

\begin{align*}
E_6 =  \frac{2}{n-1}\sum_{i=1}^n \sum_{j \neq i} K^2 \bigg(   \frac{X_i - X_j}{h_n}      \bigg) & 2 |u_i| \big[ M(X_i) \| \hat{\beta} - \beta_0   \|_{2} + \gamma_n \left| \delta(X_i) \right|  \big] u_j^2.
\end{align*}
We will verify that $E_i = o_{\mathbb{P}} ( \E(H_n^2)$ for $i=1,\dots,6$. Before proceeding with the bounds, we state a few preliminary facts. By Lemma 3.5 and Assumption 4(i) we have  $\E[F_X(X - h_n \iota , X + h_n \iota)]    \asymp \E[H_n^2]$. By Corollary 3.4(ii) we have $\E(H_n^2) \gtrapprox h_n^q$.

\begin{enumerate}
\item[(1)] Markov's Inequality and $\| \hat{\beta} - \beta_0  \|_{2} = O_{\mathbb{P}}(n^{-1/2}) $ imply \begin{align*}
E_1 = \; &  n^{-2} O_{\mathbb{P}} \bigg(  \E \bigg[  K^2 \bigg(  \frac{X_1 - X_2}{h_n}    \bigg) M^2(X_1) M^2(X_2)       \bigg]           \bigg) \\ & +  n^{-1} \gamma_n^2 O_{\mathbb{P}} \bigg(  \E \bigg[  K^2 \bigg(  \frac{X_1 - X_2}{h_n}    \bigg) M^2(X_1) \delta^2(X_2)       \bigg]           \bigg) \\ &  + n^{-1} \gamma_n^2 O_{\mathbb{P}} \bigg(  \E \bigg[  K^2 \bigg(  \frac{X_1 - X_2}{h_n}    \bigg) M^2(X_2) \delta^2(X_1)       \bigg]           \bigg) \\ & + \gamma_n^4 O_{\mathbb{P}} \bigg(  \E \bigg[  K^2 \bigg(  \frac{X_1 - X_2}{h_n}    \bigg) \delta^2(X_1) \delta^2(X_2)       \bigg]           \bigg).
\end{align*}
By assumption $M(X), \delta(X) \in L^2(X)$. It follows, by Lemma \ref{aux-new}, that all the expectations above are $O(1)$. Substituting $ \gamma_n \lessapprox n^{-1/2} \{ \E[ F_{X}(X - h_n \iota , X + h_n \iota) ]   \}^{-1/4}$ yields \begin{align*} \frac{E_1}{\E(H_n^2)} = O_{\mathbb{P}} \bigg( \frac{1}{n^2 \E(H_n^2) } + \frac{1}{n^2 \{ \E(H_n^2)  \}^{3/2} }  + \frac{1}{n^2 \{ \E(H_n^2) \}^2}              \bigg).
\end{align*}
Since $\E(H_n^2) \gtrapprox h_n^q$ and $n h_n^{q} \uparrow \infty$, it follows that $E_1 = o_{\mathbb{P}}(\E(H_n^2)) $.  \\

\item[(2)]
 Markov's Inequality, Assumption 2(ii) and $\| \hat{\beta} - \beta_0  \|_{2} = O_{\mathbb{P}}(n^{-1/2}) $ imply \begin{align*}
E_2 = \; &  n^{-3/2} O_{\mathbb{P}} \bigg(  \E \bigg[  K^2 \bigg(  \frac{X_1 - X_2}{h_n}    \bigg) M^2(X_1) M(X_2)       \bigg]           \bigg) \\ & +  n^{-1} \gamma_n  O_{\mathbb{P}} \bigg(  \E \bigg[  K^2 \bigg(  \frac{X_1 - X_2}{h_n}    \bigg) M^2(X_1) \left| \delta (X_2) \right|       \bigg]           \bigg) \\ &  + n^{-1/2} \gamma_n^2 O_{\mathbb{P}} \bigg(  \E \bigg[  K^2 \bigg(  \frac{X_1 - X_2}{h_n}    \bigg) \delta^2(X_1) M(X_2)       \bigg]           \bigg) \\ & + \gamma_n^3 O_{\mathbb{P}} \bigg(  \E \bigg[  K^2 \bigg(  \frac{X_1 - X_2}{h_n}    \bigg) \delta^2(X_1) \left| \delta(X_2) \right|       \bigg]           \bigg).
\end{align*}
By Lemma \ref{aux-new}, the first three expectations  are $O(1)$. For the last expectation, we use Lemma \ref{aux-new} and $\| \delta \|_{\infty} < \infty $ to obtain $$  \E \bigg[  K^2 \bigg(  \frac{X_1 - X_2}{h_n}    \bigg) \delta^2(X_1) \left| \delta(X_2) \right|       \bigg] \lessapprox \E[ F_{X}(X - h_n \iota , X + h_n \iota) ]   \lessapprox \E(H_n^2).   $$

Substituting $ \gamma_n \lessapprox n^{-1/2} \{ \E[ F_{X}(X - h_n \iota , X + h_n \iota) ]   \}^{-1/4}$ yields  $$  \frac{E_2}{\E(H_n^2)} = O_{\mathbb{P}} \bigg(  \frac{1}{n^{3/2} \E(H_n^2)} + \frac{1}{n^{3/2} \{ \E(H_n^2)  \}^{5/4}  } + \frac{1}{n^{3/2} \{ \E(H_n^2)   \}^{3/2}  } + \frac{1}{ n^{3/2} \{ \E(H_n^2)   \}^{3/4}   }             \bigg)    $$
Since $\E(H_n^2) \gtrapprox h_n^q$ and $n h_n^{q} \uparrow \infty$, it follows that $E_2 = o_{\mathbb{P}}(\E(H_n^2)) $. \\

\item[(3)]   Markov's Inequality, Assumption 2(ii) and $\| \hat{\beta} - \beta_0  \|_{2} = O_{\mathbb{P}}(n^{-1/2}) $ imply  \begin{align*}
E_3 & =  n^{-1} O_{\mathbb{P}} \bigg( \E \bigg[ K^2 \bigg( \frac{X_1 - X_2}{h_n}    \bigg)   M^2(X_1)     \bigg]    \bigg) \\ & \;  \;  +  \gamma_n^2 O_{\mathbb{P}} \bigg( \E \bigg[ K^2 \bigg( \frac{X_1 - X_2}{h_n}    \bigg)   \delta^2(X_1)     \bigg]    \bigg).
\end{align*}

By Lemma \ref{aux-new},  the first  expectation  is $O(1)$. For the second expectation, we use Lemma \ref{aux-new} and $\| \delta \|_{\infty} < \infty $ to obtain $$  \E \bigg[  K^2 \bigg(  \frac{X_1 - X_2}{h_n}    \bigg) \delta^2(X_1) \bigg] \lessapprox \E[ F_{X}(X - h_n \iota , X + h_n \iota) ]   \lessapprox \E(H_n^2).   $$
Substituting $ \gamma_n \lessapprox n^{-1/2} \{ \E[ F_{X}(X - h_n \iota , X + h_n \iota) ]   \}^{-1/4}$ yields  $$  \frac{E_3}{ \E(H_n^2)} = O_{\mathbb{P}} \bigg(  \frac{1}{n \E(H_n^2)} + \frac{1}{n \{ \E(H_n^2)  \}^{1/2} }   \bigg).  $$
Since $\E(H_n^2) \gtrapprox h_n^q$ and $n h_n^{q} \uparrow \infty$, it follows that $E_3 = o_{\mathbb{P}}(\E(H_n^2)) $. \\

\item[(4)] The argument to show $E_4 = o_{\mathbb{P}}(\E(H_n^2)) $ is completely analogous to the one used for $E_{2}$. \\

\item[(5)] Markov's Inequality, Assumption 2(ii) and $\| \hat{\beta} - \beta_0  \|_{2} = O_{\mathbb{P}}(n^{-1/2}) $ imply 

\begin{align*}
E_5 = \; &  n^{-1} O_{\mathbb{P}} \bigg(  \E \bigg[  K^2 \bigg(  \frac{X_1 - X_2}{h_n}    \bigg) M(X_1) M(X_2)       \bigg]           \bigg) \\ & +  n^{-1/2} \gamma_n  O_{\mathbb{P}} \bigg(  \E \bigg[  K^2 \bigg(  \frac{X_1 - X_2}{h_n}    \bigg) M(X_1) \left| \delta (X_2) \right|       \bigg]           \bigg) \\ &  + n^{-1/2} \gamma_n O_{\mathbb{P}} \bigg(  \E \bigg[  K^2 \bigg(  \frac{X_1 - X_2}{h_n}    \bigg) \left|\delta(X_1) \right| M(X_2)       \bigg]           \bigg) \\ & + \gamma_n^2 O_{\mathbb{P}} \bigg(  \E \bigg[  K^2 \bigg(  \frac{X_1 - X_2}{h_n}    \bigg) \left|\delta(X_1) \right| \left| \delta(X_2) \right|       \bigg]           \bigg).
\end{align*}
By Lemma \ref{aux-new},  the first  expectation  is $O(1)$. For the other expectations, we use Lemma \ref{aux-new} and Lemma \ref{aux-estimates}(iii) to obtain \begin{align*}
 & \E \bigg[  K^2 \bigg(  \frac{X_1 - X_2}{h_n}    \bigg) M(X_1) \left| \delta (X_2) \right|       \bigg]   +  \E \bigg[  K^2 \bigg(  \frac{X_1 - X_2}{h_n}    \bigg) \left| \delta(X_1) \right|  M(X_2)      \bigg] \\ & + \E \bigg[  K^2 \bigg(  \frac{X_1 - X_2}{h_n}    \bigg) \left|\delta(X_1) \right| \left| \delta(X_2) \right|       \bigg]      \\ & \lessapprox \E \big[  \left| \delta(X) \right| \Omega_{M}(X- h \iota , X + h \iota)\big] + \E \big[  \left| \delta(X) \right| \Omega_{\left| \delta \right|}(X- h \iota , X + h \iota)\big]    \\  & = o \big( \{\E[F_X(X - h_n \iota , X + h_n \iota)] \}^{1/2}  \big)  \\ & = o \big( \{  \E(H_n^2) \}^{1/2}   \big).
\end{align*}
Substituting $ \gamma_n \lessapprox n^{-1/2} \{ \E[ F_{X}(X - h_n \iota , X + h_n \iota) ]   \}^{-1/4}$ yields  $$  \frac{E_5}{ \E(H_n^2)} = O_{\mathbb{P}} \bigg( \frac{1}{n \E(H_n^2)}  + \frac{ \{ \E(H_n^2)   \}^{1/2}   }{n \{ \E(H_n^2)     \}^{5/4} }    +  \frac{ \{ \E(H_n^2)   \}^{1/2}   }{n \{  \E(H_n^2)    \}^{3/2}}       \bigg)   .     $$
Since $\E(H_n^2) \gtrapprox h_n^q$ and $n h_n^{q} \uparrow \infty$, it follows that $E_5 = o_{\mathbb{P}}(\E(H_n^2)) $. \\

\item[(6)] Markov's Inequality, Assumption 2(ii) and $\| \hat{\beta} - \beta_0  \|_{2} = O_{\mathbb{P}}(n^{-1/2}) $ imply  \begin{align*}
E_6 =  n^{-1/2} O_{\mathbb{P}} \bigg(  \E \bigg[  K^2 \bigg(  \frac{X_1 - X_2}{h_n}    \bigg) M(X_1)     \bigg]        \bigg) + \gamma_n O_{\mathbb{P}} \bigg(  \E \bigg[  K^2 \bigg(  \frac{X_1 - X_2}{h_n}    \bigg) \left| \delta(X_1) \right|     \bigg]        \bigg).
\end{align*}
Lemma \ref{aux-new} and \ref{aux-estimates}(i) imply \begin{align*}
 & \E \bigg[  K^2 \bigg(  \frac{X_1 - X_2}{h_n}    \bigg) M(X_1)     \bigg]   +   \E \bigg[  K^2 \bigg(  \frac{X_1 - X_2}{h_n}    \bigg) \left| \delta(X_1) \right|     \bigg]       \\ & \lessapprox  \E \big[  M(X) F_X(X - h \iota , X + h \iota)         \big] + \E \big[  \left| \delta(X) \right| F_X(X - h \iota , X + h \iota)         \big] \\ & = o \big( \{\E[F_X(X - h_n \iota , X + h_n \iota)] \}^{3/4}  \big)  \\ & = o \big( \{  \E(H_n^2) \}^{3/4}   \big).
\end{align*}

Substituting $ \gamma_n \lessapprox n^{-1/2} \{ \E[ F_{X}(X - h_n \iota , X + h_n \iota) ]   \}^{-1/4}$ yields   $$   \frac{E_6}{ \E(H_n^2)} = O_{\mathbb{P}} \bigg(  \frac{ \{ \E(H_n^2)   \}^{3/4} }{\sqrt{n} \E(H_n^2)  }   + \frac{\{ \E(H_n^2)   \}^{3/4}}{ \sqrt{n} \{ \E(H_n^2)   \}^{5/4}  }   \bigg) .    $$
Since $\E(H_n^2) \gtrapprox h_n^q$ and $n h_n^{q} \uparrow \infty$, it follows that $E_6 = o_{\mathbb{P}}(\E(H_n^2)) $.

\end{enumerate}

\end{proof}

\begin{lemma}
\label{varest2}
Suppose the conditions of Theorem $3.11$ hold and $h_n \downarrow 0 \,,\, n h_n^{q} \uparrow \infty$. Assume  the alternative hypothesis $H_1$ in (\ref{reg}) holds with $\delta \in L^{2}(X)$ and  $ \gamma_n \lessapprox n^{-1/2} \{ \E[ F_X(X - h_n \iota , X + h_n \iota) ]   \}^{1/4} h_n^{-s_{\delta}q/2} $. Then

\begin{equation*}
\label{varequiv2}
\left|  \hat{\sigma}_{n}^2 - 2 \E(H_n^2)     \right|  =o_{\mathbb{P}} \big(  \E(H_n^2)   \big) .
\end{equation*}
\end{lemma}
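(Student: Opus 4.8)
The plan is to follow the architecture of the proof of Lemma~\ref{varest}, adjusting only the two places where $\delta\in L^\infty(X)$ and the bandwidth‑type bound on $\gamma_n$ were used there. Since $F_X$ is a finite mixture of members of $\mathcal{D}(s_t)$, Lemma~\ref{d-lemma} gives $F_X\in\mathcal{D}(s)$ with $s=\min_{t\in T}s_t$, so Assumption~\ref{d-class} holds; by Lemma~\ref{moments copy(2)}(i), Assumption~\ref{d-class}(i) and $F_X\in\mathcal{D}(s)$ one has $\E(H_n^2)\asymp\E[F_X(X-h_n\iota,X+h_n\iota)]\asymp h_n^{sq}\gtrapprox h_n^q$ (the last step also being Corollary~\ref{momlimit}(ii)). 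Lemma~\ref{varestpre} then yields $|S_n-2\E(H_n^2)|=o_{\mathbb{P}}(\E(H_n^2))$, so via the mean‑value expansion (\ref{mv2})--(\ref{errorb}) I would write $\hat{\sigma}_n^2=S_n+A_1+A_2$ and reduce, exactly as in Lemma~\ref{varest}, to $A_1=o_{\mathbb{P}}(\E(H_n^2))$ (the bound for $A_2$ being analogous), with the same termwise decomposition $A_1\le E_1+\dots+E_6$.

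The key new analytic input is a restricted small‑ball estimate: for $\psi\in\{\delta^2,|\delta|\}$ one has $\E[\psi(X)F_X(X-h_n\iota,X+h_n\iota)]\lessapprox h_n^{s_\delta q}$, and by Cauchy--Schwarz on the inner cube $\E[|\delta(X)|\,\Omega_{|\delta|}(X-h_n\iota,X+h_n\iota)]\lessapprox h_n^{s_\delta q/2}$ and $\E[\delta^2(X)\,\Omega_{M^2}(X-h_n\iota,X+h_n\iota)]\lessapprox h_n^{s_\delta q/2}$ (the latter using $M\in L^{4+\epsilon}(X)$). These follow by decomposing $F_X(x-h_n\iota,x+h_n\iota)=\sum_t\alpha_tF_t(x-h_n\iota,x+h_n\iota)$: by (\ref{zero}) the weight $\psi$ is confined, up to an $F_X$‑null set, to $\bigcup_{t\in R}\mathcal{S}_{t,\delta}$, so one may pass to $\E_{X\sim F_R}[\,\cdot\,]$, discard the components with $s_t<s_\delta$ using the separation condition (\ref{interzero}), and bound each remaining component by $M_t(2h_n)^{s_tq}\le M_t(2h_n)^{s_\delta q}$ via Definition~\ref{ds}. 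These play the role that Assumption~\ref{d-class}(ii) and Lemma~\ref{aux-estimates} played in Lemma~\ref{varest}, but with the smaller quantity $h_n^{s_\delta q}$ in place of $\E[F_X(X-h_n\iota,X+h_n\iota)]\asymp h_n^{sq}$ — precisely the slack that accommodates the larger sequence $\gamma_n\lessapprox n^{-1/2}\{\E[F_X(X-h_n\iota,X+h_n\iota)]\}^{1/4}h_n^{-s_\delta q/2}\asymp n^{-1/2}h_n^{sq/4}h_n^{-s_\delta q/2}$. Concretely, for each $\delta$‑term in $E_1,\dots,E_6$ I would apply Lemma~\ref{aux-new} to dominate the relevant expectation by a cube functional of $F_X$ (bounding any $|u_i|$, $u_i^2$ factors by $\sqrt{B},B$ after conditioning on $X$), insert the above estimates, substitute the expression for $\gamma_n$ and $\E(H_n^2)\asymp h_n^{sq}$, and check that the resulting power of $h_n$ divided by the appropriate power of $n$ tends to zero; every such check reduces to $nh_n^{aq}\uparrow\infty$ for some $a\in(0,1]$, which holds because $nh_n^q\uparrow\infty$. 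The terms built only from $M$ and the $u_i$ are handled verbatim as in Lemma~\ref{varest}, using $M\in L^{4+\epsilon}(X)\subseteq L^2(X)$.

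The main obstacle is the terms carrying two (or, through the $\gamma_n^3$ factors, effectively three) powers of $\delta$ — notably the $\gamma_n^4\,\delta^2(X_1)\delta^2(X_2)$ piece of $E_1$ and the $\gamma_n^3$ pieces of $E_2$ and $E_4$ — since $\delta\in L^2(X)$ does \emph{not} give $\delta\in L^3$ or $L^4$, so one cannot factor out $\|\delta\|_\infty$ as in Lemma~\ref{varest}. Here I would reuse the truncation device from the proof of Lemma~\ref{aux-estimates}: split $\delta^2=\delta^2\mathbbm{1}\{\delta^2\le\kappa_n\}+\delta^2\mathbbm{1}\{\delta^2>\kappa_n\}$; the bounded part contributes $\lessapprox\kappa_n h_n^{s_\delta q}$ through the restricted estimate, while the unbounded part is controlled, after Lemma~\ref{aux-new}, by $\sup_x\Omega_{\delta^2\mathbbm{1}\{\delta^2>\kappa_n\}}(x-h_n\iota,x+h_n\iota)\le\E[\delta^2\mathbbm{1}\{\delta^2>\kappa_n\}]$, which vanishes as $\kappa_n\uparrow\infty$ by dominated convergence ($\delta^2\in L^1(X)$). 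Choosing $\kappa_n\uparrow\infty$ slowly enough that $\kappa_n=o(n^2h_n^{s_\delta q})$ (e.g.\ $\kappa_n=\log n$) makes both contributions $o_{\mathbb{P}}(\E(H_n^2))$. Apart from this truncation bookkeeping, and the careful verification of the restricted small‑ball estimate from Definition~\ref{ds}, (\ref{zero}) and (\ref{interzero}) (where one must take care that the ``effective'' support of $\delta$ under the $dF_X$ integral lies in $\bigcup_{t\in R}\mathcal{S}_{t,\delta}$), the remaining estimates are routine and mirror those in the proof of Lemma~\ref{varest}.
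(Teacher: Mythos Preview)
Your proposal is correct and follows essentially the same architecture as the paper: reduce to the $E_1,\dots,E_6$ decomposition of Lemma~\ref{varest}, and replace the $L^\infty$-based bounds on the $\delta$-terms by a restricted small-ball estimate on $\mathcal{S}_\delta$ (the paper packages this as the pointwise a.s.\ bound $\mathbbm{1}\{X\in\mathcal{S}_\delta\}F_X(X-h\iota,X+h\iota)\le M(2h)^{s_\delta q}$, which is exactly what your decomposition of $F_X$ via (\ref{zero})--(\ref{interzero}) yields).

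The one place where you work harder than necessary is the treatment of the $\gamma_n^4\,\delta^2(X_1)\delta^2(X_2)$ term in $E_1$ (and the analogous $\gamma_n^3$ pieces): the paper does not truncate, but simply uses the crude bound $\E[K^2((X_1-X_2)/h_n)\delta^2(X_1)\delta^2(X_2)]\lessapprox\E[\delta^2(X)\,\Omega_{\delta^2}(X-h_n\iota,X+h_n\iota)]\le(\E[\delta^2])^2=O(1)$ from Lemma~\ref{aux-new}, and for $E_2$'s $\delta^2|\delta|$ term applies Cauchy--Schwarz inside the cube to get $\Omega_{|\delta|}\le(\E[\delta^2])^{1/2}\sqrt{F_X(\cdot)}$ and then the pointwise bound, giving $h_n^{s_\delta q/2}$. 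After multiplying by the appropriate power of $\gamma_n$ and dividing by $\E(H_n^2)\asymp h_n^{sq}$, every term is controlled by $(nh_n^{aq})^{-b}$ for some $a\in(0,1]$, $b>0$. Your truncation device is valid but unnecessary here; the paper's direct bounds are shorter. The $M\in L^{4+\epsilon}$ hypothesis you invoke is likewise not needed for this lemma (the paper uses only $M\in L^2$ throughout Lemma~\ref{varest2}; the stronger moment is used later in the proof of Theorem~\ref{pow2} for the $A_4$ term).
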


\begin{proof}
[Proof of Lemma \ref{varest2}] 
The proof is analogous to that of Lemma \ref{varest}.  The main difference being that we  allow for $\delta \in L^2(X)$ and that the upper bound on the rate $\gamma_n$ in this Lemma may be larger than allowed for in Lemma \ref{varest}. 

 Let $E_1, \dots, E_6$ be as defined in the proof of Lemma \ref{varest}. We will verify that $E_i = o_{\mathbb{P}} ( \E(H_n^2)$ for $i=1,\dots,6$.  Before proceeding with the bounds, we state a few preliminary facts.  As shown in the proof of Theorem $3.11$, there exists a  $ M < \infty$ such that  \begin{align}
\label{delres22} \mathbb{P} \bigg(  \mathbbm{1} \big \{ X \in \mathcal{S}_{\delta}       \big \}   \frac{F_X(X - h \iota , X + h \iota)}{(2h)^{s_{\delta}q}}  \leq M          \bigg) = 1
\end{align}
holds for all sufficiently small $h > 0 $.  By Lemma 3.7, we have $F_{X} \in \mathcal{D}(s)$ for some $s \in (0,1]$ and $\E[F_X(X - h_n \iota , X + h_n \iota)] \asymp h_n^{sq}$.  Assumptions (4,  6) are automatically satisfied.  Then, by Lemma 3.5 and Assumption 4(i), we have $ \E[H_n^2] \asymp h_n^{sq} $ as well.

\begin{enumerate}
\item[(1)] Markov's Inequality and $\| \hat{\beta} - \beta_0  \|_{2} = O_{\mathbb{P}}(n^{-1/2}) $ imply \begin{align*}
E_1 = \; &  n^{-2} O_{\mathbb{P}} \bigg(  \E \bigg[  K^2 \bigg(  \frac{X_1 - X_2}{h_n}    \bigg) M^2(X_1) M^2(X_2)       \bigg]           \bigg) \\ & +  n^{-1} \gamma_n^2 O_{\mathbb{P}} \bigg(  \E \bigg[  K^2 \bigg(  \frac{X_1 - X_2}{h_n}    \bigg) M^2(X_1) \delta^2(X_2)       \bigg]           \bigg) \\ &  + n^{-1} \gamma_n^2 O_{\mathbb{P}} \bigg(  \E \bigg[  K^2 \bigg(  \frac{X_1 - X_2}{h_n}    \bigg) M^2(X_2) \delta^2(X_1)       \bigg]           \bigg) \\ & + \gamma_n^4 O_{\mathbb{P}} \bigg(  \E \bigg[  K^2 \bigg(  \frac{X_1 - X_2}{h_n}    \bigg) \delta^2(X_1) \delta^2(X_2)       \bigg]           \bigg).
\end{align*}
By assumption $M(X), \delta(X) \in L^2(X)$. It follows, by Lemma \ref{aux-new}, that all the expectations above are $O(1)$.  Substituting $ \gamma_n \lessapprox n^{-1/2} \{ \E[ F_X(X - h_n \iota , X + h_n \iota) ]   \}^{1/4} h_n^{-s_{\delta}q/2}$ yields \begin{align*} \frac{E_1}{\E(H_n^2)} = O_{\mathbb{P}} \bigg( \frac{1}{n^2 h_n^{sq}  } + \frac{  h_n^{sq/2} }{n^2  h_n^{sq}  h_n^{s_{\delta}q}  }  + \frac{h_n^{sq}}{n^2 h_n^{sq}   h_n^{2 s_{\delta}q }}              \bigg).
\end{align*}
Since $s \leq s_{\delta}\leq 1 $ and $n h_n^{q} \uparrow \infty$, it follows that $E_1 = o_{\mathbb{P}}(\E(H_n^2)) $.  \\

\item[(2)]
 Markov's Inequality, Assumption 2(ii) and $\| \hat{\beta} - \beta_0  \|_{2} = O_{\mathbb{P}}(n^{-1/2}) $ imply \begin{align*}
E_2 = \; &  n^{-3/2} O_{\mathbb{P}} \bigg(  \E \bigg[  K^2 \bigg(  \frac{X_1 - X_2}{h_n}    \bigg) M^2(X_1) M(X_2)       \bigg]           \bigg) \\ & +  n^{-1} \gamma_n  O_{\mathbb{P}} \bigg(  \E \bigg[  K^2 \bigg(  \frac{X_1 - X_2}{h_n}    \bigg) M^2(X_1) \left| \delta (X_2) \right|       \bigg]           \bigg) \\ &  + n^{-1/2} \gamma_n^2 O_{\mathbb{P}} \bigg(  \E \bigg[  K^2 \bigg(  \frac{X_1 - X_2}{h_n}    \bigg) \delta^2(X_1) M(X_2)       \bigg]           \bigg) \\ & + \gamma_n^3 O_{\mathbb{P}} \bigg(  \E \bigg[  K^2 \bigg(  \frac{X_1 - X_2}{h_n}    \bigg) \delta^2(X_1) \left| \delta(X_2) \right|       \bigg]           \bigg).
\end{align*}
By Lemma \ref{aux-new},  the first three expectations  are $O(1)$ and \begin{align*}
 \E \bigg[  K^2 \bigg(  \frac{X_1 - X_2}{h_n}    \bigg) \delta^2(X_1) \left| \delta(X_2) \right|       \bigg]     \lessapprox \E \big[  \delta^2(X) \Omega_{\left|\delta \right|}(X- h_n \iota , X + h_n \iota)\big]   .
\end{align*}
By Cauchy-Schwarz, we obtain
\begin{align*}    \Omega_{|\delta|}(X -  h_n \iota , X +  h_n \iota)  \leq  \sqrt{ \E[ \delta^2(X)  ]  } \sqrt{F_X(X  -  h_n \iota , X +  h_n \iota)}  .
\end{align*}
From this bound and  (\ref{delres22}) it follows that \begin{align*}
\E \bigg[  K^2 \bigg(  \frac{X_1 - X_2}{h_n}    \bigg) \delta^2(X_1) \left| \delta(X_2) \right|       \bigg] & \lessapprox \E  \big[  \mathbbm{1} \{ X \in \mathcal{S}_{\delta}  \}   \delta^2(X) \sqrt{F_X(X  -  h_n \iota , X +  h_n \iota)}     \; \big  ] \\ & \lessapprox h_n^{s_{\delta}q/2}.
\end{align*}

Substituting $ \gamma_n \lessapprox n^{-1/2} \{ \E[ F_X(X - h_n \iota , X + h_n \iota) ]   \}^{1/4} h_n^{-s_{\delta}q/2}$ yields $$  \frac{E_2}{\E(H_n^2)} = O_{\mathbb{P}} \bigg(  \frac{1}{n^{3/2}  h_n^{sq}  } + \frac{h_n^{sq/4}}{n^{3/2} h_n^{s_{\delta}q/2} h_n^{sq}    } + \frac{h_n^{sq/2}}{n^{3/2} h_n^{s_{\delta}q} h_n^{sq}  } + \frac{h_n^{3sq/4} h_n^{s_{\delta}q/2}  }{ n^{3/2} h_n^{1.5 s_{\delta}q}   h_n^{sq}    }             \bigg)    $$
Since $s \leq s_{\delta}\leq 1 $ and $n h_n^{q} \uparrow \infty$, it follows that $E_2 = o_{\mathbb{P}}(\E(H_n^2)) $. \\

\item[(3)]

 Markov's Inequality, Assumption 2(ii) and $\| \hat{\beta} - \beta_0  \|_{2} = O_{\mathbb{P}}(n^{-1/2}) $ imply  \begin{align*}
E_3 & =  n^{-1} O_{\mathbb{P}} \bigg( \E \bigg[ K^2 \bigg( \frac{X_1 - X_2}{h_n}    \bigg)   M^2(X_1)     \bigg]    \bigg) \\ & \;  \;  +  \gamma_n^2 O_{\mathbb{P}} \bigg( \E \bigg[ K^2 \bigg( \frac{X_1 - X_2}{h_n}    \bigg)   \delta^2(X_1)     \bigg]    \bigg).
\end{align*}

By Lemma \ref{aux-new},  the first  expectation  is $O(1)$ and \begin{align*}
 \E \bigg[  K^2 \bigg(  \frac{X_1 - X_2}{h_n}    \bigg) \delta^2(X_1)       \bigg]    & \lessapprox \E \big[ \mathbbm{1} \{ X \in \mathcal{S}_{\delta}   \} \delta^2(X) F_{X} (X- h_n \iota , X + h_n \iota)\big]  \\ & \lessapprox h_n^{ s_{\delta}q} \; ,
\end{align*}
where the last expression follows from (\ref{delres22}). Substituting $ \gamma_n \lessapprox n^{-1/2} \{ \E[ F_X(X - h_n \iota , X + h_n \iota) ]   \}^{1/4} h_n^{-s_{\delta}q/2}$ yields $$  \frac{E_3}{ \E(H_n^2)} = O_{\mathbb{P}} \bigg(  \frac{1}{n h_n^{sq}   } + \frac{ h_n^{sq/2} h_n^{s_{\delta}q}  }{n h_n^{s_{\delta}q} h_n^{sq}   }   \bigg).  $$
Since $s \leq s_{\delta}\leq 1 $ and $n h_n^{q} \uparrow \infty$, it follows that $E_3 = o_{\mathbb{P}}(\E(H_n^2)) $. \\

\item[(4)]

 The argument to show $E_4 = o_{\mathbb{P}}(\E(H_n^2)) $ is completely analogous to the one used for $E_{2}$. \\
 
 \item[(5)] Markov's Inequality, Assumption 2(ii) and $\| \hat{\beta} - \beta_0  \|_{2} = O_{\mathbb{P}}(n^{-1/2}) $ imply 

\begin{align*}
E_5 = \; &  n^{-1} O_{\mathbb{P}} \bigg(  \E \bigg[  K^2 \bigg(  \frac{X_1 - X_2}{h_n}    \bigg) M(X_1) M(X_2)       \bigg]           \bigg) \\ & +  n^{-1/2} \gamma_n  O_{\mathbb{P}} \bigg(  \E \bigg[  K^2 \bigg(  \frac{X_1 - X_2}{h_n}    \bigg) M(X_1) \left| \delta (X_2) \right|       \bigg]           \bigg) \\ &  + n^{-1/2} \gamma_n O_{\mathbb{P}} \bigg(  \E \bigg[  K^2 \bigg(  \frac{X_1 - X_2}{h_n}    \bigg) \left|\delta(X_1) \right| M(X_2)       \bigg]           \bigg) \\ & + \gamma_n^2 O_{\mathbb{P}} \bigg(  \E \bigg[  K^2 \bigg(  \frac{X_1 - X_2}{h_n}    \bigg) \left|\delta(X_1) \right| \left| \delta(X_2) \right|       \bigg]           \bigg).
\end{align*}
By Lemma \ref{aux-new},  the first  expectation  is $O(1)$ and \begin{align*}
 & \E \bigg[  K^2 \bigg(  \frac{X_1 - X_2}{h_n}    \bigg) M(X_1) \left| \delta (X_2) \right|       \bigg]   +  \E \bigg[  K^2 \bigg(  \frac{X_1 - X_2}{h_n}    \bigg) \left| \delta(X_1) \right|  M(X_2)      \bigg] \\ & + \E \bigg[  K^2 \bigg(  \frac{X_1 - X_2}{h_n}    \bigg) \left|\delta(X_1) \right| \left| \delta(X_2) \right|       \bigg]      \\ & \lessapprox \E \big[  \left| \delta(X) \right| \Omega_{M}(X- h \iota , X + h \iota)\big] + \E \big[  \left| \delta(X) \right| \Omega_{\left| \delta \right|}(X- h \iota , X + h \iota)\big]    
\end{align*}
Repeating the argument from the bounds for $E_2$ yields \begin{align*}
  &  \E \big[  \left| \delta(X) \right| \Omega_{M}(X- h \iota , X + h \iota)\big] + \E \big[  \left| \delta(X) \right| \Omega_{\left| \delta \right|}(X- h \iota , X + h \iota)\big]    \\ & \lessapprox h_n^{s_{\delta}q/2}.
\end{align*}

Substituting $ \gamma_n \lessapprox n^{-1/2} \{ \E[ F_X(X - h_n \iota , X + h_n \iota) ]   \}^{1/4} h_n^{-s_{\delta}q/2}$ yields  $$  \frac{E_5}{ \E(H_n^2)} = O_{\mathbb{P}} \bigg( \frac{1}{n h_n^{sq}  }  + \frac{ h_n^{sq/4} h_n^{s_{\delta}q/2}       }{n  h_n^{s_{\delta}q/2} h_n^{sq}   }    +  \frac{ h_n^{sq/2} h_n^{s_{\delta}q/2}   }{n h_n^{s_{\delta}q} h_n^{sq}   }       \bigg)   .     $$
Since $s \leq s_{\delta}\leq 1 $ and $n h_n^{q} \uparrow \infty$, it follows that $E_5 = o_{\mathbb{P}}(\E(H_n^2)) $. \\

\item[(6)]

Markov's Inequality, Assumption 2(ii) and $\| \hat{\beta} - \beta_0  \|_{2} = O_{\mathbb{P}}(n^{-1/2}) $ imply  \begin{align*}
E_6 =  n^{-1/2} O_{\mathbb{P}} \bigg(  \E \bigg[  K^2 \bigg(  \frac{X_1 - X_2}{h_n}    \bigg) M(X_1)     \bigg]        \bigg) + \gamma_n O_{\mathbb{P}} \bigg(  \E \bigg[  K^2 \bigg(  \frac{X_1 - X_2}{h_n}    \bigg) \left| \delta(X_1) \right|     \bigg]        \bigg).
\end{align*}

By Lemma \ref{aux-new}, $F_{X} \in \mathcal{D}(s)$ and (\ref{delres}) we obtain \begin{align*}
 & \E \bigg[  K^2 \bigg(  \frac{X_1 - X_2}{h_n}    \bigg) M(X_1)     \bigg]   +   \E \bigg[  K^2 \bigg(  \frac{X_1 - X_2}{h_n}    \bigg) \left| \delta(X_1) \right|     \bigg]       \\ & \lessapprox  \E \big[  M(X) F_X(X - h \iota , X + h \iota)         \big] + \E \big[  \left| \delta(X) \right| F_X(X - h \iota , X + h \iota)         \big] \\ & \lessapprox h_n^{sq} + h_n^{s_{\delta}q} \\ & \lessapprox h_n^{s q}.
\end{align*}

Substituting $ \gamma_n \lessapprox n^{-1/2} \{ \E[ F_X(X - h_n \iota , X + h_n \iota) ]   \}^{1/4} h_n^{-s_{\delta}q/2}$ yields   $$   \frac{E_6}{ \E(H_n^2)} = O_{\mathbb{P}} \bigg(  \frac{ h_n^{sq} }{\sqrt{n}  h_n^{sq}  }   + \frac{h_n^{sq} h_n^{sq/4}  }{ \sqrt{n} h_n^{s_{\delta}q/2} h_n^{sq}    }   \bigg) .    $$
Since $s \leq s_{\delta}\leq 1 $ and $n h_n^{q} \uparrow \infty$, it follows that $E_6 = o_{\mathbb{P}}(\E(H_n^2)) $.

\end{enumerate}

\end{proof}

\section{Proofs}
\label{sec2sup}

\subsection{Lemma 3.2}

\begin{proof}
[Proof of Lemma 3.2] The function $A\rightarrow
\int_{A}f(x)dF_{X}(x) $ is a finite Borel measure on
$\mathbb{R}^{q}$, which we will refer to as $\Omega_{f}$. Since $g$ has support contained in the closure of $\mathcal{O}$, the expectation can be expressed as
\[
\int_{\mathbb{R}^{q}}f(x)g(x)dF_{X}(x)=\int_{ \R^q}   g(x) \prod_{i=1}^q  \mathbbm{1} \{l_i \leq x_i \leq u_i \} d\Omega
_{f}(x) .
\]

At the boundary of $\mathcal{O}$, we define the mixed partials of $g$ through continuous extension. Fix any  $x = (x_1, \dots , x_q) \in  \mathcal{O}  $. The assumed hypothesis on $g$ and its mixed partials imply that 
\[
g(x)=(-1)^{q}\int_{x_{q}}^{u_{q}}\dots\int_{x_{1}}^{u_{1}}\partial
_{t}g(t)dt_{1}\dots dt_{q}.
\]
Substitution of this identity and an application of Fubini's Theorem yields
\begin{align*}
\int_{\mathbb{R}^{q}}f(x)g(x)d F_{X}(x)  &  = (-1)^{q} \int_{ \R^q }
\int_{\mathbb{R}^{q}} \prod_{i=1}^{q}  \mathbbm{1} \{l_i \leq x_i \leq u_i \}    \mathbbm{1} \{x_{i} \leq  t_{i} \leq u_{i}\} \partial_{t} g(t) dt d
\Omega_{f}(x)\\
&  = (-1)^{q} \int_{\mathbb{R}^{q}} \int_{\mathbb{R}^{q}} \prod_{i=1}^{q} 
\mathbbm{1} \{l_{i} \leq x_{i} \leq u_{i} \} \mathbbm{1}\{x_{i} \leq t_{i} \leq
u_{i} \} \partial_{t} g(t) d \Omega_{f}(x) dt\\ & = (-1)^{q} \int_{\mathbb{R}^{q}} \int_{\mathbb{R}^{q}} \prod_{i=1}^{q} 
\mathbbm{1} \{l_{i} \leq x_{i} \leq t_{i} \} \mathbbm{1} \{l_i \leq  t_i \leq u_i  \}  \partial_{t} g(t) d \Omega_{f}(x) dt \\
&  = (-1)^{q}\int_{\R^q}  \prod_{i=1}^q  \mathbbm{1} \big \{ l_i \leq t_i \leq u_i   \big \}   \Omega_{f}(l,t)\partial_{t}g(t)dt .
\end{align*}

\end{proof}

\subsection{Lemma 3.3}
\begin{proof}
[Proof of Lemma 3.3]
The proof of part $(i)$ is provided in the main text. We focus on $(ii)$ and $(iii)$ here.
\begin{itemize} 
\item[(ii)] The proof follows the same steps as in part $(i)$. \begin{align*}
\mathbb{E}[H_{n}^{4}(Z_{1},Z_{2})] &  =\mathbb{E}\bigg[\mu_{4}(X_{1})\mu
_{4}(X_{2})K^{4}\bigg(\frac{X_{1}-X_{2}}{h_{n}}\bigg)\bigg]\\
&  =\mathbb{E}\bigg[\mu_{4}(X_{2})\int_{\mathbb{R}^{q}}\mu_{4}(x)K^{4}%
\bigg(\frac{x-X_{2}}{h_{n}}\bigg)d F_{X} (x)\bigg].
\end{align*}
Define
\[
f(x)=\mu_{4}(x),\;g(x)=K^{4}\bigg(\frac{x-X_{2}}{h_{n}}\bigg).
\]
Let $X_2^i$ denote the $i^{th}$ coordinate of $X_2$. Conditional on $X_{2}$, $(f,g)$ satisfy the hypothesis of Lemma 3.2 with  $ \mathcal{O}  =  (X_{2}^{1}%
-h_{n},X_{2}^{1}+h_{n}) \times \dots \times (X_{2}^{q}%
-h_{n},X_{2}^{q}+h_{n}) $.  Applying Lemma 3.2 yields
\begin{align*}
& \int_{\mathbb{R}^{q}}\mu_{4}(x)K^{4}\bigg(\frac{x-X_{2}}{h_{n}}\bigg)dF_{X}(x)
 \\ &  = (-1)^q \int_{\mathbb{R}^{q}} \mathbbm{1} \{ t \in \mathcal{O}  \}   \Omega_{4}(X_{2}-h_{n}\iota,t)\partial_{t}%
K^{4}\bigg(\frac{t-X_{2}}{h_{n}}\bigg)dt\\
&  = (-1)^q \int_{\left[  -1,1\right]  ^{q}}\Omega_{4}(X_{2}-h_{n}\iota,X_{2}%
+h_{n}v)\partial_{v}K^{4}(v)dv\;,
\end{align*}
where the last equality follows from the change of variables $t\rightarrow
X_{2}+h_{n}v$. It follows that
\[
\mathbb{E}[H_{n}^{4}(Z_{1},Z_{2})]=\mathbb{E}\bigg(\mu_{4}(X)(-1)^{q}%
\int_{\left[  -1,1\right]  ^{q}}\Omega_{4}(X-h_{n}\iota,X+h_{n}v)\partial
_{v}K^{4}(v)dv\bigg).
\]
Define $(v_{1},v_{-1})$ to be the partitioned
vector $(v_{1},\dots,v_{q})$ with $v_{-1}=\left(  v_{2},...,v_{q}\right)  $.
For any fixed choice of $v_{-1}\in\mathbb{R}^{q-1}$, we have
\begin{align*}
&  \int_{[-1,1]}\Omega_{4}(X-h_{n}\iota,X+h_{n}(v_{1},v_{-1}))\partial_{v_{1}}k^{4}(v_{1})dv_{1}\\
& \qquad\qquad =     \int_{[0,1]} \Omega_{4}(X-h_{n}\iota,X+h_{n}(v_{1},v_{-1}))\partial_{v_{1}}k^{4}(v_{1})dv_{1}  \\ 
  & \qquad\qquad\qquad\qquad
  +  \int_{[-1,0]}  \Omega_{4}(X-h_{n}\iota,X+h_{n}(v_{1},v_{-1}))\partial_{v_{1}}k^{4}(v_{1})dv_{1}  \\
& \qquad\qquad = \int_{[0,1]} \Omega_{4}(X-h_{n}\iota,X+h_{n}(v_{1},v_{-1}))\partial_{v_{1}}k^{4}(v_{1})dv_{1} \\
  & \qquad\qquad\qquad\qquad
  -\int_{[0,1]}\Omega_{4}(X-h_{n}\iota,X+h_{n}(-v_{1},v_{-1}))\partial_{v_{1}}k^{4}(v_{1})dv_{1} \\
& \qquad\qquad = \int_{[0,1]} \Omega_{4}(X-h_{n}(v_{1},\iota),X+h_{n}(v_{1},v_{-1}))\partial_{v_{1}}k^{4}(v_{1})dv_{1}\;,
\end{align*}
where the second  equality follows from the change of variables $v_1 \rightarrow -v_1$ and $  \partial_{v_1}  k^{4}%
(-v_{1})=- \partial_{v_1} k^{4}(v_{1})$ ($k$ is a symmetric function). Iterating this procedure from $v_{1}$ to $v_{q}$ yields
\[
\int_{\lbrack-1,1]^{q}}\Omega_{4}(X-h_{n}\iota,X+h_{n}v)\partial_{v}%
K^{4}(v)dv=\int_{[0,1]^{q}}\Omega_{4}\big(X-h_{n}v,X+h_{n}v\big)\partial
_{v}K^{4}(v)dv.
\]
The expression for $\E(H_n^4)$ follows from substituting  $(-1)^q \partial_{v} K^4(v) = \partial_{v} K^4(-v)$.
\item[(iii)] \begin{align*}
&  \mathbb{E}\big[  G_{n}^2(Z_{1},Z_{2})\big] \\
&  =\mathbb{E}\bigg(\mu_{2}(X_{1})\mu_{2}(X_{2})\bigg[\int_{\mathbb{R}^{q}}%
\mu_{2}(x)K\bigg(\frac{x-X_{1}}{h_{n}}\bigg)K\bigg(\frac{x-X_{2}}{h_{n}%
}\bigg)dF_{X}(x)\bigg]^{2}\bigg).
\end{align*}
Define
\[
f(x)=\mu_{2}(x),\;g(x)=K\bigg(\frac{x-X_{1}}{h_{n}}\bigg)K\bigg(\frac{x-X_{2}%
}{h_{n}}\bigg).
\]
Conditional on $(X_{1}, X_{2})$,  $(f,g)$ satisfy the hypothesis of Lemma 3.2 with  $ \mathcal{O} = \big \{ x \in \R^q :   \max( \| x - X_1 \|_{\infty}, \|  x-  X_2 \|_{\infty}   ) < h_n   \big \}  $. Applying Lemma 3.2 yields \begin{align*}
&  \int_{\mathbb{R}^{q}}   \mu_{2}(x)K\left(  \frac{x-X_{1}}{h_{n}}\right)
K\left(  \frac{x-X_{2}}{h_{n}}\right)  dF_{X}(x)\\
&  =(-1)^{q}\int_{\mathbb{R}^{q}}  \mathbbm{1} \big \{ t \in \mathcal{O}     \big \}   \Omega_{2}\big( \max \big \{ X_{1}, X_2 \big \} -h_{n}\iota,t \big) \partial
_{t}\bigg[K\bigg(\frac{t-X_{1}}{h_{n}}\bigg)K\bigg(\frac{t-X_{2}}{h_{n}%
}\bigg)\bigg]    dt \\
&  =(-1)^{q}\int_{\left[  -1,1\right]  ^{q}}  \mathbbm{1} \big \{ v + h_n^{-1}(X_1-X_2) \in [-1,1]^q       \big \}   \Omega_{2}( \max \big \{ X_{1}, X_2 \big \} -h_{n}\iota
,X_{1}+h_{n}v) \\ &  \; \; \; \; \; \; \;  \; \; \; \; \; \; \;  \; \; \;  \; \; \; \; \; \; \;  \; \times \partial_{v}\bigg[K(v)K\bigg(\frac{X_{1}-X_{2}}{h_{n}%
}+v\bigg)\bigg]dv .
\end{align*}
where the last equality follows from the change of variables $t\rightarrow
X_{1}+h_{n}v$. By using H\"{o}lder's inequality on the final expression above we obtain \begin{align*} &
 \bigg[ \int_{\mathbb{R}^{q}}   \mu_{2}(x)K\left(  \frac{x-X_{1}}{h_{n}}\right)
K\left(  \frac{x-X_{2}}{h_{n}}\right)  dF_{X}(x) \bigg]^2 \\ & \leq 2^{q} \int_{\left[  -1,1\right]  ^{q}}  \mathbbm{1} \big \{ v + h_n^{-1}(X_1-X_2) \in [-1,1]^q       \big \}   \Omega_{2}^2( \max \big \{ X_{1}, X_2 \big \} -h_{n}\iota
,X_{1}+h_{n}v) \\ &  \; \; \; \; \; \; \;  \; \; \; \; \; \; \;  \; \; \;  \; \; \; \; \; \;  \times \bigg( \partial_{v}\bigg[K(v)K\bigg(\frac{X_{1}-X_{2}}{h_{n}%
}+v\bigg)\bigg] \bigg)^2   dv .
\end{align*}
Note that $ \sup \limits_{v \in [-1,1]^q} \Omega_{2}^2( \max \big \{ X_{1}, X_2 \big \} -h_{n}\iota
,X_{1}+h_{n}v) \leq \Omega_2^2 (X_1 - h_n \iota , X_1 + h_n \iota) $ from which we obtain \begin{align*}
&  \mathbb{E}\big[  G_{n}^2(Z_{1},Z_{2})\big] \\ &  \leq 2^{q} \E \bigg( \mu_{2}(X_1)      \Omega_2^2 (X_1 - h_n \iota , X_1 + h_n \iota) \mu_2(X_2)  \\ & \; \; \; \; \; \; \; \; \; \; \; \times   \int_{\left[  -1,1\right]  ^{q}}  \mathbbm{1} \big \{ v + h_n^{-1}(X_1-X_2) \in [-1,1]^q  \big \}  \bigg( \partial_{v}\bigg[K(v)K\bigg(\frac{X_{1}-X_{2}}{h_{n}%
}+v\bigg)\bigg] \bigg)^2   dv       \bigg) \\ & = 2^q  \E \bigg( \mu_{2}(X_1)      \Omega_2^2 (X_1 - h_n \iota , X_1 + h_n \iota) \int_{\R^q} \mu_2(x)  \\ & \; \; \; \; \; \; \; \; \; \; \; \times   \int_{\left[  -1,1\right]  ^{q}}  \mathbbm{1} \big \{ v + h_n^{-1}(X_1-x) \in [-1,1]^q  \big \}  \bigg( \partial_{v}\bigg[K(v)K\bigg(\frac{X_{1}-x}{h_{n}%
}+v\bigg)\bigg] \bigg)^2   dv \, d F_X(x)      \bigg).
\end{align*}
Define
\begin{align*}
f(x)  &  =\mu_{2}(x),\;\;\;\\
g(x)  &  =\int_{\left[  -1,1\right]  ^{q}}  \mathbbm{1} \big \{ v + h_n^{-1}(X_1-x) \in [-1,1]^q  \big \}  \bigg( \partial_{v}\bigg[K(v)K\bigg(\frac{X_{1}-x}{h_{n}%
}+v\bigg)\bigg] \bigg)^2   dv .
\end{align*}
Let $X_1^i$ denote the $i^{th}$ coordinate of $X_1$. Since $K$ is a product kernel, we have that $$ \bigg( \partial_{v}\bigg[K(v)K\bigg(\frac{X_{1}-x}{h_{n}%
}+v\bigg)\bigg] \bigg)^2 = \prod_{i=1}^q \big \{ \partial_{v_i} \big[ k(v_i) k( h_n^{-1} [X_1^i - x_i] + v_i  )     \big]         \big \}^2   . $$
By Lemma \ref{app1} with $\phi(t) = t^2$, $u_i = h_n^{-1}(X_1^i- x_i)$, the function $g(x)$ is continuously differentiable on $u_i \in [-2,2]$. It follows that  $(f,g)$ satisfy the hypothesis of Lemma 3.2 with  $ \mathcal{O}  =  (X_{1}^{1}%
-2 h_{n},X_{1}^{1}+2h_{n}) \times \dots \times (X_{1}^{q}%
-2h_{n},X_{1}^{q}+2h_{n}) $. Applying Lemma 3.2  yields \begin{align*}
& \int_{\R^q} \mu_2(x) \int_{\left[  -1,1\right]  ^{q}}  \mathbbm{1} \big \{ v + h_n^{-1}(X_1-x) \in [-1,1]^q  \big \}  \bigg( \partial_{v}\bigg[K(v)K\bigg(\frac{X_{1}-x}{h_{n}%
}+v\bigg)\bigg] \bigg)^2   dv \, d F_X(x) \\ &  = (-1)^q \int_{\R^q} \mathbbm{1} \{ t \in \mathcal{O}  \} \Omega_2(X_1 - h_n \iota , t) \\ & \; \; \; \; \; \; \;  \; \; \; \; \;  \; \; \; \;  \times  \partial_{t} \bigg[ \int_{\left[  -1,1\right]  ^{q}}  \mathbbm{1} \big \{ v + h_n^{-1}(X_1-t) \in [-1,1]^q  \big \}  \bigg( \partial_{v}\bigg[K(v)K\bigg(\frac{X_{1}-t}{h_{n}%
}+v\bigg)\bigg] \bigg)^2   dv \bigg] dt \\ & = \int_{[-2,2]^q} \Omega_2(X_1 - h_n \iota , X_1 + h_n u)\partial_{u} \bigg[ \int_{\left[  -1,1\right]  ^{q}}  \mathbbm{1} \big \{ v -u \in [-1,1]^q  \big \}  \big( \partial_{v}\big[K(v)K(v-u)  \big] \big)^2   dv \bigg] du \, ,
\end{align*}
where the last equality follows from the change of variables $t\rightarrow
uh_{n}+X_{1}$. 
\end{itemize}

\end{proof}

\subsection{Corollary 3.4}

\begin{proof}
[Proof of Corollary 3.4] \begin{enumerate}
\item[(i)] The support of $F_X$ can be represented as $\mathcal{S}_{F_X} = \{ x \in \R^q : F_X(x- r \iota , x + r \iota) > 0 \; \; \forall \; r > 0  \} $. By Lemma 3.3$(i)$ we obtain  \begin{align*}
h_n^{-q}  \E[H_n^2] = \mathbb{E}\bigg( \mathbbm{1} \{ X \in \mathcal{S}_{F_X} \} \: \mu_{2}(X)\int_{\left(  0,1\right]  ^{q}} h_n^{-q}   \Omega
_{2}(X-h_{n}v,X+h_{n}v)  \partial_{v}K^{2}(-v)dv\bigg).
\end{align*}
Denote the maximal function associated to $\mu_{2}$ by \begin{align}
\label{maximal}  (M \mu_2)(X) = \sup_{h > 0 } \frac{ \Omega
_{2}(X-h \iota ,X+h \iota)}{F
_{X}(X-h \iota ,X+h \iota)]} \mathbbm{1} \big \{ X \in \mathcal{S}_{F_X} \big \}.
\end{align}    The integrand of the expectation can be dominated (up to a constant) by \begin{align*}
 &  \mathbbm{1} \{ X \in \mathcal{S}_{F_X} \} \mu_2(X) \big [h_n^{-q}  F
_{X}(X-h_{n} \iota ,X+h_{n} \iota) \big]  \frac{ \Omega
_{2}(X-h_n \iota ,X+h_n \iota)}{F
_{X}(X-h_n \iota ,X+h_n \iota)]}  \\ & \leq \mathbbm{1} \{ X \in \mathcal{S}_{F_X} \} \mu_2(X) \big [h_n^{-q}  F
_{X}(X-h_{n} \iota ,X+h_{n} \iota) \big] \sup_{h > 0}  \frac{ \Omega
_{2}(X-h \iota ,X+h \iota)}{F
_{X}(X-h \iota ,X+h \iota)]}      \\ & \leq  \mu_2(X)  2^q \| f_X \|_{L^{\infty}} (M \mu_2)(X)   \\ & = T(X).
\end{align*}

It is well known (see for e.g. \cite[]{calderon2}) that the maximal operator $g(X) \rightarrow (Mg)(X)  $ is strong type $(2,2)$ bounded so that $ \mu_{2}(X) \in L^2(X) \implies (M \mu_2)(X)   \in L^2(X) $. By Cauchy-Schwarz, $T(X) \in L^1(X)$. Dominated convergence and Lebesgue's differentiation theorem \cite[Theorem 7.10]{rudin2}    yield \begin{align*}
h_{n}^{-q} \mathbb{E}\left[ H_n^2(Z_1,Z_2) \right]  \xrightarrow[n \rightarrow \infty]{} &  \mathbb{E}\bigg(\mu_{2}%
^{2}(X)f_{X}(X)\int_{[0,1]^{q}}\big[\partial_{v}K^{2}(-v)\big]\prod_{i=1}%
^{q}\left(  2v_{i}\right)  dv\bigg)\\
&  =\mathbb{E}\bigg(\mu_{2}^{2}(X)f_{X}(X)\prod_{i=1}^{q}\int_{0}^{1}\left(
-2v_{i}\right)  \partial_{v_i}  k^{2}(v_{i})dv_{i}\bigg)\;\\
&  =\mathbb{E}\bigg(\mu_{2}^{2}(X)f_{X}(X)\prod_{i=1}^{q}\int_{-1}^{1}\left(
-v_{i}\right)  \partial_{v_i}k^{2}(v_{i})dv_{i}\bigg)\;\\
&  =\mathbb{E}\bigg(\mu_{2}^{2}(X)f_{X}(X)\int_{[-1,1]^{q}}K^{2}(v)dv\bigg) \; ,
\end{align*}
where the second-last equality follows noting that $v_{i}%
\rightarrow-v_{i} \, \partial_{v_i} k^{2}(v_{i})$ is an even function and the
last equality follows from univariate integration by parts. 

\item[(ii)] Note  that $\partial_{v} K^2(-v) \geq 0$ for every $v \in [0,1]^q$ so that the integrand defining $\E(H_n^2)$ in Lemma 3.3 is non-negative. The case where $F_{X}$  is absolutely continuous (but not necessarily with a bounded density) follows from Fatou's Lemma and expressing the limit as in part (i).

If $F_{X}$  is not absolutely continuous, then the singular measure   $   (\rho_{d} F_{X}^d +\rho_{s} F_{X}^s )   $ in the Lebesgue decomposition of $F_X$ is non-trivial. Define the probability measure $F_X^{d+s} = [\rho_{d} + \rho_{s}]^{-1} (\rho_{d} F_{X}^d +\rho_{s} F_{X}^s )  $ and let $F_{X}^{d+s}(x-s,x+t)$ be defined the
same way as  in (\ref{Fx}) but with  $d F_X^{d+s}$ replacing $dF_{X}$. By Lemma 3.3 and Assumption 2(iii), we obtain \begin{align*}
  \mathbb{E}\left[ H_n^2(Z_1,Z_2) \right]
&  = \underset{X  \stackrel{}{\sim} F_X^{}}{\E} \bigg(\mu_{2}(X)\int_{\left[  0,1\right]  ^{q}}\Omega
_{2}(X-h_{n}v,X+h_{n}v)\partial_{v}K^{2}(-v)dv\bigg)        \\ &  \gtrapprox \underset{X  \stackrel{}{\sim} F_X^{}}{\E}     \bigg[ \int_{[0,1]^{q}} F_{X}^{}\big(X-h_{n}v,X+h_{n}v\big) \partial_{v}K^{2}(-v)dv\bigg]      \\
& \gtrapprox  \underset{X  \stackrel{}{\sim} F_X^{d+s}}{\E}     \bigg[ \int_{[0,1]^{q}} F_{X}^{d+s}\big(X-h_{n}v,X+h_{n}v\big) \partial_{v}K^{2}(-v)dv\bigg]  .
\end{align*}

Fix any $v \in (0,1)^q$. Then Assumption 3 implies  $\partial_{v} K^2(-v) > 0 $ and a straightforward application of \cite[Theorem 7.15]{rudin2} yields $h_{n}^{-q}  F_{X}^{d+s}\big(X-h_n v    ,X+ h_n v \big)   \uparrow  \infty $ almost everywhere with respect to $ F_X^{d+s}$. As this measure is non-trivial and the integrand is non-negative, Fatou's lemma yields \begin{align*}
&  \liminf_{n\rightarrow\infty}  h_n^{-q}  \underset{X  \stackrel{}{\sim} F_X^{d+s}}{\E}  \bigg[ \int_{[0,1]^{q}} F_{X}^{d+s}\big(X-h_{n}v,X+h_{n}v\big) \partial_{v}K^{2}(-v)dv\bigg]  \\
&  \geq \underset{X  \stackrel{}{\sim} F_X^{d+s}}{\E} \bigg[ \int_{[0,1]^{q}} \liminf_{n \rightarrow \infty}  h_n^{-q}  F_{X}^{d+s}\big(X-h_{n}v,X+h_{n}v\big) \partial_{v}K^{2}(-v)dv\bigg] \\
&  = \infty.
\end{align*}
\end{enumerate}
\end{proof}

\subsection{Lemma 3.5}

\begin{proof}
[Proof of Lemma 3.5] 

\begin{enumerate}

\item[(i)]
The lower bound was provided in the main text. Here, we bound the moment from above.
Note that $\partial_{v} K^2(-v)\geq 0 $ for every $v \in [0,1]^q$.  From the expression defining $\E[H_n^2]$ in Lemma 3.3 we obtain that
\begin{align*}
\mathbb{E}\left[ H_n^2(Z_1,Z_2) \right]  &
=\mathbb{E}\bigg(\mu_{2}(X)\int_{[0,1]^{q}}\Omega_{2}(X-h_{n}v,X+h_{n}%
v)  \partial_{v}K^2(-v) dv\bigg)\\
&  \leq M_{2}\mathbb{E}\big[\mu_{2}(X) \Omega_{2}(X- h_n \iota,X+h_n \iota )    \big]\\
&  \leq B^{2}M_{2} \E \big[  F_X \big(X -   h_n \iota , X + h_n \iota     \big)       \big]
\end{align*}
where $B$ is as in Assumption 2 and $M_{2}=\int_{[0,1]^{q}}  \partial_{v}K^2(-v) dv $.  \\

\item[(ii)] The bound follows from the same argument as in part (i) by replacing $(\mu_2,\Omega_2)$ with $(\mu_4,\Omega_4)$. \\

\item[(iii)] By Lemma 3.3(iii), we have that \begin{align*}
\mathbb{E}[G_{n}^{2}(Z_{1},Z_{2})]  &
\leq C \, \mathbb{E}\bigg(  \mu_2(X)  \Omega_2^2(X - h_n \iota , X +  h_n \iota)     \int_{[-2,2]^q} \Omega_{2} ( X - 2 h_n \iota , X + h_n u     )     \\
&  \;\;\;\;\; \; \; \; \; \; \;  \times \left|   \partial_{u} \bigg[ \int_{  [-1,1]^q
}   \big(    \partial_{v} 
 \big[ K(v)K\big(v-u\big) \big]   \big)^2 \:  \mathbbm{1} \big \{  v-u \in [-1,1]^q      \big \}     dv \bigg]   du\bigg) \right|.
\end{align*}
for some universal constant $C > 0 $. Since the outer integral is over $u \in [-2,2]^q$, we further obtain that \begin{align*} \E[G_n^2] & \leq D \E \big[ \{ \Omega_{2}(X - h_n \iota , X + h_n \iota) \}^2 \Omega_2(X - 2 h_n  \iota , X + 2 h_n \iota)            \big]   \\ & \leq D \E \big[ \{ \Omega_2(X - 2 h_n  \iota , X + 2 h_n \iota)     \}^3       \big]  
\end{align*} 
for some universal constant $D > 0 $. The claim follows from using Assumption 2(ii) to bound the expression on the right.

\end{enumerate}

\end{proof}

\subsection{Lemma 3.7}

\begin{proof}
[Proof of Lemma 3.7] 
\begin{enumerate}
\item[(i)] Fix any $F_X \in \mathcal{D}(s)$.  From the definition of $\mathcal{D}(s)$ and an application of Fatou's Lemma, we obtain that as $h \downarrow 0$ \begin{align*}
  \E[ F_X(X - h \iota , X + h \iota) ]  \asymp   h^{sq}   \; \; \; ,  \; \; \;  \mathbb{E} \big[ \big\{ F_{X}(X-h_{}\iota,X+h_{}\iota) \big \}^3  \big]  \lessapprox h^{3sq}.
\end{align*}
Assumption 4 follows immediately.
\item[(ii)] 
 The second condition of Definition 3.6 follows from \begin{align*}
 \E\big[  \underline{F_{X}}(X,s_{}) \big]  &  =   \underset{X  \stackrel{}{\sim} F_{X} }{\E}  \bigg(  \liminf_{h \downarrow 0}   \sum_{t \in T}     \alpha_{t} \frac{F_{t}(X - h \iota , X + h \iota)}{(2h)^{s_{}q}}   \bigg)    \\  & \geq  \underset{X  \stackrel{}{\sim} F_{X} }{\E}  \bigg(  \liminf_{h \downarrow 0}   \sum_{t \in T  : s_t  = s}     \alpha_{t} \frac{F_{t}(X - h \iota , X + h \iota)}{(2h)^{s_{}q}}   \bigg)   \\ &  \geq  \underset{X  \stackrel{}{\sim} F_{X} }{\E}  \bigg(    \sum_{t \in T  : s_t  = s}     \alpha_{t}  \liminf_{h \downarrow  0}  \frac{F_{t}(X - h \iota , X + h \iota)}{(2h)^{s_{}q}}   \bigg)         \\ &  =  \underset{X  \stackrel{}{\sim} F_{X} }{\E}  \bigg(    \sum_{t \in T  : s_t  = s}     \alpha_{t} \, \underline{F_t}(X,s)     \bigg)                \\ &  \geq       \sum_{t \in T  : s_t  = s}     \alpha_{t}^2  \underset{X  \stackrel{}{\sim} F_{t} }{\E}    \big[ \, \underline{F_t}(X,s)      \big]                \\ & > 0.
\end{align*}
From Definition 3.6, there exist constants  $(M_t)_{t \in T}$ such that \begin{equation}
\label{fracmix}  \underset{X  \stackrel{}{\sim} F_{t} }{\mathbb{P}} \bigg(   \frac{F_{t}(X - h \iota , X + h \iota)}{(2h)^{s_{t}q}}  \leq M_t     \bigg)   = 1 \; \; \; \; \; \forall \;  t \in T.
\end{equation}
We claim that the mixture measure $ F_{X} =  \sum_{t \in T} \alpha_t F_t $ satisfies  the first condition of Definition 3.6 with    $ M^* =  4^q \sum_{t \in T} \alpha_t M_t  $. 
Let $S_t \subseteq \R^q $  denote the set where (\ref{fracmix}) holds under the measure $\underset{X  \stackrel{}{\sim} F_{t} }{\mathbb{P}}$. In particular, we have that $\underset{X  \stackrel{}{\sim} F_{X} }{\mathbb{P}} \big( X \in  \bigcup_{t  \in T} S_t     \big ) = 1$. Therefore, it suffices to verify  $$(2h)^{-s_{} q  } F_X(x -h \iota , x + h \iota) \leq M^* \; \; \; \; \forall \; x \in  \bigcup_{t \in T} S_t  . $$ 
Fix any $t \in T$. If $x \in S_{t}  $, we have that \begin{align*}
(2h)^{-s_{} q  } F_X(x- h \iota , x + h \iota)  & = (2h)^{-s_{} q  }  \bigg [ \alpha_{t} F_{t}(x- h \iota , x + h \iota)    +  \sum_{i \in T : i \neq t}  \alpha_{i} F_{i}(x- h \iota , x + h \iota)  \bigg] \\ & \leq \alpha_{t} M_t + \sum_{i \in T : i \neq t} \alpha_{i} (2h)^{-s_{} q  }  F_{i}(x- h \iota , x + h \iota) ].
\end{align*}
Fix any $i \in T$ such that $i \neq t$. Then either the set  $ A^h(x,S_i) =  \{ y_{x} \in S_i : \|  x - y_x \|_{\infty} \leq h  \}$ is non-empty or $F_{i}(x-h \iota , x +h \iota) = 0$ (because $\underset{X  \stackrel{}{\sim} F_{i} }{\mathbb{P}}(S_i) = 1$). In the former case, we have for any fixed $y_x \in A^h(x,S_i)$ the inclusion $\{ t \in \R^q : \| t-x \|_{\infty} \leq h  \}  \subseteq \{ t \in \R^q : \| t-y_{x} \|_{\infty} \leq 2h  \}   $ and the bound \begin{align*} (2h)^{- s_{} q}
F_{i}(x- h \iota , x + h \iota) & \leq (2h)^{- s_{} q}F_{i}(y_{x}- 2h \iota , y_{x} + 2h \iota) \\ & \leq   (2h)^{- s_{} q} (4h)^{s_{t_i} q}  M_i \\ & \leq 2^{(2 s_{i} - s_{}) q } M_i  \\ & \leq 4^q M_i.
\end{align*}
Since this holds for every $ i \neq t$, we obtain that \begin{align*}
(2h)^{-s_{} q  } F_X(x- h \iota , x + h \iota)  \leq  \alpha_t M_t +  4^q \sum_{ i \in T : i \neq t} \alpha_i  M_i \; \; \; \; \; \forall  \; x \in S_t. 
\end{align*}
Since this holds for every $t \in T$, we obtain that \begin{align*}
(2h)^{-s_{} q  } F_X(x- h \iota , x + h \iota)  \leq   4^q \sum_{ i \in T} \alpha_i  M_i \; \; \; \; \; \forall  \; x \in \bigcup_{t \in T} S_t. 
\end{align*}
\end{enumerate}
\end{proof}

\subsection{Theorem 3.9}

\begin{proof}
[Proof of Theorem 3.9]

The null hypothesis is a special case of the alternative  $H_1$ with the choices $ \gamma_n = 0 $ and $\delta(X) = 1 $. Hence, by an application of Lemma \ref{varest}, we have that $ \hat{\sigma}_n^2 = 2 \E(H_n^2)[1+ o_{\mathbb{P}}(1)]$. Therefore, it suffices to verify  \begin{equation}
\label{Inth} \frac{n \hat{I}_n}{\sqrt{2 \E(H_n^2)}} = N(0,1) + o_{\mathbb{P}}(1).
\end{equation}
 The mean value theorem implies that there exists a $\beta_*$ on the line segment connecting $\hat{\beta}$ and $\beta_0$ such that \begin{align}
 \hat{u}_i - u_i =  g(X_{i},\beta_0)- g(X_{i},\hat{\beta})  = [\nabla_{\beta} g_{}(X_{i},\beta_*)]'( \beta_0 - \hat{\beta} ) .  \label{mv2}
\end{align} Since $\| \hat{\beta} - \beta_0 \|_{2} = O_\mathbb{P}(n^{-1/2})$, it suffices to work under the setting where $\hat{\beta}$ and $\beta_*$ lie in the fixed neighborhood $\mathcal{N}$ of Assumption 5.

The statistic can be expressed as
\begin{align*}
\hat{I}_n   = U_n  & +  \frac{1}{n(n-1)} \sum_{i=1}^n \sum_{j \neq i} K \bigg( \frac{X_i - X_j}{h_n}   \bigg) ( \hat{u}_i - u_i  )  ( \hat{u}_j - u_j  )  \\ &   + \frac{2}{n(n-1)} \sum_{i=1}^n \sum_{j \neq i} K \bigg( \frac{X_i - X_j}{h_n}   \bigg) ( \hat{u}_i - u_i )u_{j} \\ & = U_n + A_1 + A_2.
\end{align*}
The result follows from Theorem 3.8 if $A_{i} = o_{\mathbb{P}}( n^{-1} \sqrt{ \E(H_n^2) } )$ for $i=1,2$. 

\begin{enumerate}
\item[(i)] We verify that $A_1 =  o_{\mathbb{P}}( n^{-1} \sqrt{ \E(H_n^2) } )$. From (\ref{mv2}) we obtain \begin{align*}
& A_1  =  (\hat{\beta} - \beta_0)'  B_1 (\hat{\beta} - \beta_0) \\ & \text{where} \; \; \;  B_1 =  \frac{1}{n(n-1) } \sum_{i=1}^n \sum_{j \neq i} K \bigg( \frac{X_i - X_j}{h_n}      \bigg)  [\nabla_{\beta} g_{}(X_{i},\beta_*)] [\nabla_{\beta} g_{}(X_{j},\beta_*)]' 
\end{align*}
 Observe that $  A_1 \leq \|  \hat{\beta} - \beta_0  \|_{2}^2 \| B_1 \|_{op}$ and
\begin{align*}
\| B_1 \|_{op} & =  \bigg \| \frac{1}{n(n-1) } \sum_{i=1}^n \sum_{j \neq i} K \bigg( \frac{X_i - X_j}{h_n}      \bigg)  [\nabla_{\beta} g_{}(X_{i},\beta_*)] [\nabla_{\beta} g_{}(X_{j},\beta_*)]'           \bigg \|_{op} \\ & \leq  \frac{1}{n (n-1)} \sum_{i=1}^n \sum_{j \neq i} K \bigg( \frac{X_i - X_j}{h_n}      \bigg) M(X_i) M(X_j).
\end{align*}
Since $\| \hat{\beta} - \beta_0 \|_{2} = O_{\mathbb{P}}(n^{-1/2})$, the desired result follows if we can verify that \begin{equation*} \E \bigg[  K \bigg( \frac{X_1 - X_2}{h_n}      \bigg) M(X_1) M(X_2) \bigg] = o \big( [\E(H_n^2)]^{1/2}      \big). \end{equation*}
From Lemma \ref{aux-new} and \ref{aux-estimates} we obtain

\begin{align*}
 \E \bigg[  M(X_1) K \bigg(  \frac{X_1 - X_2}{h_n} \bigg) M(X_2)                    \bigg]
& \lessapprox  \E \big[  M(X) \Omega_{M} (X - h_n \iota , X + h_n \iota )              \big]                    \\ & = o \big( \{ \E \big[  F_X (X - h_n \iota , X + h_n \iota )              \big]   \}^{1/2}      \big) \; ,
\end{align*}
By Lemma 3.5 and Assumption 4(i), we have $\E[H_n^2] \asymp  \E \big[  F_X (X - h_n \iota , X + h_n \iota )              \big] $ and the claim follows. \\

\item[(ii)] We verify that $A_2 =  o_{\mathbb{P}}( n^{-1} \sqrt{ \E(H_n^2) } )$. A second order Taylor expansion of $g(X,\beta_0)$ around $g(X,\hat{\beta})$ yields
\begin{align*}
A_{2} &  =(\hat{\beta}-\beta_{0})^{\prime}\bigg[\frac{2}{n(n-1)}\sum
_{i=1}^n \sum_{j\neq i}K\bigg(\frac{X_{i}-X_{j}}{h_{n}}\bigg)\nabla_{\beta} g_{}%
(X_{i},\beta_{0})u_{j}\bigg]\\
&  +(\hat{\beta}-\beta_{0})^{\prime}\bigg[\frac{1}{n(n-1)}\sum_{i=1}^n %
\sum_{j\neq i}K\bigg(\frac{X_{i}-X_{j}}{h_{n}}\bigg)\nabla_{\beta}^{2}g_{}%
(X_{i},\beta_{\ast})u_{j}\bigg](\hat{\beta}-\beta_{0})\\
&  =(\hat{\beta}-\beta_{0})^{\prime} B_2 +(\hat{\beta}-\beta
_{0})^{\prime} C_2 (\hat{\beta}-\beta_{0}) 
\end{align*}
where $\beta_*$ is on the line segment connecting $\hat{\beta}$ and $\beta_0$. Since $\| \hat{\beta} - \beta_0  \|_{2} = O_{\mathbb{P}}(n^{-1/2})$, it suffices to verify that \begin{equation} \label{b2c2}  \| B_2 \|_{2}^2 = o_{\mathbb{P}} \big(n^{-1}  \E(H_n^2) \big  )  \; \; , \; \;  \| C_2 \|_{op}  = o_{\mathbb{P}} \big( \sqrt{\E(H_n^2)}  \big)   .
\end{equation}
For $C_2$,  the triangle inequality and Assumptions (2, 5) yield $$ \E ( \| C_2 \|_{op} ) \lessapprox   \E \bigg[  K \bigg(  \frac{X_1 - X_2}{h_n}   \bigg) G(X_1)    \bigg] . $$
By Lemma \ref{aux-new}, we obtain \begin{align*}
 \E \bigg[  K \bigg(  \frac{X_1 - X_2}{h_n}   \bigg) G(X_1)    \bigg]   \lessapprox  \E \big[  G(X) F_X (X - h_n \iota , X + h_n \iota )              \big]             .    
\end{align*}
From H\"{o}lder's inequality  and Assumption 4(ii), it follows that \begin{align*}
\E \big[  G(X) F_X (X - h_n \iota , X + h_n \iota )              \big]   &  \leq  \{ \E[ G^{4/3}] \}^{3/4}  \{ \E \big[ \big( F_X (X - h_n \iota , X + h_n \iota )   \big)^4           \big]  \}^{1/4} \\ & \leq \{ \E[ G^{4/3}] \}^{3/4}  \{ \E \big[ \big( F_X (X - h_n \iota , X + h_n \iota )   \big)^3           \big]  \}^{1/4} \\ & \leq  \{ \E[ G^{4/3}] \}^{3/4}  \{ \E \big[  F_X (X - h_n \iota , X + h_n \iota )              \big]  \}^{1/2} \zeta_{n}^{1/4} 
\end{align*}
where  \begin{equation}   \zeta_{n} =  \frac{\mathbb{E} \left[ \big[ F_{X}(X-h_n\iota,X+h_n\iota)
 \big]^3  \right] }{\left(  \mathbb{E}\left[  F_{X}(X-h_n\iota,X+h_n
\iota)\right]  \right)  ^{2}}     \downarrow 0.    \label{zeta} \end{equation}The bound for $\| C_2  \|_{op} $ in (\ref{b2c2}) now follows from substituting $\E[H_n^2] \asymp  \E \big[  F_X (X - h_n \iota , X + h_n \iota )              \big] $ . 
 
For $B_2$, let $\nabla_{\beta} g_{t}(X_{i})$ and
$B_{2,t}$ denote respectively, the $t^{th}$ coordinate of the vectors $\nabla_{\beta}
g_{}(X_{i},\beta_{0})$ and  $B_2$.  The bound for $B_2$ in (\ref{b2c2}) follows if we can show that $\E(B_{2,t}^2) = o_{\mathbb{P}}(n^{-1} \E(H_n^2))$ for every coordinate $t$. From the definition of $B_2$, we obtain \\
 \begin{align*} & \E (  B_{2,t}^2   ) \\ &  = \frac{4}{n^2(n-1)^2}  \E  \bigg( \sum_{i=1}^n \sum_{k=1}^n  \sum_{ \substack{j \neq i \\ j \neq k } }  K \bigg( \frac{X_i - X_j}{h_n}  \bigg) K \bigg( \frac{X_k - X_j}{h_n}  \bigg) \nabla_{\beta} g_{t}(X_{i}) \nabla_{\beta} g_{t}(X_{k}) u_j^2  \bigg)  \\     & \lessapprox n^{-1}  \mathbb{E}\bigg[K\bigg(\frac{X_{1}-X_{2}}{h_{n}}\bigg)K\bigg(\frac
{X_{3}-X_{2}}{h_{n}}\bigg) M(X_1) M(X_3)  \bigg] \\ & \; \; \; + n^{-2} \:  \mathbb{E}\bigg[K^{2}\bigg(\frac
{X_{1}-X_{2}}{h_n}\bigg) M^2(X_1)  \bigg] \\ & = n^{-1} T_1 + n^{-2} T_2   .  \end{align*}
For the second term, since $K(.)$ is bounded and $M \in L^2(X)$,  we have that $T_2 = O(1)$. By Corollary 3.4, $\E(H_n^2) \gtrapprox h_n^q$ and we obtain $$  \frac{ n^{-2} T_2}{n^{-1}   \E(H_n^2) } \lessapprox \frac{1}{n h_n^q}  = o(1).  $$
For the first term,  since $X_1,X_2,X_3$ are i.i.d,  we obtain that   \begin{align}
T_1 & =    \mathbb{E}\bigg[K\bigg(\frac{X_{1}-X_{2}}{h_{n}}\bigg)K\bigg(\frac
{X_{3}-X_{2}}{h_{n}}\bigg) M(X_1) M(X_3)  \bigg] \nonumber \\ &  = \mathbb{E}\bigg[\bigg(\int_{\mathbb{R}^{q}}K\bigg(\frac{t-X_{2}%
}{h_{n}}\bigg) M(t)  d F_{X}(t)\bigg)\bigg(\int_{\mathbb{R}^{q}}K\bigg(\frac{t-X_{2}}{h_{n}}%
\bigg) M(t)  d F_{X}%
(t)\bigg)\bigg]  \nonumber \\ & = \mathbb{E}\bigg[\bigg(\int_{\mathbb{R}^{q}}K\bigg(\frac{t-X_{2}%
}{h_{n}}\bigg) M(t)  d F_{X}(t)\bigg)^2 \bigg]. \label{boundsq0}
\end{align}
 
Define
\[
f(x)= M(x),\;g(x)=K\bigg(\frac{x-X_{2}}{h_{n}}\bigg).
\]
Let $X_2^i$ denote the $i^{th}$ coordinate of $X_2$. Conditional on $X_{2}$, $(f,g)$ satisfy the hypothesis of Lemma 3.2 with  $ \mathcal{O}  = (X_{2}^{1}%
-h_{n},X_{2}^{1}+h_{n}) \times \dots \times (X_{2}^{q}%
-h_{n},X_{2}^{q}+h_{n})  $.  Applying Lemma 3.2 yields
\begin{align}
 T_1 & = \mathbb{E}\bigg[\bigg(\int_{\mathbb{R}^{q}}K\bigg(\frac{x-X_{2}}{h_{n}%
}\bigg) M(x)  dF_{X}(x)\bigg)^{2}%
\bigg]  \nonumber
\\ &  = \E \bigg[  \bigg( \int_{\mathbb{R}^{q}} \Omega_{M}(X_{2}-h_{n}\iota,t)\partial_{t}%
K^{}\bigg(\frac{t-X_{2}}{h_{n}}\bigg)dt \bigg)^2 \bigg] \nonumber \\
&  = \E \bigg[  \bigg( \int_{\left[  -1,1\right]  ^{q}}  \Omega_{M} (X_{}-h_{n}\iota,X_{}%
+h_{n}v)\partial_{v}K^{}(v)dv \bigg)^2 \bigg] \nonumber \\ &  \lessapprox     \E \big[  \Omega_{M}^2 (X - h_n \iota , X + h_n \iota )              \big]              \nonumber       \\ & = o \big(   \E \big[  F_X (X - h_n \iota , X + h_n \iota )              \big]     \big)           \label{boundsq} \; ,
\end{align}
 where the third equality  follows from the change of variables $t\rightarrow
X_{2}+h_{}v$ and the support of $k(\,\cdot\,)$. The last equality follows from Lemma \ref{aux-estimates}. From substituting $\E[H_n^2] \asymp  \E \big[  F_X (X - h_n \iota , X + h_n \iota )              \big] $, we obtain that $ n^{-1} T_1 = o(  n^{-1} \E[H_n^2]   )   $.
\end{enumerate}
\end{proof}

\subsection{Theorem 3.10}

\begin{proof}
[Proof of Theorem 3.10] 

 By Lemma \ref{varest}, we have that $ \hat{\sigma}_n^2 / 2 = \E(H_n^2)[1+ o_{\mathbb{P}}(1)]$. Therefore, it suffices to study the limiting behavior of  $ n \hat{I}_n / \sqrt{2 \E(H_n^2)} $. The numerator of the statistic can be expressed as 
 
 \begin{align*}
 \hat{I}_n =  A_1 + A_2 + A_3 + A_4 + A_5 + A_6
 \end{align*}
 where \begin{align*}
 & A_1 = \frac{1}{n(n-1)} \sum_{i=1}^{n}\sum_{j\neq i}K\bigg(\frac{X_{i}-X_{j}%
}{h_{n}}\bigg) [g_{}(X_{i},\beta_0)- g(X_{i},\hat{\beta})][g_{}(X_{j},\beta_0)- 
{g}(X_{j},\hat{\beta})] \; , \\ & A_2 =  \frac{\gamma_n^2}{n(n-1)} \sum_{i=1}^{n}\sum_{j\neq i}K\bigg(\frac{X_{i}-X_{j}%
}{h_{n}}\bigg) \delta(X_i) \delta(X_j)\; , \\ & A_3 =  \frac{1}{n(n-1)} \sum_{i=1}^{n}\sum_{j\neq i}K\bigg(\frac{X_{i}-X_{j}%
}{h_{n}}\bigg) u_i u_j \; , \\ & A_4 = \frac{2 \gamma_n}{n(n-1)} \sum_{i=1}^{n}\sum_{j\neq i}K\bigg(\frac{X_{i}-X_{j}
}{h_{n}}\bigg) [g_{}%
(X_{i},\beta_0)- {g}(X_{i},\hat{\beta})] \delta_{}(X_{j}) \; , \\ & A_5 = \frac{2 }{n(n-1)}\sum_{i=1}^{n}\sum_{j\neq i}K\bigg(\frac{X_{i}-X_{j}%
}{h_{n}}\bigg) [g_{}(X_{i},\beta_0)- {g}(X_{i},\hat{\beta})]u_{j} \; , \\ & A_6 = \frac{2 \gamma_n}{n(n-1)}\sum_{i=1}^{n}\sum_{j\neq i}K\bigg(\frac{X_{i}-X_{j}%
}{h_{n}}\bigg) \delta(X_i) u_j .
 \end{align*}
From the proof of Theorem  3.8 and 3.9, we have that \begin{align} A_{i} = o_{\mathbb{P}}( n^{-1} \sqrt{ \E(H_n^2) } ) \; \; \; \; \; \; \; \; \;  i=1,5 \; \; \; \; \; \; \; \; , \; \; \; \;  \frac{ A_3}{\sqrt{2 \E(H_n^2)}}  = N(0,1) + o_{\mathbb{P}}(1)     .
\end{align}

It remains to derive the asymptotics for  $A_2 , A_4 $ and $ A_6$. The mean value theorem implies that there exists a $\beta_*$ on the line segment connecting $\hat{\beta}$ and $\beta_0$ such that \begin{align}
  g(X_{i},\beta_0)- g(X_{i},\hat{\beta})= [\nabla_{\beta} g_{}(X_{i},\beta_*)]'( \beta_0 - \hat{\beta} ).  \label{mv22}
\end{align}
Since $\| \hat{\beta} - \beta_0 \|_{2} = O_\mathbb{P}(n^{-1/2})$, it suffices to work under the setting where $\hat{\beta}$ and $\beta_*$ lie in the neighborhood $\mathcal{N}$ of Assumption 5.

\begin{enumerate}
\item[(a)]
We verify that $A_4 = o_{\mathbb{P}} \big( n^{-1} \sqrt{\E(H_n^2)}        \big) $ under both parts (i) and (ii) of Theorem 3.10. The mean value expansion (\ref{mv22}) yields \begin{align*}  A_4   =  (  \hat{\beta} - \beta_0  )' B_4.     \end{align*}
where \begin{align*}
\E \| B_4 \|_{2} & =    \gamma_n \E \bigg \| \frac{1}{n(n-1) } \sum_{i=1}^n \sum_{j \neq i} K \bigg( \frac{X_i - X_j}{h_n}      \bigg)  [\nabla_{\beta} g_{0}(X_{i},\beta_*)]          \delta(X_j) \bigg \|_{2} \\ &  \leq  \gamma_n \E \bigg[ K \bigg(  \frac{X_1 - X_2}{h_n}   \bigg) M(X_1) \left| \delta(X_2) \right|       \bigg]\\ & \lessapprox \gamma_n \E \bigg[ K \bigg(  \frac{X_2 - X_1}{h_n}   \bigg) M(X_1)       \bigg].
\end{align*}
By Lemma \ref{aux-new} and \ref{aux-estimates} we obtain \begin{align*}
 \E \bigg[ K \bigg(  \frac{X_2 - X_1}{h_n} \bigg) M(X_1)                    \bigg] & \lessapprox      \E \big[  M(X) F_X(X - h_n \iota , X + h_n \iota )              \big]                    \\ & = o \big( \{ \E[F_X(X - h_n \iota , X + h_n \iota)]    \}^{3/4}  \big) .
\end{align*}
By Markov's inequality, it follows that \begin{align*}
A_4 = \gamma_n n^{-1/2} o_{\mathbb{P}} \big(     \{ \E[F_X(X - h_n \iota , X + h_n \iota)]    \}^{3/4}           \big) . 
\end{align*}
By Lemma 3.5 and Assumption 4(i), we have $\E[H_n^2] \asymp  \E \big[  F_X (X - h_n \iota , X + h_n \iota )              \big] $  and it follows that \begin{align*}
\frac{n A_4}{ \sqrt{\E(H_n^2)}} =  o_{\mathbb{P}} \big( \sqrt{n} \gamma_n  \{ \E[F_X(X - h_n \iota , X + h_n \iota)]    \}^{1/4}  \big)
\end{align*}
This is  $o_{\mathbb{P}}(1)$ because  $\gamma_n \lessapprox n^{-1/2} \{ \E[ F_{X}(X - h_n \iota , X + h_n \iota) ]   \}^{-1/4}$. \\

\item[(b)] We verify that $A_6 = o_{\mathbb{P}} \big( n^{-1} \sqrt{\E(H_n^2)}        \big) $ under both parts (i) and (ii) of Theorem 3.10. From Assumption 2 we obtain  \begin{align*}  \E(|A_6|^2)  & =  \frac{\gamma_n^2}{n^2(n-1)^2 } \sum_{i=1}^n \sum_{k=1}^n \sum_{ \substack{j \neq i \\ j \neq k } }   \E \bigg[  K \bigg(  \frac{X_i-X_j}{h_n}   \bigg) K \bigg(  \frac{X_{k}-X_j}{h_n}   \bigg) \delta(X_{i}) \delta(X_{k}) u_j^2        \bigg]   \\ & =  \frac{\gamma_n^2  }{n^2(n-1)^2 } O(n^2)   \E \bigg[ K^2 \bigg(  \frac{X_1-X_2}{h_n}    \bigg) \delta^2(X_1)      \bigg]           \\ & + \frac{\gamma_n^2 }{n^2(n-1)^2 } O(n^3)  \E \bigg[   K \bigg(  \frac{X_1-X_2}{h_n}   \bigg) K \bigg(   \frac{X_3-X_2}{h_n}  \bigg) \delta(X_1) \delta(X_3)          \bigg]   \\ & =    \gamma_n^2  O(n^{-2}) T_1 + \gamma_n^2 O(n^{-1}) T_2 .
\end{align*}
From $\| \delta \|_{\infty} < \infty$  and  Lemma \ref{aux-new} we obtain \begin{align*}
T_1 \lessapprox  \E \bigg[   K^2 \bigg(  \frac{X_1-X_2}{h_n}    \bigg)      \bigg]  \lessapprox  \E \big[  F_{X}(X - h_n \iota , X + h_n \iota)               \big] .
\end{align*}
By an analogous argument to (\ref{boundsq}) we obtain $$ T_2 \lessapprox  \E \big [ \Omega_{\left| \delta \right|}^2 (X - h_n \iota , X + h_n \iota )     \big ] \lessapprox \E \big[   \big \{  F_{X}(X -  h_n \iota , X +  h_n \iota)      \big \}^2 \big].  $$
By Markov's inequality and $\gamma_n \lessapprox n^{-1/2} \{ \E[ F_{X}(X - h_n \iota , X + h_n \iota) ]   \}^{-1/4}$ we obtain \begin{align*}
(A_6)^2 = O_{\mathbb{P}} \bigg(  \frac{ \{ \E[ F_{X}(X - h_n \iota , X + h_n \iota) ]   \}^{1/2}}{n^3}  + \frac{\E \big[   \big(  F_{X}(X - h_n \iota , X +  h_n \iota)      \big)^2              \big]}{n^2  \{ \E[ F_{X}(X - h_n \iota , X + h_n \iota) ]   \}^{1/2}}                 \bigg).
\end{align*}
From substituting $\E[H_n^2] \asymp  \E \big[  F_X (X - h_n \iota , X + h_n \iota )              \big] $ it follows that \begin{align*}
\frac{n^2 (A_6)^2}{\E(H_n^2)}  = O_{\mathbb{P}} \bigg(  \frac{1}{n \sqrt{\E(H_n^2)}  } +   \frac{\E \big[  \big \{ F_X(X - h_n \iota , X +  h_n \iota ) \big \}^2 \big]}{\{  \E[  F_X(X - h_n \iota , X + h_n \iota)  ]  \}^{3/2} }          \bigg).
\end{align*}
By Corollary 3.4(ii), $\E(H_n^2) \gtrapprox h_n^q$. As $n h_n^q \uparrow \infty$  the first term on the right is $o(1)$. The second term is $o(1)$ by Assumption 6. \\

\item[(c)] We aim to show that \begin{align}
\E(A_2)   =     \gamma_n^2 \bigg \{ &  \E \bigg[  \delta^2(X) \int_{[0,1]^q} F_X(X - h_n v , X + h_n v) \partial_{v} K(-v) dv               \bigg] \nonumber \\ & + o(1)  \E \big[  F_{X}(X - h_n \iota , X + h_n \iota)  \big]         \bigg \}.
\label{biasa2}
\end{align}

\begin{align}
 \text{Var}(A_2) \lessapprox   \gamma_n^4 \bigg \{ & n^{-2}    \E \big[  F_{X}(X- h_n \iota , X + h_n \iota)            \big] \nonumber \\ & +   n^{-1} \E \big[ \big \{ F_X(X -  h_n \iota , X +  h_n \iota)  \big \}^2 \big] \bigg \} .
\label{vara2}
\end{align}
An analogous argument to the one used for Lemma 3.3 yields \begin{align*}
\E(A_2) & =  \gamma_n^2 \E \bigg[ K \bigg( \frac{X_1 - X_2}{h_n} \bigg) \delta(X_1) \delta(X_2)             \bigg] \\ & =  \gamma_n^2 \E \bigg(   \delta(X) \int 
 _{[0,1]^q} \Omega_{\delta} (X - h_nv , X + h_nv) \partial_{v} K(-v) dv          \bigg) \\ &  =  \gamma_n^2  R_2.
\end{align*}

The support of $F_{X}$ can be expressed as
$$ \mathcal{S}_{F_X} =  \{x \in \R^q :  F_{X}(x-r \iota , x +r \iota) > 0 \; \; \text{for every} \; r > 0     \} . $$

Define $$ Z_{n,v}(x) =  \begin{cases}   \frac{\Omega_{\delta}(x - h_n v , x + h_n v)}{F_X(x - h_n v , x + h_n v)} & F_X(x - h_n v , x + h_n v) > 0 \; , \\ 0 & \text{else}      . \end{cases}  $$ It follows that \begin{align*}
  & \left| R_2 - \E \bigg[  \delta^2(X) \int_{[0,1]^q} F_X(X - h_n v , X + h_n v) \partial_{v} K(-v) dv               \bigg]  \right| \\ & \leq   \E \bigg( \left| \delta(X)        \right|   \int_{[0,1]^q} \left| Z_{n,v}(X)  - \delta(X)          \right| F_X(X - h_n v , X + h_n v)   \partial_{v} K(-v) dv   \bigg) \\ & \lessapprox \E \bigg(   \int_{[0,1]^q} \left| Z_{n,v}(X)  - \delta(X)          \right| F_X(X - h_n v , X + h_n v)   \partial_{v} K(-v) dv   \bigg) \\ & =  \E \bigg(  \mathbbm{1} \big \{ X \in \mathcal{S}_{F_X}   \big \} \int_{[0,1]^q} \left| Z_{n,v}(X)  - \delta(X)          \right| F_X(X - h_n v , X + h_n v)   \partial_{v} K(-v) dv   \bigg) .
\end{align*}
For every $x \in \mathcal{S}_{F_X}$ we have that \begin{align*} 
\left|  Z_{n,v}(x)  - \delta(x)     \right|  & \leq   \frac{1}{F_X(x - h_n v , x+ h_n v)} \int \limits_{t \in \mathcal{S}_{F_X} : \left| t_i - x_i \right| \leq h_n v_i \; \forall i } \left|  \delta(t) - \delta(x)   \right| d F_X(t) \\ & \leq  \sup_{s,t \in \mathcal{S}_{F_X} : \| s-t \|_{\infty} \leq h_n} \left| \delta(t) - \delta(s)   \right| 
\end{align*}
uniformly over $v \in (0,1]^q$. Hence \begin{align*}
&  \left| R_2 - \E \bigg[  \delta^2(X) \int_{[0,1]^q} F_X(X - h_n v , X + h_n v) \partial_{v} K(-v) dv               \bigg]  \right|  \\ & \lessapprox \sup_{s,t \in \mathcal{S}_{F_X} : \| s-t \|_{\infty} \leq h_n} \left| \delta(t) - \delta(s)   \right|  \E \bigg(  \int_{[0,1]^q}       F_X(X - h_n v , X + h_n v)   \partial_{v} K(-v) dv               \bigg) \\ & \lessapprox \sup_{s,t \in \mathcal{S}_{F_X} : \| s-t \|_{\infty} \leq h_n} \left| \delta(t) - \delta(s)   \right| \E \big[  F_{X}(X - h_n \iota , X + h_n \iota)  \big]    . 
\end{align*}

Since $\delta(.)$ is uniformly continuous on $\mathcal{S}_{F_X}$ and $h_n \downarrow 0$, the preceding bound implies (\ref{biasa2}).

 It remains to show (\ref{vara2}). The variance is given by  \begin{align*}
\text{Var}(A_2) = \frac{\gamma_n^4}{n^2(n-1)^2} \sum_{i=1}^n \sum_{k=1}^n \sum_{j \neq i} \sum_{l \neq k} \text{Cov}  \bigg[ &  K \bigg(   \frac{X_i - X_j}{h_n}    \bigg) \delta(X_i) \delta(X_j), \\ &  K \bigg( \frac{X_{k} - X_l}{h_n}    \bigg) \delta(X_k) \delta(X_l)          \bigg].
\end{align*}
The covariance is $0$ if the indices $(i,j,k,l)$ are all distinct. There are $2n(n-1)(2n-3)$ terms for which the indices are not all distinct. Within this, $2n(n-1)$ correspond to the case where $(i,j) = (j,l)$ or $(i,j) = (l,j)$ and $2n(n-1)(2n-3) - 2n(n-1) = 4(n^3  - 3 n^2  + 2n)$ for the case where $(l,j) \neq (i,j) \neq (j,l)$. It follows that \begin{align*}
\text{Var}(A_2) = \frac{\gamma_n^4}{n^2(n-1)^2} \bigg[  O(n^2) T_1 + O(n^3) T_2          \bigg].
\end{align*}
where
\begin{align*} &  T_{1} =  \text{Var} \bigg[ K \bigg(  \frac{X_1 - X_2}{h_n}      \bigg) \delta(X_1) \delta(X_2)  \bigg] \; , \\ & T_2 =   \text{Cov} \bigg[    K \bigg(  \frac{X_1 - X_2}{h_n}      \bigg) \delta(X_1) \delta(X_2) ,  K \bigg(  \frac{X_3 - X_2}{h_n}      \bigg) \delta(X_3) \delta(X_2)                   \bigg]            . \end{align*}
By Lemma \ref{aux-new} we obtain   \begin{align*}
T_1   \leq    \E \bigg[ K^2 \bigg(  \frac{X_1 - X_2}{h_n} \bigg)   \delta^2(X_1) \delta^2(X_2)                      \bigg]  &  \lessapprox  \E \bigg[   K^2 \bigg(  \frac{X_1 - X_2}{h_n} \bigg)          \bigg] \\ &   \lessapprox \E \big[  F_{X}(X- h_n \iota , X + h_n \iota)            \big] .
\end{align*}

By an analogous argument to (\ref{boundsq}) we obtain
\begin{align*}
T_2  & \leq \E \bigg[    K \bigg( \frac{X_1 - X_2}{h_n}        \bigg) K \bigg( \frac{X_3 - X_2}{h_n}   \bigg) \delta(X_1) \delta(X_3) \delta^2(X_2)                \bigg] \\ & \lessapprox \E \big[ \big \{  F_{X}(X -  h_n \iota , X +  h_n \iota)       \big \}^2             \big]   .
\end{align*}
 The bound (\ref{vara2}) follows from combining the bounds for $T_1,T_2$.  From (\ref{vara2}) and  $\gamma_n \lessapprox n^{-1/2} \{ \E[ F_{X}(X - h_n \iota , X + h_n \iota) ]   \}^{-1/4}$ we obtain \begin{align*}
\text{Var} \bigg(    \frac{n A_2}{\sqrt{\E(H_n^2)}}   \bigg) \lessapprox  \frac{1}{n^2  \E(H_n^2)  } + \frac{\E \big[ \big \{ F_X(X - h_n \iota , X +  h_n \iota ) \big \}^2 \big]                 }{n  \E(H_n^2)  \E \big[  F_{X}(X- h_n \iota , X + h_n \iota)            \big]  }.
\end{align*}
The trivial bound $F_X^2(.) \leq F_X (.) $ reduces this to
\begin{align*}
\text{Var} \bigg(    \frac{n A_2}{\sqrt{\E(H_n^2)}}   \bigg) \lessapprox  \frac{1}{n^2  \E(H_n^2)  } + \frac{1}{n \E(H_n^2)} .
\end{align*}
By Corollary 3.4(ii), $\E(H_n^2) \gtrapprox h_n^q$. Since $n h_n^q \uparrow \infty$, the term on the right is $o(1)$. \\ 

\item[(d)]

It remains to prove the statement of the theorem. Combining the bounds derived in $(a-c)$ shows that $$  n \frac{\hat{I}_n}{\sqrt{\hat{\sigma}_n^2}} = N(0,1) +  L_n +             o_{\mathbb{P}}(1)     $$
where  \begin{align*}
L_n = \frac{n \E(A_2)}{\sqrt{\E(H_n^2)   }} =   \frac{n \gamma_n^2}{\sqrt{\E(H_n^2)}} \bigg \{  & \E \bigg[  \delta^2(X) \int_{[0,1]^q} F_X(X - h_n v , X + h_n v) \partial_{v} K(-v) dv               \bigg] \\ & +  o( \E[ F_X(X - h_n \iota , X + h_n \iota) ]  ) \bigg \}.
\end{align*}
Define $$ \alpha_n = n^{-1/2} \{ \E[ F_{X}(X - h_n \iota , X + h_n \iota) ]   \}^{-1/4} .  $$

Note that \begin{align*}
&  \E \big[  \delta^2(X) \int_{[0,1]^q} F_X(X - h_n v , X + h_n v) \partial_{v} K(-v) dv               \big] \\ & \lessapprox   \E \big[  \delta^2(X) F_X(X - h_n \iota , X +h_n \iota)               \big] .
\end{align*}

From substituting $\E[H_n^2] \asymp  \E \big[  F_X (X - h_n \iota , X + h_n \iota )              \big] $, it follows that \begin{align*}
 L_n  \asymp \begin{cases}  o(1) & \gamma_n =o( \alpha_n) \\  \frac{ \E \big[  \delta^2(X) \int_{[0,1]^q} F_X(X - h_n v , X + h_n v) \partial_{v} K(-v) dv               \big]}{\E[ F_X(X - h_n \iota , X + h_n \iota) ]} + o(1)  & \gamma_n \asymp \alpha_n .   \end{cases}
 \end{align*}

In particular $n \hat{I}_n / \sqrt{\hat{\sigma}_n^2} = N(0,1) + o_{\mathbb{P}}(1) $, provided that \begin{align*} \limsup_{h \downarrow 0} \frac{ \E \big[  \delta^2(X) F_X(X - h \iota , X +h \iota)               \big]}{\E[ F_X(X -h \iota , X + h \iota) ]} = 0 .   \end{align*} Additionally, if $\varepsilon$ is as in Assumption 4, we have that \begin{align*}
 & \frac{ \E \big[  \delta^2(X) \int_{[0,1]^q} F_X(X - h_n v , X + h_n v) \partial_{v} K(-v) dv               \big]}{\E[ F_X(X - h_n \iota , X + h_n \iota) ]} \\ & \gtrapprox  \frac{ \E \big[  \delta^2(X) \int_{[\varepsilon,1]^q} F_X(X - h_n v , X + h_n v) \partial_{v} K(-v) dv               \big]}{\E[ F_X(X -h  \iota , X + h  \iota) ]} \\ & \gtrapprox \frac{ \E \big[  \delta^2(X) F_X(X -h_n \varepsilon  \iota , X + h_n \varepsilon  \iota) ] }{\E[ F_X(X -h_n  \iota , X + h_n  \iota) ]} \\ & \gtrapprox \frac{ \E \big[  \delta^2(X) F_X(X -h_n \varepsilon  \iota , X + h_n \varepsilon  \iota) ] }{\E[ F_X(X -h_n \varepsilon  \iota , X + h_n \varepsilon  \iota) ]}. 
\end{align*}

In particular $\liminf \limits_{n \rightarrow \infty} L_n > 0 $, provided that $$ \liminf_{h \downarrow 0} \frac{ \E \big[  \delta^2(X) F_X(X - h \iota , X +h \iota)               \big]}{\E[ F_X(X -h \iota , X + h \iota) ]} > 0  . $$
\end{enumerate}

\end{proof}

\subsection{Theorem 3.11}

\begin{proof}
[Proof of Theorem 3.11] 
First, we claim that there exists a $M < \infty$ such that \begin{align}
\label{delres} \mathbb{P} \bigg(  \mathbbm{1} \big \{ X \in \mathcal{S}_{\delta}       \big \}   \frac{F_X(X - h \iota , X + h \iota)}{(2h)^{s_{\delta}q}}  \leq M          \bigg) = 1
\end{align}
holds for all sufficiently small $h > 0 $. Since $F_{X} = \sum_{t \in T} \alpha_{t} F_t$ and $\underset{X  \stackrel{}{\sim} F_{t} }{\E} [ \delta^2(X)  ] = 0 $ for every $t \in T \setminus R$, (\ref{delres}) follows if we can verify that there exists a $ M < \infty$ such that \begin{align}
\label{delres2}  \underset{X  \stackrel{}{\sim} F_{z} }{\mathbb{P}} \bigg(  \mathbbm{1} \big \{ X \in \mathcal{S}_{\delta}       \big \}   \frac{F_t(X - h \iota , X + h \iota)}{(2h)^{s_{\delta}q}}  \leq M          \bigg) = 1 \; \; \; \; \forall \; z \in R \; \; \; \forall \, t \in T.
\end{align}
Fix any $z \in R$. If $t \in T \setminus R$ and $s_t < s_{\delta}$, this follows from Condition (27) in the main text. For any $ t \in T $ with $s_{t} \geq s_{\delta}$, let $S_{t} \subseteq \R^q $ denote the set where the first condition of Definition 3.6 holds (with constant $M_{t}$). If $X \in S_t$, then $F_t(X - h \iota , X + h \iota) \leq M_{t} h^{s_t q } \leq M_{t} h^{s_{\delta} q} $ because $s_t \geq s_{\delta}$. If $X \notin S_{t}$, then either $F_t(X - h \iota , X + h \iota) = 0  $ or there exists some $y_{X} \in S_t$ such that $ \|  X - y_{X} \|_{\infty} \leq h$ (because $\underset{X  \stackrel{}{\sim} F_{t} }{\mathbb{P}}(S_t) = 1$). We have the inclusion $ \{  t \in \R^q : \| t -X \|_{\infty} \leq h_n   \}  \subseteq \{ t \in \R^q : \|t - y_{X}   \|_{\infty} \leq 2 h_n      \} $ and it follows that $$  F_{t}(X - h_n \iota , X +h_n \iota) \leq F_{t}(y_X - 2h_n \iota , y_X +2h_n \iota) \leq M_{t} 2^{s_tq} h_n^{s_{t} q} \leq M_{t} 2^{s_t q} h_n^{s_{\delta} q} $$
because $s_{t} \geq s_{\delta}$.
From combining the cases, we obtain (\ref{delres}). \\ 

With (\ref{delres}) established, the majority of the proof is analogous to the proof of Theorem 3.10, with some arguments modified to take into account the weaker requirement that $\delta \in L^2(X)$. We provide the details for the main changes in the argument.

 By Lemma \ref{varest2}, we have that $ \hat{\sigma}_n^2 / 2 = \E(H_n^2)[1+ o_{\mathbb{P}}(1)]$. Therefore, it suffices to study the limiting behavior of  $ n \hat{I}_n / \sqrt{2 \E(H_n^2)} $.  Let $A_2, A_4, A_6$ be as in the proof of Theorem 3.10. Define $s = \min_{t \in T} s_t$.  By Lemma 3.7, we have $F_{X} \in \mathcal{D}(s)$ and $\E[F_X(X - h_n \iota , X + h_n \iota)] \asymp h_n^{sq}$.  Assumptions (4,  6) are automatically satisfied.  By Lemma 3.5 and Assumption 4(i) we have $ \E[H_n^2] \asymp h_n^{sq} $ as well. \\

\begin{enumerate}

\item[(a)]

We verify that $A_4 = o_{\mathbb{P}} \big( n^{-1} \sqrt{\E(H_n^2)}        \big) $ under both parts (i) and (ii) of Theorem 3.11. An analogous argument to the proof of Theorem 3.10 shows that \begin{align*}
 A_4 = n^{-1/2} \gamma_n O_{\mathbb{P}} \bigg(    \E \bigg[ K \bigg(  \frac{X_1 - X_2}{h_n}   \bigg) M(X_1) \left| \delta(X_2) \right|       \bigg]         \bigg).
\end{align*}
If $\delta \in L^{\infty}(X)$, we use Lemma \ref{aux-new} and (\ref{delres}) to obtain \begin{align*}
 \E \bigg[ K \bigg(  \frac{X_1 - X_2}{h_n}   \bigg) M(X_1) \left| \delta(X_2) \right|       \bigg]   &  \lessapprox    \E \big[  \mathbbm{1} \big \{  X \in \mathcal{S}_{\delta}    \big \}   M(X) F_X(X - h_n \iota , X + h_n \iota)              \big] \\ & \lessapprox h_n^{s_{\delta} q}.
\end{align*}
Substituting $\gamma_n \lessapprox n^{-1/2} \{ \E[ F_X(X - h_n \iota , X + h_n \iota) ]   \}^{1/4} h_n^{-s_{\delta}q/2}$ yields $$ \frac{n A_4}{ \sqrt{\E(H_n^2)}  } = O_{\mathbb{P}} \bigg(    \frac{h_n^{s_{\delta}q}}{h_n^{sq/4} h_n^{s_{\delta}q/2} }                  \bigg) = o_{\mathbb{P}}(1)  $$
because $s_{\delta} \geq s$.

If $\delta \in L^2(X)$, we use Lemma \ref{aux-new} to obtain  \begin{align*}
\E \bigg[ K \bigg(  \frac{X_1 - X_2}{h_n}   \bigg) M(X_1) \left| \delta(X_2) \right|       \bigg]   &  \lessapprox    \E \big[  \mathbbm{1} \big \{  X \in \mathcal{S}_{\delta}    \big \} \left| \delta(X) \right|  \Omega_{M}  (X - h_n \iota , X + h_n \iota)              \big] .
\end{align*}
Let $\epsilon > 0 $ be such that $M(X) \in L^{4 + \epsilon}(X)$. Let $ \zeta = (3+\epsilon) / (4+ \epsilon) > 3/4  $. By H\"{o}lder's inequality, we obtain \begin{align*}
\Omega_{M}(X- h_n \iota , X + h_n \iota) \leq  \{  \E(M^{4 + \epsilon})  \}^{1/(4 + \epsilon)} \{  F_X(X - h_n \iota , X + h_n \iota)     \}^{ \zeta  }.
\end{align*}
Substituting this to bound the expectation and using (\ref{delres}) yields   \begin{align*} & \E \big[  \mathbbm{1} \big \{  X \in \mathcal{S}_{\delta}    \big \} \left| \delta(X) \right|  \Omega_{M}  (X - h_n \iota , X + h_n \iota)              \big] \lessapprox h_n^{s_{\delta} \zeta q}        \; ,  \\ &   \frac{n A_4}{ \sqrt{\E(H_n^2)}  } = O_{\mathbb{P}} \bigg(    \frac{h_n^{s_{\delta} \zeta q}}{h_n^{sq/4} h_n^{s_{\delta}q/2} }                  \bigg) = o_{\mathbb{P}}(1)      \end{align*}
because $s_{\delta} \geq s $ and $\zeta > 3/4$. \\

\item[(b)]

We verify that $A_6 = o_{\mathbb{P}} \big( n^{-1} \sqrt{\E(H_n^2)}        \big) $ under both parts (i) and (ii) of Theorem 3.11. Suppose $\delta \in L^2(X)$. From the argument in Theorem 3.10, we obtain \begin{align}   \E( \left| A_6  \right|^2) =  \gamma_n^2 O(n^{-2}) T_1 + \gamma_n^2 O(n^{-1}) T_2   \label{a6new}   \end{align}
where \begin{align*}
&  T_1 =  \E \bigg[  K^2 \bigg( \frac{X_1- X_2}{h_n}     \bigg) \delta^2(X_1)    \bigg]       \\ & T_2 =  \E \bigg[ K \bigg(  \frac{X_1 - X_2}{h_n}     \bigg)  K \bigg(  \frac{X_3 - X_2}{h_n}   \bigg)   \left| \delta(X_1) \right| \left|  \delta(X_3) \right|         \bigg] .
\end{align*}
Without loss of generality, we take $\delta(.)$ to be non-negative so that the absolute values can be dropped. From Lemma \ref{aux-new} and (\ref{delres}) we obtain \begin{align*}
T_1 \lessapprox \E \big[ \mathbbm{1} \{ X \in \mathcal{S}_{\delta}   \} \delta^2(X) F_{X}(X - h_n \iota , X + h_n \iota)               \big] \lessapprox  \E[\delta^2(X)]   h_n^{s_{\delta} q} \lessapprox  h_n^{s_{\delta} q} 
\end{align*}
For $T_2$, we note that  \begin{align*}
T_2 =    \E \bigg[ \delta(X_1)     \delta(X_3) \int_{\R^q} K \bigg(  \frac{X_1- x}{h_n}    \bigg) K \bigg(  \frac{X_3 - x}{h_n}      \bigg) d F_X(x)           \bigg].
\end{align*}
By an analogous argument to the expansion of $\E(G_n^2)$ in Lemma 3.3  and  (\ref{delres}) we obtain \begin{align*}
T_2 & \lessapprox  \E \big[  \delta(X) F(X - h_n \iota , X + h_n \iota) \Omega_{\delta} (X - 2 h_n \iota , X + 2 h_n \iota)                          \big] \\ &  =   \E \big[  \mathbbm{1} \{  X \in \mathcal{S}_{\delta}      \} \delta(X) F(X - h_n \iota , X + h_n \iota) \Omega_{\delta} (X - 2 h_n \iota , X + 2 h_n \iota)                          \big] \\\ & \lessapprox h_n^{s_{\delta}q} \, \E[ \mathbbm{1} \{  X \in \mathcal{S}_{\delta}      \} \delta(X)  \Omega_{\delta} (X - 2 h_n \iota , X + 2 h_n \iota)          ].
\end{align*}
Given a positive sequence $(\alpha_n)_{n=1}^{\infty}$, define
\begin{align*}&  \Omega_{\delta > \alpha_n}(X - h_n \iota ,X + h_n  \iota )=\int \limits_{ t \in \R^q : \|t - X  \|_{\infty} \leq h_n  } \delta(t) \mathbbm{1}\{\delta(t) > \alpha_n \} dF_{X}(t) \; , \\ &   \Omega_{\delta \leq \alpha_n}(X - h_n \iota ,X + h_n  \iota )=\int \limits_{ t \in \R^q : \|t - X  \|_{\infty} \leq h_n  } \delta(t) \mathbbm{1} \{\delta(t) \leq \alpha_n \} dF_{X}(t) \:.
\end{align*}
From an application of Cauchy-Schwarz on $\Omega_{\delta > \alpha_n}$, we obtain \begin{align*}   & \Omega_{\delta}(X - 2 h_n \iota , X + 2 h_n \iota)   \\ &  =   \Omega_{\delta \leq \alpha_n}(X - 2h_n \iota ,X +2 h_n  \iota ) + \Omega_{\delta > \alpha_n}(X - 2h_n \iota ,X + 2h_n  \iota )     \\ &   \leq \alpha_n F_X(X - 2 h_n \iota , X + 2 h_n \iota) + \sqrt{ \E[ \delta^2(X)  \mathbbm{1} \{ \delta > \alpha_n    \} ]  } \sqrt{F_X(X  - 2 h_n \iota , X + 2 h_n \iota)} .
\end{align*}
Substituting this into the expectation and using (\ref{delres}) yields $$ T_2 \lessapprox h_n^{s_{\delta}q} \big[  \alpha_n h_n^{s_{\delta}q}  +  \sqrt{ \E[ \delta^2(X)  \mathbbm{1} \{ \delta > \alpha_n    \} ]  }  h_n^{s_{\delta}q/2}            \big]   . $$
Letting $\alpha_n \uparrow \infty$ sufficiently slowly, for e.g. $\alpha_n \asymp h_n^{- s_{\delta}q/4} $, yields $T_2 = o( h_n^{1.5 s_{\delta}q }  ) $. From substituting the bounds for $(T_1,T_2)$ and $\gamma_n \lessapprox n^{-1/2} \{ \E[ F_X(X - h_n \iota , X + h_n \iota) ]   \}^{1/4} h_n^{-s_{\delta}q/2}$ into (\ref{a6new}) we obtain \begin{align*}  \frac{n^2 (A_6)^2}{ \E(H_n^2)} & = O_{\mathbb{P}} \bigg(  \frac{\gamma_n^2 T_1}{ \E(H_n^2)}      \bigg)   + O_{\mathbb{P}} \bigg(   \frac{n \gamma_n^2 T_2}{\E(H_n^2)}        \bigg) \\ & =  O_{\mathbb{P}} \bigg(  \frac{1}{n h_n^{sq/2}}   \bigg) + O_{\mathbb{P}} \bigg(  \frac{o( h_n^{1.5 s_{\delta}q }  )}{h_n^{s_{\delta}q}  h_n^{sq/2} }        \bigg).
\end{align*}
This second term is $o_{\mathbb{P}}(1)$ because $s \leq s_{\delta}  $ and the first is $o_{\mathbb{P}}(1)$ because $s \leq 1 $ and  $n h_n^{q} \uparrow \infty$. \\

\item[(c)]

We aim to show that \begin{align}
\E(A_2)   =     \gamma_n^2 \bigg \{ &  \E \bigg[  \delta^2(X) \int_{[0,1]^q} F_X(X - h_n v , X + h_n v) \partial_{v} K(-v) dv               \bigg]  + o(1) h_n^{s_{\delta}q}            \bigg \}.
\label{biasa2new}
\end{align}

\begin{align}
  \frac{n}{\sqrt{\E(H_n^2)}} \left| A_2 - \E(A_2)   \right| = o_{\mathbb{P}}(1). 
\label{vara2new}
\end{align}

Suppose $\delta \in L^2(X)$. A derivation with steps similar to the proof of Lemma 3.3(i) yields   \begin{align*}
\E(A_2) & =  \gamma_n^2 \E \bigg[ K \bigg( \frac{X_1 - X_2}{h_n} \bigg) \delta(X_1) \delta(X_2)             \bigg] \\ & =  \gamma_n^2 \E \bigg(   \delta(X) \int 
 _{[0,1]^q} \Omega_{\delta} (X - h_nv , X + h_nv) \partial_{v} K(-v) dv          \bigg) \\ &  =  \gamma_n^2  R_2.
\end{align*}
Define, as in the proof of Theorem $3.10$, the quantity $$ Z_{n,v}(x) =  \begin{cases}   \frac{\Omega_{\delta}(x - h_n v , x + h_n v)}{F_X(x - h_n v , x + h_n v)} & F_X(x - h_n v , x + h_n v) > 0 \; , \\ 0 & \text{else}      . \end{cases}  $$ 
It follows that \begin{align*}
  & \left| R_2 - \E \bigg[  \delta^2(X) \int_{[0,1]^q} F_X(X - h_n v , X + h_n v) \partial_{v} K(-v) dv               \bigg]  \right| \\ & \leq   \E \bigg( \left| \delta(X)        \right|   \int_{[0,1]^q} \left| Z_{n,v}(X)  - \delta(X)          \right| F_X(X - h_n v , X + h_n v)   \partial_{v} K(-v) dv   \bigg) \\  &= o(1) \E \big[ \mathbbm{1} \big \{ X \in \mathcal{S}_{\delta}  \big \}  \left| \delta(X) \right|  F_X(X - h_n \iota , X + h_n \iota)   \big] \\ & = o(1) h_n^{s_{\delta}q} \; ,
\end{align*}
where the last equality follows from (\ref{delres}).

We now verify (\ref{vara2new}). Let $ \bar{\delta}(X_i) = \delta(X_i) \mathbbm{1} \{  \delta^2(X_i) \leq n \} $ and define $$ \bar{A}_2 = \frac{\gamma_n^2}{n(n-1)} \sum_{i=1}^{n}\sum_{j\neq i}K\bigg(\frac{X_{i}-X_{j}%
}{h_{n}}\bigg) \bar{\delta}(X_i)  \bar{\delta}(X_j) .  $$
To show (\ref{vara2new}), it suffices to verify \begin{align*}
& B_1 = \frac{n}{\sqrt{\E(H_n^2)}} \left| A_2 - \bar{A_2}         \right| = o_{\mathbb{P}}(1) \; , \\ &  B_2 = \frac{n}{\sqrt{\E(H_n^2)}} \left|  \bar{A}_2 - \E( \bar{A}_2 )       \right| = o_{\mathbb{P}}(1) \; , \\  & B_3 = \frac{n}{\sqrt{\E(H_n^2)}} \left| \E(A_2) - \E(\bar{A}_2)    \right| = o(1).
\end{align*}
The result for $B_1$ follows immediately from the assumption that $\delta \in L^2(X)$. Indeed, for any $\epsilon > 0 $, we have that \begin{align*} \mathbb{P}(B_1 > \epsilon) \leq  \mathbb{P} \big(  \sup_{i=1, \dots , n} \delta^2(X_i) > n \big ) \leq n \mathbb{P}( \delta^2(X) > n )&  \leq  \E\big[  \delta^2(X) \mathbbm{1} \{ \delta^2(X) > n \}      \big]    \\ &  = o(1).
\end{align*}
Next, we verify the estimate for $B_3$. Note that \begin{align*}
\left| \E(A_2) - \E(\bar{A}_2)    \right| \leq \gamma_n^2 \bigg(  & \E \bigg[ K \bigg(  \frac{X_1 - X_2}{h_n}     \bigg) (\delta - \bar{\delta}) (X_1)  \delta(X_2)      \bigg] \\ &  + \E \bigg[ K \bigg(  \frac{X_1 - X_2}{h_n}     \bigg) \bar{\delta}(X_1) (\delta - \bar{\delta})(X_2)       \bigg]    \bigg).
\end{align*}
By arguing in an analogous way to the derivation of $\E(A_2)$ in (\ref{biasa2new}), we obtain \begin{align*}
& \E \bigg[ K \bigg(  \frac{X_1 - X_2}{h_n}     \bigg) (\delta - \bar{\delta}) (X_1)  \delta(X_2)      \bigg] \\ &   = o(1) h_n^{s_{\delta}q} + \E \bigg[  \delta^2(X) \mathbbm{1} \{ \delta^2(X) > n   \} \int_{[0,1]^q} F_X(X - h_n v , X + h_n v) \partial_{v} K(-v) dv               \bigg] \\ & = o(1) h_n^{s_{\delta}q} +  h_n^{s_{\delta}q} \E \big[ \delta^2(X) \mathbbm{1} \{ \delta^2(X) > n   \} \big ] \\ & = o(1) h_n^{s_{\delta}q}
\end{align*}
Similarly, we obtain that  $$  \E \bigg[ K \bigg(  \frac{X_1 - X_2}{h_n}     \bigg) \bar{\delta}(X_1) (\delta - \bar{\delta})(X_2)       \bigg]      = o(1) h_n^{s_{\delta}q}.  $$
Substituting $\gamma_n \lessapprox n^{-1/2} \{ \E[ F_X(X - h_n \iota , X + h_n \iota) ]   \}^{1/4} h_n^{-s_{\delta}q/2}$, we obtain $$  B_3 \leq \gamma_n^2  \frac{n}{\sqrt{\E(H_n^2)}} o(1) h_n^{s_{\delta}q} = o(1). $$ 
Finally, we verify the estimate for $B_2$. By arguing analogously as in the proof of Theorem 3.10, the variance of $  \bar{A_2} $ has order $$  \text{Var}(\bar{A_2})  \lessapprox  \gamma_n^4 \big( n^{-2} T_1 + n^{-1} T_2 \big) $$
 where \begin{align*}
 & T_1 = \E \bigg[ K^2 \bigg(  \frac{X_1 - X_2}{h_n}     \bigg)  \bar{\delta}^2(X_1) \bar{\delta}^2(X_2)          \bigg] \leq n \E \bigg[ K^2 \bigg(  \frac{X_1 - X_2}{h_n}     \bigg)  \delta^2(X_1) \bigg]  \\ & T_2 = \E \bigg[  K \bigg(  \frac{X_1 - X_2}{h_n}        \bigg) K \bigg(  \frac{X_3 - X_2}{h_n}        \bigg)  \bar{\delta}(X_1) \bar{\delta}(X_3) \bar{\delta}^2(X_2)        \bigg] \\ & \; \; \; \; \; \; \; \; \; \leq \E \bigg[  K \bigg(  \frac{X_1 - X_2}{h_n}        \bigg) K \bigg(  \frac{X_3 - X_2}{h_n}        \bigg)  \delta(X_1) \delta(X_3) \delta^2(X_2)        \bigg]
 \end{align*}
 By Lemma \ref{aux-new} and (\ref{delres}), we obtain \begin{align*}
 T_1 \lessapprox n \E \big[ \mathbbm{1} \{ X \in \mathcal{S}_{\delta}  \} \delta^2(X) F_X(X- h_n \iota, X + h_n \iota)        \big] \lessapprox n h_n^{s_{\delta}q}.
 \end{align*}
Since $X_1,X_2,X_3$ are i.i.d we obtain that   \begin{align}
&     \mathbb{E}\bigg[K\bigg(\frac{X_{1}-X_{2}}{h_{n}}\bigg)K\bigg(\frac
{X_{3}-X_{2}}{h_{n}}\bigg) \delta(X_1) \delta(X_3)  \delta^2(X_2) \bigg] \nonumber \\ &  = \mathbb{E}\bigg[  \delta^2(X_2)  \bigg(\int_{\mathbb{R}^{q}}K\bigg(\frac{t-X_{2}%
}{h_{n}}\bigg) \delta(t)  d F_{X}(t)\bigg)\bigg(\int_{\mathbb{R}^{q}}K\bigg(\frac{t-X_{2}}{h_{n}}%
\bigg) \delta(t)  d F_{X}%
(t)\bigg)\bigg]  \nonumber \\ & = \mathbb{E}\bigg[ \delta^2(X_2)  \bigg(\int_{\mathbb{R}^{q}}K\bigg(\frac{t-X_{2}%
}{h_{n}}\bigg) \delta(t)  d F_{X}(t)\bigg)^2 \bigg]. \label{boundsq0}
\end{align}
 
Define
\[
f(x)= M(x),\;g(x)=K\bigg(\frac{x-X_{2}}{h_{n}}\bigg).
\]
Let $X_2^i$ denote the $i^{th}$ coordinate of $X_2$. Conditional on $X_{2}$, $(f,g)$ satisfy the hypothesis of Lemma 3.2 with  $ \mathcal{O}  = (X_{2}^{1}%
-h_{n},X_{2}^{1}+h_{n}) \times \dots \times (X_{2}^{q}%
-h_{n},X_{2}^{q}+h_{n})  $.  Applying Lemma 3.2 yields
\begin{align}
 &  \mathbb{E}\bigg[\bigg(\int_{\mathbb{R}^{q}}K\bigg(\frac{x-X_{2}}{h_{n}%
}\bigg) \delta(x)  dF_{X}(x)\bigg)^{2}%
\bigg]  \nonumber
\\ &  = \E \bigg[  \bigg( \int_{\mathbb{R}^{q}} \Omega_{\delta}(X_{2}-h_{n}\iota,t)\partial_{t}%
K^{}\bigg(\frac{t-X_{2}}{h_{n}}\bigg)dt \bigg)^2 \bigg] \nonumber \\
&  = \E \bigg[  \bigg( \int_{\left[  -1,1\right]  ^{q}}  \Omega_{\delta} (X_{}-h_{n}\iota,X_{}%
+h_{n}v)\partial_{v}K^{}(v)dv \bigg)^2 \bigg] \nonumber \\ &  \lessapprox     \E \big[  \Omega_{|\delta|}^2 (X - h_n \iota , X + h_n \iota )              \big]              \nonumber                 \label{boundsq} \; ,
\end{align}
 where the third equality  follows from the change of variables $t\rightarrow
X_{2}+h_{}v$ and the support of $k(\,\cdot\,)$.  It follows that
  \begin{align*}
 T_2 \lessapprox \E \big[ \delta^2(X) \Omega_{\left| \delta \right|}^2(X - h_n \iota , X + h_n \iota)         \big].
 \end{align*}
 By Cauchy-Schwarz, $\Omega_{\left| \delta \right|}^2(X - h_n \iota , X + h_n \iota)   \leq \E [ \delta^2(X) ] F_X(X - h_n \iota , X + h_n \iota) $ and we obtain using (\ref{delres}) that $$ T_2 \lessapprox \E \big[ \mathbbm{1} \{ X \in S_{\delta} \}  \delta^2(X)  F_X(X - h_n \iota , X + h_n \iota)  \big] \lessapprox h_n^{s_{\delta}q} . $$
By substituting $\gamma_n \lessapprox n^{-1/2} \{ \E[ F_X(X - h_n \iota , X + h_n \iota) ]   \}^{1/4} h_n^{-s_{\delta}q/2}$, it follows that \begin{align*}
B_2 =   \frac{n^2}{\E(H_n^2)} O_{\mathbb{P}} \big( \text{Var}(\bar{A}_2)       \big) & = O_{\mathbb{P}} \bigg(  \frac{\gamma_n^4 T_1}{\E(H_n^2)}  + \frac{n \gamma_n^4 T_2}{\E(H_n^2)}                   \bigg) \\ & = O_{\mathbb{P}} \bigg( \frac{1}{n h_n^{s_{\delta}q}}  + \frac{1}{n h_n^{s_{\delta}q}}        \bigg).
\end{align*}
 This term is $o_{\mathbb{P}}(1)$ because $s_{\delta} \leq 1 $ and $n h_n^{q} \uparrow \infty$. \\ 
 
 \item[(d)]

It remains to prove the statement of the theorem. Combining the bounds derived in $(a-c)$ shows that $$  n \frac{\hat{I}_n}{\sqrt{\hat{\sigma}_n^2}} = N(0,1) +  L_n +             o_{\mathbb{P}}(1)     $$
where  \begin{align*}
L_n =   \frac{n \gamma_n^2}{\sqrt{\E(H_n^2)}} \bigg \{  & \E \bigg[  \delta^2(X) \int_{[0,1]^q} F_X(X - h_n v , X + h_n v) \partial_{v} K(-v) dv               \bigg] + o(1) h_n^{s_{\delta}q} \bigg \}.
\end{align*}
Define $$ \alpha_n = n^{-1/2} \{ \E[ F_X(X - h_n \iota , X + h_n \iota) ]   \}^{1/4} h_n^{-s_{\delta}q/2}.  $$
 Note that by (\ref{delres}) we have \begin{align*}
& \frac{ \E \big[  \delta^2(X) \int_{[0,1]^q} F_X(X - h_n v , X + h_n v) \partial_{v} K(-v) dv               \big]}{ h_n^{s_{\delta}q} } \\ & \lessapprox  \frac{ \E \big[  \mathbbm{1} \{ X \in \mathcal{S}_{\delta}  \} \delta^2(X) F_X(X - h_n \iota , X +h_n \iota)               \big]}{h_n^{s_{\delta}q}} \\ & \lessapprox 1.
\end{align*}
It follows that
  \begin{align*}
 L_n  \asymp \begin{cases}  o(1) & \gamma_n =o( \alpha_n) \\   \E \big[  \delta^2(X) \int_{[0,1]^q} h_n^{- s_{\delta}q} F_X(X - h_n v , X + h_n v) \partial_{v} K(-v) dv               \big] + o(1)  & \gamma_n \asymp \alpha_n .   \end{cases}
 \end{align*}
 For any fixed $\varepsilon \in (0,1)$, we have that \begin{align*}
 & \E \bigg[ \delta^2(X) \int_{[0,1]^q}  F_X(X - h_n v , X + h_n v) \partial_{v} K(-v) dv   \bigg]    \\ & \geq    \E \bigg[  \delta^2(X)     \int_{[\varepsilon,1]^q}  F_X(X - h_n v , X + h_n v) \partial_{v} K(-v) dv \bigg] \\ & \gtrapprox \E \big[ \delta^2(X) F_X(X - h_n \varepsilon \iota , X + h_n \varepsilon \iota) \big] \\ & \gtrapprox   \underset{X  \stackrel{}{\sim} F_{R} }{\E} \big[ \delta^2(X) F_{R} \big( X - h_n \varepsilon \iota , X + h_n \varepsilon \iota        \big)    \big].
 \end{align*}
 The result  follows from substituting $ \underset{X  \stackrel{}{\sim} F_{R} }{\E}  \big[  F_R(X - h_n \varepsilon \iota , X + h_n \varepsilon \iota)      \big] \asymp  h_n^{ s_{\delta} q} $ .

\end{enumerate}

\end{proof}

\end{document}